\newcommand{\myTitle}{Real-world Graph Analysis: Techniques for Static, Dynamic and Temporal Communities\xspace}
\newcommand{\myName}{Davide Rucci\xspace}
\newcommand{\myFaculty}{Dipartimento di Informatica\xspace}
\newcommand{\myUni}{Università di Pisa\xspace}
\newcommand{\myTime}{May 2024 \xspace}
\newcounter{dummy} 
\providecommand{\mLyX}{L\kern-.1667em\lower.25em\hbox{Y}\kern-.125emX\@}
\renewcommand{\O}{\ensuremath{\mathcal{O}}}
\newcommand{\GVE}{\ensuremath{G = (V, E)}}
\newcommand{\bigO}{\ensuremath{\mathcal{O}}}
\newcommand{\turan}{Tur\'{a}n}
\newcommand{\erdos}{Erd\"{o}s}
\newcommand{\directedG}{\ensuremath{\overrightarrow{G}}}
\newcommand{\enump}{\textsc{EnumP}}
\renewcommand{\thefigure}{\arabic{figure}} 
\newcommand{\changed}[1]{#1}
\newcommand{\unionh}{\ensuremath{\cup_h}}
\theoremstyle{definition}
\newtheorem{de}{Definition}[chapter]
\theoremstyle{definition}
\newtheorem{definition}{Definition}[chapter]
\theoremstyle{remark}
\theoremstyle{plain}
\newtheorem{prop}{Proposition}[chapter]
\theoremstyle{plain}
\newtheorem{theorem}{Theorem}[chapter]
\theoremstyle{plain}
\newtheorem{lemma}{Lemma}[chapter]
\theoremstyle{plain}
\theoremstyle{definition}
\newtheorem{obs}{Observation}[chapter]
\newtheorem{corollary}{Corollary}[chapter]
\apptocmd{\sloppy}{\hbadness 10000\relax}{}{}
\begin{document}
\frenchspacing
\raggedbottom
\selectlanguage{american} 
\pagenumbering{roman}
\pagestyle{plain}

\begin{titlepage}












  \thispagestyle{empty}
  \begin{center}

\begingroup
  \includegraphics[width=.8\linewidth]{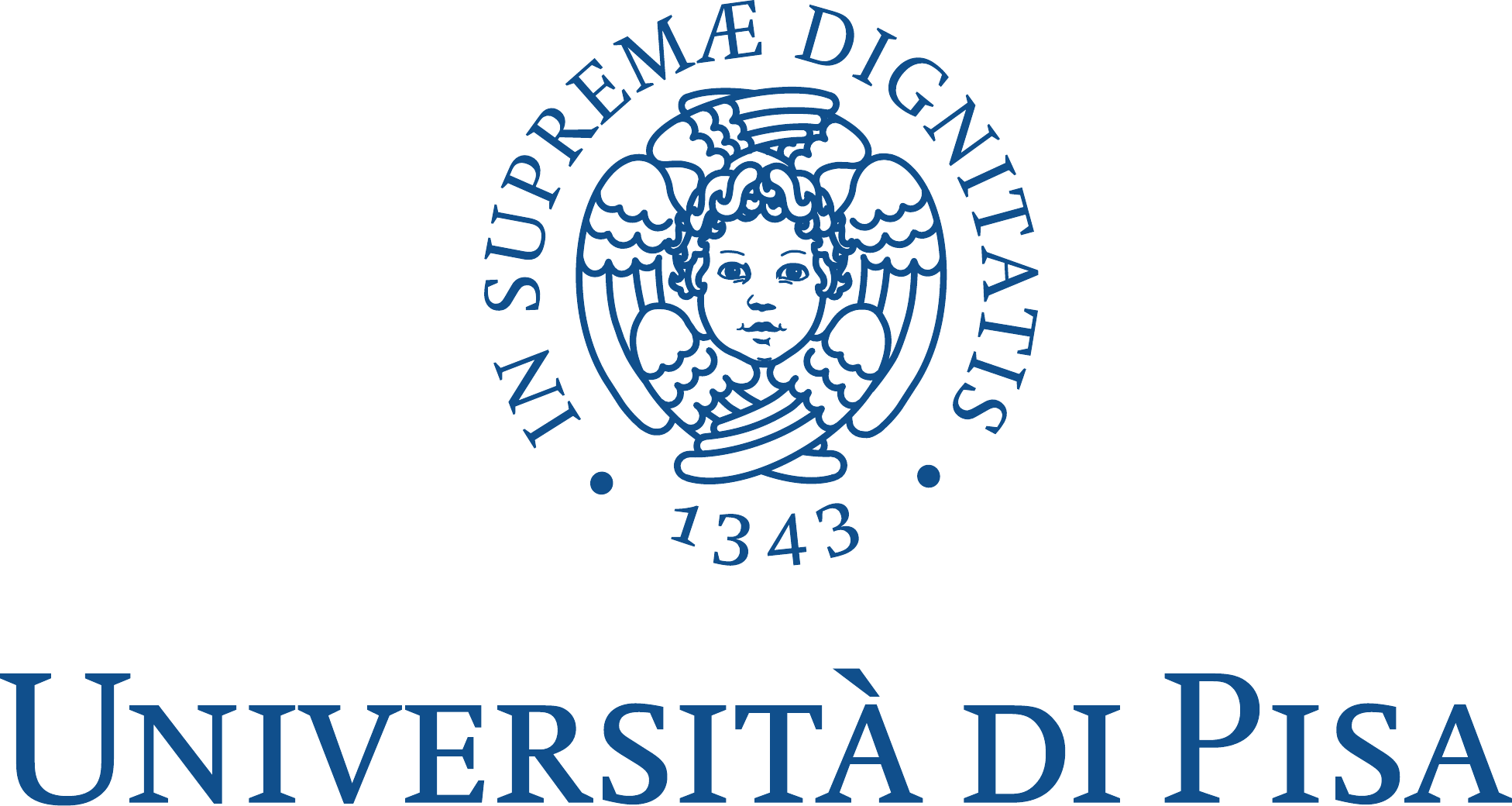}
\endgroup

\vspace{0.5cm}

\Large
\spacedallcaps{{Dipartimento di Informatica}}\\
\vspace{.5cm}
\hrule
\vspace{0.5cm}
\large
\spacedlowsmallcaps{Ph.D. Thesis}\\

\vspace{1cm}
\Large
\linespread{1.2}
\spacedlowsmallcaps{Real-world Graph Analysis: Techniques for Static, Dynamic and Temporal Communities}\\ 
\large
\spacedlowsmallcaps{Davide Rucci}
\linespread{1}

\vspace{1.5cm}

\begin{minipage}[t]{5cm}
\flushright
\begin{center}
\textbf{Supervisor} \\
\spacedlowsmallcaps{Alessio Conte} \\[.5em]
\normalsize{Università di Pisa}
\end{center}
\end{minipage}
\hfill
\begin{minipage}[t]{5cm}
\flushright
\begin{center}
\textbf{Supervisor}\\
\spacedlowsmallcaps{Roberto Grossi}\\[.5em]
\normalsize{Università di Pisa}
\end{center}
\end{minipage}
\vfill
\begin{minipage}[t]{5cm}
\flushright
\begin{center}
\textbf{Referee} \\
\spacedlowsmallcaps{Hiroki Arimura} \\[.5em]
\normalsize{Hokkaido University}
\end{center}
\end{minipage}
\hfill
\begin{minipage}[t]{5cm}
\flushright
\begin{center}
\textbf{Referee}\\
\spacedlowsmallcaps{Lhouari Nourine}\\[.5em]
\normalsize{University Clermont-Auvergne}
\end{center}
\end{minipage}

\vfill
\vspace{0.5cm}
\spacedlowsmallcaps{May 2024}\\
\spacedlowsmallcaps{{Cycle XXXVI}}\\

\end{center}
\end{titlepage}   
\thispagestyle{empty}

\hfill

\vfill

\noindent\myName: \textit{\myTitle},\\ Doctoral Dissertation, 
\\
\textcopyright\ \myTime
\cleardoublepage\cleardoublepage
\cleardoublepage
\pdfbookmark[1]{Abstract}{Abstract}
\begingroup
\let\clearpage\relax
\let\cleardoublepage\relax
\let\cleardoublepage\relax

\chapter*{Abstract}
    {\setstretch{1.0}\color{black}
    Graphs are widely used in various fields of computer science. 
    They have also found application in unrelated areas, leading to a diverse range of problems.
    These problems can be modeled as relationships between entities in various contexts, such as social networks, protein interactions in cells, and route maps.
    Therefore it is logical to analyze these data structures with diverse approaches, whether they are numerical or structural, global or local, approximate or exact.
    In particular, the concept of community plays an important role in local structural analysis, as it is able to highlight the composition of the underlying graph while providing insights into what the organization and importance of the nodes in a network look like.
    
    This thesis pursues the goal of extracting knowledge from different kinds of graphs, including static, dynamic, and temporal graphs, with a particular focus on their community substructures.
    To tackle this task we use combinatorial algorithms that can list all the communities in a graph according to different formalizations, such as cliques, $k$-graphlets, and $k$-cores.
    Listing (or enumeration) algorithms indeed give the full set of communities in any graph, in contrast to more algebraic techniques that exploit matrix multiplication, eigenvalues, etc. to obtain an exact or approximate count of certain types of communities, an approach that can be faster in terms of running time, but that does not suit our need for the complete list of subgraphs to perform subsequent analyses.
    
    We first develop new algorithms to enumerate subgraphs, using traditional and novel techniques such as push-out amortization, and CPU cache analysis to boost their efficiency.
    
    We then extend these concepts to the analysis of real-world graphs across diverse domains, ranging from social networks to autonomous systems modeled as temporal graphs. In this field, there is currently no widely accepted adaptation, even for straightforward subgraphs like $k$-cores, and the available data is expanding both in terms of quantity and scale.

    As a result, our findings advance the state of the art both from a theoretical and a practical perspective and can be used in a static or dynamic setting to further speed up and refine graph analysis techniques.
    }


\endgroup			

\vfill
\cleardoublepage
\pdfbookmark[1]{Acknowledgments}{acknowledgments}



\begingroup
\let\clearpage\relax
\let\cleardoublepage\relax
\let\cleardoublepage\relax
\chapter*{Acknowledgements}

    \noindent First and foremost, I would like to thank my supervisors, \emph{Roberto Grossi} and \emph{Alessio Conte}, for their support and guidance during my PhD experience.
    I want to thank also my internal PhD committee, namely \emph{Anna Monreale} and \emph{Nadia Pisanti}, for their advice and help throughout these years (and for bringing me a physical copy of my first published PhD paper!).
    I wish to thank my parents, \emph{Laura} and \emph{Claudio} for their endless love and support, I could not have been here without you, you are loved.
    I thank all my co-authors and all the people I have met along this journey.
    Thanks to \emph{Giulia}, my fellow PhD partner with whom I shared many thoughts and funny moments. I have worked very well with you.
    Thanks to \emph{Arnbj\"{o}rg Soff\'{i}a} and \emph{Emil}, my Danish colleagues, who spent so much of their time with me, without complaining! 
    I am glad that I met you during my visiting period at DTU, the time we spent together was amazing, our deep conversations and funny moments will always have a special place in my heart. 
    I hope to meet you again sometime in the future.

\endgroup

\pagestyle{scrheadings}
\cleardoublepage
\refstepcounter{dummy}
\renewcommand{\contentsname}{Table of Contents}
\pdfbookmark[1]{\contentsname}{tableofcontents}
\setcounter{tocdepth}{2} 
\setcounter{secnumdepth}{3} 
\manualmark
\markboth{\spacedlowsmallcaps{\contentsname}}{\spacedlowsmallcaps{\contentsname}}
\tableofcontents 
\automark[section]{chapter}
\renewcommand{\chaptermark}[1]{\markboth{\spacedlowsmallcaps{#1}}{\spacedlowsmallcaps{#1}}}
\renewcommand{\sectionmark}[1]{\markright{\thesection\enspace\spacedlowsmallcaps{#1}}}
\clearpage

\begingroup 
    \let\clearpage\relax
    \let\cleardoublepage\relax
    \let\cleardoublepage\relax
    \refstepcounter{dummy}
    \pdfbookmark[1]{\listfigurename}{lof}

    \vspace{8ex}

    \refstepcounter{dummy}
    \pdfbookmark[1]{\listtablename}{lot}
        
    \vspace{8ex}
    
    \refstepcounter{dummy}
    \pdfbookmark[1]{\lstlistlistingname}{lol}

    \vspace{8ex}
       
    \refstepcounter{dummy}
    \pdfbookmark[1]{Acronyms}{acronyms}
\endgroup

\cleardoublepage\pagenumbering{arabic}
\cleardoublepage
  \chapter{Introduction}

Networks are one of the fundamental models in informatics, as they can represent the structure of complex phenomena and their dynamics, and both their size and their availability has rapidly grown in the past two decades.
Graph-theoretical methods have been the main tool to analyze networks for a long time \cite{BARNES1983235}, and they become increasingly important as graphs pop up in basically most areas of data mining and optimization; one of the key tasks here is identifying \emph{communities} and \emph{densely connected} entities, formalized in this thesis as \emph{subgraphs}.

\section{Community Analysis in Graphs}
Subgraphs are, in fact, a well-known example of \emph{patterns} or \emph{motifs} that can be extracted from graphs to discover meaningful insights on both the local and global structure of the network they represent. In recent years there has been growing attention on the graph motif problem~\cite{lacroix2006motif, ciriello2008review_motifs, yu2020motif_survey} which asks to list or count a specific kind of subgraph in a complex network. 
These subgraphs not only offer valuable information on the underlying network's properties, they also serve as building blocks in many fields such as biology and bioinformatics, sociology, computer networks, and transport engineering: indeed, in these kinds of networks, subgraphs are able to capture and highlight the community structure of the network itself \cite{santofortunato, community_detection_2}.
For instance, communities in a social network allow us to find groups of people with similar interests so that they can be exploited for targeted advertising \cite{munir2023integrated}, but also to find outliers that may be performing some kind of unfair activity such as buying followers \cite{faloutsos_corescope}. 
In biological networks, motifs can highlight groups of proteins that may be correlated either by functional similarities or by means of a disease \cite{10.1093/bioinformatics/btg177, Liu2016ELPA, Kong2020_escherichia_coli}.
Moreover, in computer networks, they might help in the fast recover from a faulty router by directing traffic to other routers in the same community. While this may seem odd because computer networks are human designed, routing algorithms will change the routing tables and therefore the shape of the network, creating new links that would not emerge until that point in time. In fact, community detection on computer networks may suggest the development of new routing algorithms \cite{daha2021cdra}.
Other related questions on graphs may attack, for instance, the evolution of a \texttt{P2P} network over time, the study of a virus (or its immunization) propagation, or we may simply want to extract the most significant subsets from a graph in order to characterize it according to some significance metric \cite{stegehuis2016epidemic, WANG2020106118_epidemic}.

To perform such kinds of analysis we must first retrieve the list of all communities in any given network, thus we need algorithmic techniques that can extract such lists from any graph. 
This thesis, therefore, pursues \emph{two goals}: on one hand to design novel algorithms for the combinatorial extraction of graph substructures, i.e., subgraphs of the input graph that respect a specific property, and also the refinement of existing algorithms. 
On the other hand, to use such subgraphs to analyze specific kinds of networks, understanding how are they effective at providing information about a given graph.

First, to obtain the aforementioned lists, we exploit the so-called \emph{combinatorial enumeration} algorithms. They are algorithms able to list the subgraphs by searching through a space of feasible solutions, outputting only those of interest while avoiding duplicates.
These algorithms are usually considered as the last resort for solving a problem whenever we do not know more efficient methods or we are not able to deeply exploit the structure and properties of a problem; however in this case we have no other choice as many kinds of subgraph are present in a graph in an exponential amount, requiring at least exponential time just to be printed out.
This is why, for enumeration algorithms, research has shifted towards an \emph{output sensitive} complexity analysis, a particular kind of analysis where the complexity is given as a function of the output size and not in the input's, as it is usually the case, allowing for more fine-grained running time bounds \cite{capelli2017complexity}.
In this thesis we perform amortized output sensitive analysis, giving time and space bounds for the output of \emph{one} solution in order to design a new algorithm that advances the state of the art, particularly for $k$-\emph{graphlet} enumeration.
Moreover, we develop a practical optimization of an existing $k$-graphlet enumeration algorithm by exploiting the CPU cache at its fullest, showing vast, yet realistic, improvements over several competitors for the same task.

Obviously, combinatorial enumeration is not the only choice possible: indeed there exist numerous \emph{analytical} algorithms that extract information about communities from the adjacency matrix of a graph, exploiting mainly eigenvalue analysis and matrix multiplication \cite{Singh2015, survey_graphlet}.
However, due to their algebraic nature, such algorithms are more prone to \emph{counting} rather than listing, preventing us from performing the detailed analysis on communities that we want to pursue.
Moreover, they often provide \emph{approximated} results in exchange for faster running times \cite{bressan2018motif}.
Nonetheless, we think that these algorithms are interesting and may provide useful insights on networks, and we examine some of these methods as well.

Another option for analyzing a network is to compute its global and local properties, that include the travel distance between nodes, the centrality of nodes, the diameter of the network among others \cite{yan2009applying, landherr2010critical, kourtellis2013identifying, zhang_betweeness_2015}.
Some of these properties may be of easy calculation by means of a simple traversal, while others require more time and space to be computed, and this can become a problem considering the continuous growth in size of nowadays real-world networks. 
The goal here is first to identify the correct and meaningful measures to compute on these graphs to extract the largest amount of information possible, but also to design new measures that can better take into account the domain of the network under consideration. 
In these second category we find, among others, new definitions for centrality measures in a graph based on connected induced subgraphs \cite{subgraph_centrality_measures} and also the use of graphlets in kernels for machine learning models, like Support Vector Machines \cite{graphlet_kernels}.

For all these reasons, we believe that the work on graph pattern/motif extraction and subsequent analysis is well motivated, especially due to the sheer size of the data available nowadays.

\section{Thesis Organization}
This thesis is structured as follows:

Chapter~\ref{chap:background} introduces most of the formal concepts that will be used throughout the dissertation, and recaps the literature that has been studied and used for producing the next chapters\footnote{Some chapters may introduce additional, more specific, notation and provide more related work from the literature.}.
This includes a review of the theory behind enumeration algorithms, as well as the introduction to the most used general strategies for enumeration with well-known examples on simple and temporal graphs.

Chapter~\ref{chap:ksquare} describes a novel algorithm for listing a particular kind of subgraphs: $k$-graphlets.
This algorithm advances the state of the art by dropping every dependency on the size of the graph in its time complexity per graphlet reported.

Chapter~\ref{chap:cage} takes a complementary approach to $k$-graphlet enumeration, and describes a state-of-the-art algorithm for $k$-graphlet enumeration, with great focus on practical performances exploiting the CPU cache in order to speed up the computation on real-world graphs.

Chapter~\ref{chap:temporal} presents a case study on the usefulness of $k$-cores -- another possible formalization of the community concept -- in a temporal setting.
This allows us to discuss how to extend traditional concepts from static graphs to new contexts like temporal networks, where each connection is marked with a timestamp and the network evolves over a particular time span.

Chapter~\ref{chap:denmark} shows how dynamic graphs can be exploited to obtain a $(1+\epsilon)$ approximation of the densest subgraph in a static graph.

Finally, Chapter~\ref{chap:conclusions} summarizes the results obtained and suggests directions for future work.

\section{Publications of the author}
Part of the work presented in this thesis resulted in the following papers, either published or in the peer review process.

\AtNextBibliography{\renewbibmacro*{pageref}{}}
\nocite{Rucci:2023Spire, Rucci:2024SAC, Rucci:2024Algorithmica, Rucci:2024Iwoca}
 \printbibliography[heading=none,keyword=a]
 \paragraph{Other Publications.}
 \printbibliography[heading=none,keyword=b]

  \chapter{Preliminaries and Related Work}\label{chap:background}

In this chapter we lay down the preliminary concepts and notions on graphs and subgraphs that we use in the rest of thesis.
After giving the basic definitions and notation used for graphs, we move to presenting the main, general, techniques used for enumeration on graphs with a brief recap of the complexity classes associated with enumeration problems, in order to better understand the direction towards which we point our algorithms. 
One of the key concepts here is amortization, an algorithm analysis method that will be heavily exploited especially in Chapter~\ref{chap:ksquare}.
Finally, we recap the literature related to our work on both enumeration and community analysis.

\section{Graph Notions}

Formally, a \emph{graph} is usually defined as a pair $G = (V, E)$, where $V$ is the set of \emph{vertices} or \emph{nodes} and $E \subseteq V \times V$ is the set of arcs or \emph{edges} that represent any kind of interaction between two vertices.
An edge $e \in E$ is a pair $\{u, v\}$ where $u$ and $v$ are its \emph{endpoints}, and $u$ and $v$ are said to be adjacent.
An edge of the kind $\{u, u\}$ is called a \emph{self-loop} where both the endpoints coincide. 
When removing a vertex $v \in V$ from $G$, the new graph obtained will be denoted by $G \setminus \{v\}$ or just $G \setminus v$ and it will not contain all edges that were \emph{incident} to $v$ in $G$.
It is also possible to \emph{shrink} an edge $\{u, v\} \in E$ and obtain the new graph $G' = G / \{u, v\}$ where vertices $u$ and $v$ are compressed into a single, new, vertex that is adjacent to all the neighbors of both $u$ and $v$ in $G$.
The neighborhood of a vertex $u$ is the set of all the vertices adjacent to $u$ in $G$, which changes according to the category of graph that we choose among the ones from the following list.
We may use shorthands $n = |V|$ and $m = |E|$ whenever $G$ is clear from the context.

\subsection{Graphs and Subgraphs Formalization}\label{bg:sec:subgraphs}

\paragraph{Undirected Graphs.}
A graph $G = (V, E)$ is said to be \emph{undirected} if the edges in $E$ consist of \emph{unordered} pairs, i.e. the edge $\{u, v\}$ is the same as the edge $\{v, u\}$.
Thus, given a vertex $v \in V$, we define its \emph{neighborhood} as $N(v) = \{u \in v \,\mid\, \{v, u\} \in E\}$, and its \emph{degree} as $d(v) = |N(v)|$.
The edges \emph{incident} to $v$ are defined as $E(v) = \left\{\{u, v\} \,\mid\, \{u, v\} \in E \right\}$.

\paragraph{Directed Graphs.}
In a directed graph $\overrightarrow{G} = (V, E)$, the edges are represented as \emph{ordered pairs}, so that the edge $(u, v)$, directed from vertex $u$ to vertex $v$, is different from the edge $(v, u)$, directed from vertex $v$ to vertex $u$.
The degree is split into the \emph{indegree} and the \emph{outdegree}. 
The former is referred to as $d^-(u)$ and the latter as $d^+(u)$, for a vertex $u \in V$, and they correspond to the cardinality of the set of \emph{incoming} and \emph{outgoing} neighbors, i.e., of $N^-(u) = \{ v \,\mid\, (v, u) \in E\}$ and of $N^+(u) = \{ v \,\mid\, (u, v) \in E\}$ respectively.

By convention directed and undirected graphs with no other particular property are usually called \emph{simple} graphs.

\paragraph{Multigraphs.}
It may happen that a graph $G = (V, E)$ has \emph{duplicated} edges, i.e., multiple edges with the same endpoints, that can be either directed or undirected, thus $E$ is a \emph{multiset} and each edge $e$ has a multiplicity $m(e)$ associated.
In a multigraph, (in/out)degrees are defined taking into account the multiplicities of edges, while we define the neighborhood in the same way we did for the above kinds of graphs.

\paragraph{Temporal Graphs.}
A temporal graph $G_\tau = (V, E_\tau \subseteq V \times V \times [1, \tau])$ is a pair of vertices and \emph{temporal edges} ${u, v, t}$, marked with a timestamp $1 \leq t \leq \tau$, and $\tau$ is called the \emph{lifespan} of $G_\tau$. 
When fixing a particular value $i$ for the time (i.e. the $t$ value of the edges) we retrieve the static graph $G_i = (V, E_i = \{\{u, v\} : \{u, v, i\} \in E_\tau\})$ that is called the $i$-th snapshot of $G_\tau$; thus, $\tau$ actually denotes the total number of snapshots available for $G_\tau$. 
For any given snapshot $G_t$, we denote the neighborhood of node $v$ as $N(v) = \{w : \{v,w\}\in E_t\}$, and its degree $d(v) = |N(v)|$. Of course, a temporal graph may be either undirected or directed.

\paragraph{Dynamic Graphs.} 
Dynamic graphs can be seen as temporal graphs in which both vertices and edges may appear and disappear at any given time.
The main difference is that we usually do not know the sequence of insertions and deletions ahead of time, as it is often the case for temporal graphs where we are provided with a fixed set of snapshots.
In particular, in this context we have to provide methods like \texttt{Insert(G, e)} and \texttt{Delete(G, e)} that have to handle the insertion and deletion of edges, and occasionally vertices, in reasonable running times.

\paragraph{Subgraphs.}
A graph $H = (V', E')$ is called a \emph{subgraph} of $G = (V, E)$ if $V' \subseteq V$ and $E' \subseteq E$; additionally it is a (vertex-) \emph{induced} subgraph (by $V'$) if $E' = E \cap (V' \times V')$, i.e. if $E'$ contains only edges of $G$ whose endpoints are both in $V'$.
A subgraph induced by $V' \subseteq V$ will be denoted by $G[V']$.
Based on the constraints that we pose on a generic subgraph, we obtain different specializations: we summarize the ones that are studied or mentioned in this thesis here, summarized also in Figure~\ref{bg:fig:examples}.
\begin{itemize}\itemsep0em
    %
    \item ($k$-) \emph{Graphlet}: A \emph{connected induced subgraph}, that is a set $S \subseteq V$ such that in $G[S]$ there always exists a path\footnote{Formally, a path is an ordered sequence of vertices $v_1, v_2, \dots, v_k \in V$ such that the edges $\{v_1, v_2\}, \{v_2, v_3\}, \dots, \{v_{k-1}, v_k\} \in E$.} from any vertex of $S$ to any other of $S$ \cite{przulj_modeling_2004, milo_motifs_2002}. When $k$ is specified, $|S| = k$. In practical settings, $k \leq 10$ (see Figure~\ref{bg:subfig:graphlets}). In the particular case of $k = 3$, these subgraphs are called \emph{triangles}, and they have received a lot of attention in recent years as they are easier to enumerate and count, among other advantages.
    \item $k$-\emph{core}: A maximal set of vertices $C \subseteq V$ such that each $v \in C$ has at least $k$ neighbors in $C$ or, equivalently, every vertex in $G[C]$ has degree $\geq k$ (see Figure~\ref{bg:subfig:2-core}) \cite{k_cores}.
    \item ($k$-) \emph{Clique}: A subgraph induced by $K \subseteq V$ such that $G[K]$ is connected and complete, i.e., every vertex in $K$ is neighbor to every other vertex in $K$. A clique is \emph{maximal} whenever $K$ cannot be further extended with a new vertex without breaking the clique definition. Setting a value for  $k$ means requiring that $|K| = k$ (see Figure~\ref{bg:subfig:max-cliques}).
    \item ($k$-) \emph{Edge Subgraph}: Similar to a graphlet, but with $S \subseteq E$ instead of $V$ (see Figure~\ref{bg:subfig:edge-graphlets}). It is to be noted that finding edge subgraphs is equivalent to finding graphlets in $L(G)$, the \emph{line graph}\footnote{The line graph of $G$, $L(G)$, is obtained by creating a vertex in $L(G)$ for each of the edges in $E$, and then adding an edge in $L(G)$ whenever two edges of $G$ share an endpoint.} of $G$.
\end{itemize}

\begin{figure}[!ht]
    \centering
    \begin{subfigure}[b]{\linewidth}
        \centering
        \includegraphics[width=.35\textwidth]{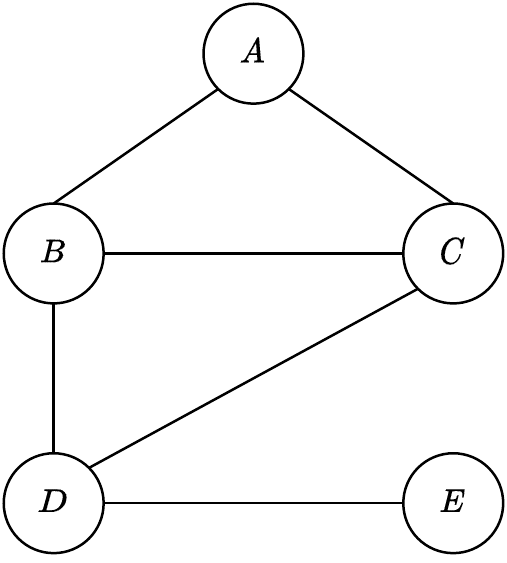}
        \caption{}\label{bg:subfig:graph}
    \end{subfigure}
    \vfill
    \begin{subfigure}[t]{.6\linewidth}
        \centering
        \includegraphics[width=\textwidth]{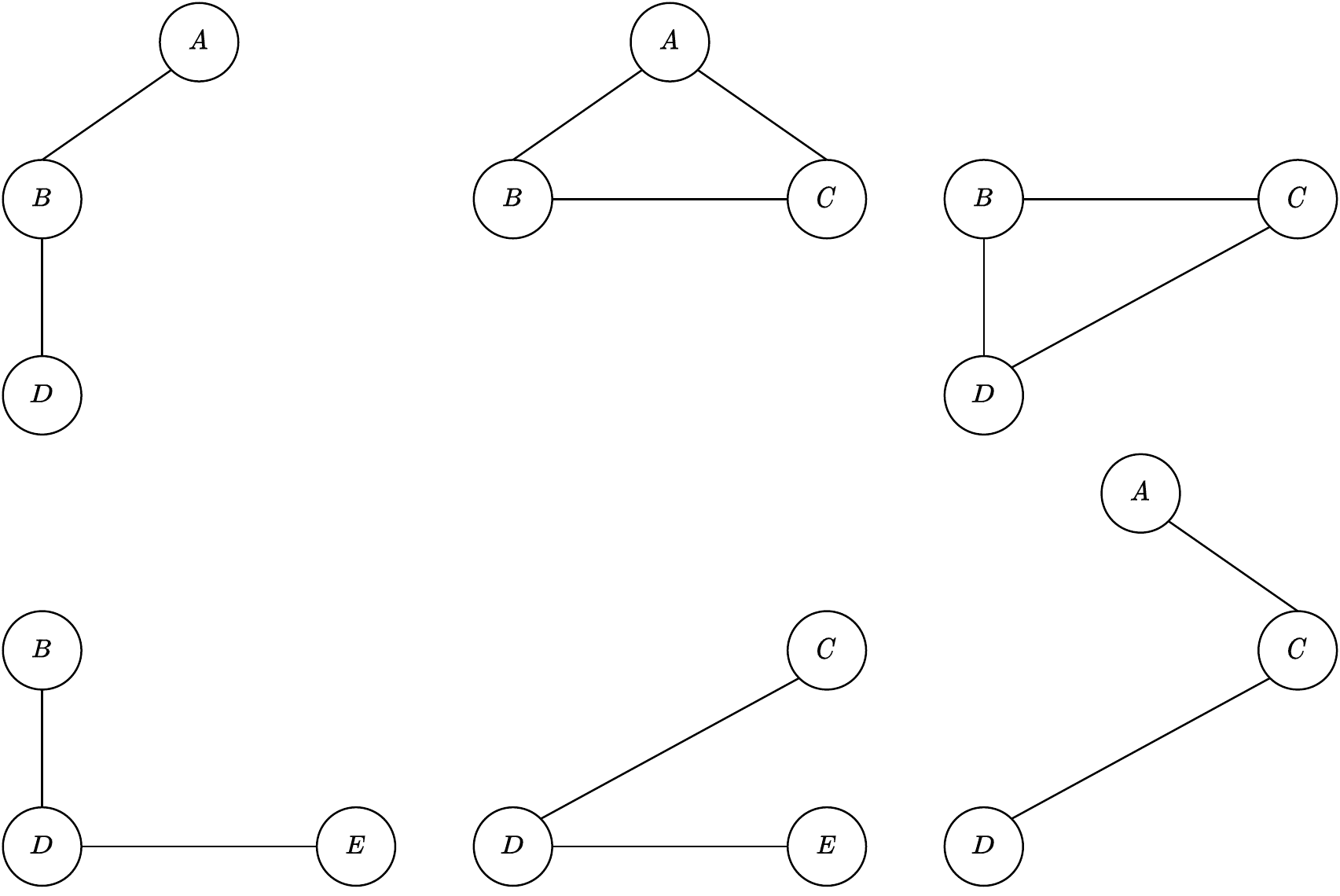}
        \caption{}\label{bg:subfig:graphlets}
    \end{subfigure}
    \begin{subfigure}[t]{.38\linewidth}
        \centering
        \includegraphics[width=.5\textwidth]{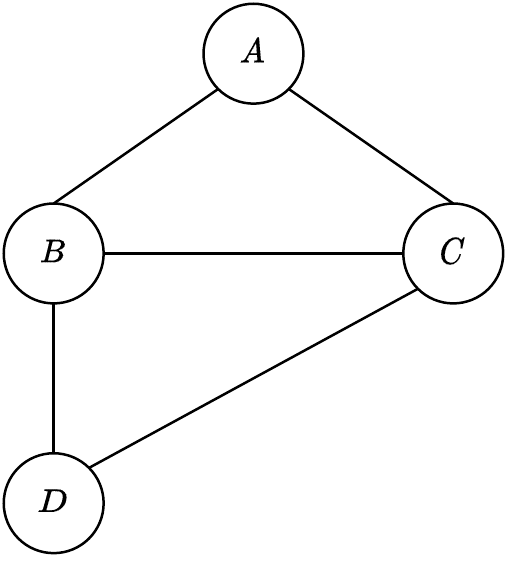}
        \caption{}\label{bg:subfig:2-core}
    \end{subfigure}
    \vfill
    \vfill
    \begin{subfigure}[b]{.4\linewidth}
        \centering
        \includegraphics[width=.95\textwidth]{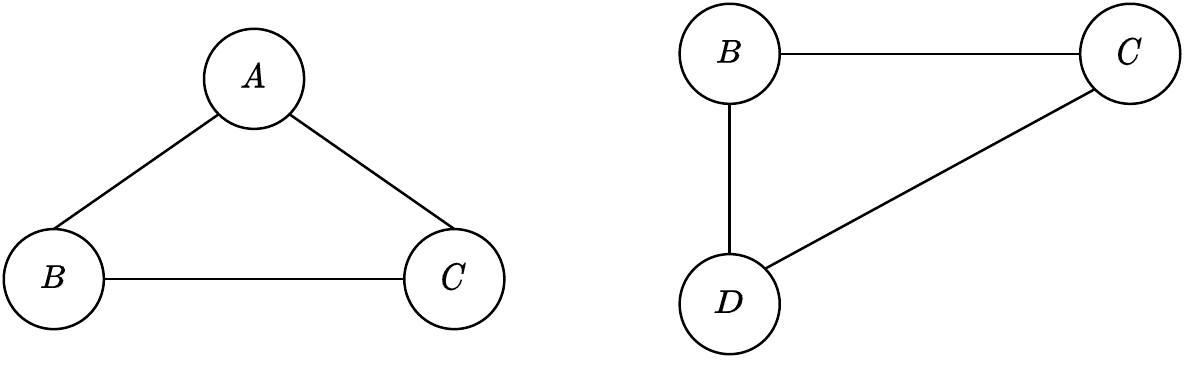}
        \caption{}\label{bg:subfig:max-cliques}
    \end{subfigure}
    \hfill
    \begin{subfigure}[b]{.5\linewidth}
        \centering
        \includegraphics[width=\textwidth]{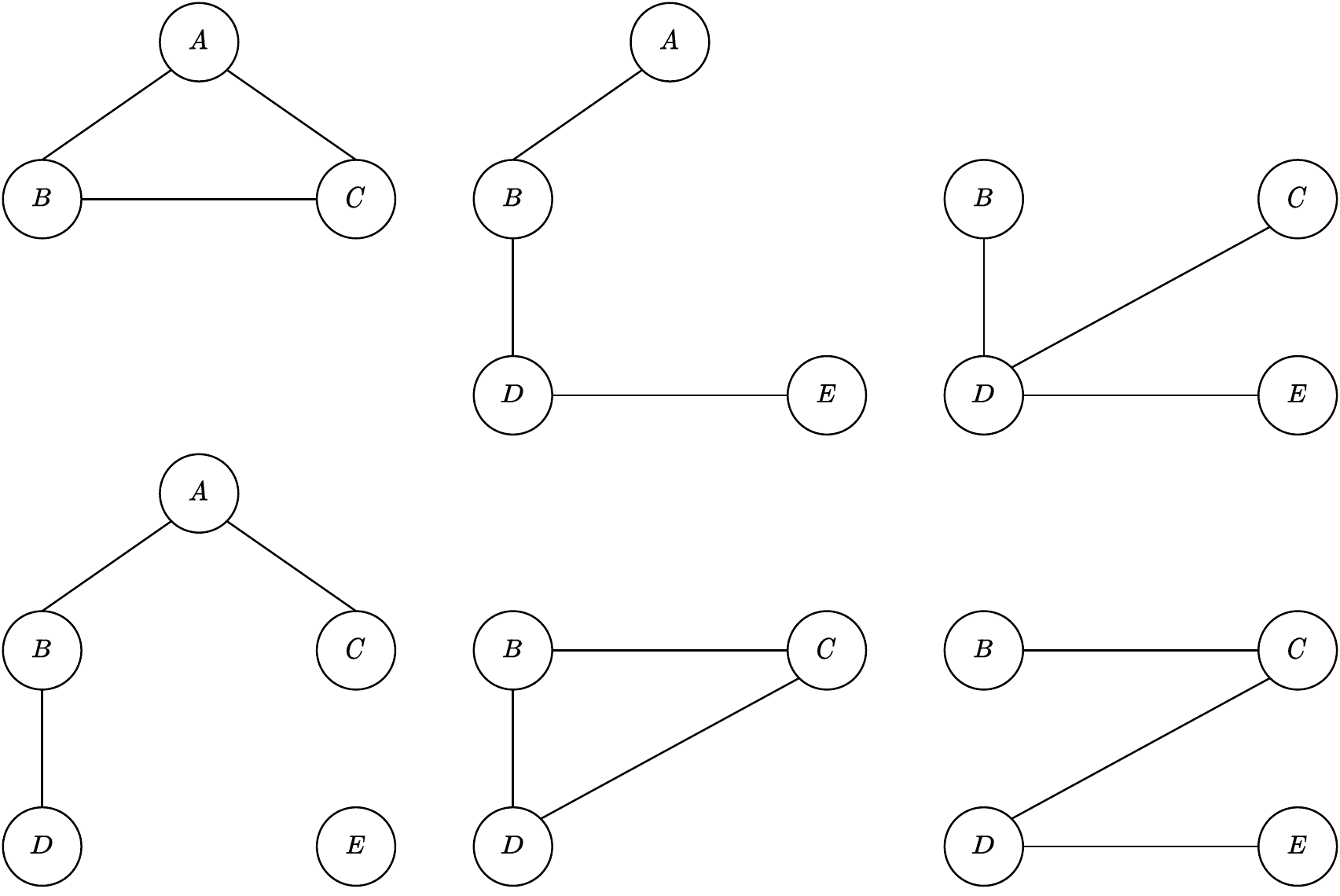}
        \caption{}\label{bg:subfig:edge-graphlets}
    \end{subfigure}
    \caption{(a) a simple graph with 5 vertices and 6 edges; (b) all of its $3$-graphlets; (c) its $2$-core; (d) its maximal cliques; (e) \emph{some} of its 3-edge subgraphs.}
    \label{bg:fig:examples}
\end{figure}

More specialized concepts will be introduced later in their respective Chapters when needed.

As previously noted, these kinds of subgraphs usually appear in an exponential amount even in small graphs (see Figure \ref{fig:brady_graphlets_right} for an example about $k$-subgraphs), with the exception of the $k$-cores which are one for each connected component and require polynomial time to be found.
For this reason, enumeration algorithms \emph{must} take exponential time when there is an exponential amount of solutions to be listed. 
However, in the next section we will refine the process of complexity analysis of these algorithms in order to achieve better and tighter bounds.

\section{Enumeration Theory and Techniques}
When dealing with \NP{}-complete problems, we are left with no choice but to list all the solutions and later answer \emph{yes} or \emph{no} to the original question. 
This process is formally known as \emph{enumeration} and has a rich and deep literature behind. 
We start by introducing two of the most important enumeration techniques, then we will discuss the complexity classes associated with this task and then we will enter the details of the two strategies.

The most used techniques for enumeration nowadays are called \emph{binary partition} and \emph{reverse search}. 
The first is the most traditional one and it usually consists in a recursive approach, enumerating first all the solutions that contain a given element (a node in our case), then removing that element from the input and proceed. 
The latter approach is more recent and it consist in materializing the search space and then traverse it according to a rule that has to be specialized every time to the problem considered. 
The search space is made up by the solutions only, but the paths from one solution to another may be more than one, so one must be careful on the way in which the space is traversed to avoid duplicate solutions and the consequent waste of time.

\subsection{Complexity classes, output sensitive analysis and amortization}\label{sub:complexity_classes}
The work on enumeration has spawned, during the years, an entire theory that explains why two algorithms can achieve significantly different performances despite the fact that both have (maybe the same) exponential time complexity.
In this theory, many complexity classes have been designed with their own hierarchy. We provide here the main results, which are in turn taken from \cite{capelli2017complexity}. We start by formalizing the concept of enumeration problem, using the mathematical definition of binary relation.

\begin{de}[Enumeration Problem]
    Given a binary relation $\mathcal R$ on graphs, and an input graph $G$, list all $H$ such that $(G, H) \in \mathcal R$.
\end{de}

In our case, $H$ is a subgraph of $G$ that satisfies the property required by $\mathcal R$.
Given an enumeration problem $\pi$ and an input instance $x$, we call $\mathcal{A}(x)$ the set of all solutions. In the following we assume that the length of $y \in \mathcal{A}(x)$ is polynomial in $|x|$ and that the computational cost model for algorithms is the RAM \cite{COOK_RAM_Model}, thus we will denote with $T(x,i)$ the time (steps) taken from a Turing machine on input $x$ before the output of the $i$-th solution.
\begin{de}[EnumP]
A problem $\pi$ is in the class \enump{} if the problem of checking $y \in \mathcal{A}(x)$ is in \P.
\end{de}
The \textsc{EnumP} class, informally, consists of the tasks of listing all the solutions of \NP{}  problems. 

\paragraph{Output Sensitive Analysis.} Since we are dealing with an enumeration problem, we have to consider the total amount of time spent computing all the solutions. As this number can intrinsically be exponential (see Figure \ref{fig:brady_graphlets_right}), we may as well give a bound on the computational complexity of an algorithm by using its output size, rather than its input size.
This kind of analysis is more convenient and also more fine-grained with respect to the classical analysis because it will allow us to know, for instance, how much time will it take to get just one solution. 
Ideally, we would like an enumeration algorithm to take \emph{output polynomial time}:
\begin{de}[Output Polynomial Time]
A problem in \enump{} is in \textsc{OutputP} if there exists a polynomial $p(x, y)$ and a machine $M$ that solves it such that for all $x$, $T(x, |\mathcal{A}(x)|) < p(|x|, |\mathcal{A}(x)|)$.
\end{de}
This definition tells us that a problem takes output polynomial time (also \emph{total polynomial time}) if the time taken to list all the elements in $\mathcal{A}(x)$ is bounded by a polynomial in $|\mathcal{A}(x)|$. 

It turns out that these two classes can be seen as analogue to, respectively, \NP{} and \P{} for the enumeration task. In fact, it can be proved that
\begin{prop}[\cite{capelli2017complexity}]
\textsc{OutputP} = \enump{} if and only if \P{} = \NP.
\end{prop}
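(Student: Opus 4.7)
The plan is to prove the equivalence by establishing both directions as standard reductions between the decision and the enumeration worlds, using a single NP problem as the bridge.

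For the forward implication $\textsc{OutputP} = \enump \Rightarrow \P = \NP$, I would fix an arbitrary language $L \in \NP$ with polynomial-time verifier $V$ and witness-length bound $q(|x|)$, and consider the enumeration problem $\pi_L$ that, on input $x$, asks to list every $y$ with $V(x,y)=1$. Checking a candidate is in \P, so $\pi_L \in \enump$, and by hypothesis $\pi_L \in \textsc{OutputP}$; let $M$ be a machine solving it in time bounded by $p(|x|, |\mathcal{A}(x)|)$. The decision procedure for $L$ simulates $M$ on $x$ for $p(|x|,0)+1$ steps and accepts iff $M$ has either emitted at least one output or is still running at that point, and rejects otherwise. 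Correctness splits along whether solutions exist: if $|\mathcal{A}(x)|=0$ the total run time is bounded by $p(|x|,0)$, so $M$ must halt without output inside the budget; if $|\mathcal{A}(x)|\geq 1$ the machine cannot both halt and emit nothing in that time, hence one of the two acceptance conditions must trigger.

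For the reverse implication $\P = \NP \Rightarrow \textsc{OutputP} = \enump$, given $\pi \in \enump$ with polynomial-time solution check, I would consider the NP predicate \emph{``on input $(x,s)$, does there exist $y \in \mathcal{A}(x)$ whose binary encoding has $s$ as a prefix?''}. Under $\P = \NP$ this predicate is decidable in polynomial time, and it immediately yields a self-reduction style enumerator: start with the empty prefix; recursively try to extend the current prefix by $0$ and then by $1$, descending only into branches still compatible with some solution, and emit the complete string when a solution is reached. Each solution is produced after $O(q(|x|))$ oracle calls to the prefix predicate, which gives polynomial delay between consecutive outputs, and therefore output-polynomial total running time.

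The main obstacle lies in the forward direction: an \textsc{OutputP} bound is polynomial in $|\mathcal{A}(x)|$ as well, so if $|\mathcal{A}(x)|$ is exponential we have no a priori polynomial-in-$|x|$ handle on when the first witness appears, and one cannot naively time-slice the enumerator until a solution is seen. The saving asymmetry is that the bound $p(|x|,0)$ collapses to a polynomial in $|x|$ alone; interpreting non-termination within $p(|x|,0)+1$ steps as \textbf{indirect evidence} that $\mathcal{A}(x)$ is nonempty is precisely what turns the enumerator into a polynomial-time decision procedure, and is the delicate point in the argument.
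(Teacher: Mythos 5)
The paper does not prove this proposition itself; it is quoted by citation from Capelli and Strozecki, so there is no in-paper argument to compare against. Your proof is the standard one and is essentially correct: the forward direction rests on the observation that the \textsc{OutputP} bound $p(|x|, |\mathcal{A}(x)|)$ collapses to a polynomial in $|x|$ alone when $\mathcal{A}(x)=\emptyset$, giving a polynomial-step deadline by which an enumerator must halt output-free on a no-instance, while on a yes-instance it must either have emitted something or still be running; the reverse direction is the usual prefix-extension (flashlight) search, which under $\P=\NP$ places every \enump{} problem in \textsc{DelayP} and hence in \textsc{OutputP}. The one caveat worth flagging is that you implicitly read $T(x,0)$ as the \emph{total} running time on solution-free instances; the paper's informal phrasing (``time before the output of the $i$-th solution'') would render $T(x,0)$ vacuous, but the total-time reading is the intended formalization in the cited source, and under it your argument is sound.
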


As we have seen that there exists a correlation between \textsc{OutputP} and \P, one may be tempted to say that \textsc{OutputP} contains all the tractable problems from the enumeration point of view.
Of course this is not the case since the number of solutions may be exponential, but this gives us the motivation to go a step further in the analysis and put some requirements on the time that we have to wait before we are able to get the \emph{next} solution.
\begin{de}[Incremental Polynomial Time]
A problem in \enump{} is in \textsc{IncP}$_a$ if there exists a Turing machine $M$ able to solve it such that for all $x$, for all $0 < t \leq |\mathcal{A}(x)|$ and for constants $a,b,c \in \mathbb{N}$ it holds that $|T(x, t) - T(x, t-1)| < ct^a|x|^b$.
\end{de}
With this definition we are being more specific on what we require to be polynomial: not only do we want the total output time to be polynomial, but also the time for a solution to be output must be polynomial in the size of the already listed solutions.
Using the definition of incremental polynomial time we capture the fact that giving more time to an algorithm \emph{guarantees} more solutions to be output, because we have a polynomial bound on the so-called \emph{delay}.

\begin{de}[Delay of an Enumeration Algorithm]
The \emph{delay} of an enumeration algorithm is the time that interleaves between two consecutively output solutions.
\end{de}

The goal that we wish to achieve is, at least, \emph{polynomial delay} that is the class corresponding to \textsc{IncP}$_0$:
\begin{de}[Polynomial Delay]
A problem in \enump{} is in \textsc{DelayP} (i.e. has polynomial delay) if there exists a Turing machine $M$ which solves it, such that for all $x$ and for all $0 < t \leq |\mathcal{A}(x)|$ it holds that $|T(x, t) - T(x, t-1)| < C|x|^a$ for $C, a \in \mathbb{N}$ constants.
\end{de}
An algorithm that has polynomial delay gives us the strong guarantee that only a polynomial (in the size of the input) amount of time will have to be waited for the next solution to be output. 
Many common problems, like the listing of cliques or graphlets, admit a polynomial delay solution; in fact, often the simplest algorithms are able to achieve this complexity thanks to the following result.

The problem \textsc{ExtSol(A)} involves determining, for a given pair of inputs $x$ and $w_1$, whether there exists another input $w_2$ such that the concatenated string $s = w_1w_2$ belongs to the set $\mathcal{A}(x)$.
\begin{prop}[Time complexity of the Extension Problem\cite{DBLP:conf/stacs/MaryS16}]
 If \textsc{ExtSol(A)} $\in$ \P{}, then the initial enumeration problem $\pi \in $ \textsc{DelayP}.
\end{prop}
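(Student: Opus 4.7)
The plan is a standard \emph{flashlight} (backtracking) search over the tree of prefixes of candidate solutions. Since $\pi\in\enump$ and by assumption solutions $y\in\mathcal{A}(x)$ have length bounded by some polynomial $p(|x|)$, I would fix an alphabet $\Sigma$ and consider the implicit rooted tree whose root is the empty string, whose children of a node $w$ are $\{wc : c\in\Sigma\}$, and whose depth is cut off at $p(|x|)$. The leaves that belong to $\mathcal{A}(x)$ are exactly the solutions to enumerate, and (by definition of \enump) membership $w\in\mathcal{A}(x)$ is itself testable in polynomial time.

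I would then traverse this tree depth-first, but before descending into any child $wc$ I would invoke the polynomial-time oracle for \textsc{ExtSol(A)} on $(x,wc)$ and only recurse when the answer is \emph{yes}. Whenever a visited node $w$ passes the membership test, it is output as a solution and the DFS proceeds to the next. Correctness is immediate: every $y\in\mathcal{A}(x)$ is reached exactly once, and no solution is lost because \textsc{ExtSol(A)} only prunes subtrees that provably contain no completion.

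For the delay bound, observe that between two consecutively output solutions $y_i$ and $y_{i+1}$ the search walks up from $y_i$ to the deepest common ancestor of $y_i$ and $y_{i+1}$ and then back down to $y_{i+1}$. The length of this walk is at most $2p(|x|)$ since the tree has depth $p(|x|)$, and at each visited node we perform at most $|\Sigma|$ calls to \textsc{ExtSol(A)} plus one membership test, each of polynomial cost. Composing the two polynomials yields a delay that is polynomial in $|x|$, independently of the index $t$; the preprocessing before the first solution and the final cleanup after the last one are bounded in the same way. This is precisely the condition of \textsc{DelayP}, so $\pi\in\textsc{DelayP}$.

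The argument has no deep obstacle; the only thing to check carefully is the charging of \textsc{ExtSol(A)} calls so that they are really amortized \emph{between} consecutive outputs rather than globally. This is handled by the fact that every non-outputting node in the DFS sits on the walk from $y_i$ to $y_{i+1}$, whose length is already bounded by the tree depth, so all its local work is correctly charged to the single delay interval $[T(x,t-1),T(x,t)]$ and the claimed bound follows.
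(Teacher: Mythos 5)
Your flashlight/backtracking argument is correct and matches the approach the paper alludes to in the sentence immediately following the proposition (binary partition on the next symbol, using \textsc{ExtSol} to prune empty subtrees); the paper itself states this result without proof, citing \cite{DBLP:conf/stacs/MaryS16}. A cosmetic remark only: solutions in $\mathcal{A}(x)$ need not all lie at the maximal depth $p(|x|)$ as your opening framing suggests, but your algorithm already outputs every visited node that passes the polynomial-time membership test, so the delay analysis goes through unchanged.
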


\textsc{ExtSol(A)} is at the core of many binary partition algorithms, where we partition, for instance, the set of solutions into those that begin with a 0 and those beginning with a 1 recursively. Then, using \textsc{ExtSol}, we decide whether those subsets of solutions are empty or if we should continue recurring. 

Another useful technique to produce a polynomial delay algorithm is the following. 
Given a graph $G = (V, E)$ and a property $\mathcal{P}$, we want to list all the induced subgraphs of $G$ that satisfy $\mathcal{P}$ (e.g. maximal cliques).
Now define the \emph{input restricted problem} to be the same problem with the additional constraint that $\exists v \in V $ such that $G \setminus \{v\} \in \mathcal{P}$.
It is known that if the input restricted problem can be solved in polynomial time, i.e. it has only a polynomial number of solutions, then the general problem is solvable in polynomial delay.
While the \textsc{ExtSol} strategy generally fits in binary partition algorithm, this latter technique adapts well to reverse search.

\paragraph{Amortized Analysis and Constant Delay.} 
The last class that we present here is one of the most restrictive, hence the most desirable: it is the class of enumeration problems that admit a \emph{constant delay} algorithm. 
However, the concept of constant delay is very sensitive to details like the computational model adopted, the cost model and the complexity measures defined. There exist at least four different flavors of this concept \cite{capelli2017complexity}, and the one that we will make use of is the constant average delay or \emph{Constant Amortized Time}.
\begin{de}[Constant Amortized Time]
An algorithm has a constant amortized time (CAT) complexity if its total running time divided by the number of solutions is constant.
\end{de}
Take, for instance, the listing of all integers between $0$ and $2^n-1$ in ascending order. Adding one to an integer number may change up to $n$ bits, but on average only 2 bits change and the time needed is proportional to the number of changed bits. Thus, over the whole enumeration, only a constant number of operations per solution is required. \cite{Ruskey03combinatorialgeneration} contains many examples of CAT algorithms.
The idea, more generally, is to look at the total time and divide it across all solutions found to get the amortized time \emph{per solution}.
We will make heavy use of amortized analysis later in the thesis.

\subsection{Binary Partition}
Binary Partition is a common and intuitive technique that can be used in enumeration.
As pointed out earlier, it is a recursive approach that works around every single element of the input: first enumerate all the solutions that contain this element, then remove (or ignore) it for the rest of the computation (hence the name \emph{binary}). 
Despite the ease of use of this strategy, many algorithms achieve very good time bounds or, if not, they perform really well in practice thanks to the amortized analysis of the recursion tree generated by a specific algorithm.

The amortized time per solution required by a binary partition algorithm depends on how the corresponding recursion tree behaves, since the listed solutions are contained in leaf nodes.
If the recursion tree contains leaves that do not produce any output (i.e. when the algorithm notices that no further solutions can be discovered along a path) we have an algorithm with exponential or possibly unbounded delay.
If, instead, the tree always has solutions in the leaves we are dealing with a polynomial amortized time algorithm, with two possible alternatives. 
Let us denote the running time of a recursive call by $p(n)$ and suppose that the tree contains unary nodes, that is a path of nodes of length at most $h$ (the height of the tree). In this case the amortized time per solution of the algorithm will be $O(h \cdot p(n))$. On the other hand, if no unary node exists and at the same time every leaf is a solution, the amortized time will be bounded by $O(p(n))$, because each node of binary tree will always produce two children (or be a leaf) and the number of internal nodes is bounded by the number of leaves, hence the total time will be $O(N \cdot p(n))$, where $N$ is the number of solutions produced.
Our goal then would be to keep $p(n)$ as small as possible, ideally $p(n) = O(1)$.
This is quite a difficult task because it is hard to use only constant time during a recursive call, as we have to at least perform some kind of preparation for the next child in the tree. 


\subsection{Push-Out Amortization}
It is a technique designed by Uno \cite{Uno_pushout} for analyzing amortized time enumeration algorithms.
In traditional binary partition algorithms we can observe how the most expensive nodes are those near the root of the recursion tree, usually because the size of the input in early stages is large, so there is more work that needs to be done even for just reading it. 
When we approach the lower nodes, those near the leaves, the problem becomes easier at the point that a leaf may take a constant amount of time.
The key idea to push-out amortization is that we should reverse this concept and make the children nodes cost, together, more than their parent node. If this is the case the parent can \emph{offload} part of its complexity to the children recursively until the leaves which should, instead, keep their very low time bound. 
If an algorithm allows for a push-out amortized analysis, then its cost per solution will be bounded by the time spent into a leaf computation, as stated by Theorem \ref{th:pushout}.

In the following we denote by $X$ a generic recursive call of an enumeration algorithm, that is an internal node of the recursion tree. We denote the set of children spawned by $X$ with $C(X)$, and the sum of the computational time spent by all the children of a node is denoted by $\Bar{T}(X) = \sum\limits_{Y \in C(X)}^{} T(Y)$. Finally, $T^*$ denotes the running time of a leaf iteration.
\begin{theorem}[Push-Out Amortization \cite{Uno_pushout}]\label{th:pushout}
If every internal node of the recursion tree satisfies the following \textbf{push-out condition}:
\[
\Bar{T}(X) \geq \alpha T(X) - \beta |C(X) + 1| T^*
\]
for some constants $\alpha > 1, \beta \geq 0$, $\alpha, \beta \in \mathbb{R}$, then the amortized running time of every recursive call is $T(X) = O(T^*)$.
\end{theorem}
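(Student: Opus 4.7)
The plan is to prove the theorem by a global summation argument over the recursion tree, rather than by a direct induction on nodes. Let the recursion tree have $L$ leaves, $I$ internal nodes, and $N = L + I$ total recursive calls, and write $T_\mathrm{total} = \sum_X T(X)$ for the total running time. I would first derive two elementary identities. Since every non-root node appears as a child of exactly one internal node, I obtain the telescoping equality
$$\sum_{X \text{ internal}} \bar{T}(X) \;=\; T_\mathrm{total} - T(\mathrm{root}).$$
A similar counting yields $\sum_{X \text{ internal}} (|C(X)|+1) = (N-1) + I = 2I + L - 1$, because $\sum_X |C(X)|$ counts all non-root nodes.

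Next, I would sum the push-out condition $\bar{T}(X) \geq \alpha T(X) - \beta(|C(X)|+1)\, T^*$ over every internal $X$. Substituting the two identities above together with the trivial bound $\sum_{X \text{ leaf}} T(X) \leq L\, T^*$ (which follows from the definition of $T^*$), this gives
$$T_\mathrm{total} - T(\mathrm{root}) \;\geq\; \alpha\bigl(T_\mathrm{total} - L\, T^*\bigr) - \beta\,(2I + L - 1)\, T^*.$$
The assumption $\alpha > 1$ is used here and only here: collecting the $T_\mathrm{total}$ terms on the left and dividing by $\alpha - 1 > 0$, and using $T(\mathrm{root}) \geq 0$, I get the key inequality
$$T_\mathrm{total} \;\leq\; \frac{(\alpha+\beta)\, L \;+\; 2\beta\, I}{\alpha - 1}\, T^* \;\leq\; C(\alpha,\beta)\cdot N \cdot T^*,$$
where $C(\alpha,\beta) = \max(\alpha+\beta,\, 2\beta)/(\alpha-1)$ depends only on the constants $\alpha$ and $\beta$. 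Dividing by $N$ shows that the amortized cost per recursive call is $T_\mathrm{total}/N = O(T^*)$, which is exactly the statement of the theorem.

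The main obstacle is sign management. The push-out condition gives a \emph{lower} bound on the children's sum $\bar{T}(X)$, yet we need an \emph{upper} bound on the global quantity $T_\mathrm{total}$; the argument only closes because $\alpha > 1$ makes the coefficient of $T_\mathrm{total}$ strictly positive after rearrangement, quantifying the intuition that the children must collectively dominate the parent's cost. A secondary concern is to ensure that the boundary contributions (namely $T(\mathrm{root})$ on one side and $\sum_{X \text{ leaf}} T(X)$ on the other) are absorbed into the final $O(N\, T^*)$ bound rather than producing an uncontrolled additive error; this is handled by the trivial estimates $T(\mathrm{root}) \geq 0$ and $T(X) \leq T^*$ at the leaves. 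An inductive proof on subtrees with invariant of the shape $T_\mathrm{total}(X) \leq A L(X) T^* - B T(X)$ seems tempting but does not close, because the inductive step requires a pointwise lower bound on $T(X)$ in terms of $|C(X)|$ that the push-out condition does not supply; this is precisely why the global sum, where the $|C(X)|$-terms telescope against the node count, is the right level of abstraction.
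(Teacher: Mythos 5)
Your global summation argument is a correct and self-contained proof of the theorem. The algebra checks out: summing the push-out condition over all internal nodes makes $\bar T(X)$ telescope to $T_{\mathrm{total}} - T(\mathrm{root})$, makes $\sum_{X\ \mathrm{int}}(|C(X)|+1)$ telescope to $2I + L - 1$, and after substituting $\sum_{X\ \mathrm{int}} T(X) = T_{\mathrm{total}} - \sum_{X\ \mathrm{leaf}} T(X) \geq T_{\mathrm{total}} - LT^*$ and collecting, the coefficient of $T_{\mathrm{total}}$ is $\alpha - 1 > 0$, giving $T_{\mathrm{total}} \leq \frac{(\alpha+\beta)L + 2\beta I}{\alpha-1}\,T^* = O(N T^*)$. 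The boundary terms are correctly handled: $T(\mathrm{root}) \geq 0$ can be dropped, and $\sum_{\mathrm{leaf}} T(X) \leq LT^*$ relies on reading $T^*$ as a uniform upper bound on leaf running times, which is the intended reading of the paper's phrase ``the running time of a leaf iteration.''

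One thing to be aware of: the thesis itself does not prove this theorem — it states it and explicitly refers the reader to Uno's original paper for the proof (``More details on this can be found in the proof of Theorem~\ref{th:pushout} in \cite{Uno_pushout}''). So there is no in-paper argument to compare against. Relative to Uno's own exposition, which narrates a local \emph{charging} scheme in which each node retains $O(T^*)$ and pushes its excess downward to its children, your global-sum presentation is essentially the same bookkeeping carried out in one step: the telescoping identities you use are exactly the global accounting that makes the local ``push'' converge. The global framing is arguably cleaner because it avoids having to argue that the charge actually terminates at the leaves and never circulates, and it exposes clearly that $\alpha > 1$ is the single point where the inequality closes. Your closing remark about why a subtree-induction with the stated invariant fails is reasonable and helps explain why the global sum is the right level of abstraction, though Uno's original local formulation does close by carrying a different inductive invariant that tracks the pushed-out surplus rather than $T(X)$ itself.
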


The main intuition behind this technique is that, in enumeration, the number of leaf nodes is fixed and so is the total running time of the bottom level. For this reason, using this technique we can amortize each time the cost of a node on its children until we arrive at the leaves which will always produce solutions. 
This strategy is actually a theoretical rearrangement of the computational cost among all recursive calls in a way such that a piece of running time is delegated to the children and the rest is kept by the parent. 
More details on this can be found in the proof of Theorem \ref{th:pushout} in \cite{Uno_pushout}, as well as some applicative examples.

\subsection{Reverse Search}
Reverse search is a more recent technique compared to binary partition, proposed by Avis and Fukuda in 1996 \cite{avis_reverse_1996}. 
It can be seen as the opposite of binary partition: in fact it is more of a bottom-up technique in contrast with binary partition which is a traditional top-down approach.
Informally, the idea of reverse search is to start already from a solution and then travel the search space jumping between two ``connected'' solutions, without ever creating a non-output instance. Careful attention, therefore, is required when defining the rule according to which we are going to move in the search space, as its computational complexity will influence the overall complexity of the reverse search algorithm.
We can then visualize the search space as a graph containing all the solutions, with edges between two vertices representing the (possibly multiple) ways to construct a new solution from an existing one. 
Suppose that we are able to travel an edge in $O(p(n))$ time, where $n$ represents the size of the input. 
If we are able to visit each solution vertex only once, it is clear that the total running time of the reverse search algorithm will be $O(|\mathcal A(x)| \cdot p(n))$, providing us with an output-sensitive algorithm by construction.
The main challenge here is therefore the design of the so-called \emph{parent rule}, which has to be very efficient and also not visit the same solution twice.
Summarizing, the advantages of a good implementation of reverse search with respect to binary partition are the following: $(a)$ guaranteed output-sensitivity, $(b)$ polynomial space complexity, and $(c)$ prone to parallelization.
While these are definitely appealing advantages, reverse search algorithm are more difficult to analyze and cannot be applied blindly to any problem. In fact, it is possible that the problem corresponding to the parent rule is \NP{}-complete itself.
On the other hand, binary partition is easier to describe, understand, and often implement.
Furthermore they can be very fast in practice despite a worse asymptotical complexity with respect to reverse search.
The original authors of reverse search already provided algorithms for six enumeration problems, among which we also find the enumeration of all graphlets in a graph regardless of their size. It is worth noting that there exists a push-out amortized algorithm for the same latter task, achieving $O(1)$ time per solution \cite{Uno_pushout}.

For the aforementioned reasons, this thesis puts the focus on the binary partition method instead of reverse search.

\section{Related Work}
In this section we give a brief overview of some of the most well-known algorithms for graphlet enumeration and counting, consisting in the current state of the art, as well as some analytical methods. We highlight the advantages and drawbacks that each category has, and we discuss the reasing behind our choices for this thesis.

\subsection{Analytical Methods}
When we do not need to use the solutions, we may just want to count them. For example one can count the number of cliques in graph with a given number of vertices, the number of triangles, the number of 5-graphlets and so on. 
Although not always the case, it is possible to count the number of solution without having to enumerate them all first. 
The adjacency matrix of a graph can be exploited to extract very useful information about patterns inside a network. 
In fact, a large number of algorithms is based on matrix multiplication, for instance, but in general any kind of matrix operation has been applied to extract knowledge about the structure of a given graph.

The counting can be \emph{exact} or \emph{approximated}, based on the needs, and here we give some examples from both categories. Clearly, requesting an exact count for a general pattern is not possible as the associtated problem is still $\NP$-hard, but in some specific cases this can be done efficiently enough. Otherwise we resort to an estimate of the counting that, of course, we wish to be as close as possible to the true number.

\paragraph{Analytical Methods at a Glance.}
Apart from matrix multiplications, analytical methods use many other features of the adjacency matrix $A$, in particular its eigendecomposition.
For instance, it is known that the number of triangles in a graph can be computed using the eigenvalues: in fact we have that $|triangles| = 1/6 \sum_i(\lambda_i^3)$, where $\lambda_i$ are the eigenvalues of the adjacency matrix. This result is due to Tsourakakis in 2008 \cite{tsourakakis_triangles_2008}, which was later extended by Faloutsos et al. \cite{faloutsos_spectral_triangles_2011} with the following Theorem.
\begin{theorem}[Theorem 1 of \cite{faloutsos_spectral_triangles_2011}]
Given a graph $G = (V, E)$, the number of triangles that the edge $(i, j) \in E$ participates in is:
\[
t_{(i, j)} = \sqrt{d_id_j}\left(1-\frac{\left\Vert\frac{A^{(i)}}{\sqrt{d_i}} - \frac{A^{(j)}}{\sqrt{d_j}} \right\Vert^2}{2} \right)
\]
where $A^{(k)}$ is the $k$-th column of the adjacency matrix $A$ and $d_k$ is the degree of node $k$.
\end{theorem}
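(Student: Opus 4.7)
The plan is to reduce the stated identity to the elementary fact that the number of triangles through an edge $\{i,j\}$ equals the number of common neighbors of $i$ and $j$, and then to recognize the right-hand side as an algebraic rewriting of this common-neighbor count. First I would set up the combinatorial identity: since $A_{ki}=1$ exactly when $k\in N(i)$ (and $A_{ki}=0$ otherwise), the product $A_{ki}A_{kj}$ indicates whether $k$ is adjacent to both $i$ and $j$. Summing over $k$ gives $\langle A^{(i)}, A^{(j)}\rangle = |N(i)\cap N(j)| = t_{(i,j)}$, because each common neighbor $k$ together with $i,j$ forms exactly one triangle on the edge $\{i,j\}$.

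Next I would expand the squared Euclidean norm via the polarization identity $\|x-y\|^2 = \|x\|^2 + \|y\|^2 - 2\langle x,y\rangle$ applied to $x = A^{(i)}/\sqrt{d_i}$ and $y = A^{(j)}/\sqrt{d_j}$. This gives
\[
\left\Vert\frac{A^{(i)}}{\sqrt{d_i}} - \frac{A^{(j)}}{\sqrt{d_j}}\right\Vert^2 = \frac{\|A^{(i)}\|^2}{d_i} + \frac{\|A^{(j)}\|^2}{d_j} - \frac{2\langle A^{(i)}, A^{(j)}\rangle}{\sqrt{d_id_j}}.
\]
The crucial observation (and the one small point that actually requires justification) is that $\|A^{(k)}\|^2 = d_k$. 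This holds because $A$ is a $0/1$ matrix for a simple graph without self-loops, so $\sum_\ell A_{\ell k}^2 = \sum_\ell A_{\ell k} = d_k$; this is where the hypothesis that the graph is simple enters, and it is the only step where one must be careful. Thus the first two terms collapse to $1+1=2$.

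Finally I would substitute back and simplify. Using the previous step together with $\langle A^{(i)},A^{(j)}\rangle = t_{(i,j)}$, the bracketed quantity becomes
\[
1 - \frac{1}{2}\left(2 - \frac{2\,t_{(i,j)}}{\sqrt{d_id_j}}\right) = \frac{t_{(i,j)}}{\sqrt{d_id_j}},
\]
so multiplying by $\sqrt{d_id_j}$ gives exactly $t_{(i,j)}$, as claimed. There is no real obstacle in this argument: the proof is essentially one expansion plus the identity $\|A^{(k)}\|^2=d_k$. The main thing to emphasize is that the identity is specific to simple undirected graphs, and that the algebraic expression is really just $\langle A^{(i)}, A^{(j)}\rangle$ in disguise, rewritten in a form that exposes the cosine-like similarity between the neighborhood indicator vectors.
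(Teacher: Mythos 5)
Your derivation is correct, and since the paper only cites this theorem from Tsourakakis et al.\ without reproducing its proof, there is nothing in the text to compare against; your argument is exactly the natural one, namely recognizing $\langle A^{(i)},A^{(j)}\rangle$ as the common-neighbor count and then expanding the normalized squared distance via $\|x-y\|^2=\|x\|^2+\|y\|^2-2\langle x,y\rangle$ together with $\|A^{(k)}\|^2=d_k$. You correctly flag the one place the simple-graph hypothesis is needed (no self-loops, so $A_{ii}=A_{jj}=0$ and the diagonal contributes nothing, and $0/1$ entries square to themselves), and the final cancellation to $t_{(i,j)}/\sqrt{d_id_j}$ is right.
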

While this is an exact computation, it still requires much time given the size of graphs nowadays. For this reason, the authors propose to \emph{sketch} all the columns of $A$ (with size $O(\log\frac{|V|}{\delta})$) and then for each edge estimate the number of triangles $t_{ij}$ using the inner product of the sketches. 

This kind of spectral analysis was then adapted to run on large sparse graphs using the \textsc{MapReduce} environment provided by \textsc{Hadoop}, and it was able to handle graphs with up to 6 billion edges and give an accurate estimate of the number of triangles in it \cite{faloutsos_heigen_2011}. This algorithm, \textsc{HEigen}, is based on the low-rank approximation of the adjacency matrix in order to extract its top $k$ eigenvalues.

Triangle counting through eigenvalues has also been extended to dynamic graphs, in particular when they are seen through a streaming lens, i.e. when the edges of the graph come and go at each timestep. The algorithm \textsc{ThinkD} (2019), due Shin, Kim, Hooi and Faloutsos \cite{faloutsos_thinkd_2019}, is able to correctly estimate the number of triangles in this environment through a more accurate sampling of the edges. Previous algorithms in fact were not able to handle edge deletions, or suffered from low accuracy in the results.

It is worth mentioning that applying similar techniques to dynamic or temporal graphs may require to deal with tensors, see for instance the survey in \cite{faloutsos_tensor_survey}.

These methods, together in the broader family of community detection, allow in particular for anomaly and fraud detection in graphs \cite{faloutsos_corescope}.
In fact, since real-world graphs are not random\footnote{i.e. they cannot be represented with a Erdos-Renyi model \cite{erdos_renyi}.}, they contain patterns that can highlight semantic differences in the regions of the graphs. 
Consider, for example, the Ebay system of reviews that can be exploited by creating fake positive reviews in order to boost the sale performances of (malicious) sellers.
This kind of behavior can be identified with the techniques specified above, since malicious sellers and reviewers will form many more triangles among them with respect to the rest of the graph, where the good users are placed \cite{ebay_negotiations_analysis, faloutsos_ebay_2008}.
Similar analysis can be performed on the Twitter social network, where we can investigate why people follow each other, i.e., find the motivation for which the users followed each other in the first place \cite{faloutsos_twitter_2015};
other examples include the discovery of influential users in the network, since they will participate in many more triangles than ``common'' users. 
This, of course, can also be maliciously exploited using the \emph{retweet} action within a circle of people in order to boost their visibility. 
Activities like this, however, can be discovered again by triangles, cluster, homogeneity analysis \cite{faloutsos_twitter_deception_2015}, and also with $k$-cores \cite{faloutsos_corescope}.

One important thing common to all these analytical methods, is the fact that, as we said earlier, they are not able to instantly identify where and which are the communities considered fraudulent or anomalous, and this is because they do not list them. 
In fact, analytical methods can detect and signal the anomaly, but then other operations are required in order to understand the reasons for these events, and often a manual inspection of the data is needed, trading with the speed of the algorithms used.
This is in contrast with enumeration, which by nature already provides the elements of the communities highlighted and can therefore lift a heavy amount of subsequent work, although in exchange with a more time-consuming algorithm.
Finally we note that due to the sheer size of nowadays graphs, it is often not possible to work with the matrix representation of graphs as we prefer the adjacency list representation that is also more prone to compression \cite{webgraph}.

For all the above reasons, we choose to put our focus more on combinatorial enumeration algorithms.


\subsubsection{Graphlet Counting}
There are many algorithms that are capable of giving an exact count of the desired subgraph. 
Their main drawback, however, lies in the limitations they often pose on the size of the graphlets that can be found, either because counting larger graphlet will make the algorithm really slow or because the arguments used to count will suffer from combinatorial explosion and it will not be possible to prove their correctness. 

One algorithm belonging to the latter category is PGD by Ahmed et al. \cite{PGD}.
This method is a completely analytical extension the classic triangle counting algorithm except that during its execution it computes several primitives that are then used to not only count 3- and 4-graphlets, but also the frequency of each orbit\footnote{Orbits are automorphism classes of graphlets, and they allow to visualize the \emph{shape} of a graphlet: for instance, a 5-graphlet may resemble a kite or a line or a pentagon, etc.} for these classes of subgraphs.
All of this comes at the cost of being able to work only with 3- and 4- graphlets. Nonetheless it is one of the fastest counting methods and it is also highly parallelizable.
Another thing to point out is that this algorithm does not use the adjacency matrix, nor it relies on any linear algebra technique, it is constructed by pure combinatorial arguments.

\textsc{Orca} \cite{orca} is instead one of the algorithms based on matrix multiplications, and it is able to count graphlets of up to 5 vertices along with their orbit. It works by setting up a system of linear equations per vertex of the input graph, with the variables being the frequencies of each orbit. The constructed matrix will have rank equal to the number of orbits minus 1, so in order to solve it we have to find just the value of one orbit's frequency and then use any standard linear algebra method to solve the whole system.
Usually, the choice of which orbit to count first corresponds either to the clique or the one with least frequency if the graph is sparse enough.
The \textsc{Jesse} \cite{jesse_1, jesse_2} algorithm builds on \textsc{Orca} and tries to extend it by selecting the least expensive equations.

For approximated algorithms, we mention that they are split in several categories, from randomized methods to path sampling or random walks. 
Path sampling algorithms, for instance, try to count all the orbit frequencies by testing how many times a path (which is in turn a graphlet) made by the same number of nodes ``fits'' inside other type of (non-induced) graphlets.
As a final note, not all the approximated counting algorithms are able to provide frequencies for the orbits and, if they do, they often put a limit on the size of the graphlets.

A comprehensive survey of algorithms for graphlet counting, exact and approximated, can be found in \cite{survey_graphlet}.

\subsection{Graphlet Enumeration}\label{sec:ks_simple_baseline}

Graphlets are one of the simplest kind of subgraph, requiring only the connectivity among a set of vertices and thus greatly increasing the number of solutions to be found through enumeration.
Despite this, many algorithms have been proposed over the years since their introduction in 2004 \cite{przulj_modeling_2004}, and they are now able to handle large graphs quite efficiently.

The classical approach to list all $k$-graphlets in a graph $G=(V,E)$ is based on binary partition. Intuitively, for a graph $G = (V,E)$ and a node $v \in V$, we initialize $S = \{v\}$, and enumerate the $k$-graphlets containing $v$ as the smallest node (so to avoid duplication) by considering a node $u \in N(S)$ and two cases: the $k$-graphlets containing $u$, and those that do not. In the first case, enumeration continues with $S := S \cup \{u\}$, and $k$ reduced to $k-1$; in the second, with the same $S$ and the $k$-graphlets in $G \setminus \{u\}$.
In particular, after setting an arbitrary scanning order for the nodes $v \in V$, each graphlet is built by recursively adding a member of $N(S)$ to $S$, after the initial call with $S = \{v\}$. We get to a recursive leaf when we reach one of the following:
\begin{itemize}\itemsep0em
    \item (\emph{success leaf}) when $|S| = k$;
    \item (\emph{failure leaf}) when $|S| < k$ and $|N(S)| = 0$.
\end{itemize}

A detailed analysis on failure leaves is available in Chapter~\ref{chap:cage}.

Historically, one of the first algorithm able to enumerate all $k$-graphlets is ESU \cite{wernicke_2006}, while the current state-of-the-art algorithm is that of Komusiewicz and Sommer \cite{SIMPLE_graphlets}, hereafter called KS-Simple and summarized in Algorithm~\ref{alg:baseline}, that follows the above binary partition scheme, but with a clever optimization provided below in Proposition~\ref{prop:KS}.

\begin{algorithm}[htb]
\small
\DontPrintSemicolon
  \KwIn{A graph $G=(V,E)$, an integer $k$}
  \KwOut{All $k$-graphlets of $G$}
  \SetKwProg{myproc}{Function}{}{}
  \SetKwFunction{enum}{ENUM}
  \SetKw{true}{true}
  \SetKw{false}{false}
  
  $X \gets \emptyset$\; 
  \ForAll{$v \in V$}{\label{baseline:a}
    \tcp{graphlets in $G \setminus X$ from $v$}
    \enum{$G, \{v\}, k, X$} \tcp*[h]{list graphlets containing $v$ as first node}\;\label{line:enum}
    $X\gets X\cup \{v\}$ \tcp*[h]{remove $v$}\;
    \label{baseline:b}
  }
  \myproc{\enum{$G$,$S$,$k$,$X$}}{\label{baseline:c}
  \tcp{find all graphlets in $G \setminus X$ containing all nodes in $S$}
    \lIf{$|S| = k$}{
      \textbf{yield} $S$;  \Return \true
    } 
    
    \textsl{found} $\gets$ \false\;
    
    \ForAll{$u \in N(S) \setminus X$}{ \label{baseline:d}
      \uIf{\enum{$G, S \cup \{u\}, k, X$} \label{baseline:d1}}{\textsl{found} $\gets$ \true \tcp*[h]{include $u$}}
      \lElse{\textbf{break}\label{baseline:d2}}

      $X\gets X\cup \{u\}$ \tcp*[h]{exclude $u$}\; \label{baseline:e}
    }
    
    \Return \textsl{found}
  
  } 
  
  \caption{KS-Simple: $k$-graphlet enumeration in $O(k^2\Delta)$ time per graphlet.}
  \label{alg:baseline}
\end{algorithm}

\begin{prop}[\cite{SIMPLE_graphlets}]
\label{prop:KS}
    If a $k$-graphlet containing $S$ does not exist in $G$ after taking $u \in N(S)$ then no $k$-graphlet can exist in $G \setminus \{u\}$ with the same $S$.  
\end{prop}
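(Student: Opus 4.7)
The plan is to recast both sides of the implication in terms of sizes of connected components, after which the statement becomes a one-line comparison. The first step is to establish the following structural lemma that captures exactly when a graphlet extending a given set exists: for any connected vertex set $T \subseteq V$, a $k$-graphlet of $G$ containing $T$ exists if and only if the connected component of $G$ that contains $T$ has at least $k$ vertices. The ``only if'' direction is immediate, since a graphlet is connected by definition and must therefore lie inside a single component. The ``if'' direction follows by a greedy construction: starting from $T$, repeatedly append a vertex of $N(T)$, which must be non-empty whenever $|T|$ is strictly less than the size of its component; after $k - |T|$ steps we obtain the desired $k$-graphlet.

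Next I would apply this lemma to interpret the hypothesis. Because $u \in N(S)$ and $S$ is itself connected (the algorithm grows $S$ by iteratively adjoining a neighbour), the set $S \cup \{u\}$ is connected as well, and it lies inside the same component $C$ of $G$ as $S$. The assumption that no $k$-graphlet of $G$ contains $S \cup \{u\}$ therefore reduces, through the lemma, to the inequality $|C| \leq k - 1$.

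Finally, I would transfer this to $G \setminus \{u\}$. The set $S$ is still connected there: its connectivity is witnessed by edges that sit entirely inside $S$, and since $u \notin S$ none of these edges are lost. Let $C'$ be the connected component of $S$ in $G \setminus \{u\}$. Every vertex of $C'$ lies in $C$ and is different from $u$, so $C' \subseteq C \setminus \{u\}$, giving $|C'| \leq |C| - 1 \leq k - 2 < k$. A second application of the lemma then yields exactly the conclusion: no $k$-graphlet of $G \setminus \{u\}$ contains $S$.

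The only point that deserves care is the claim that $S$ remains connected after deleting $u$; this would fail for an arbitrary connected set but holds here thanks to the algorithm's invariant that each vertex of $S$ is adjoined to a previous neighbour inside $S$, so the connecting edges never pass through a vertex outside $S$. Beyond this observation, the argument is a direct comparison of component sizes and requires no case analysis on the structure of $G$.
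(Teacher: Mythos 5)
Your proof is correct. Note that the paper does not itself prove this proposition; it cites \cite{SIMPLE_graphlets} and only illustrates the statement with a worked example, so there is no in-paper argument to compare against. Your reduction to component sizes is the natural and clean route. One thing worth stating explicitly: the forward direction of your auxiliary lemma (component of $T$ has at least $k$ vertices $\Rightarrow$ a $k$-graphlet containing $T$ exists) requires $|T| \leq k$, and your use of its contrapositive on $T = S \cup \{u\}$ therefore implicitly needs $|S \cup \{u\}| \leq k$. This is guaranteed in context because the algorithm only makes this recursive step with $|S| < k$; in fact, since $S \cup \{u\}$ is connected, the hypothesis that no $k$-graphlet contains it already forces $|S \cup \{u\}| < k$. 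With that caveat supplied, the containment $C' \subseteq C \setminus \{u\}$ and the resulting chain $|C'| \leq |C| - 1 \leq k - 2 < k$ close the argument exactly as you write.
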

As an example, consider the graph in Figure \ref{fig:brady_graphlets_left} while enumerating all 4-graphlets containing vertex $b$ in $G \setminus \{a\}$. After the output of $\{b, c, d, e\}$, vertex $e$ is removed and no more 4-graphlets (in particular containing $b$) exist. Thanks to Proposition \ref{prop:KS} we can stop the computation as soon as the next recursive call, which will obviously fail. This property allows KS-Simple to have a running time of $O(k^2\Delta)$ per graphlet, where $\Delta$ is the maximum degree of the graph.

\begin{figure}[ht]
    \centering
    \begin{subfigure}[t]{.45\linewidth}
    \includegraphics[width=.78\textwidth]{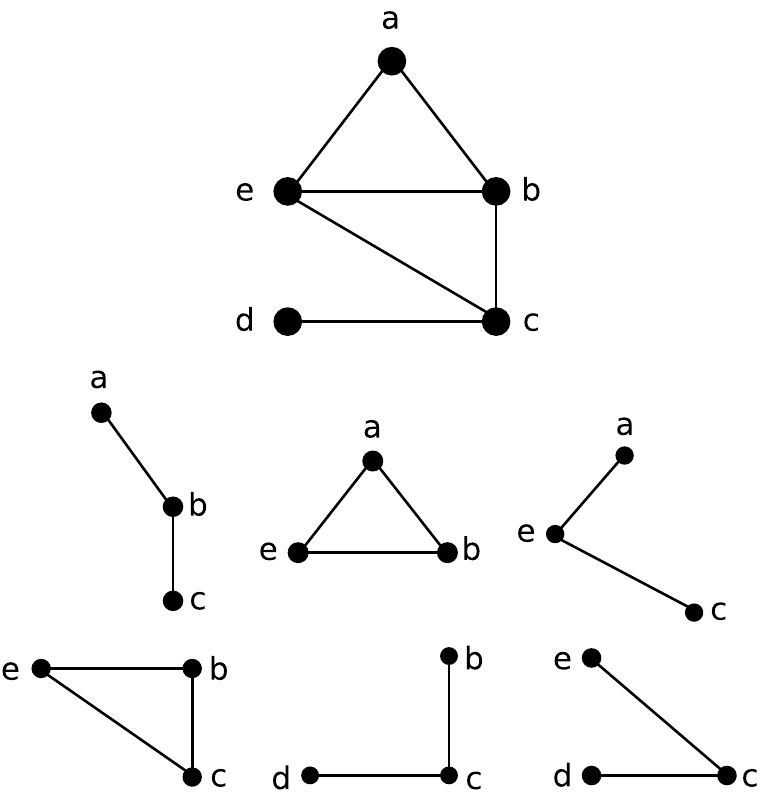}
    \caption{}
    \label{fig:brady_graphlets_left}
    \end{subfigure}
    \begin{subfigure}[t]{.45\linewidth}
    \includegraphics[width=1.1\textwidth]{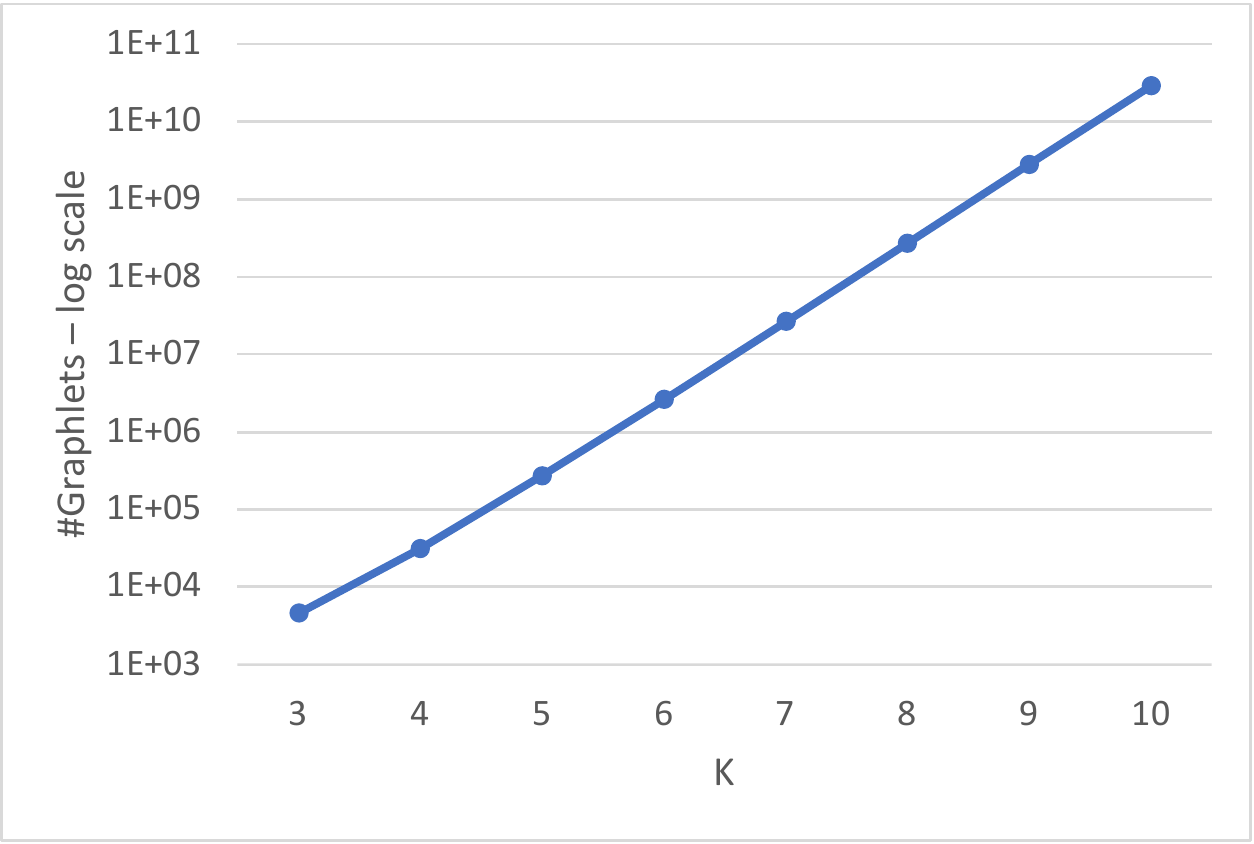}
    \caption{}
    \label{fig:brady_graphlets_right}
    \end{subfigure}
    \caption{(a) an example graph with its 3-graphlets below; (b) $k$-graphlet count in the \textit{Brady} network \cite{lasagne}.}
    \label{fig:brady_graphlets}
\end{figure}

If we lift the constraint on the size of the graphlets and instead want to enumerate all the connected induced subgraphs in a graph, we can reach $O(1)$ time per solution using push-out amortization \cite{Uno_pushout}. 

\begin{algorithm}[htb]
  \DontPrintSemicolon
  \KwIn{A graph $G = (V,E)$}
  \KwOut{All graphlets contained in $G$}

  \SetKwProg{myproc}{Function}{}{}
  \SetKwFunction{enum}{ENUM}
  \SetKw{output}{output}
  
  \myproc{\enum{$G, S, r$}}{
    \lIf{$d(r) = 0$}{\output S and \Return}
    Choose $v \in N(r)$\;
    \enum{$G/\{r,v\}, S \cup \{v\}, r$}\tcp*[l]{Contract the edge $\{r,v\}$}
    \enum{$G \setminus \{v\}, S, r$}\;
  } 
  
  \caption{Push-out amortized algorithm for graphlet enumeration \cite{Uno_pushout}}\label{alg:uno_graphlet}
\end{algorithm}

Algorithm \ref{alg:uno_graphlet} particularly shows the simplicity of binary partition algorithms: it is a classical algorithm which accumulates, by contraction, all vertices of the graphlet being constructed into one starting vertex ($r$) and when nothing else can be added it outputs the set $S$ containing the list of vertices used.
Despite its simplicity, this algorithm is proved to use only constant time between the output of two consecutive solutions together with linear additional space.
\begin{theorem}[\cite{Uno_pushout}]
Algorithm \ref{alg:uno_graphlet} runs in $O(1)$ amortized time and uses $O(|V| + |E|)$ additional space.
\end{theorem}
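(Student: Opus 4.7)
The plan is to invoke the Push-Out Amortization Theorem (Theorem~\ref{th:pushout}), so the first order of business is to identify the leaf cost $T^*$ and the per-call cost $T(X)$ for Algorithm~\ref{alg:uno_graphlet}. A leaf is reached exactly when $d(r)=0$, where we simply output $S$ and return, so $T^*=O(1)$. At an internal node $X$ we pick a neighbor $v\in N(r)$ in constant time, then perform the two recursive calls on $G/\{r,v\}$ and $G\setminus\{v\}$. Storing the graph as doubly-linked adjacency lists and using a marking array for $N(r)$, both the contraction (splicing $v$'s neighbor list into $r$'s while suppressing duplicates and the self-loop) and the deletion (unlinking $v$ from each of its neighbors' lists) can be carried out in $O(d(v))$ time, and each can be undone in $O(d(v))$ when the call returns by replaying a log of the changes. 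Hence $T(X)=O(d(v))+O(1)$, with $v$ the neighbor chosen at $X$.

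Next, I would verify the push-out condition
\[
\bar T(X)\;\ge\;\alpha\,T(X)\;-\;\beta\,|C(X)+1|\,T^{*}
\]
for some $\alpha>1$ and constant $\beta\ge 0$. Every internal node has exactly two children, so the slack is just a constant multiple of $T^{*}$. The combinatorial point is that the two children collectively retain most of the parent's work on $r$: after contraction, $r$'s new degree is $|N(r)\cup N(v)\setminus\{r,v\}|\ge d(r)-1$, and after deletion it is $d(r)-1$. In each child that is internal, the next chosen neighbor does work proportional to its own degree, and these degrees can be related back to $d(v)$ through the updated adjacency structure of $r$; a child that happens to be a leaf contributes only $T^{*}$ on that side, and such occurrences are absorbed into the $\beta\,|C(X)+1|\,T^{*}$ slack.

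The main obstacle is precisely proving the push-out inequality with a strict $\alpha>1$, i.e.\ showing that the combined work inside the two subtrees strictly dominates the parent's $O(d(v))$ work up to the $O(T^{*})$ correction. This will require bookkeeping which parts of the parent's adjacency updates are ``inherited'' and re-processed by the children (the contraction introduces at most $d(v)$ new entries in $r$'s list, while the deletion removes at most $d(v)$ entries from the lists of $v$'s neighbors), together with a small case analysis for the degenerate situation in which a child is a leaf. Once the push-out condition is in place, Theorem~\ref{th:pushout} immediately yields $T(X)=O(T^{*})=O(1)$ amortized per recursive call; since each leaf outputs exactly one graphlet, this is $O(1)$ amortized per solution.

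For the space bound, each recursive call shrinks the active vertex set by one ($v$ is either merged into $r$ or removed), so the recursion depth is $O(|V|)$. The graph is manipulated in place, and each call logs $O(d(v))$ adjacency-list changes in order to be able to undo them on return; summing these logs along any single root-to-leaf path is bounded by the total size of the (monotonically shrinking) graph, so together with the $O(|V|)$ storage for the set $S$, the marking array, and the recursion stack, the algorithm uses $O(|V|+|E|)$ additional space.
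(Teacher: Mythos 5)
The paper does not actually prove this statement --- it is a citation of a result established in \cite{Uno_pushout}, and the text explicitly defers to that reference (``More details on this can be found in the proof of Theorem~\ref{th:pushout} in \cite{Uno_pushout}''). So there is no in-paper proof to compare against. Your proposal picks the right machinery, namely Theorem~\ref{th:pushout}, which is indeed the tool Uno designed precisely for this kind of result, and your identification of $T^*=O(1)$ for the base case is correct.

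However, the proposal has a genuine, and self-acknowledged, gap: you never establish the push-out inequality, and with the cost model you chose it is not clear the inequality holds at all. You set $T(X)=O(d(v))$, where $v$ is the neighbor chosen at call $X$, and then need $\bar T(X)\ge\alpha T(X)-\beta\,|C(X)+1|\,T^*$ with $\alpha>1$. Since every internal node has exactly two children, $|C(X)+1|=3$ and the slack is a fixed constant; so the inequality amounts to requiring that the two children's costs sum to at least a $(\alpha)$-fraction of $O(d(v))$ minus a constant. But the children's costs in your model are $O(d(v_1))$ and $O(d(v_2))$, where $v_1,v_2$ are whatever neighbors they happen to pick, and nothing in Algorithm~\ref{alg:uno_graphlet} forces these degrees to be comparable to $d(v)$. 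If $v$ has degree in the hundreds and both children pick a degree-1 neighbor, $\bar T(X)=O(1)$ while $T(X)$ is large, and the condition fails for any constant $\alpha>1$. Your observation that ``the two children collectively retain most of the parent's work on $r$'' talks about the degree of $r$ in the children, but the children's cost in your model is governed by the degree of their \emph{chosen} vertex, not of $r$, so the argument does not connect. To close this, one needs a different cost accounting (for instance, charging each call with the size of some neighborhood-closure of the current working set so that the children's sizes provably dominate the parent's), or an explicit rule for how $v$ is selected, or both --- and that is precisely the substance of Uno's proof that is missing here. The space argument ($O(|V|)$ recursion depth plus $O(|V|+|E|)$ for the logged in-place modifications and auxiliary arrays) is fine.
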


\paragraph{Reverse Search.} Avis and Fukuda, in their original paper about reverse search \cite{avis_reverse_1996}, already proposed an adaptation of this technique to unbounded-size graphlets. They proved the following theorem:
\begin{theorem}[\cite{avis_reverse_1996}]
There exists an implementation of Reverse Search for connected induced subgraph enumeration with $O(|V|\cdot|E|)$ delay and $O(|V| + |E|)$ space complexity.
\end{theorem}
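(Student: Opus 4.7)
The plan is to cast connected induced subgraph enumeration in the reverse search framework by defining a suitable parent rule. Fix an arbitrary total order on $V$. Given any connected induced subgraph $G[S]$ with $|S|\geq 2$, let $v^{\star}(S)$ be the largest-indexed vertex of $S$ whose removal keeps $G[S]$ connected, and define the parent of $S$ to be $\mathrm{par}(S) = S\setminus\{v^{\star}(S)\}$. Since every connected graph on at least two vertices has at least two non-cut vertices, $v^{\star}(S)$ is always well-defined, so $\mathrm{par}(S)$ is again a connected induced subgraph. Iterating $\mathrm{par}$ strictly decreases $|S|$, hence the parent pointers organize all solutions into a forest whose roots are the $|V|$ singleton subgraphs; these serve as the starting points of the enumeration.

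I would then perform a DFS traversal of this forest. At a node $S$, the children are exactly those $S' = S\cup\{v\}$ with $v\in N(S)$ that satisfy $\mathrm{par}(S')=S$, i.e., $v$ is the largest-indexed non-cut vertex of $G[S']$. To enumerate children, I scan the $|N(S)| = O(|V|)$ candidates $v\in N(S)$; for each, I verify the condition by running Tarjan's biconnectivity algorithm on $G[S']$, which takes $O(|V|+|E|)$ time and reveals all non-cut vertices in one pass. The total work between two consecutive outputs is therefore $O(|V|\cdot(|V|+|E|)) = O(|V|\cdot|E|)$, which matches the claimed delay. For space, the recursion depth is at most $|V|$, each frame needs to record only the vertex added (constant extra information beyond the graph), and the biconnectivity workspace can be reused across calls, giving $O(|V|+|E|)$ overall.

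The main obstacle I expect is in the two correctness guarantees required for a clean reverse search argument: every non-root solution must admit a unique parent that is itself a solution, and the parent tree must be acyclic and exhaust all solutions. The first is settled by the two-non-cut-vertex lemma together with the use of a tie-breaking global order on $V$, and the second follows from the monotone decrease of $|S|$ under $\mathrm{par}$ together with the fact that any singleton is reachable as a root. Once these are in place, the delay and space bounds drop out of the standard reverse search template, and the remaining quantitative work is simply to instantiate an $O(|V|+|E|)$ biconnectivity computation on each candidate extension while updating the induced subgraph incrementally between recursive calls.
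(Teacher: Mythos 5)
The paper states this theorem as a cited result from Avis and Fukuda and gives no proof of its own, so the comparison is against their construction. Your proposal is the standard reverse-search instantiation for this problem and follows the same route: the parent of a connected set $S$ with $|S|\geq 2$ is obtained by deleting the largest-indexed non-cut vertex of $G[S]$, the resulting parent pointers form a forest rooted at the $|V|$ singletons, and children of $S$ are discovered by scanning the $O(|V|)$ vertices of $N(S)$ and testing each extension with a linear-time articulation-point pass. Your two correctness pillars --- every connected graph on at least two vertices has at least two non-cut vertices, and $|S|$ strictly decreases under the parent map --- are exactly the ones needed, and the $O(|V|+|E|)$ space bound follows from the $O(|V|)$ recursion depth with $O(1)$ extra state per level and a reusable Tarjan workspace.

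There is, however, a genuine gap in the delay analysis. What you actually establish is that the work performed at each node of the reverse-search tree is $O(|V|\cdot(|V|+|E|))=O(|V|\cdot|E|)$; this gives an \emph{amortized} bound per solution, not a delay bound. In a plain pre-order traversal, after emitting a leaf at depth $\Theta(|V|)$ the algorithm may unwind through all $\Theta(|V|)$ ancestors before emitting anything new, and at each ancestor it must still finish scanning its remaining $O(|V|)$ candidates, paying a Tarjan pass for each; in the worst case this yields a gap of $\Theta(|V|^2\cdot|E|)$ between consecutive outputs. To turn the per-node cost into a true delay guarantee you need the standard alternating-output device (emit a node in pre-order when its depth is even, in post-order when its depth is odd), which ensures that only $O(1)$ tree nodes are traversed between consecutive outputs. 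With that one missing ingredient added, your argument does give the stated $O(|V|\cdot|E|)$ delay.
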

The algorithm they propose can be easily adapted to enumerate all subgraph composed by \emph{at least} $k$ vertices or \emph{at most} $k$ vertices, but not for an \emph{exact} $k$.
Reverse search in this field has not much evolved over the years, but we point out that there is a very recent paper focused on mining maximal graphlets in graphs that adopts this strategy \cite{salem_rasma_2021}.

\section{Applications}
\subsection{On Simple Graphs}
Simple graphs have been heavily studied in the past because they are the traditional formalization of the entity-relationship concept, therefore there is plenty of work available in the literature.
Here, we give a look at some of the most recent applications of connected induced subgraphs in simple graphs.
We start by citing a work published in 2020 which is a hybrid between combinatorics and high-performance computing: the Graphlet Transform \cite{graphlet_transform}.
The graphlet transform is, informally, a function applied to the vector of frequencies of the different kind of graphlets contained in a graph.
It is based on the principle that each graphlet type has its own characteristics (i.e. the orbits of the nodes \cite{przulj_modeling_2004}), so counting how many graphlet of each kind exist in the graph gives important information about the local structural properties of the network.
The authors provide an efficient, low memory, way of computing the transform on general graphs for graphlets of up to 4 vertices.
This is because the exact frequency of each graphlet is required, and computing this number for graphlets of size higher than 4 quickly becomes impractical for the needs of the paper, therefore this would be one of the applications that would benefit from the improvements in the state of the art graphlet enumeration.

The GHuST framework \cite{GHuST} is another proposal for network analysis where the authors provide several measures that take advantage of the local structure of the graph in order to describe its (global) complex-network topology.
The authors study measures like the \emph{triangle rate}, \emph{triangle concentration}, \emph{triangle degree} and many others, all properties that can be computed with 2- and 3-node graphlets again for computational complexity concerns.

Przulj \cite{przulj_biological_2007}, the first to introduce the concept of graphlets for analysis, proposes ways for analyzing biological networks using the graphlet degree distribution (GDF), defined essentially to be the number of edges that touch a particular orbit of the graphlets, i.e. how many times a nodes assumes that particular orbit.

These kinds of subgraphs not only offer valuable information on the underlying network's properties, they also serve as building blocks in many fields such as biology and bioinformatics, sociology, computer networks, and transport engineering: in fact, in these kinds of networks, subgraphs are able to capture and highlight the community structure of the network itself \cite{santofortunato}, a concept that is ever so important when people or any other entities are tightly interconnected. 
Graphlets are also getting momentum for their numerous fields of application, such as understanding the role of genes in pathways and cancer mechanisms~\cite{windels2022graphlet}, and finding important nodes in a network~\cite{aparicio2019finding}. Special cases of $k$-graphlets, such as the aforementioned triangles when $k = 3$, or cliques, have been investigated for community detection purposes, as they help speeding up the process of identifying (possibly larger) communities in graphs \cite{klymko2014using, shaping_communities_triangles, triangles_social_cohesion, latent_triangle}. Also, edge $k$-graphlets find their usage in wildly different domains~\cite{chin2018subtrees}, such as graph databases where they are used for indexing purposes~\cite{subtree_graph_indexing}, and planar graphs~\cite{SUN2022127404}, to name a few.  

Moreover, additional uses are possible, like kernelization and centrality measures for nodes as we already pointed out in the Introduction.
Graphlet kernelization, in particular, is the process in which the subgraphs are used as kernel functions in Support Vector Machines, this time to allow for machine learning models to compare graphs.
For this reason the authors of \cite{graphlet_kernels} propose a kernel function that takes into account 3-, 4- and 5-graphlets  either using sampling or working on bounded degree graphs. Since we are comparing two graphs, a graphlet kernel is defined on a pair of graphs:
\begin{de}[Graphlet Kernel]
Given two graphs $G$ and $G'$, the graphlet kernel $k_g$ is defined as:
\[
k_g(G, G') = f^{{T}}_Gf_{G'}
\]
where $f_G$ is the vector of frequencies of all graphlets (up to 5 nodes) in $G$.
\end{de}
The main concern of this strategy is, again, the time required for the enumeration and counting of the considered graphlets; in particular, enumeration is used in bounded degree graphs given the fact that this is an easier case for it and for 3-, 4-, 5-graphlets. 
With the right algorithms this can be expanded to larger subgraphs.

Focusing on the graph analysis part there is a work that proposed the extension of graphlets to directed graphs, in order to analyze the \emph{world trade network} (WTN), to get a clearer picture of the modularity of that graph which, in short, models the exchange of goods between countries \cite{directed_graphlet_2016}. 
In this network, vertices represent countries and a directed edge connects the country exporting the goods to the one receiving them: it is necessarily a directed graph since in the import-export context the role of two countries is very different. 
Other previous works on this graph included, for instance, the discovery of its power-law degree distribution, and that the graph is organized in a ``core-periphery'' fashion.
Moreover, it was shown that even the position of a country in the WTN affected its economy. 
After adapting the concept of graphlet to directed graphs by considering only subgraphs of up to 5 vertices containing only non-anti-parallel edges\footnote{A graphlet is ignored if it contains both the edge $(u, v)$ and the edge $(v,u)$, in order to preserve the shape of it (i.e. the number of edges and the orbits).}, the authors of \cite{directed_graphlet_2016} proceed with the analysis of not only the WTN, but also biological protein-protein interaction networks (metabolic networks).
By analyzing various metrics involving graphlets, like the Relative Graphlet Frequency Distribution Distance\footnote{This metric compares two networks according to their relative distributions of directed graphlet frequencies, i.e. the number of all the different kinds of graphlets contained in the two graphs.}, the Graphlet Degree Distribution Agreement\footnote{This metric compares graphlet degrees, i.e., $d_{G}^{j}(k)$ represents the number of nodes in $G$ that touch orbit $j$ exactly $k$ times.} and the Graphlet Correlation Distance\footnote{This is the Euclidean distance between two Directed Graphlet Correlation Matrices (using Spearman's correlation).}, they uncovered that the WTN has more of a ``core-broker-periphery'' organization rather than only core-periphery, a more robust organization where the broker nodes act as intermediaries between two other countries. Using the same techniques, the authors are also able to make a prediction of the GDP\footnote{Gross Domestic Product.} of each country.
In the biological enviornment, they were able to observe that enzymes involved in similar patterns of human metabolic reactions are also involved in similar biological functions. Not only that, enzymes that are similarly wired in these reactions in other species, conserved the biological function observed in humans.

\subsection{On Temporal Graphs}\label{sect:state_temporal_graphs}
When dealing with temporal graphs, many different formalization of the concept of connected induced subgraph are possible. What do we consider as a temporal graphlet? There are many different possible answers. 
For example we could say that a clique is relevant whenever it is formed, no matter how many time frames it occupies, or we could consider only large cliques that persist for at least $t$ time frames in the temporal graph. 
For graphlets the same reasoning can be applied, but we can also monitor their evolution over time. While maximal cliques cannot be expanded by their definition, graphlets can evolve and expand: if we select one vertex of a given graphlet, we may observe changes in the number and shapes of graphlets in which it participates, so there are countless different approaches that in principle could be all valid.
Floros et al. \cite{graphlet_spectrograms_temporal} propose an analysis of virus-research collaboration networks using temporal graphlets of up to 5 vertices.
The work aims to analyze the collaboration between medical authors during, but not limited to, the COVID-19 pandemic. The analysis is based on spectral descriptors that are, informally, vectors describing the frequency of each vertex involved in each kind of graphlet. While this is not a new concept in the literature regarding graphlets \cite{the_graphlet_spectrum}, it is a powerful approach that is, however, limited due to the sheer size of vectors and matrices involved in the game and the exponential number of graphlets if we allow their size to go up to 10 (e.g. see Figure \ref{fig:brady_graphlets_right}).

One very recent, and interesting analysis conducted on temporal graph uses graphlets as event precursors \cite{graphlet_event_precursor}. 
The context is the one of social networks, and the work aims to establish a connection between user interactions and real-life (unexpected) events, like, for instance, the visit of the French President Macron to Rouen in October 2019. Another analyzed event is the agreement of the European Council on the Common Agricultural Policy, reached in late October 2020.
The authors of the analyses aim to show that events like these can be predicted by inspecting graphlet structure on the two previous days in social networks (e.g. Twitter and Facebook), with the graphlet size $k \leq 5$.
The authors use the algorithm \textsc{Orca} \cite{orca} for the enumeration of graphlets and orbits in these graphs, for each snapshot. 
Then they construct frequency vectors for each of the 30 different graphlet types for each snapshot and normalize the values in them by subtracting the mean and dividing by the standard deviation of each entry of the different vectors. The real analysis takes place by considering the difference between entries of the vectors in subsequent snapshots, calling it the \emph{velocity}, then they also compute the \emph{acceleration} by subtracting two subsequent velocities: the top graphlets with highest velocity and acceleration are considered as event precursors.
The analysis then consists in detecting the most significant graphlet categories according to the variation in velocity over time. What they discovered is that they could see a significant change in the structure of the graph roughly two hours before the event.
The analysis is intended to be conducted, in real-life, with domain experts that can understand what that event will be and then alert someone in charge of handling that event.
The proposed approach is interesting, although still in early stages and we believe that it could be improved and extend by the means of an improved state of the art for graphlet enumeration algorithm.

Another interesting topic on temporal graphs is their analysis based on common properties and centrality measures, which have to be redefined to take into account the dynamic aspect of these graphs.
Using redefined versions\footnote{These versions resemble the traditional ones for simple graphs, with the exceptions that they now take into account \emph{temporal paths} instead of classical paths.} of the \emph{closeness} and \emph{betweeness} centrality, the authors of \cite{pereira_twitter_2016} perform an analysis of a snapshot of the Twitter social network and examine the changes in those measures.
The importance of this kind of analysis, especially in social networks, is due to the fact that these measures along with the temporality of the graph, allow to identify not only the most influential nodes in the network, but also to keep track of how the information flows across the network, and which nodes helped the most in spreading it. 
It is a topic that is really important nowadays given the abnormal presence of fake news on the Internet.

There are also new proposed centrality measures for temporal graphs: in \cite{bonchi_temporal_centrality}, there is a new temporal betweeness centrality definition, based on a new definition of \emph{shortest-fastest paths}, in turn. The latter definition takes into account not only space or time for the paths but both at the same time, infusing more information into a single concept, providing claimed better results in network analysis compared with old centrality measures.

\chapter{New Algorithm for Graphlet Enumeration}\label{chap:ksquare}

In this Chapter we introduce the first algorithm for $k$-graphlet enumeration with theoretical $O(k^2)$ output-sensitive amortized time complexity (see Section \ref{sub:complexity_classes}), dropping the state-of-the-art dependency on $\Delta$, the maximum degree of the graph \cite{SIMPLE_graphlets}.

This Chapter is part of a manuscript that has been submitted to the \emph{Algorithmica} journal \cite{Rucci:2024Algorithmica} in collaboration with other authors.
The manuscript also covers the topic of the enumeration of $k$-edge subgraphs (see Chapter \ref{chap:background}) in $O(k)$ amortized time per subgraph, but for the sake of this thesis we only present the part related to $k$-graphlet enumeration.

The time complexity of the algorithm we are going to design here is $O(|V| + |E| + Nk^2)$, an output-sensitive time complexity as the cost grows with $N$, which is the number of $k$-graphlets reported.
Additionally, we show that for bounded-degree graphs, i.e. for graphs with $\Delta = O(1)$, the amortized cost per graphlet decreases to $O(k)$.
Finally, we point out that $k$ is usually very small in the context of graphlet enumeration, and often $k = O(1)$, bringing the cost down to $O(1)$ time per graphlet reported.

We first introduce the additional notions that we need in order to achieve the aforementioned bound, including an intermediate $O(|E|)$ amortized time algorithm.
Then, we discuss the $O(k^2)$ amortized time main algorithm, which is based on the binary partition strategy.
As this is a recursive algorithm, we will deeply exploit its corresponding recursion tree to prove its running time:
our goal is to establish a connection between the number of leaves and the number of internal nodes, i.e. between the number of solutions and the number of recursive calls needed to enumerate all of them.
In particular, if we are able to have a tree where each node either is a leaf or produces exactly two children, the number of internal nodes will be bounded by the number of leaves, allowing us to amortize the cost of the former on the latter.

\section{Background Concepts}
We start by introducing the notions of \emph{fruitful} call and \emph{removable} vertex.
\begin{definition}[Fruitful Call]
    In a recursive $k$-graphlet enumeration algorithm, a recursive call is said to be \emph{fruitful} if and only if it or its descendants will produce at least one $k$-graphlet.
\end{definition}

This definition is crucial for us to avoid unnecessary work that will lead to a non-output leaf, i.e., a call where the algorithm does not produce children calls while also not finding a $k$-graphlet, ultimately leading to wasted time.
The next definition builds upon this concept.

\begin{definition}[Removable Vertex]\label{def:removable_vertex}
    Given a graph $G = (V, E)$ and two vertices $x, r \in V, x \neq r$, $x$ is said to be \emph{removable} if and only if the recursive call associated to $G \setminus \{x\}$, using $r$, is fruitful.
\end{definition}

A removable vertex is a safe vertex to perform the binary partition on.
In fact, while using vertex $r$, we are guaranteed that we will ultimately find a $k$-graphlet while also removing that vertex from $G$.
These two concepts will drive our algorithms through the enumeration space, avoiding calls that will lead to non-amortizable time costs.

Next we show how to identify fruitful calls and removable vertices in any graph $G$. 
Both procedures involve a modified version of the well-known BFS traversal, adjusted to our need of finding $k$ connected vertices.
In particular, given a vertex $r$, the BFS will check for the existence of at least $k$ connected vertices to $r$, with the exception of vertex $x$ when checking if $x \in V$ is removable, as explained in the following Lemma.
\begin{lemma}\label{lemma:removable_check_time}
    We can check if a recursive call is fruitful and if a vertex $x \in V$ is removable in $O(k^2)$ time each.
\end{lemma}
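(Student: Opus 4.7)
The plan is to realize both checks as a truncated BFS from $r$ in the appropriate ``active'' subgraph, and to bound its cost by counting the edges of the tiny subgraph it actually explores.

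For the fruitfulness check in state $(G, S, k, X)$ with $r \in S$, I would initialize a visited set $V^{\star} := S$ and a queue $Q := S$, then repeatedly dequeue $u$ and scan $N_{G \setminus X}(u)$, inserting every unseen neighbor into both $V^{\star}$ and $Q$. The only departure from textbook BFS is to abort the moment $|V^{\star}|$ hits $k$: if this happens, the subgraph induced on $V^{\star}$ is a $k$-graphlet containing $S$ in $G \setminus X$, so the call is fruitful; if instead $Q$ empties first, every vertex reachable from $r$ in $G \setminus X$ has been seen, no $k$-graphlet containing $S$ exists in $G \setminus X$, and the call is not fruitful. The removable check of $x$ (Definition~\ref{def:removable_vertex}) is the same routine applied with $x$ added to the forbidden set.

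The running-time bound rests on two standard primitives: a timestamped visited array giving $O(1)$ membership queries, and doubly linked adjacency lists maintained along the binary-partition recursion so that iterating $N_{G \setminus X}(u)$ really costs $O(|N_{G \setminus X}(u)|)$ without stepping over $X$-vertices. Under these assumptions each edge scan is $O(1)$, and the only question is how many are performed. Because the stopping rule forces $|V^{\star}| \le k$ throughout, the final visited set $V^{\star}_{\mathrm{fin}}$ has at most $k$ vertices. Every scanned edge $\{u,v\}$ has $u$ already in $V^{\star}_{\mathrm{fin}}$ (it was just dequeued), while $v$ is either already in $V^{\star}$ or is inserted into $V^{\star}$ as a result of the scan; in either case $v \in V^{\star}_{\mathrm{fin}}$. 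Hence every scanned edge lies in $G[V^{\star}_{\mathrm{fin}}]$, which has at most $\binom{k}{2}$ edges, and each is scanned at most twice (once from each endpoint). The total is $O(k^2)$ scans, plus the $O(|S|) = O(k)$ initialization of $V^{\star}$ and $Q$, for an overall cost of $O(k^2)$ per check.

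The main obstacle I anticipate is precisely the per-scan bookkeeping in presence of a potentially large $X$: naively scanning $N_G(u)$ in the original graph and filtering out $X$-vertices on the fly would inflate the per-dequeue cost to $\Theta(\Delta)$ and collapse the edge-counting argument. Committing in advance to the doubly-linked-list representation of $G \setminus X$, with $O(\deg(x))$ updates when $x$ enters or leaves $X$, is what keeps the BFS oblivious to $X$-vertices and makes the $O(k^2)$ bound survive in the worst case.
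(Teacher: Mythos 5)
Your plan follows the same approach as the paper's proof: a BFS from $r$ truncated to halt once $k$ vertices are discovered. Your edge-counting argument — every scanned edge has both endpoints in the final visited set $V^{\star}_{\mathrm{fin}}$, which never exceeds $k$ vertices, so there are at most $2\binom{k}{2}$ scans — is a tighter justification than the paper's passing claim that the BFS ``will explore at most $k$ adjacency lists for a total of $O(k^2)$ edges'': that phrasing does not by itself yield $O(k^2)$, since a single adjacency list can have $\Theta(\Delta)$ entries, and your edge-counting step is exactly what closes the reasoning.

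The one place to be careful is the removable check. You describe it as ``the same routine applied with $x$ added to the forbidden set,'' and in your closing paragraph you advocate excising forbidden vertices from the doubly-linked adjacency lists at $O(\deg(\cdot))$ cost per update. That maintenance cost for the recursion-level set $X$ is fine, since it is paid and amortized elsewhere in Algorithm~\ref{alg:ksquare_graphlet}; but paying $O(d(x))$ \emph{inside} the \textsc{Removable} call would break this lemma's bound, because $x$ is an arbitrary neighbor of $r$ whose degree is unbounded in terms of $k$. The paper instead leaves the adjacency structure untouched and simply skips $x$ inline whenever it is scanned, at $O(1)$ per occurrence; since $x$ can appear at most once in each of the $\le k$ adjacency lists the truncated BFS visits, the total penalty is $O(k)$ and the $O(k^2)$ bound survives. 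If this inline skip is what you intended, note also that these $O(k)$ extra $x$-scans fall outside $G[V^{\star}_{\mathrm{fin}}]$ and should be accounted for separately in your edge count.
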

\begin{proof}
    Both checks implement a modified version of the classical BFS traversal of the graph $G$ starting from vertex $r \in V$, hence the initial cost is $O(|V| + |E|)$.
    However, we stop the BFS as soon as we discover $k$ new vertices, thus these BFS variants will explore at most $k$ adjacency lists for a total of $O(k^2)$ edges, bringing the total cost to $O(k + k^2) = O(k^2)$ each.
    For the \textsc{Removable} check, there is the additional cost of ignoring the vertex $x$: this is $O(k)$ as $x$ can be found at most once per adjacency list, and the BFS will explore at most $k$ of them.
    Therefore the total cost of $O(k^2)$ does not change.
\end{proof}

It may happen, sometimes, that we are forced to use a vertex in a particular run of the algorithms, for instance when that vertex is the only neighbor available to be picked, or when its removal will disconnect the portion of the graph currently in our hands.
We formalize this idea with the definition of \emph{mandatory vertex}, which is defined as follows.

\begin{definition}[Mandatory vertex]\label{def:mandatory_node}
    Given a graph $G = (V, E)$ and a vertex $r \in V$, a vertex $v \in V, v \neq r$ is \emph{mandatory} for $G$ and $r$, if the recursive call associated to $G \setminus \{v\}$ and $r$ is not fruitful.
\end{definition}

Whenever clear from the context we may omit $r$ for conciseness. 
Notice that a mandatory vertex corresponds to a bridge in the graph, as removing it will disconnect $G$; therefore, the following holds.

\begin{corollary}
    If a vertex $v \in V$ is mandatory for $G = (V, E)$ and $r \in V$, then $v$ is an articulation point of $G$.
\end{corollary}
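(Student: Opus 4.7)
My plan is to proceed by contrapositive: assume that $v$ is not an articulation point of $G$, and derive that $v$ is not mandatory for $G$ and $r$, i.e.\ that the recursive call on $G \setminus \{v\}$ with root $r$ is fruitful. The overall intuition is exactly the remark preceding the corollary: mandatoriness of $v$ forces $G \setminus \{v\}$ to lose a whole portion of the graph reachable from $r$, and only an articulation point can cause this kind of damage to the component containing $r$.

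First I would make explicit the standing invariant of the algorithm that we are sitting inside a fruitful call, so that $G$ itself contains at least one $k$-graphlet passing through $r$; otherwise the notion of mandatoriness is vacuous. Let $C_r$ be the connected component of $r$ in $G$. Fruitfulness of $G$ together with the definition of $k$-graphlet (Section~\ref{bg:sec:subgraphs}) immediately gives $|C_r| \geq k$, and in fact the algorithmic invariants used together with Lemma~\ref{lemma:removable_check_time} yield $|C_r| \geq k+1$ (the BFS used to test fruitfulness already witnesses $k$ vertices beyond $r$ when $v$ is a candidate for removal).

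Next I split into two short cases. If $v \notin C_r$, then $G \setminus \{v\}$ restricted to $C_r$ is unchanged, so any $k$-graphlet of $G$ containing $r$ survives in $G \setminus \{v\}$, contradicting mandatoriness of $v$. If instead $v \in C_r$, I use the assumption that $v$ is not an articulation point: then $C_r \setminus \{v\}$ is still connected and, by the size bound above, contains at least $k$ vertices. Running a BFS from $r$ inside $C_r \setminus \{v\}$ and stopping at the first $k$ discovered vertices yields a connected induced subgraph of size $k$ containing $r$, i.e.\ a $k$-graphlet of $G \setminus \{v\}$ containing $r$. Hence the recursive call on $G \setminus \{v\}$ with $r$ is fruitful, again contradicting mandatoriness.

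The only delicate step I anticipate is justifying the size inequality $|C_r| \geq k+1$ cleanly in the formal write-up: in the borderline case $|C_r| = k$, every non-$r$ vertex of $C_r$ is trivially mandatory regardless of articulation-point status, so the statement would be false without that extra slack. I would resolve this by either folding the inequality into the algorithmic invariant on when \textsc{Mandatory} is invoked (justified via Definition~\ref{def:removable_vertex} and Lemma~\ref{lemma:removable_check_time}) or by strengthening the hypothesis to ``$G$ admits a $k$-graphlet containing $r$ and a vertex distinct from $v$''. Once this is settled, the contrapositive argument above closes the proof in a few lines.
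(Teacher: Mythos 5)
Your contrapositive argument is the right route, and it actually supplies what the paper leaves out: the paper does not prove this corollary at all, it merely asserts in the sentence just before it that ``a mandatory vertex corresponds to a bridge in the graph, as removing it will disconnect $G$,'' which both conflates bridges (an edge notion) with articulation points (a vertex notion) and, more seriously, simply states the conclusion (that removing $v$ disconnects $G$) rather than deriving it. Your two cases, $v \notin C_r$ and $v \in C_r$, together with the truncated-BFS witness, fill in cleanly what that remark glosses over.

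More importantly, your worry about the size bound is not a cosmetic detail but a genuine counterexample to the corollary as written. Take $G$ to be a triangle on $\{r, a, b\}$ with $k = 3$: the call on $G$ with root $r$ is fruitful (the triangle itself is a $3$-graphlet), yet $G \setminus \{a\}$ has only two vertices and hence no $3$-graphlet containing $r$, so $a$ is mandatory by Definition~\ref{def:mandatory_node} while a triangle has no articulation points. So some slack on $|C_r|$ is essential, and you are right that it must be supplied. I would, however, push back on the first of your two proposed repairs: Lemma~\ref{lemma:removable_check_time} only commits the BFS to witnessing $k$ connected vertices, not $k+1$, and nothing in the surrounding invariants of Algorithm~\ref{alg:ksquare_graphlet} obviously furnishes the extra vertex, so $|C_r| \geq k+1$ does not come for free. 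The cleaner fix is your second one made precise: add to the statement the hypothesis that the connected component of $r$ in $G$ contains strictly more than $k$ vertices. With that in place, your contrapositive closes the proof; without it, the corollary as printed is false.
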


\section{Linear Amortized Time Algorithm}

The last ingredient for our $O(k^2)$ amortized time strategy consists of Algorithm~\ref{alg:linear_enum_graphlet}, which is a linear amortized time algorithm for $k$-graphlet enumeration. 
While it may appear as a step back with respect to the state-of-art amortized complexity of $O(k^2\Delta)$ 
\cite{SIMPLE_graphlets}, we will use this procedure only in cases where we can ensure that the input graph size is sufficiently small.

\begin{algorithm}[htb]
  \DontPrintSemicolon
  \KwIn{A graph $G$, a vertex $r$, a vertex set $S$, an integer $k$}
  \KwOut{All $k$-vertex induced graphlets of $G$ containing vertex $r$.}
  \SetKwProg{myproc}{Function}{}{}
  \SetKwFunction{enum}{LINEAR\_ENUM}
  \SetKwFunction{removable}{REMOVABLE}

  \myproc{\enum{$G$,$S$,$k$}}{%
      \lIf{$k = 0$}{\textbf{output} $S$ \textbf{and} \Return}
      Mark all mandatory vertices reachable from $r$ using a linear time articulation-point-finding algorithm \cite{tarjan_articulation_points, DBLP:journals/ipl/Tarjan74}\;\label{alg:linear_enum_graphlet:line:tarjan}
      \While{$\exists$ a mandatory vertex $u$ connected to $r$}{
         $G \gets G / \{r, u\}$\;
         $S \gets S \cup \{u\}$\;
         $k \gets k-1$\;\label{alg:linear_enum_graphlet:line:mandatory}
         
         \lIf{$k = 0$}{%
           \textbf{output} $S$ \textbf{and}
           \Return
        }%
      }

      $z \gets $ arbitrary neighbor of $r$\tcp*[l]{$z$ is guaranteed non-mandatory} 
      \enum{$G / \{r, z\}, S \cup \{z\}, k-1$}\;
      \enum{$G \setminus \{z\}, S, k$}\;

  }%
  \caption{$O(|E|)$ amortized time algorithm for $k$-graphlet enumeration}
  \label{alg:linear_enum_graphlet}
\end{algorithm}

In Algorithm~\ref{alg:linear_enum_graphlet}, line~\ref{alg:linear_enum_graphlet:line:tarjan} asks to find all mandatory vertices reachable from $r$: this can be achieved in $O(|E|)$ time by using an algorithm for finding all articulation points of the component connected to $r$; 
examples of algorithms able to do so in linear time can be found in \cite{tarjan_articulation_points, DBLP:journals/ipl/Tarjan74}.
After marking them, we need to count how many vertices get disconnected from $r$ for each articulation point.
All cut vertices that leave the graph with less than $k$ connected vertices to $r$ are declared mandatory (this check can be done at the same time when marking articulation points).
After the marking phase, we start including mandatory vertices connected to $r$ to the $k$-graphlet under construction in $S$ by shrinking the edge $\{r, u\}$, where $u \in N(r)$ is mandatory, and we repeat the process until either we produce a solution, or we do not have any mandatory node connected to $r$ left.
Then, we do binary partition on an arbitrary neighbor of $r$, that is now guaranteed to be non-mandatory, otherwise we would have picked it at the earlier stage.

The following Theorem proves the correctness of this strategy.

\begin{theorem}\label{th:correctness_linear_enum}
    Algorithm~\ref{alg:linear_enum_graphlet} finds all $k$-graphlets in $G$ that contain vertex $r$.
\end{theorem}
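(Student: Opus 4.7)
The plan is to prove the theorem by induction on $|V(G)| + k$, maintaining the invariant that a call $\textsc{Enum}(G, S, k)$ outputs exactly the sets $S \cup T$ for which $T \subseteq V(G) \setminus \{r\}$ has size $k$ and $T \cup \{r\}$ induces a connected subgraph of $G$, where $r$ denotes the current super-vertex (i.e., the image of the original starting vertex together with all vertices contracted into it). The base case $k = 0$ is immediate: $T = \emptyset$ is the only candidate, $\{r\}$ is trivially connected, and the algorithm outputs $S$.

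For the inductive step I will separately treat the mandatory-vertex loop and the binary partition. In the loop, when $u \in N(r)$ is mandatory, Definition~\ref{def:mandatory_node} guarantees that every valid $k$-extension of $r$ in $G$ must contain $u$, since otherwise the recursive call associated with $G \setminus \{u\}$ would be fruitful, contradicting mandatoriness. The key technical step is establishing a bijection between the $(k-1)$-extensions of the new super-vertex $r'$ in $G / \{r, u\}$ and the $k$-extensions of $r$ in $G$ that contain $u$; this bijection is set-theoretic identity on the remaining vertices, because contraction identifies $r$ with $u$ while preserving all other adjacencies. Combined with the inductive hypothesis (which applies since both $|V(G)|$ and $k$ strictly decrease), this shows the contraction step preserves correctness, and iterating through the while loop keeps the invariant intact.

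After the loop exits, I pick an arbitrary non-mandatory $z \in N(r)$ and partition the $k$-extensions of $r$ into those that contain $z$ and those that avoid $z$. The first class is enumerated by the left recursive call $\textsc{Enum}(G / \{r, z\}, S \cup \{z\}, k-1)$ via the same contraction bijection as above, while the second class is enumerated by the right call $\textsc{Enum}(G \setminus \{z\}, S, k)$, since the $k$-extensions of $r$ in $G$ that do not use $z$ coincide exactly with the $k$-extensions of $r$ in $G \setminus \{z\}$. Both subproblems have strictly smaller measure, so the inductive hypothesis closes the step; disjointness of the partition also yields the (not strictly required) uniqueness of each output.

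The hard part will be rigorously verifying the contraction bijection after repeated contractions, where the super-vertex $r$ already represents a bundle of originally distinct vertices: one must argue that the combined neighborhood $(N_G(r) \cup N_G(u)) \setminus \{r, u\}$ obtained after contracting $\{r, u\}$ continues to faithfully encode the connectivity of the expanded vertex sets in the original graph $G_0$. A subsidiary nuisance is the degenerate case where $N(r) = \emptyset$ but $k > 0$ when the mandatory loop ends: no $k$-extension exists, so nothing should be output, and I would note that the implementation is assumed to short-circuit rather than attempt to select a neighbor from an empty set.
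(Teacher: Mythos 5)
Your proposal is correct and follows essentially the same decomposition as the paper's proof (base case at $k=0$, mandatory-vertex loop via contraction, binary partition on an arbitrary remaining neighbor, disjointness of the two children). You have simply cast the argument as a formal induction on $|V(G)|+k$ with an explicit invariant and a named contraction bijection, which the paper leaves implicit; the technical subtleties you flag at the end (soundness of repeated contractions, the $N(r)=\emptyset$ case) are likewise left informal in the paper, which resolves the latter only by noting in the surrounding text that the procedure is invoked in contexts where $z$ or a sufficient mandatory component is guaranteed to exist.
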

\begin{proof}
    Algorithm~\ref{alg:linear_enum_graphlet} starts a BFS traversal of $G$ from $r$, and marks all mandatory nodes according to Definition~\ref{def:mandatory_node}.

    The connected component of up to $k$ vertices connected to $r$ will be shrunk, one vertex at a time, into $r$, decreasing the value of $k$ each time.
    If this procedure lowers the value of $k$ to zero, then $S$ is output as it contains $k$ connected vertices, as all of them where reachable from $r$, by definition of connected component. 
    Thus $G[S]$ is a $k$-graphlet of $G$ and this call will produce a leaf in the recursion tree associated to the execution of the algorithm.
    
    Otherwise, the algorithm picks an arbitrary neighbor $z$ of $r$, if it exists, and proceeds recursively to enumerate $k$-graphlets using vertex $z$, then without vertex $z$ as per the binary partition rule, creating two recursive children in the recursion tree.
    These recursive children will be disjoint because one will use vertex $z$ and the other removes it from the graph.
    Therefore we can conclude that Algorithm~\ref{alg:linear_enum_graphlet} outputs all and only $k$-graphlets in $G$ that include vertex $r$.
\end{proof}
Notice that, in Algorithm~\ref{alg:linear_enum_graphlet}, vertex $z$ could not exist in some cases.
However, we use this algorithm as a sub-routine of a broader strategy that will ensure either that $z$ exists, or that there exists a connected component of $k$ mandatory vertices connected to $r$, ensuring that Algorithm~\ref{alg:linear_enum_graphlet} will never fail at finding a solution.
Now we can prove the amortized running time to be $O(|E|)$.

\begin{theorem}\label{th:linear_graphlet_enum_delay}
    Algorithm~\ref{alg:linear_enum_graphlet} requires $O(|E|)$ amortized time per $k$-graphlet.
\end{theorem}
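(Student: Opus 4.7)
The plan is to establish that the recursion tree of Algorithm~\ref{alg:linear_enum_graphlet} is a proper binary tree whose leaves are in bijection with the $k$-graphlets output, and that each node performs $O(|E|)$ work; dividing the total cost by the number of solutions will then yield the claimed amortized bound.

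First I would prove by induction on the recursion depth that every recursive call is fruitful. The base call is fruitful by assumption of the caller (otherwise no solutions exist and the theorem is vacuous). For the inductive step, fix a fruitful call and let $z$ be the non-mandatory neighbor of $r$ picked at the end of the procedure. The deletion child operating on $G \setminus \{z\}$ is fruitful by the very definition of non-mandatoriness (Definition~\ref{def:mandatory_node}). For the contraction child operating on $G/\{r,z\}$ with parameter $k-1$, observe that a fruitful call guarantees at least $k$ vertices reachable from $r$ in the current subgraph; after contracting $z$ into $r$, the super-vertex still reaches at least $k-1$ of the remaining vertices, so a $(k-1)$-graphlet containing it exists.

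Once fruitfulness is in place, every internal node of the recursion tree has exactly two children and every leaf yields a distinct $k$-graphlet (as per Theorem~\ref{th:correctness_linear_enum}). If $N$ denotes the number of $k$-graphlets reported, the number of leaves is $N$ and, since the tree is proper binary, the number of internal nodes is at most $N-1$, giving $O(N)$ total nodes. I would then bound the cost of a single recursive call as follows: line~\ref{alg:linear_enum_graphlet:line:tarjan} uses a linear-time articulation-point algorithm in $O(|E|)$; the while loop performs successive contractions of mandatory neighbors of $r$, whose total cost summed over one call is bounded by the sum of degrees of the vertices involved, hence $O(|E|)$; the selection of $z$ and the setup of the two recursive calls fit within the same bound. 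Multiplying by $O(N)$ nodes gives a total of $O(N \cdot |E|)$, i.e.\ $O(|E|)$ amortized per graphlet.

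The main obstacle I anticipate is justifying fruitfulness of the contraction child, since the definition of (non-)mandatoriness only directly controls the deletion branch. Making the reachability argument precise requires maintaining, as an invariant across the recursion, that the connected component of $r$ in the current graph always contains at least $k$ vertices, so that contracting a neighbor into $r$ leaves at least $k-1$ reachable vertices available to complete a $(k-1)$-graphlet. A secondary subtlety is verifying that the per-call $O(|E|)$ bound indeed absorbs the repeated contractions in the while loop, for which a simple charging of each contraction to the adjacency-list entries it touches suffices.
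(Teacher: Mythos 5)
Your argument is correct and follows essentially the same structure as the paper's proof: establish that every recursive call is fruitful so that each internal node has exactly two children, deduce that the recursion tree has $N$ leaves (one per output graphlet) and at most $N-1$ internal nodes, and bound each call by $O(|E|)$ via the linear-time articulation-point marking, the $O(\sum d(u)) = O(|E|)$ contraction phase, and the $O(\Delta) = O(|E|)$ setup of the binary partition. You actually spell out the fruitfulness of both children (the deletion branch directly from Definition~\ref{def:mandatory_node}, the contraction branch from the $\ge k$-vertex reachability invariant) more explicitly than the paper, which simply asserts that two fruitful children are produced.
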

\begin{proof}
    First, we have to mark all mandatory vertices of the graph: this can be done in linear $O(|V| + |E|)$ time by using one of the well-known algorithms in the literature for finding bridges and articulation points such as \cite{DBLP:journals/ipl/Tarjan74, tarjan_articulation_points}.
    In particular, our mandatory vertices will be all the articulation points that, when removed, leave less than $k$ nodes connected to $r$ in the graph.
    Once all mandatory vertices are marked, the shrinking operation of line \ref{alg:linear_enum_graphlet:line:mandatory} will be executed at most $k$ times, each requiring $O(d(u))$ time, for a total of $O(|E|)$ time as it is a sum of distinct node degrees. 
    If we reach $k = 0$, we will produce a solution, hence this call will be a leaf node of the recursion tree, with $O(|E|)$ time complexity.

    Otherwise, we perform binary partition on $z$, and we can pay the $O(\Delta)$ cost taken for shrinking $\{r, z\}$ and for removing $z$ from $G$ because $\Delta \leq |V| = O(|E|)$ for any connected graph, producing two fruitful children calls.

    Since every recursive call either produces a solution or two fruitful children calls, the recursion tree associated to Algorithm~\ref{alg:linear_enum_graphlet} will have $N$ leaves, where $N$ is the number of $k$-graphlets found, and at most $N-1$ internal nodes, each with cost $O(|E|)$.
    Thus, the total running time will be $O(N\cdot|E|)$, which is $O(|E|)$ amortized time per $k$-graphlet.
    %
\end{proof}

\section{\texorpdfstring{$O(k^2)$}{O(k2)} Amortized Time Algorithm}

We are finally able to introduce the algorithm with $O(k^2)$ amortized time complexity for $k$-graphlet enumeration, which is summarized in Algorithm~\ref{alg:ksquare_graphlet}.

\begin{algorithm}[htb]
  \DontPrintSemicolon
  \KwIn{A graph $G$, an integer $k$}
  \KwOut{All $k$-vertex induced graphlets of $G$}
  \SetKwProg{myproc}{Function}{}{}
  \SetKwFunction{enum}{ENUM}
  \SetKwFunction{removable}{REMOVABLE}
  \SetKwFunction{linearenum}{LINEAR\_ENUM}
  \SetKwFunction{fruitful}{FRUITFUL}

  \ForAll{$v \in V(G)$ in an arbitrary ordering}{
    \If{\fruitful{$G, v, k$}} {
      \enum{$G, v, \{v\}, k$} \tcp*[h]{Find all graphlets containing $v$}\;
    }
      $G\gets G \setminus \{v\}$ \tcp*[h]{Remove $v$ from subsequent iterations}\;
  }

  \myproc{\enum{$G$,$r$,$S$,$k$}}{%

    \lIf{$k = 0$}{
        \textbf{output} $S$;
        \Return
    }
    \While(\tcp*[h]{Follow the chain}){$|N(r)| = 1$}{\label{alg:ksquare_graphlet:line:while}
      Let $\{v\} = N(r)$\;
      $S \gets S \cup \{v\}$\;
      $k \gets k - 1$\;
      $G \gets G / \{r, v\}$\;
      \lIf{$k = 0$}{
        \textbf{output} $S$;
        \Return
      }
    }
    $x, y \gets$ two arbitrary neighbors of $r$\;
    \uIf{\removable{$G, r, x, k$}}{
      \enum{$G / \{r, x\}, S \cup \{x\}, k-1$}\;\label{alg:ksqare_graphlet:line:take_x}
      \enum{$G \setminus \{x\}, S, k$}\;\label{alg:ksqare_graphlet:line:remove_x}
    }
    \uElseIf{\removable{$G, r, y, k$}}{
      \enum{$G / \{r, y\}, S \cup \{y\}, k-1$}\;\label{alg:ksqare_graphlet:line:take_y}
      \enum{$G \setminus \{y\}, S, k$}\;\label{alg:ksqare_graphlet:line:remove_y}
    }
    \Else(\tcp*[h]{$|V| < 2k$ by Lemma \ref{lemma:g_2k_nodes}}){
      \linearenum{$G, r, S, k$}\;
    }
  } 

  \caption{$O(k^2)$ amortized time $k$-graphlet enumeration}\label{alg:ksquare_graphlet}
\end{algorithm}

Let us explain the details of Algorithm~\ref{alg:ksquare_graphlet}. 
It follows an usual binary partition strategy in its outer loop, which first calls the internal enumeration procedure on a vertex $v$ including it in all the $k$-graphlets listed, and then removes $v$ from the input graph entirely.
We have added a fruitful check to avoid wasting time by using a vertex that will not be part of any solution.

The function \textsc{Enum} is a recursive procedure that adds one suitable vertex at a time to the set $S$, which again will contain all the vertices part of a solution, i.e. such that $G[S]$ is connected and $|S| = k$.
First, it checks for the special case of $r$ having only one neighbor $v$: in this scenario we have no other choice but to include $v$ in $S$, shrink the edge $\{r, v\}$ and proceed. 
If this special case happens exactly $k$ times, $S$ will be output as it contains the vertices of a $k$-graphlet.
Otherwise we proceed by choosing a suitable vertex to proceed with the recursion. 
In particular, we choose two arbitrary neighbors (which are guaranteed to exist) and check if at least one of them is removable according to Definition~\ref{def:removable_vertex}; if we succeed, we spawn the corresponding two recursive children taking that vertex and then removing it, otherwise we switch to Algorithm~\ref{alg:linear_enum_graphlet}.
When this happens, we can guarantee that the graph to be processed by Algorithm~\ref{alg:linear_enum_graphlet} will have only $O(k^2)$ edges, due to the following results.

\begin{lemma}\label{lemma:reachable_vertices}
The sets of vertices that become unreachable in $k$ steps after removing $x$ or $y$ from $G$ in Algorithm \ref{alg:ksquare_graphlet}, are disjoint.
\end{lemma}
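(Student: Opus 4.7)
The plan is to proceed by contradiction. Suppose some vertex $v$ lies in both sets, that is, $v$ is reachable from $r$ within $k$ hops in $G$ but becomes unreachable within $k$ hops both after deleting $x$ and after deleting $y$. By definition this means that every $r$-to-$v$ path in $G$ of length at most $k$ must traverse $x$, and likewise every such path must traverse $y$. In particular $v \notin \{x, y\}$, and neither $x$ nor $y$ can serve as endpoint of such a path.

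From here I would exhibit a short $r$-to-$v$ path that avoids one of the two, contradicting the above. Fix a shortest $r$-to-$v$ path $P = (r = w_0, w_1, \ldots, w_d = v)$ in $G$, so $d \leq k$. By the previous paragraph, $P$ must pass through both $x$ and $y$ at internal positions; assume without loss of generality that $w_i = x$ and $w_j = y$ with $1 \leq i < j \leq d-1$. The key observation is that $y$ is a direct neighbor of $r$ by the choice made in Algorithm~\ref{alg:ksquare_graphlet}, so I can shortcut the prefix of $P$ and define $P' = (r, y, w_{j+1}, \ldots, w_d)$, using the edge $\{r, y\}$ followed by the tail of $P$ starting at $y$. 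Simplicity of $P$ immediately gives that $P'$ is a simple $r$-to-$v$ path (neither $r$ nor $y$ reappears in the suffix), that $|P'| = 1 + (d - j) \leq d - 1 < k$, and that $x = w_i$ does not occur in the suffix $w_{j+1}, \ldots, w_d$. Hence $P'$ is a path of length at most $k$ from $r$ to $v$ in $G$ that avoids $x$, contradicting the assumed membership of $v$ in the first set. The symmetric case ($y$ before $x$ on $P$) uses the edge $\{r, x\}$ analogously.

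The main obstacle I anticipate is simply being precise about what ``unreachable in $k$ steps'' means and verifying the edge cases in the length bound for $P'$: the inequality $|P'| \leq k$ hinges on $j \geq 2$, which holds because $i \geq 1$ forces $j \geq i+1 \geq 2$, but this relies on $x$ and $y$ being distinct neighbors of $r$ and both distinct from $v$. Beyond this bookkeeping, the argument is purely combinatorial and uses nothing more than the fact that both candidate vertices are chosen from $N(r)$, so that either can always be used as a one-hop bypass of a prefix that goes through the other.
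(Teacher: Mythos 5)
Your proof is correct and follows essentially the same route as the paper's: both fix a shortest $r$-to-$v$ path that (by hypothesis) must pass through $x$ and $y$, and use the edge from $r$ to whichever of the two appears later on the path (both are in $N(r)$) to shortcut past the earlier one, yielding a strictly shorter valid path that avoids it. Your version is slightly more careful about tracking the $k$-step bound explicitly and does not presuppose that the earlier of $x,y$ sits immediately after $r$ on the path, but the construction and the final contradiction are the same.
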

\begin{proof}
Let $z$ be a vertex that becomes unreachable from $r$ after the removal of one of $x$ or $y$, say $y$ without loss of generality. Since $z$ is unreachable, every shortest path from $r$ to it, must pass through $y$.

For the sake of contradiction, assume that $\exists\ x \neq y \in N(r)$ such that $z$ is unreachable from $r$ in $G \setminus x$. This means that the shortest paths from $r$ to $z$ must pass also through $x$ in addition to $y$.
Let $\pi$ be one of these shortest paths and assume, without loss of generality, that $\pi = \{r, x, \dots, y, \gamma, z\}$ where $\gamma$ is a subset of the path. 
Let $\pi' = \{r\} \cup \{y, \gamma, z\}$ obtained by removing $x$: this is another shortest path from $r$ to $z$ with $|\pi'| < |\pi|$, meaning that $\pi$ was not a shortest path from $r$ to $z$, a contradiction.
We can then conclude that if $z$ is not reachable after removing $y$, it must be reachable from $x$, and the thesis follows.
\end{proof}

\begin{lemma}\label{lemma:g_2k_nodes}
    If $x, y$ are both non-removable, then $|V| < 2k$.
\end{lemma}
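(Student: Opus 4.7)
The plan is to combine the disjointness provided by Lemma~\ref{lemma:reachable_vertices} with the size constraints forced by non-removability in a short counting argument. Throughout, let $C$ denote the set of vertices reachable from $r$ in $G$; since only these vertices can participate in a $k$-graphlet containing $r$, they are the ones the statement effectively bounds, and I would identify $|V|$ with $|C|$.

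First, for each $w \in \{x, y\}$, I would set $C_w$ to be the connected component of $r$ in $G \setminus \{w\}$ and $U_w := C \setminus (C_w \cup \{w\})$ the set of vertices of $C$ that become unreachable from $r$ once $w$ is removed. This yields the disjoint decomposition $|C| = |C_w| + |U_w| + 1$. By Definition~\ref{def:removable_vertex}, non-removability of $w$ means that the recursive call on $G \setminus \{w\}$ with root $r$ is not fruitful, i.e.\ contains no $k$-graphlet through $r$; hence $|C_w| \leq k - 1$, which plugged into the decomposition gives $|U_w| \geq |C| - k$ for both $w \in \{x, y\}$.

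Next I would apply Lemma~\ref{lemma:reachable_vertices} to obtain $U_x \cap U_y = \emptyset$, and observe that none of $r, x, y$ lies in $U_x \cup U_y$: indeed $r$ is reachable from itself; $y$ remains a neighbor of $r$ in $G \setminus \{x\}$ so $y \in C_x$; and symmetrically $x \in C_y$. Thus $U_x \cup U_y \subseteq C \setminus \{r, x, y\}$ and so $|U_x| + |U_y| \leq |C| - 3$. Combined with the two lower bounds, this gives $2(|C| - k) \leq |C| - 3$, whence $|C| \leq 2k - 3 < 2k$.

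The main subtlety I expect is the last step: carefully identifying the three vertices $r, x, y$ that must lie outside $U_x \cup U_y$, since this is exactly what produces the strict inequality $< 2k$ rather than $\leq 2k$. The rest is a routine counting manipulation once both decompositions have been written down.
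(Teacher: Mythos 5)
Your proof is correct and follows the same underlying approach as the paper's: non-removability of $x$ and $y$ forces each of the connected components of $r$ in $G\setminus\{x\}$ and $G\setminus\{y\}$ to have fewer than $k$ vertices, and Lemma~\ref{lemma:reachable_vertices} gives disjointness of the two ``lost'' sets, so a counting argument bounds $|V|$. Your execution is in fact tighter and more carefully written than the paper's: you make the three-way partition $C = \{w\} \cup C_w \cup U_w$ explicit, observe that $r$, $x$, and $y$ all lie outside $U_x \cup U_y$, and get the clean bound $|C| \le 2k-3$. The paper instead writes the somewhat loose inequalities $|N(r)\cup\{r\}\cup Y|<k$ and $|\{r\}\cup X|<k$ and ``sums both sides'' to reach $|\{r\}\cup X\cup Y|<2k$, which is fuzzier (the role of $N(r)$ is asymmetric and the sum-of-cardinalities step is not spelled out). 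Both arrive at $|V|<2k$; your version is the more rigorous of the two.
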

\begin{proof}
    If $x$ and $y$ are non-removable, a BFS from $r$ in $G \setminus \{x\}$ and $G \setminus \{y\}$  will find less than $k-1$ nodes, otherwise $x$ and $y$ would be removable.

    Since nodes becoming unreachable from $r$ in $G \setminus \{x\}$ are disjoint from the ones in $G \setminus \{y\}$ by Lemma~\ref{lemma:reachable_vertices}, we can infer that when removing $x$ (resp. $y$), graphlets will include the vertex $y$ and the set $Y$ (resp. $X$) of nodes reachable from $y$ (resp. $x$), and they will be less than $k-1$ by our hypothesis that also $y$ is non removable, i.e. $|N(r) \cup \{r\} \cup Y| < k$ and $| \{r\} \cup X| < k$. Summing both sides of the inequalities we have $|\{r\} \cup X \cup Y| < 2k$.
\end{proof}

When $G$ has at most $2k$ vertices, by construction it must have $O(k^2)$ edges, as we wanted to use \textsc{Linear\_Enum} on a sufficiently small graph.
We are now ready to prove the correctness and to bound the amortized running time of Algorithm~\ref{alg:ksquare_graphlet}, our main result.

\begin{theorem}\label{th:correctness_graphlets}
    Algorithm \ref{alg:ksquare_graphlet} outputs all and only $k$-graphlets in a graph $G = (V, E)$, without duplicates.
\end{theorem}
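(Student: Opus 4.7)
The plan is a structural induction on $k$ together with a case analysis on the three branches of \textsc{Enum}, preceded by a short argument that the outer loop rules out duplicates. The outer loop fixes a vertex ordering, calls \textsc{Enum} on $v$ with root $r = v$, and then deletes $v$. Each $k$-graphlet $H$ has a unique smallest vertex $v_H$ in this ordering; $H$ can only be produced during the iteration on $v_H$, since earlier iterations call \textsc{Enum} with a different root (which by induction produces only graphlets containing that root, and the root is then deleted) and later iterations operate on a graph from which $v_H$ has been removed. Guarding the call with \textsc{Fruitful} is safe by the definition of a fruitful call: if the check fails then no $k$-graphlet contains $v$, so skipping the recursion loses nothing.

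The key invariant maintained by \textsc{Enum} is that the $k$-graphlets of the original graph containing the current set $S$ correspond bijectively to the $k'$-graphlets of the current (possibly contracted) graph containing the current root $r$, where $k'$ is the current value of $k$ and $r$ is the image of the contraction of $S$. Both graph-modifying operations in the procedure preserve this invariant: contracting $\{r,v\}$ merges the super-vertex $r$ with $v$ (so any graphlet through the new $r$ corresponds to one through both old $r$ and $v$), and deleting a vertex $x$ simply removes graphlets containing $x$ from consideration. With this invariant, the base case $k=0$ is immediate: $|S|=k_{\text{initial}}$, $G[S]$ is connected because $S$ was built by successive contractions along edges, so $S$ is a valid $k$-graphlet.

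For the inductive step I would verify each branch. When $|N(r)|=1$ with unique neighbor $v$, every graphlet through $r$ must use $v$, so the forced contraction in the while loop on line~\ref{alg:ksquare_graphlet:line:while} is complete and sound; the loop terminates either by producing a solution ($k=0$) or by reaching a root of degree $\ge 2$. When \textsc{Removable} succeeds on $x$ (symmetrically $y$), the two recursive calls on lines~\ref{alg:ksqare_graphlet:line:take_x} and~\ref{alg:ksqare_graphlet:line:remove_x} implement a classical binary partition: graphlets containing $r$ split disjointly into those containing $x$ (handled by contracting $\{r,x\}$) and those not containing $x$ (handled by deleting $x$), so induction on $k$ (which strictly decreases in the first call) and on $|V|$ (which strictly decreases in the second, while $k$ does not increase) gives the result. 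When both \textsc{Removable} checks fail, Lemma~\ref{lemma:g_2k_nodes} gives $|V|<2k$, so the fallback to \textsc{Linear\_Enum} is applied to a graph on which Theorem~\ref{th:correctness_linear_enum} guarantees enumeration of exactly the $k$-graphlets containing $r$, again without duplicates.

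The main obstacle I expect is formalising the contraction invariant cleanly enough that the three branches share one argument, and, within that, ensuring that the removable branch does not quietly discard graphlets: the delicate point is that failing \textsc{Removable} on $x$ only tells us that removing $x$ leaves no $k$-graphlet through $r$, but this does not by itself tell us anything about graphlets through $r$ that contain $x$. This is exactly why the algorithm does not simply force $x$ into $S$ when $x$ is non-removable; instead, if \emph{both} $x$ and $y$ are non-removable it switches to \textsc{Linear\_Enum}, and the correctness of that switch must be argued using Lemma~\ref{lemma:reachable_vertices} (ensuring $x,y$ are genuinely independent witnesses) and Lemma~\ref{lemma:g_2k_nodes} (ensuring the subsequent call is well-defined and correct via Theorem~\ref{th:correctness_linear_enum}). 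Once these two lemmas are plugged in, the case analysis closes.
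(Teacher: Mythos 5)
Your proof is correct and follows essentially the same structure as the paper's: a case analysis over the three branches of \textsc{Enum} (the forced-contraction while loop, the binary partition on a removable vertex, and the fallback to \textsc{Linear\_Enum} justified via Lemma~\ref{lemma:g_2k_nodes} and Theorem~\ref{th:correctness_linear_enum}), together with the fixed-ordering argument for uniqueness. The contraction-invariant framing you use is a slightly more explicit formalization of what the paper states informally, but it is not a different proof route.
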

\begin{proof}
    We call $k_0$ the initial value of $k$ during the recursion of the \textsc{Enum} procedure.
    The algorithm loops through all vertices of the input graph $G$ according to a fixed, arbitrary ordering. 
    Before calling the internal procedure \textsc{Enum}, the algorithm checks if there exists at least one solution in the graph $G$ using vertex $v$. 
    If the call to \textsc{Fruitful} fails, $v$ is immediately removed from $G$ and the algorithm proceeds with the next vertex.

    The correctness of the \textsc{Enum} procedure follows from the correctness of \textsc{Linear\_Enum} (see Theorem~\ref{th:correctness_linear_enum}), as it performs binary partition on a suitable vertex found by checking if it is removable.
    The main differences here lie in the while loop of line~\ref{alg:ksquare_graphlet:line:while} and the switch to the \textsc{Linear\_Enum} procedure whenever we cannot find a removable vertex.
    In particular, while vertex $r$ has only one neighbor $v$, it is added to $S$, the edge $\{r, v\}$ is shrunk and $k$ is decreased.
    If we find $k$ vertices in this way, $S$ is output and it will contain only connected vertices because they were connected to $r$ at every iteration, thus $G[S]$ is connected and $|S| = k_0$. 
    This type of call creates a leaf in the recursion tree associated to \textsc{Enum}, and outputs a solution.
    When we exit the while loop, $r$ is guaranteed to have $|N(r)| \geq 2$, because $|N(r)| \neq 1$ and the outer-level \textsc{Fruitful} call ensures that there is a solution in $G$ using $r$, so $|N(r)|$ cannot be zero.
    If one of $x, y \in N(r)$ is removable we are guaranteed that we will find a solution by binary partitioning on $x$ (resp. $y$) by the same reasoning of Theorem~\ref{th:correctness_linear_enum}.
    If neither $x$ or $y$ are removable, we know that $G$ contains less than $2k$ nodes by Lemma~\ref{lemma:g_2k_nodes} and we switch to Algorithm~\ref{alg:linear_enum_graphlet}, whose correctness is proven by Lemma~\ref{th:correctness_linear_enum}.
    
    Due to the fixed ordering, every vertex $u < v$ will be already removed from $G$ by the time we initialize $r = v$, therefore no $k$-graphlet will be listed twice.
\end{proof}

\begin{theorem}\label{th:ksquare-graphlet-complexity}
    Algorithm \ref{alg:ksquare_graphlet} runs in $O(k^2)$ amortized time per $k$-graphlet or $O(|E| + Nk^2)$ total time, where $N$ is the number of $k$-graphlets reported.
\end{theorem}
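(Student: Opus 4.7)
The plan is to analyze the recursion tree $T$ induced by each outer-loop call to \textsc{Enum} and show two things: (a) every internal node of $T$ performs $O(k^2)$ local work, and (b) $T$ has no unary internal nodes, so the number of internal nodes is bounded by $N_T - 1$, where $N_T$ is the number of leaves of $T$ (each of which outputs a distinct $k$-graphlet).

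For (a), the local work at an internal node consists of the chain-following while loop, up to two \textsc{Removable} checks, and, in the fallback case, a call to \textsc{Linear\_Enum}. The two \textsc{Removable} checks cost $O(k^2)$ each by Lemma~\ref{lemma:removable_check_time}. The while loop contracts at most $k$ edges before either producing a solution or exiting; since each \textsc{Enum} call operates on the subgraph reachable from $r$ within distance $k$---a subgraph with only $O(k^2)$ edges, as truncated by the BFS-based \textsc{Fruitful} and \textsc{Removable} procedures---each contraction touches $O(k)$ incident edges, so the chain contributes $O(k^2)$ overall. If neither of the two chosen neighbors is removable, Lemma~\ref{lemma:g_2k_nodes} guarantees $|V(G)| < 2k$, hence $|E(G)| = O(k^2)$, and \textsc{Linear\_Enum} runs in $O(|E|) = O(k^2)$ amortized time per graphlet by Theorem~\ref{th:linear_graphlet_enum_delay}.

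For (b), the only branching internal nodes of $T$ are those that successfully identify a removable vertex $x$ (or $y$) and recur on both $G / \{r,x\}$ (with $x$ added to $S$) and $G \setminus \{x\}$. By Definition~\ref{def:removable_vertex} both branches are fruitful, so each produces at least one leaf of $T$. Hence $T$ is a proper binary tree with $N_T$ leaves and at most $N_T - 1$ internal nodes. Summing the $O(k^2)$ per-node cost yields $O(N_T\, k^2)$ per outer call, and summing over all outer calls gives $O(N k^2)$ in total for \textsc{Enum} invocations.

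It remains to account for the outer loop itself. The sequential removal of vertices from $G$ costs $O(|V| + |E|)$ over the full loop. Each \textsc{Fruitful} check costs $O(k^2)$; the ones that succeed are charged to the at least one graphlet produced by the corresponding \textsc{Enum} call, contributing an additional $O(Nk^2)$, while the failing ones can be amortized against the $O(|V|+|E|)$ graph-reading budget (since a non-fruitful vertex's BFS explores a small component that will not host any fruitful starts). Combining yields a total running time of $O(|V| + |E| + Nk^2)$, equivalent to $O(k^2)$ amortized per graphlet. The main subtlety I anticipate is rigorously bounding the contraction and chain-following cost without a $\Delta$ factor: this relies on implicitly maintaining only the $O(k^2)$-edge subgraph explored by prior BFS probes, rather than manipulating the full adjacency lists of $G$.
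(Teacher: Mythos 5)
Your proposal has a genuine gap at its central step, and the ``subtlety'' you flag at the end is exactly the point where the argument breaks: you resolve it by asserting that each \textsc{Enum} call ``operates on the subgraph reachable from $r$ within distance $k$---a subgraph with only $O(k^2)$ edges, as truncated by the BFS-based \textsc{Fruitful} and \textsc{Removable} procedures.'' That is not how the algorithm works. \textsc{Fruitful} and \textsc{Removable} run a \emph{truncated BFS only to decide a yes/no question}; they do not shrink the graph that \textsc{Enum} actually recurses on. The recursion must keep the full component of $r$ available, since the union of all $k$-graphlets through $r$ can span it. Consequently, when the chosen removable vertex $x$ has degree $d(x) \gg k$, preparing $G / \{r,x\}$ and $G \setminus \{x\}$ (and restoring them on backtrack) genuinely costs $O(d(x))$, which can be as large as $O(|V|)$, and your per-node $O(k^2)$ bound is simply false at that node. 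This high-degree case is the heart of the theorem; without a fix, a single contraction already reintroduces a $\Delta$ factor, which is precisely the dependency the result is supposed to eliminate.

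The paper closes this gap with an amortization you do not have. When $d(x) > 2k$, it observes that after contracting $\{r,x\}$ at least $d(x) - k > d(x)/2$ of $x$'s former neighbors become removable neighbors of $r$, which guarantees at least $d(x)/2$ ``negative'' child calls (those on the $G \setminus \{\cdot\}$ branch) descending from the current node. The $O(d(x))$ preparation cost is spread $O(1)$ apiece over these descendants, with the observation that each node receives such a charge from at most one ancestor; and a separate $O(k)$ term absorbs the at most $2k$ charges left dangling if the recursion drops into \textsc{Linear\_Enum} (Lemma~\ref{lemma:g_2k_nodes}) before all of them can be collected. Your proof also silently assumes each while-loop contraction touches only $O(k)$ edges for the same mistaken reason; the paper instead argues each such iteration is $O(1)$ because $|N(r)| = 1$ makes the shrink a constant-time pointer operation. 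The rest of your structure---proper binary recursion tree with $\le N-1$ internal nodes, the $O(k^2)$ \textsc{Removable} checks, the switch to \textsc{Linear\_Enum} on a $2k$-vertex graph---matches the paper, but the missing charging scheme for high-degree contractions is the load-bearing piece and has to be supplied.
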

\begin{proof}
    For clarity, we call $k_0$ the initial $k$, whenever we need to distinguish them, as the value of $k$ is decreased during the recursion.
    Note that $k \le k_0$ always.

    First, note that each node must be deleted from the graph (not just ignored), this can be achieved by using appropriate data structures that allow for $O(d(v))$ deletion time for any $v \in V$ and shrinking.
    The same reasoning applies when restoring a deleted vertex when the recursion backtracks.\footnote{This method is folklore by exploiting threaded doubly-linked adjacency lists, keeping pointers to the last element of each list, and for each vertex, a pointer to all neighbors' adjacency lists where it appears, allowing for $O(1)$ time deletion and restoring of each value.}
    In order to prove this result, we show that the recursion tree associated with Algorithm~\ref{alg:ksquare_graphlet} is made by recursive calls that cost $O(k^2)$ (after proper amortization) and either output a solution or have two recursive children.
    This is sufficient to prove the result as it gives a recursion tree with $N$ leaves, where $N$ is the number of $k$-graphlets found, and at most $N-1$ internal nodes, each with cost $O(k^2)$.
    
    Firstly, observe that every recursive call to \textsc{Enum} and \textsc{Linear\_Enum} produces at least one solution due to the usage of \textsc{Fruitful} and \textsc{Removable}. We now show that the recursive calls cost $O(k^2)$ amortized time each.

    The while loop of line \ref{alg:ksquare_graphlet:line:while} will be executed at most $k$ times, and each iteration will cost $O(1)$ time as $|N(r)| = 1$.
    If we reach $k = 0$, we output a solution and this \textsc{Enum} call will be a leaf of the recursion tree, where the algorithm spent $O(k)$ time. 
    If we exit the while loop with $k > 0$ we will proceed with the binary partition strategy on a suitable node. 
    To find such a node we need to call the \textsc{Removable} procedure, which will perform a truncated BFS with cost $O(k^2)$ due to Lemma~\ref{lemma:removable_check_time}.
    Now, let us analyze each possible case of removable vertices separately:

    Assume that we perform the binary partition on $x$ (the argument is the same for $y$), we have the following two cases.

    \textbf{(a)} $d(x) \leq 2k$:
    the binary partition on vertex $x$ will require $O(k)$ time for preparing the graphs $G / \{r, x\}$, and $G \setminus \{x\}$ (and to restore $G$ when backtracking).
    The total cost of this recursive call will then be $O(k + k^2) = O(k^2)$, and it will produce two fruitful (because of the \textsc{Removable} check) children calls.

    \textbf{(b)} $d(x) > 2k$:
    the construction of $G / \{r, x\}$ and $G \setminus \{x\}$ now cost $O(d(x))$ that has to be amortized, as it could be $O(|V|)$; we do this by suitably dividing this cost among some of the descendant calls of the recursive call $C$ corresponding to \textsc{Enum}($G / \{r, x\}, r, S \cup \{x\}, k-1$) of line~\ref{alg:ksqare_graphlet:line:take_x}.
    Let $x_1, x_2, \dots, x_{d(x)}$ be the neighbors of $x$. 
    In $G / \{r, x\}$, at least $d(x) - k > d(x) / 2$ of these nodes are removable; note that these are now neighbors of $r$ and they could be all removed at the same time as $r$ would still have more than $k$ neighbors left. 
    This means that there exist at least $d(x) / 2$ ``negative'' recursive calls, i.e. performed on line~\ref{alg:ksqare_graphlet:line:remove_x} and line~\ref{alg:ksqare_graphlet:line:remove_y}, descending directly from $C$.
    We divide the $O(d(x))$ cost among these calls, observing that each of them receives an additional cost charge of $O(1)$ (see Figure~\ref{fig:graphlet-amortization}).
    Observe also that every recursive node can receive this cost only from one ancestor in the tree.
    Finally, observe that we do not actually choose which nodes are considered by the binary partition, so it could be that the algorithm switches to Algorithm~\ref{alg:linear_enum_graphlet} during this process.
    If this happens recall that, by Lemma~\ref{lemma:g_2k_nodes}, the remaining graph has less than $2k$ nodes; the number of remaining charges that were not yet amortized must also be $\le 2k$, as each charge was assigned based on a specific node that is still in the graph, so the cumulative cost of these remaining charges is just $O(k)$, which we can assign to the current recursive call.

    The last case, \textbf{(c)}, happens whenever neither $x$ or $y$ are removable.
    In this case, $G$ has at most $2k$ vertices and $O(k^2)$ edges by Lemma~\ref{lemma:g_2k_nodes}, and we can switch to the enumeration strategy of Algorithm~\ref{alg:linear_enum_graphlet}, with a graph that has $|E| = O(k^2)$. 
    Following Theorem~\ref{th:linear_graphlet_enum_delay}, this will grow a subtree such that every node will either output a solution or spawn two recursive children in $O(k^2)$ amortized time because the graph can only get smaller in subsequent recursive calls.

    Therefore, we can conclude that each recursive call spends $O(k^2)$ time and is charged by at most one ancestor with an additional $O(k)$ or $O(1)$ time, thus the total amortized time of Algorithm~\ref{alg:ksquare_graphlet} is bounded by $O(k^2)$.
\end{proof}

\begin{figure}[ht]
    \centering
    \includegraphics[width=.8\textwidth]{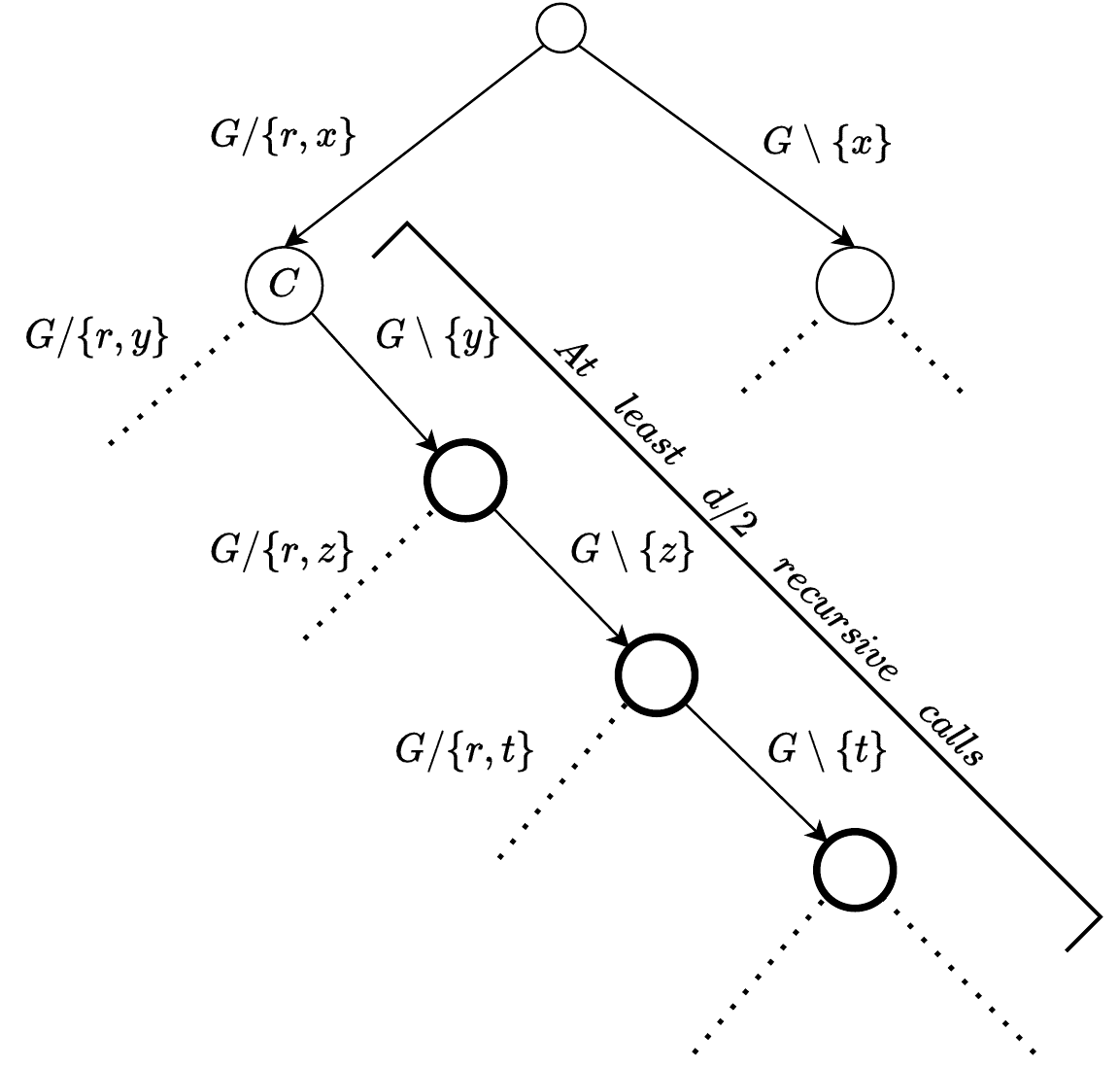}
    \caption{Amortizing the cost of the recursive call $C$ on its \emph{negative} descendants: we charge $O(1)$ to each highlighted call. These calls, by the proof of Theorem~\ref{th:ksquare-graphlet-complexity}, are at least $d(x) / 2$, so we can amortize the $O(d(x))$ cost on them.}
    \label{fig:graphlet-amortization}
\end{figure}

\subsection{Bounded-degree graphs}

The focus of the above analysis is providing bounds that depend only on $k$ and not on the size of the graph, and this is the main contribution of our proposed algorithm since typical scenarios consider values of $k$ in the single digits, while graphs can have millions or billions of vertices, and the maximum degree $\Delta$ can be in the same order of magnitude. 

Nonetheless, it is worth considering that there are graphs classes where the maximum degree $\Delta$ can be small. Specifically, graphs where $\Delta = O(1)$ are called \textit{bounded-degree} graphs. For such cases, we observe that a finer complexity analysis yields $O(k \cdot \min\{k, \Delta\})$ amortized time, which indeed results in $O(k)$ time per graphlet in bounded-degree graphs.

\begin{corollary}
    Algorithm~\ref{alg:ksquare_graphlet} has $O(k \cdot \min\{k, \Delta\})$ amortized time complexity, that is, $O(k)$ in bounded-degree graphs.
\end{corollary}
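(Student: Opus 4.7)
The plan is to identify where the $k^2$ factors in the proof of Theorem~\ref{th:ksquare-graphlet-complexity} originate and refine each of them to $k \cdot \min\{k, \Delta\}$. The dominant source is the cost charged by the calls to \textsc{Fruitful} and \textsc{Removable}, bounded by Lemma~\ref{lemma:removable_check_time}. The first step is to sharpen that lemma: the truncated BFS it describes explicitly halts after discovering $k$ vertices, so it scans at most $k$ adjacency lists, and each scan costs at most $O(\Delta)$. This gives an $O(k\Delta)$ bound, which, combined with the original $O(k^2)$ bound from the lemma, yields $O(k \cdot \min\{k, \Delta\})$ per check. The check on $x$ being removable also pays an $O(k)$ overhead to skip over $x$ in the scanned lists, which is absorbed.

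Next I would propagate this refined bound through the three cases of the proof of Theorem~\ref{th:ksquare-graphlet-complexity}. In case (a), where $d(x) \leq 2k$, the only $k^2$ term came from the \textsc{Removable} check and becomes $O(k \cdot \min\{k, \Delta\})$, while the graph preparation cost $O(d(x)) = O(k)$ is dominated. In case (c), \textsc{Linear\_Enum} is invoked on a graph with at most $2k$ vertices, whose edge count is at most $\min\{\binom{2k}{2}, k\Delta\} = O(k \cdot \min\{k, \Delta\})$, so Theorem~\ref{th:linear_graphlet_enum_delay} delivers the desired amortized time per solution. The outer-loop \textsc{Fruitful} checks contribute only to the $O(|V|+|E|)$ setup cost and do not affect the per-solution amortization.

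Case (b) is where the main obstacle seemingly lies, because there the amortization of $O(d(x))$ over $\Omega(d(x))$ negative descendants played a central role. However, case (b) is triggered by the hypothesis $d(x) > 2k$, which forces $\Delta > 2k$ and thus $\min\{k, \Delta\} = k$; the original $O(k^2)$ bound from Theorem~\ref{th:ksquare-graphlet-complexity} therefore already coincides with $O(k \cdot \min\{k, \Delta\})$ in this regime, and no change to the amortization argument is required. Combining the three cases gives the refined bound of $O(k \cdot \min\{k, \Delta\})$ amortized time per $k$-graphlet, which collapses to $O(k)$ whenever $\Delta = O(1)$, establishing the corollary for bounded-degree graphs.
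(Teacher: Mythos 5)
Your proof is correct and follows the same route as the paper: sharpen the truncated-BFS cost of the \textsc{Fruitful}/\textsc{Removable} checks to $O(k\Delta)$ (hence $O(k\cdot\min\{k,\Delta\})$), and bound the edge count of the graph handed to \textsc{Linear\_Enum} by $O(k\cdot\min\{k,\Delta\})$; your observation that case (b) forces $\Delta>2k$ so $O(k^2)$ already matches the target is a nice shortcut the paper leaves implicit. One small point the paper makes explicit that you elide: in the $\le 2k$-vertex graph given to \textsc{Linear\_Enum}, vertex $r$ may have degree larger than $\Delta$ because of earlier shrinking, but since $r$ then has only $O(k)$ neighbors the edge total is still $O(k+k\Delta)=O(k\Delta)$, so your bound of $\min\bigl\{\binom{2k}{2},k\Delta\bigr\}$ does hold up to constants.
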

\begin{proof}
    The proof is obtained by retracing that of Theorem~\ref{th:ksquare-graphlet-complexity}, showing that every factor $O(k^2)$ in its complexity can be bounded by $O(k \cdot \min\{k, \Delta\})$: firstly, the truncated BFS for computing \textsc{Fruitful} and \textsc{Removable} can be performed in $O(k\Delta)$ time, as each adjacency list is not bigger than $\Delta$ and we are looking for just $k$ vertices; this means the corresponding $O(k^2)$ cost is bounded by $O(k\Delta)$ as well, and thus by $O(k \cdot \min\{k, \Delta\})$.

    Secondly, whenever the algorithm switches to Algorithm~\ref{alg:linear_enum_graphlet}, the resulting graph has $\le 2k$ vertices; each of these has at most degree $\Delta$, except for vertex $r$ that may have higher degree due to the shrinking operations. However, the degree of $r$ cannot be higher than the number of nodes in the current graphs, which is $O(k)$, thus the total number of edges $|E|$ is bounded by $O(k + k\Delta) = O(k\Delta)$. Since $|E|$ is clearly also bounded by $O(k^2)$ we have that the amortized cost of Algorithm~\ref{alg:linear_enum_graphlet} is also bounded by $O(k \cdot \min\{k, \Delta\})$.    
\end{proof}

\subsection{Extension to edge \texorpdfstring{$k$}{k}-graphlets}
Algorithm~\ref{alg:ksquare_graphlet} can be used to also list \emph{edge} $k$-graphlets, i.e., subgraphs constituted by exactly $k$ edges (see Section~\ref{bg:sec:subgraphs} and Figure~\ref{bg:subfig:edge-graphlets}) in $O(k^2)$ amortized time.
This is possible by transforming the input graph $G = (V, E)$ in its corresponding \emph{line graph} $L(G) = (V', E')$ where each edge of $G$ becomes a vertex of $L(G)$, and $L(G)$ contains an edge whenever two vertices of $G$ share an incident edge.

Formally,
$ V' = \left\{e \,\mid\, e \in E\right\} $ and $E' = \left\{ \{e, f\} \,\mid\, e, f \in E, e \cap f \neq \emptyset \right\}$.
We can then apply Algorithm~\ref{alg:ksquare_graphlet} to $L(G)$ to find graphlets there, in $O(k^2)$ amortized time per graphlet.
The $k$-graphlets found in $L(G)$ correspond to edge $k$-graphlets in $G$, and the construction of $L(G)$ requires $O(|V| + |E|)$ by using common graph traversal algorithms such as breadth- or depth-first search.

  \chapter{Cache-Aware Graphlet Enumeration}\label{chap:cage}
In Chapter~\ref{chap:ksquare} we provided a new algorithm that improves the state of the art for $k$-graphlet enumeration, bringing the theoretical time complexity down to $O(k^2)$, and even to $O(1)$ under certain circumstances, per graphlet enumerated.
However, it requires a high amount of shrinking and deletion operations on the vertices while unfolding the recursion, that can lead to bottlenecks in practical implementations. 
Moreover, as mentioned earlier in Chapter \ref{chap:background}, we conducted an analysis on the practical performance of existing algorithms, in particular KS-Simple \cite{SIMPLE_graphlets}, to discover that its corresponding recursion tree has often a tiny amount of failure leaves. 
This tells us that the work done by binary partition algorithms (for $k$-graphlet listing) is, for the most part, useful work.
As an example, say that the ratio of failure leaves to the total number of leaves in the recursion tree is $20\%$: this means that $80\%$ of the total work done by the algorithm has provided a solution and it was, indeed, useful work. 
We conclude that, if the ratio above is sufficiently small, then the shrinking and deletion operations of Algorithm~\ref{alg:ksquare_graphlet} may actually hinder the benefits of having only successful leaves in the recursion tree in practical scenarios.

Therefore, in this chapter we take a journey from the failure leaves observation towards the design of a practical, engineered and optimized algorithm for $k$-graphlet enumeration.
In particular, the key question that we address is ``how far can we push existing graphlet enumeration algorithms?''.
The question addresses the purest version of graphlet discovery, and its relevance is in building an efficient and general tool that can be exploited in any graphlet-related application.
We will show a hard limit faced by current enumeration strategies, and how to overcome it.
We refine the binary partition strategy to take full advantage of cache memory, while also cutting down the size of the recursion tree beyond what is possible with current enumeration algorithms. 
Recall Figure \ref{fig:brady_graphlets_right}, that showed how the number of $k$-graphlets increases exponentially with $k$, even in a small graph like the Brady network \cite{lasagne} (with 1,117 nodes and 1,330 edges);
the existing approaches deal with this problem by keeping $k$ small \cite{PGD,Kavosh,FASE,escape_algorithm,bhuiyan2012guise,melckenbeeck2019optimising,SIMPLE_graphlets} (e.g., 3 or 4), or by giving up exact results in favor of estimation \cite{bressan2018motif,graphlet_kernels,wang2014efficiently,elenberg2015beyond}. Indeed, according to a recent survey on motifs \cite{jazayeri2020motif}, papers focused on estimation are becoming more popular in recent years.

The contribution of this Chapter are the following: 
\begin{enumerate}\itemsep0em
\item We empirically observe that current enumeration methods cannot further reduce their enumeration tree as the number of \textit{failure} leaves, i.e. leaves which do not report a graphlet, is negligible (9\% or less) even with simple strategies.

\item Based on this, we aim to go beyond the already good performance of simple enumerators: we achieve this by designing an algorithm that \emph{better exploits cache memory} in the CPU, as we show using the Intel VTune Profiler tool. Intel CPUs have three cache levels (L1-L2-L3) to speed up the memory access to the RAM: the RAM module is the largest but slowest memory; 
L3 cache is faster but smaller than RAM (and typically shared among the CPU cores); L2 and L1 are even smaller and faster (typically one private L1-L2 per core). Our algorithm exhibits zero or few L3 cache misses (compared to RAM loads and stores) and L2-L3 cache-bound times have tiny values, in the range 0--5\%.
    
\item Finally, we break through the hard limit on the number of recursive calls, \emph{collapsing the three lowest levels of the recursion tree}. As the tree grows exponentially, this affects the largest levels.
    We also generate sets of solutions at once, in a compressed, yet easy to access, format. 
\end{enumerate} 
We call the resulting algorithm CAGE (Cache-Aware Graphlet Enumeration), which constitutes an evolution of the reference enumeration algorithm \cite{SIMPLE_graphlets} (KS-Simple) (see Section \ref{sec:ks_simple_baseline}), and outperforms it by \emph{over an order of magnitude}.

We evaluate CAGE against the state of the art, namely, KS-Simple, Kavosh~\cite{Kavosh}, and FaSE~\cite{FASE}, showing dramatically improved performance and scalability. We also include fast counting approaches~\cite{PGD,escape_algorithm} that follow a more analytical method, showing they are more competitive; however, by design they cannot run for $k>5$, whereas CAGE does not share this limitation. 

In what follows, for a set of nodes $S\subseteq V$, we define the neighborhood of $S$ as $N(S) = \cup_{u\in S}N(u) \setminus S$, the distance-2 neighborhood as $N^2(S) = N( N(S) ) \setminus (S \cup N(S))$ and the distance-3 neighborhood as $N^3(S) = N( N^2(S)) \setminus (S \cup N(S) \cup N^2(S))$.

\section{Failure Leaves Analysis}\label{cage:sec:failure_leaves}
In this section we provide a broader analysis of the failure leaves of the binary partition technique adopted by KS-Simple \cite{SIMPLE_graphlets}. 
Table \ref{tab:failure_leaves-appendix} highlights the trend of the dataset: the percentage of failure leaves relative to the total number of leaves in the recursion tree is impressively small (i.e. less than $9\%$). This is corroborated by the empirical analysis that we conducted on the whole dataset made up by 155 graphs, as this data analysis drove our design of CAGE (see Figure ~\ref{fig:chart_fail_leaves}).
\begin{table}[htb]
\caption{Failure leaves found by KS-Simple for $k=5, 9$. $\dagger$: execution stopped after 30 minutes.}
\label{tab:failure_leaves-appendix}
\centering
\small
\begin{tabular}{|c|c|r|r|} 
\hline
Graph                           & $k$ & \multicolumn{1}{c|}{\#Leaves} & \multicolumn{1}{c|}{\#Failure Leaves (\%)}  \\ 
\hline
\multirow{2}{*}{Brady}          & 5   & 272,286                     & 2,082 (0.76\%)                       \\
                                & 9   & 2,818,545,162               & 6,518,803 (0.23\%)                   \\ 
\hline
\multirow{2}{*}{cti}            & 5   & 7,439,769                   & 274 (0.004\%)                        \\
                                & 9   & 13,756,931,277              & 351,237 (0.003\%)                    \\ 
\hline
\multirow{2}{*}{Roadnet-TX}     & 5   & 26,530,251                  & 1,528,480 (5.76\%)                   \\
                                & 9   & 1,928,340,917               & 42,101,984 (2.18\%)                  \\ 
\hline
\multirow{2}{*}{RoadNet-CA}     & 5   & 38,943,018                  & 2,060,443 (5.29\%)                   \\
                                & 9   & 2,926,916,012               & 59,420,852 (2.03\%)                  \\ 
\hline
\multirow{2}{*}{IMDB$^\dagger$} & 5   & 414,675,826                 & 443 (0.0001\%)                       \\
                                & 9   & 357,607,147                 & 79 (0.00002\%)                       \\ 
\hline
\multirow{2}{*}{Wing}           & 5   & 4,183,189                   & 50,728 (1.21\%)                      \\
                                & 9   & 1,103,510,287               & 1,850,767 (0,17\%)                   \\
\hline
\end{tabular}%

\end{table}

For the sake of completeness, we report a comparative plot of the percentages found in the entire dataset in Figure \ref{fig:chart_fail_leaves}, for $k = 4, 5, 7, 9$, where the x-axis corresponds to the graphs, sorted in decreasing order of percentage, and the y-axis reports the percentage values. 
A clear long-tail distribution emerges from these charts, where we can see that only a small portion of the dataset as a higher number of failure leaves with respect to the majority of the graphs.
Nonetheless, this higher number does not reach even the 10\% threshold, while the rest of the dataset produces way less than 1\% of failure leaves, strengthening our hypothesis that any pruning rule added to this enumeration strategy will increase the computational complexity of the recursive calls in exchange for little to no benefits in terms of the recursion tree size.

\begin{figure}[htb]
    \centering
    \hfill
    \begin{subfigure}[b]{0.49\textwidth}
         \centering
         \includegraphics*[width=\textwidth]{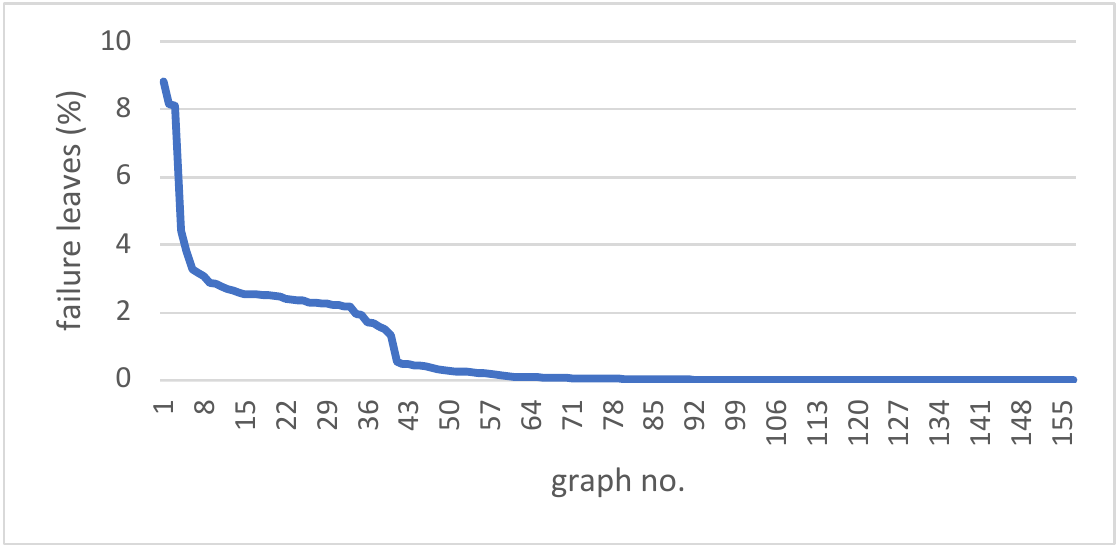}
         \caption{$k=4$}
         \label{fig:fail_leaves_4}
     \end{subfigure}
     \begin{subfigure}[b]{0.49\textwidth}
         \centering
         \includegraphics*[width=\textwidth]{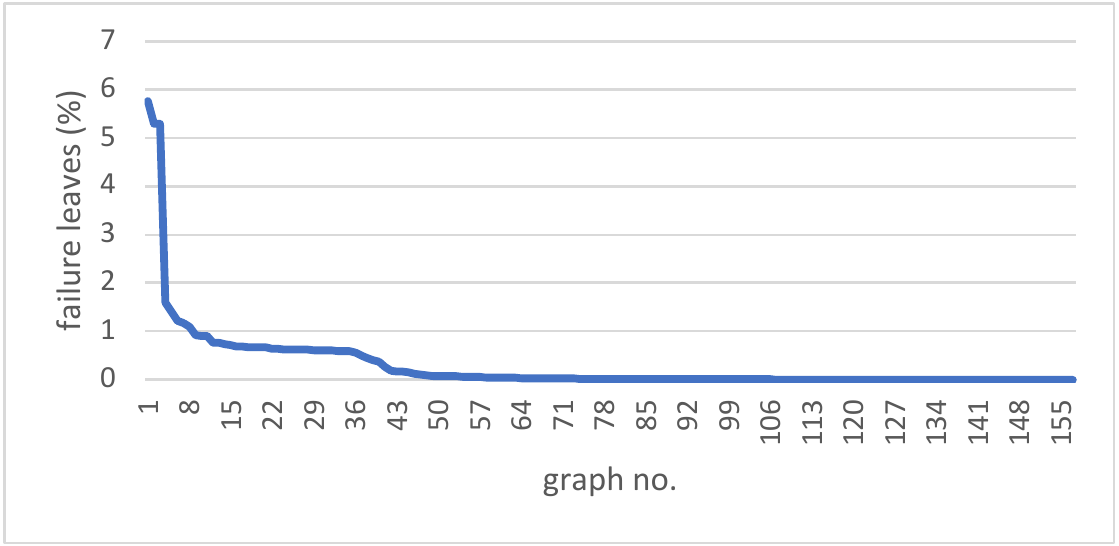}
         \caption{$k = 5$}
         \label{fig:fail_leaves_5}
     \end{subfigure}
    \\
     \vspace*{1mm}
     \begin{subfigure}[b]{0.49\textwidth}
         \centering
         \includegraphics*[width=\textwidth]{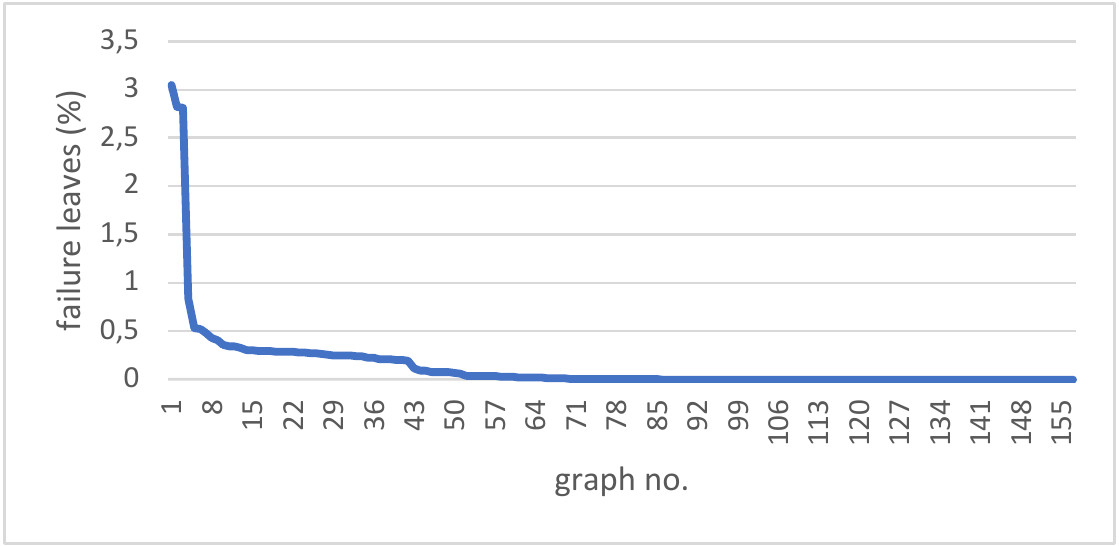}
         \caption{$k = 7$}
         \label{fig:fail_leaves_7}
     \end{subfigure}
     \begin{subfigure}[b]{0.49\textwidth}
         \centering
         \includegraphics*[width=\textwidth]{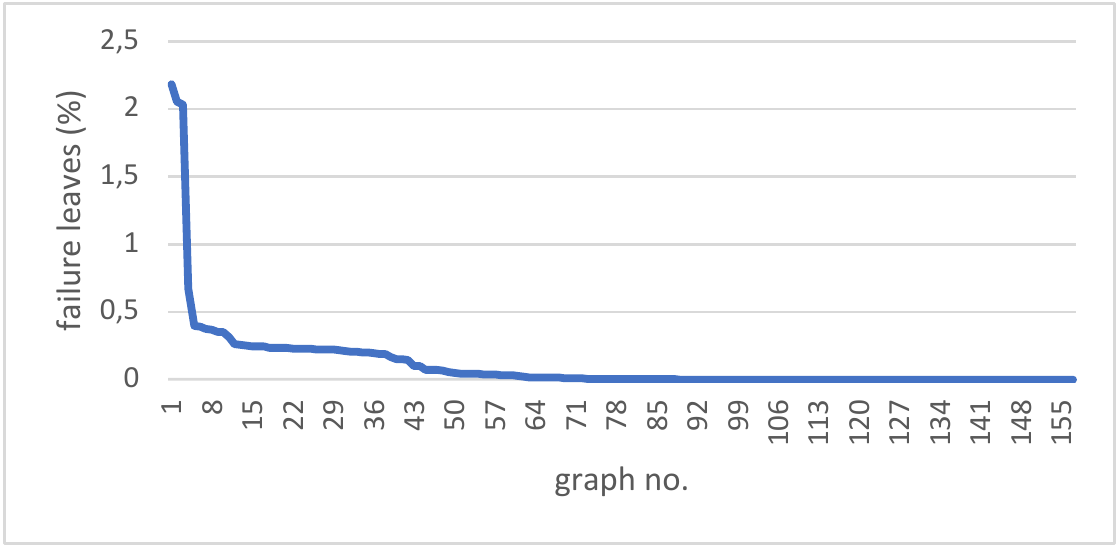}
         \caption{$k=9$} 
         \label{fig:fail_leaves_9}
     \end{subfigure}

        \caption{Percentage of failure leaves over total leaves in the recursion tree, for our whole dataset of 155 graphs and $k = 4, 5, 7 , 9$.}
        \label{fig:chart_fail_leaves}
\end{figure}%

\section{The CAGE Algorithm}
\label{sec:cage}
In this section we present our main contribution, the CAGE algorithm.
We give the pseudocode and discuss its cache-awareness, carefully considering all the scenarios that the algorithm may encounter during its execution, and show how we can effectively compress the last three levels of its recursion tree.

\subsection{Addressing a hard combinatorial limit}
\label{sec:leaves}
Before describing our algorithm, we introduce the principles behind its design. 

First, we opted for a data-driven design after the empirical analysis conducted in in the previous Section.

Second, we wanted to better exploit the cache in the CPU, the main motivation for this work. For example, if we need to check whether an edge $(x,y)$ exists and the adjacency list $N(x)$ is already in cache, whereas we have no such a guarantee for $N(y)$, it is clearly more efficient to test membership $y \in N(x)$ rather than $x \in N(y)$. Therefore, we carefully orchestrated the loading of graph data during execution (see comments in the pseudocode).\footnote{We are not directly controlling the cache, but rather allowing the algorithm to run cache-friendly by making standard assumptions on the associative cache~\cite{FrigoLPR12}.}

Third, once some data is in the cache in the CPU, we would like to exploit it at its best before it is released to load further data. As $|S| < k$ and $k$ is small, it is reasonable to assume that both $S$ and $N(S)$ fit into the cache most of the times, as $|N(S)| < k \Delta$ (exceptions may occur for large $\Delta$). Based on this, we stop the recursion when $|S|=k-3$, and deduce all the possible extensions of $S$ into graphlets by adding three further nodes. This greatly truncates the recursion tree, as each level of calls can be $\Delta$ times more populated than the previous one (potentially reducing the total number of calls by a $O(\Delta^3)$ factor).\footnote{This idea can be generalized to $k-4$, $k-5$, and so on, but there is no payoff going further than $k-3$ in practice as there are too many cases to handle.} 
The last two principles are intimately connected to each other, and give a speedup of at least one order of magnitude in our experiments in Section~\ref{sec:experiments}. 
\begin{figure}[!ht]
  \makebox[\linewidth]{%
  \begin{minipage}{\dimexpr\linewidth+7em}
\begin{algorithm}[H]
\DontPrintSemicolon
  \SetKwProg{myproc}{Function}{}{}
  \SetKwFunction{enum}{ENUM}
  \SetKw{true}{true}
  \SetKw{false}{false}
  \SetKwInput{kwPre}{Hp.}

  \myproc{\enum{$G$,$S$,$k$,$X$}\tcp*[h]{Hp. $k > 3$ (case $k=3$ is simpler)}}{ 
  \tcp{find all graphlets in $G \setminus X$ containing all nodes in $S$}
    \lIf{$|N(S) \setminus X| = 0$}{
      \Return \false\tcp*[h]{failure leaf}    \label{cage:a}
    } 
    \If(\tcp*[h]{exploit the cache to complete $S$}){$|S| = k-3$}{ \label{cage:b}
    \textsl{localcount} $\gets \binom{|N(S) \setminus X|}{3}$ \tcp*[h]{(1) zero if $|N(S) \setminus X| < 3$}  \; \label{cage:c}  
    \ForAll(\tcp*[h]{$N(S)$ loaded in cache}){$u \in N(S) \setminus X$}{ \label{cage:d}
    \textsl{udeg} $\gets$ \textsl{duplicated} $\gets 0$\;
    \ForAll(\tcp*[h]{$z \in N^2(S) \setminus X$}){$z \in (N(u) \setminus (N(S) \cup S)) \setminus X$}{ \label{cage:e}
        \tcp{$N(u)$ and $S$ loaded in cache}
        \textsl{udeg} $\gets \textsl{udeg} + 1$   \tcp*[h]{needed for (2)} \; \label{cage:f}
        \ForAll(\tcp*[h]{(4) see Figure \ref{fig:cases}}){$w \in (N(z) \setminus (N(S) \cup S)) \setminus X$}{ \label{cage:g}
            \tcp{$N(S), N(u)$, $S$ cached, and $N(z)$ loaded} 
            \lIf{$w \not \in N(u)$}{
            \textsl{localcount} $\gets \textsl{localcount} + 1$ \label{cage:h}    
            } 
        } 
        \ForAll(\tcp*[h]{(3) $N(S), N(z)$ cached}){$v \in (N(S) \setminus \{u,z\}) \setminus X$}{ \label{cage:i}
            \eIf(\tcp*[h]{(3a) counted twice, from $v$ and $z$}){$v \in N(z)$ \label{cage:l}}{\textsl{duplicated} $\gets \textsl{duplicated} + 1$  \label{cage:l1}}(\tcp*[h]{(3b)}){\textsl{localcount} $\gets \textsl{localcount} + 1$  \label{cage:l2}}
        } 
    } 
    \textsl{localcount} $\gets \textsl{localcount} + \binom{\textsl{udeg}}{2}$ \tcp*[h]{(2) see Section~\ref{sec:remarks_enum}}\; \label{cage:m}
    \textsl{localcount} $\gets \textsl{localcount} + \textsl{duplicated}/2$~\tcp*[h]{(3a) duplicates counted once} \label{cage:n}\;
    } 
    \textsl{solutions} $\gets \textsl{solutions} + \textsl{localcount}$\; \label{cage:o} 
    \Return $(\textsl{localcount} > 0)$\; \label{cage:p}  
    } 

    
    \textsl{found} $\gets$ \false\; \label{cage:q}
    
    \ForAll(\tcp*[h]{here $|S|<k-3$. Vertices sorted by label}){$u \in N(S) \setminus X$}{ \label{cage:qq}
      \leIf{\enum{$G, S \cup \{u\}, k, X$}}{\textsl{found} $\gets$ \true}{\textbf{break}}

      $X\gets X\cup \{u\}$\;
    }
    
    \Return \textsl{found}\; \label{cage:r}
  
  } 

  \caption{CAGE: cache-efficient ENUM() algorithm, \textsl{solutions} is a global variable.}
  \label{alg:cage}
\end{algorithm}
  \end{minipage}}
\end{figure}

\subsection{Pseudocode}

Algorithm~\ref{alg:cage} implements our principles stated above. It is called Cache-Aware Graphlet Enumeration (CAGE), and is a non-trivial and cache-aware extension of KS-Simple \cite{SIMPLE_graphlets}. 

According to the first principle, we replaced the ENUM() function in KS-Simple with the one in Algorithm~\ref{alg:cage}. We decided to keep the for loop of KS-Simple at lines~\ref{cage:q}--\ref{cage:r}, and fixed the scanning order of the vertices top-level loop for efficient memory usage:
as we only look for the graphlets in which $v$ is the smallest node of the order, we know all vertices smaller than $v$ are in $X$, so we do not need to store them explicitly.
The second and third principles are implemented at lines~\ref{cage:a}--\ref{cage:p}, as the base case. We observe that the base case is no more $|S|=k$ as in KS-Simple, but $|S|=k-3$ as shown in Section~\ref{sub:exploiting_cache}. This is the outcome of several iterative improvements over KS-Simple.

\subsection{Exploiting the cache}
\label{sub:exploiting_cache}

Consider $S$ of size $|S|=k-3$, and let $v$ be the smallest node in $S$ according to an arbitrary ordering (we use label ordering) of the vertices of $G$. As we have to extend $S$ in all possible ways with 3 nodes, we pick them from $N(S)$, $N^2(S)$, and $N^3(S)$, ignoring the nodes belonging to $X$ (or only reachable from $S$ via $X$). Specifically, $X$ is composed of the nodes $v' \in V$ such that $v' < v$, and of the nodes that have already been examined in the loop at lines \ref{cage:q}--\ref{cage:r} in the parent call. We only need to store explicitly the latter nodes.

When picking the 3 nodes, we have to deal with 4 different cases. They are discussed below with an accompanying drawing shown in Figure \ref{fig:cases}.

\begin{enumerate}
\item[\emph{Case 1}:] The 3 nodes come from $N(S) \setminus X$ (see line \ref{cage:c}).

\item[\emph{Case 2}:] A node $u$ is from $N(S) \setminus X$, and the other 2 nodes are chosen from $N^2(S) \setminus X$ so that they are $u$'s neighbors (see lines~\ref{cage:f} and~\ref{cage:m}).

\item[\emph{Case 3}:] Distinct nodes $u,v$ are from $N(S) \setminus X$, and the remaining node $z$ is chosen from $N^2(S)  \setminus X$, so that
    \begin{enumerate}
        \item[(\emph{3a}).] $z$ is a common neighbor of both $u$ and $v$ (see lines~\ref{cage:l1} and~\ref{cage:n}), or
        \item[(\emph{3b}).] $z$ is a neighbor of only $u$ (symmetrically, only $v$) (see line~\ref{cage:l2}).
    \end{enumerate}
    

\item[\emph{Case 4}:] A node $u$ is from $N(S) \setminus X$, a node $z$ from $(N^2(S) \setminus X)\cap N(u)$, and the remaining node $w$ from $N(z) \setminus (N(S) \cup S \cup N(u)) \setminus X$, which comes either from $N^3(S)$ or $N^2(S)\setminus N(u)$ (see line~\ref{cage:h}). 
\end{enumerate}
As a result, CAGE is actually pruning the leaves, their parents, and their grandparents in the recursion tree, each time reducing the total number of calls roughly by a factor of $\Delta$.
\begin{figure}[ht]
        \centering 
        \captionsetup{type=figure}
        \includegraphics[width=.6\linewidth]{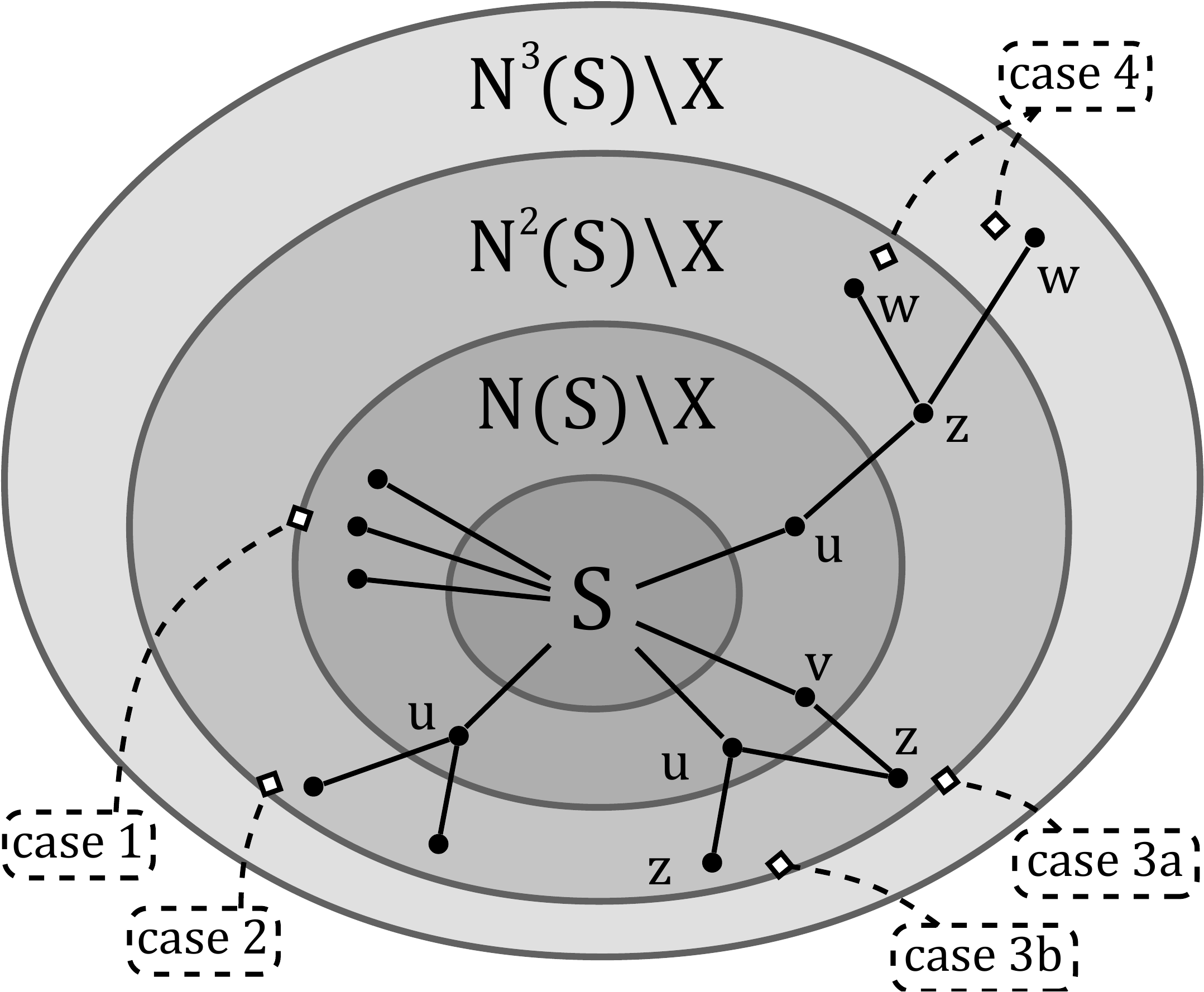}
        \caption{The 4 possible ways of completing a $k-3$ graphlet under construction (contained in set $S$). Case 1: pick 3 vertices at distance 1 from $S$; case 2: pick 1 neighbor of $S$ and 2 of its neighbors; case 3: pick 2 neighbors of $S$ and then 1 of their neighbors, if $z$ is adjacent to both $u$ and $v$ (case 3a) then divide the count by 2; case 4: pick 1 neighbor of $S$, then 1 of its neighbors at distance 2 from $S$ and then one of its neighbors at distance 3 from $S$.}
        \label{fig:cases}
\end{figure}

As each case involves the same or fewer steps than KS-Simple \cite{SIMPLE_graphlets}, CAGE shares its worst-case running time guarantee:
\begin{theorem}
\label{theorem:main}
CAGE reports all $k$-graphlets including the nodes in $S$, but not the nodes in $X$, in $O(k^2 \Delta)$ time per graphlet using $O(\min\{k\Delta,|V|\})$ space.
\end{theorem}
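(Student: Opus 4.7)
My plan is to verify correctness, running time, and space bound in sequence. Correctness of the recursive shell (lines~\ref{cage:q}--\ref{cage:r}) is inherited verbatim from KS-Simple together with Proposition~\ref{prop:KS}: CAGE scans $N(S)\setminus X$, branches into ``take $u$'' and ``remove $u$'', and breaks on the first barren branch, so every configuration with $|S|=k-3$ that can be extended to a $k$-graphlet containing $S$ and disjoint from $X$ is visited exactly once. The new obligation is the base case (lines~\ref{cage:b}--\ref{cage:p}), which I would prove by classifying each completion $T=\{a,b,c\}$ with $G[S\cup T]$ connected by $t=|T\cap N(S)|$, which is forced to be at least one by connectivity.

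For the base-case case analysis I would argue: $t=3$ is handled by $\binom{|N(S)\setminus X|}{3}$ (Case~1); $t=2$ falls into Case~3, split by whether the $N^2(S)$ vertex $z$ is adjacent to both $N(S)$ vertices $u,v$ (Case~3a, necessarily detected twice --- once from the $u$-iteration and once from the $v$-iteration --- and corrected by the final division of \textsl{duplicated} by two) or only one (Case~3b, detected uniquely because $v\notin N(z)$ rules out the symmetric discovery); $t=1$ splits into Case~2 when both non-$N(S)$ vertices lie in $N(u)$ (counted by $\binom{\textsl{udeg}}{2}$ over unordered pairs), and Case~4 otherwise, with $z\in N(u)\cap N^2(S)$ and $w\in N(z)\setminus(N(S)\cup S\cup N(u))$. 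Uniqueness within Case~4 follows because $u$ is the sole element of $T$ in $N(S)$ and $z$ is the sole element of $T\setminus\{u\}$ in $N(u)$; disjointness between cases follows from the $t$-classification and, within Case~2 vs Case~4, from the membership of the second non-$N(S)$ vertex in $N(u)$.

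For the running time I would rerun the KS-Simple amortization on the shell truncated at $|S|=k-3$: Proposition~\ref{prop:KS} makes this a near-binary tree whose internal nodes each spend $O(k\Delta)$ time, giving $O(k^2\Delta)$ amortized per reported graphlet up to the base-case frontier. A single base-case call performs at most $O(k\Delta^3)$ work in the triple-nested scan, but this is exactly what three further levels of KS-Simple would pay to enumerate the same triples; I would make this formal by exhibiting a bijection between iterations of the outer/middle/inner loops of the base case and the nodes of the depth-$3$ KS-Simple subtree that would have been grown from the same $S$, then transfer the KS-Simple amortization across the bijection. Space is $O(k\Delta)$: recursion depth is $O(k)$, each frame needs only a pointer into $N(S)$ and the incrementally grown $X$, and a single shared mark array of size at most $k\Delta$ over $N(S)\cup N^2(S)\cup N^3(S)$ supports $O(1)$ membership queries in $X$, $N(S)$, $N(u)$, $N(z)$; this is trivially bounded by $O(|V|)$.

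The hardest step, I expect, is the base-case amortization: unlike in KS-Simple, no Proposition~\ref{prop:KS}-style early termination fires inside the inner loops, so a pathological instance could in principle charge $O(k\Delta^3)$ to a single graphlet. Making the argument rigorous hinges on the bijection above, which needs to show that each unit of inner work coincides with a specific KS-Simple recursive probe on the same $(S,X)$, so that the global $O(k^2\Delta)$ per-graphlet bound of KS-Simple transfers unchanged; an alternative, less desirable route is to restrict the statement to the empirically dominant regime of Section~\ref{cage:sec:failure_leaves} where failure triples are negligible.
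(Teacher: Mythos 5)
Your proposal follows essentially the same route as the paper's proof sketch: classify the three completing vertices by how they distribute over $N(S)\setminus X$, $N^2(S)\setminus X$, $N^3(S)\setminus X$ subject to connectivity, match the sub-cases to Cases~1--4 of Algorithm~\ref{alg:cage} (with the halving of \textsl{duplicated} correcting the Case-3a double discovery), and inherit the $O(k^2\Delta)$ time and $O(\min\{k\Delta,|V|\})$ space bounds from KS-Simple. Your $t=|T\cap N(S)|$ bookkeeping is in fact slightly more faithful to the code than the paper's 3-0-0/1-2-0/2-1-0/1-1-1 phrasing, since Case~4 also absorbs the 1-2-0 completions whose second $N^2(S)$ vertex lies outside $N(u)$.

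One small slip: Proposition~\ref{prop:KS} does not make the recursion tree near-binary --- it only guarantees at most one failure child per internal node, while the branching factor can still be as large as $|N(S)\setminus X|$. Beyond that, the gap you self-diagnose (no Proposition~\ref{prop:KS}-style break fires inside the triple-nested base case, so a single call could in principle do far more than $O(k\Delta)$ work) is exactly the gap the paper's own sketch leaves open: it offers only the unelaborated assertion that ``each case involves the same or fewer steps than KS-Simple'' and then re-states the KS-Simple bound, without addressing the missing early termination. The bijection you propose --- mapping each base-case loop iteration to the depth-3 KS-Simple probe it replaces, then transferring the KS-Simple amortization --- is precisely what would be needed to make that assertion rigorous, and it supplies more detail than the paper actually writes down; your reading of where the hard work lies is accurate.
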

\begin{proof}[Proof (sketch)]
    The proof essentially follows by observing how the three nodes needed to extend $S$ can distribute over $N(S) \setminus X$, $N^2(S) \setminus X$, and $N^3(S) \setminus X$, while preserving connectivity: the only possibilities are 3-0-0, 1-2-0, 2-1-0, 1-1-1, which correspond to cases 1--4, respectively.
    The only concern is case 3a, as shown in Figure \ref{fig:cases}: when $z$ is connected to both $u$ and $v$, we are counting the same graphlet twice, one through $u$ and another through $v$, as we consider unordered pairs of nodes $u$ and $v$ in this case. So we must enumerate this case separately and divide the number of graphlets thus found by 2.
    The time cost per graphlet reported is $O(k^2\Delta)$ since the depth of the recursion is $O(k)$ and each recursive call takes $O(k\Delta)$ time because we are traversing the graph once.
    As for the working space occupation (which does not consider the read-only input graph), the space requirement of each recursive call is dominated by $N(S)$, having worst-case size $O(\min\{k\Delta, |V|\})$ as it is a set of nodes, and at most $\Delta$ nodes are added to them in each of the $k-3$ recursion levels; furthermore, we update the data structures when recurring (and restore them when backtracking) to avoid duplication on the recursion stack, so the working space of the algorithm is indeed $O(\min\{k\Delta, |V|\})$; we remark that this is a worst-case bound, and experimental results show this value to be small in practice (see Section \ref{sub:cache-analysis}).
\end{proof}

In order to make our implementation as cache-friendly as possible, we coupled the pruning of the recursion tree with some cache-friendly data structures so that $S$, $N(S) \setminus X$, and $N(u)$ are stored as vectors, $N(S)$ is stored as a cuckoo hash table \cite{cuckoo_hashing}, and $X$ is not explicitly stored due to vertex ordering. We have:
\begin{lemma}
\label{lemma:cache}
Given an associative CPU cache \cite{associative_cache}, if its size is $\Omega(k \Delta)$ words of memory, then the base case $|S|=k-3$ in CAGE loads $S$, $N(S)$, $N(u)$, and $N(z)$ once for distinct nodes $u$ and $z$, and then keeps them in the cache when they are accessed.
\end{lemma}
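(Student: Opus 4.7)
The plan is to argue, in two stages, that (i) the combined working set of the base case fits within the $\Omega(k\Delta)$-word cache, and (ii) the access pattern of Algorithm~\ref{alg:cage} at lines~\ref{cage:b}--\ref{cage:p} is such that, once loaded, each of $S$, $N(S)$, $N(u)$, and $N(z)$ is kept in cache for the duration of its use.

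First I would bound the relevant sizes. Since $|S|\le k$, the set $S$ occupies $O(k)$ words. The set $N(S)$ has size at most $|S|\cdot \Delta = O(k\Delta)$, and the cuckoo hash table encoding it is still of size $O(k\Delta)$ by the standard space guarantee of cuckoo hashing. For any single vertex $u$ or $z$, the adjacency list $N(u)$ or $N(z)$ occupies $O(\Delta)$ words. Hence the total working set simultaneously needed during one iteration of the outer loop at line~\ref{cage:d} for a given $u$ and one inner iteration for a given $z$ is $|S|+|N(S)|+|N(u)|+|N(z)| = O(k\Delta)$ words, which fits entirely inside a cache of size $\Omega(k\Delta)$ for a suitable hidden constant.

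Next I would track the access pattern to show each structure is loaded exactly once (per enclosing iteration) and then reused from cache. The set $S$ is read-only in the base case and is touched on essentially every iteration (at lines~\ref{cage:e}, \ref{cage:g}, and~\ref{cage:i}); the first access brings its $O(k)$ cache lines in, and since the total working set fits, it is never evicted before the base case terminates. The structure $N(S)$ is iterated once at line~\ref{cage:d} and re-scanned at line~\ref{cage:i} for each $z$; again, the first sweep loads all its cache lines, which are retained thanks to the capacity bound. The list $N(u)$ is loaded on entering a new iteration of the outer loop at line~\ref{cage:d} (driven by the scan at line~\ref{cage:e}) and reused at the membership test $w\not\in N(u)$ on line~\ref{cage:g}; once $u$ advances, $N(u)$ is no longer needed, and the space budget suffices to accommodate the next $N(u')$. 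Symmetrically, $N(z)$ is loaded when entering a new iteration of the inner loop at line~\ref{cage:e}, scanned once at line~\ref{cage:g}, probed again at line~\ref{cage:l}, and then released.

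The hard part will be formalising the non-eviction argument: we must invoke the associativity assumption on the cache to guarantee that a working set of size within its capacity is actually retained (not only bounded in aggregate). Under the standard model of a fully associative cache with LRU replacement, this is immediate from the capacity bound established above. For a set-associative cache, one additionally needs that the $O(k\Delta)$ cache lines of the working set do not all collide in a single set; this is the usual mild hypothesis under which cache-oblivious analyses operate~\cite{FrigoLPR12}, and we would appeal to it explicitly. With this in place, each of the four structures is brought from RAM into the cache only once per iteration in which it is first referenced, after which all subsequent accesses at lines~\ref{cage:c}--\ref{cage:n} are served by the cache, which is exactly the claim of the lemma.
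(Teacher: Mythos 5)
Your proof is correct, and in fact it is more rigorous than what the paper provides. The paper states Lemma~\ref{lemma:cache} but gives no formal proof: it only remarks that the comments embedded in the pseudocode at lines~\ref{cage:d}--\ref{cage:n} indicate when $S$, $N(S)$, $N(u)$, $N(z)$ are loaded or already cached, and that ``this predicted behavior has been carefully planned during our experimental study.'' Your argument supplies the missing formalization of exactly that intuition. You correctly bound the combined working set by $|S| + |N(S)| + |N(u)| + |N(z)| = O(k) + O(k\Delta) + O(\Delta) + O(\Delta) = O(k\Delta)$ words (the cuckoo-hash representation preserves the $O(\cdot)$ space bound), you trace the access pattern through lines~\ref{cage:c}--\ref{cage:n} to show each structure is touched within a contiguous scope, and you make explicit the step the paper leaves implicit: capacity alone does not prevent eviction in a set-associative cache, so one must invoke the usual cache-oblivious hypothesis (via the associativity assumption, as in~\cite{FrigoLPR12}) that the working set's cache lines do not pathologically collide in a single set. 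That is precisely why the paper's lemma is phrased as a conditional on an associative cache. In short: same design intent, but you actually prove it; the one refinement worth adding is to note explicitly (as the paper does in the surrounding text) which data structures are vectors versus cuckoo tables, since the ``loaded once'' claim for $N(S)$ rests on the hash table being scanned sequentially as well as probed, which your line-by-line trace relies on but does not call out.
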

Comments in the pseudocode at lines \ref{cage:d}--\ref{cage:n} give indications when $S$, $N(S)$, $N(u)$, and $N(z)$ are loaded into cache, or already cached. This predicted behavior has been carefully planned during our experimental study.
Moreover, the cache words used in practice are likely to be far less than the worst case $O(k\Delta)$, as real-world networks are sparse, with most vertices having very small degree.

\subsection{Remarks}
\label{sec:remarks_enum}

We did not discuss the case $k=3$ as CAGE is mainly aimed at larger values of $k$, and it can be easily obtained as a simplification of the pseudocode.
Moreover, for simplicity, we only report the number of solutions found, and the graphlets are not output. 
All algorithms in the experiments also do not make explicit output, as is usual when comparing enumeration algorithms~\cite{Ruskey03combinatorialgeneration}.
Furthermore, CAGE can produce a compressed output, efficient to both read and write and a common practice in enumeration (see, e.g.,~\cite{tomita,shioura1997optimal}): it suffices to output the $k-3$ nodes in $S$, as well as the sets from which to choose the completions, e.g., for \textit{case 1}, write the contents of $S$, followed by ``plus any 3 elements from'', followed by the contents of $N(S)\setminus X$. When dealing with case 3a, we can decide to output only the choice $u < v$, to avoid the duplication when $z$ is connected to both $u$ and $v$.

\section{Experimental Results}
\label{sec:experiments}
This section provides the details and results of our extensive experimental phase.

\subsection{Environment and dataset}

Our experiments were carried on a dual-processor Intel Xeon Gold 5318Y Icelake @ 2.10GHz machine, with 48 physical cores each and 1TB of shared RAM; private cache L1 per core: 48K, private cache L2 per core: 1.25MB, shared cache L3: 36MB,
running Ubuntu Server 22.04 LTS, Intel C++ compiler \texttt{icpx}, version 2022.1.0, and cache analysis was done with Intel VTune profiler, version 2022.2.0.

The dataset consists of 155 graphs taken from public repositories LASAGNE \cite{lasagne},  Network Repository \cite{networkrepository}, and SNAP \cite{snapnets}.
These files range in size from very small (hundreds of nodes and edges) to very large (millions of nodes and half a billion edges). 
Table \ref{tab:dataset} shows a relevant subset of the dataset, and the 10 graphs mentioned in this section can be downloaded from GitHub\footnote{\scriptsize\label{footnote:mega_url}\url{https://github.com/DavideR95/CAGE}}, as well as the source code of our algorithm and the re-implemented KS-Simple.
\begin{table}[htb]
\caption{A sample from our dataset, sorted by $|E|$}
\label{tab:dataset}
\footnotesize
\centering
\small
\begin{tabular}{|l|l|r|r|r|} \hline
\multicolumn{1}{|c|}{Graph} & \multicolumn{1}{c|}{Type} & \multicolumn{1}{c|}{$|V|$} & \multicolumn{1}{c|}{$|E|$} & \multicolumn{1}{c|}{$\Delta$} \\ \hline
Brady       & Biological   & 1,117      & 1,330       & 28      \\ 
ca-GrQc & Collaboration & 5,242 & 14,484 & 81 \\ 
cti & Mesh & 16,840 & 48,232 & 6 \\ 
Wing & Mesh & 62,032 & 121,544 & 4 \\ 
Roadnet-TX  & Road Network & 1,379,917  & 1,921,660   & 12      \\ 
Roadnet-CA & Road Network & 1,965,206 & 2,766,607 & 12 \\ 
Auto & Mesh & 448,695 & 3,314,611 & 37 \\ 
Hugetrace-00 & DIMACS10    & 4,588,486  & 6,879,133   & 4 \\ 
IMDB        & Movies       & 913,201    & 37,588,613  & 11,941  \\ 
Arabic-2005 & Web Crawl    & 22,744,080 & 553,903,073 & 575,628 \\ \hline 
\end{tabular}%
\end{table} 
\subsection{Implementation and comparison methodology}
\label{sub:comparison}

We implemented CAGE in C++, compiled with the highest optimization flag \texttt{-O3}. The cache-friendly nature discussed in Algorithm~\ref{alg:cage} is enhanced by the \texttt{likely} and \texttt{unlikely} macros to help the compiler with branch prediction, and we do not physically delete and restore nodes between recursive calls by reusing the data structures mentioned in Section~\ref{sec:cage}, e.g., traversing the vector for $N(S) \setminus X$ so that deletion of nodes is modeled by a flexible starting index.

Below we adopt the notation CAGE-1, CAGE-2, and CAGE-3, with the intent of distinguishng among the base cases $|S| = k-1$, $|S| = k-2$, and $|S| = k -3$, respectively. Indeed, CAGE-3 is actually CAGE, and the others can be easily derived by simplifying the pseudocode in Algorithm~\ref{alg:cage}.


We compared our algorithms against several different algorithms from the literature, including recent and well-known approaches from \cite[Table~2]{survey_graphlet}, i.e.,
PGD \cite{PGD}, Kavosh \cite{Kavosh}, FaSE \cite{FASE} and Escape \cite{escape_algorithm} (although we exclude the matrix-based approach \cite{melckenbeeck2019optimising}, unsuitable for large graphs), as well as the more recent KS-Simple \cite{SIMPLE_graphlets}.
The aforementioned methods adopt different strategies to exactly count the number of $k$-graphlets, but they all share an enumerative core, either via implicit counting (i.e., PGD and Escape) or explicit construction of the graphlets (the others).
Since the available code of KS-Simple \cite{SIMPLE_graphlets} is in Python, we re-implemented it in C++ to avoid penalizing it, following the same guidelines as for CAGE.
We then conducted the following experiments:


\begin{enumerate}[(i)]\itemsep0em
    \item Analysis of cache efficiency and hotspots in the code using VTune, setting a timeout of 15 minutes on a subset of the dataset built to represent differently sized graphs,
    \item Analysis of the recursion tree of KS-Simple for $k = 4, 5, 7, 9$ on the entire dataset with a time limit of 30 minutes, showing that the number of failure leaves is typically less than 1\% of the total number of leaves, and never more than 9\% (ref. Section~\ref{cage:sec:failure_leaves},
    \item A stress test for all the algorithms and competitors, for $k \in [4, 10]$ with a timeout of 12 hours.
\end{enumerate}
\begin{table}[htb]
\caption{VTune Profiler cache access statistics for our algorithms and competitors with $k=7$. $\dagger$: execution stopped after 15 minutes. $*$: execution stopped due to a memory allocation problem.}
\label{tab:vtune_misses}
\centering
\resizebox{\textwidth}{!}{
\begin{tabular}{|c|l|r|r|r|r|r|r|r|r|}
\hline
Graph &
  \multicolumn{1}{c|}{Algorithm} &
  \multicolumn{1}{c|}{\makecell{Time\\(s)}} &
  \multicolumn{1}{c|}{\begin{tabular}[c]{@{}c@{}}\#Graphlets Found\\ ($k=7$)\end{tabular}} &
  \multicolumn{1}{c|}{L3 Misses} &
  \multicolumn{1}{c|}{\makecell{L1\\Bound}} &
  \multicolumn{1}{c|}{\makecell{L2\\Bound}} &
  \multicolumn{1}{c|}{\makecell{L3\\Bound}} &
  \multicolumn{1}{c|}{Loads} &
  \multicolumn{1}{c|}{Stores} \\
\hline
\multirow{6}{*}{ca-GrQc} &
  KS-Simple $^\dagger$ &
   $\dagger$ & 
  8,577,821,416 &
  0 &
  7.9\% &
  0.7\% &
  0\% &
  3,531\,E+9 &
  1,394\,E+9 \\
  & Kavosh & $\dagger$ & 884,849,128 & 65,553,660 & 8.2\% & 0.3\% & 0.1\% & 3,075\,E+9 & 1,782\,E+9 \\
  & FaSE & $\dagger$ & 2,448,373,561 & 5,429,820 & 10.2\% & 0.7\% & 0\% & 1,912\,E+9 & 276\,E+9 \\
 & CAGE-1 & 76 & \textbf{15,186,322,814} & 0 & 8.8\%  & 0.6\% & 0\%    & 303\,E+9 & 117\,E+9 \\
 & CAGE-2 & 39 & \textbf{15,186,322,814} & 0 & 11.3\%  & 1.5\% & 0\%    & 116\,E+9 & 7\,E+9   \\
 & CAGE-3 & \textbf{34} & \textbf{15,186,322,814} & 0 & 15.4\% & 2.8\% & 0\%    & 95\,E+9 & 2\,E+9  \\
\hline
\multirow{6}{*}{roadnet-TX} &
  KS-Simple &
   14 & 
  \textbf{203,059,778} &
  0 &
  14.9\% &
  0\% &
  0\% &
  43\,E+9 &
  23\,E+9 \\
 & Kavosh & $*$ & -- & -- & -- & --   & -- & -- & -- \\
 & FaSE & 46 & \textbf{203,059,778} & 0 & 14.2\% & 0.5\%   & 0\% & 128\,E+9 & 37\,E+9 \\
 & CAGE-1 & 5 & \textbf{203,059,778} & 0 & 16.2\% & 0\%   & 0.6\% & 14\,E+9 & 8\,E+9 \\
 & CAGE-2 & 4 & \textbf{203,059,778} & 0 & 22.3\%   & 0\% & 0\% & 6\,E+9 & 2\,E+9 \\
 & CAGE-3 & \textbf{3} & \textbf{203,059,778} & 0 & 28.2\%   & 0\%   & 1.3\%   & 5\,E+9 & 1\,E+9 \\
 \hline
\multirow{6}{*}{auto} & KS-Simple & $\dagger$ & 4,775,173,331 & 0    & 14.2\% & 0.3\% & 0\% & 2,360\,E+9 & 1,051\,E+9 \\
& Kavosh & $*$ & -- & -- & -- & -- & -- & -- & -- \\
& FaSE & $\dagger$ & 3,032,335,810 & 27,202,000 & 12.5\% & 0.7\% & 0.1\% & 2,027\,E+9 & 365\,E+9 \\
                             & CAGE-1   & $\dagger$ & 80,580,776,005     & 0    & 14.2\% & 0.3\% & 0\% & 2,176\,E+9 & 1,014\,E+9    \\
                            & CAGE-2    & $\dagger$ & 135,096,265,408    & 0   & 21.9\% & 0.5\% & 0.1\% & 1,480\,E+9 & 138\,E+9     \\
                             & CAGE-3    & $\dagger$ & \textbf{152,621,021,219}    & 0 & 21.1\% & 1.4\% & 0.1\% & 2,080\,E+9 & 53\,E+9    \\
\hline
\multirow{6}{*}{arabic-2005} & KS-Simple & $\dagger$ & 4,001,309,731      & 27,157,305    & 6\% & 0.7\% & 5.1\% & 3,626\,E+9 & 593\,E+9   \\
& Kavosh & $*$ & -- & -- & -- & --   & -- & -- & -- \\
& FaSE & $*$ & -- & -- & -- & --   & -- & -- & -- \\
                             & CAGE-1    & $\dagger$ & 22,084,290,111,889 & 48,516,858 & 5.9\% & 0.8\%   & 4.9\% & 2,976\,E+9 & 552\,E+9 \\
                             & CAGE-2    & $\dagger$ & 27,208,214,120,342 & 21,697,928    & 7.8\% & 1\% & 4.2\% & 1,820\,E+9 & 11\,E+9     \\
                             & CAGE-3    & $\dagger$ & \textbf{47,718,156,097,277}
                             & 173,529,344   & 1\% & 0.3\% & 0.2\% & 3,329\,E+9 & 23\,E+9    \\
\hline
\end{tabular}%
}

\end{table}

\subsection{Details on Cache-Aware data structures}
We provide here some details of the cache-friendly data structures that we used in our implementation of CAGE.

\begin{enumerate}
    \item Graphs are represented using adjacency lists that in turn are implemented as cuckoo hash tables \cite{cuckoo_hashing} with the possibility of being scanned sequentially, providing fast neighboring checks and cache friendly scans.
    \item $S$ is stored as vector of size $\leq k-3$, employed in lines~\ref{cage:b}, \ref{cage:e}, \ref{cage:g} for cache-friendly sequential scan. 
    \item \label{item:N_S_X} $N(S) \setminus X$ is stored as vector of size $\leq (k-3) \Delta$,  employed in lines~\ref{cage:a}, \ref{cage:c}, \ref{cage:d}, \ref{cage:qq} for cache-friendly sequential scan.
    \item $N(S)$ is stored as a cuckoo hash table with $\leq (k-3) \Delta$ keys, employed in lines~\ref{cage:e}, \ref{cage:g}, \ref{cage:i} for random access or cache-friendly sequential scan.
    \item \label{item:adj} $N(u)$, for each $u \in V$, is stored as a cuckoo hash table with $\deg(u)$ keys, employed in lines~\ref{cage:e}, \ref{cage:g}, \ref{cage:h}, \ref{cage:l} for random access or cache-friendly sequential scan. 
    \item $X$ is not explicitly stored: it is implicitly represented by the nodes $v' \in V$ such that $v' < v$ in the vertex ordering and by the structure described at point \ref{item:N_S_X}. 
\end{enumerate}

The total space (on top of the input graph) is $O(k \Delta)$. As $S$, $N(S)$, $N(u)$, and $N(z)$ are required simultaneously in the cache during the for loop execution at lines~\ref{cage:d}--\ref{cage:n}.

\subsection{Cache analysis}
\label{sub:cache-analysis}

During our implementation of CAGE, we took our design decisions based upon the performance metrics given by Intel VTune Profiler, tweaking the code according to the statistics of cache accesses and misses. 
The results of the VTune analysis for the optimized versions of KS-Simple, CAGE-1, CAGE-2, and CAGE-3 are summarized in Table \ref{tab:vtune_misses}, along with the same data for our competitors Kavosh \cite{Kavosh} and FaSE \cite{FASE}. We chose not to include PGD \cite{PGD} and ESCAPE \cite{escape_algorithm} in this analysis since they only work for $k \leq 5$, while our focus is on larger values of $k$, in order to show the performance scaling to larger working sets (i.e. larger $S$ and $N(S)$). 
The macro rows correspond to four graphs, whose names are given in the first column. We report the cache performance by taking the best out of 5 non-consecutive executions of VTune, where the computation was stopped after 15 minutes for two large graphs\footnote{This timeout is due to the size of the data gathered by VTune, which grows quickly over time.}.
The other columns report: the run time in seconds, number of graphlets found for $k=7$, number of L3 cache misses (i.e. accesses to the RAM), how much L1, L2, L3 cache affected on the percentage of clock ticks where the CPU was stalled waiting on that level of cache (the lower, the better), and the total number of load and store operations.

From the results on our algorithms, it clearly emerges that L2 and L3 bounds are very small, whereas L1 is larger for small graphs. This is a sign of good cache usage, as the L2 and L3 cache misses are definitely more expensive than the L1 cache misses (both data and instructions). For the small graphs \emph{ca-GrQc} and \emph{roadnet-TX} and mid-size graph \emph{auto}, we have zero L3 misses (as they probably fit in the L3); for large graph \emph{arabic-2005}, the number of L3 cache misses increases going through the rows for KS-Simple, CAGE-1, CAGE-2, and CAGE-3. This may not be obvious as there is a timeout of 15 minutes for this large graph: consequently, KS-Simple produces very few solutions compared to the others, and CAGE-3 more solutions (the apparent anomaly for CAGE-3 on \emph{arabic-2005} can be explained by dividing the numbers by 2 in its last row in the table, so that we roughly get the same number of solutions as CAGE-1 and CAGE-2, we have also similar numbers in the other columns, except for L3 misses, which we discuss in a while). The number of loads/stores follows a similar pattern to that for L3 misses. The number of L3 cache misses is negligible with respect to the total number of load and store instructions issued, allowing the percentage of L2- and L3-bound to stay always within 5\%.
On the other hand, our competitors are able to achieve similar results in terms of L1, L2, and L3 bound, but the number of L3 misses is higher even on \emph{ca-GrQc}. FaSE is able to compute all the solutions with zero cache misses on \emph{roadnet-TX}, but with a much higher time requirement compared to CAGE.
Additionally, the Kavosh algorithm started having bad memory allocation\footnote{i.e. the runtime raised a \texttt{bad\_alloc} exception or a segmentation fault while reading the input graph.} issues already with \emph{roadNet-TX}, while FaSE fails later on the largest graph.

We also evaluated the size of the data structures 
for $N(S)$, $N(u)$, and $N(z)$, examined during the for loop of Algorithm~\ref{alg:cage} along with the average degree of the networks, as shown in Table~\ref{tab:nsnunz}, for $k=7$. They clearly fit into the cache most of the times, recalling that each node identifier is a 32-bit integer and that the L2 cache in each core is 1.25MB on our machine. For \emph{arabic-2005}, even if its maximum degree $\Delta$ is large, its average degree is small, as well as the median values for $N(\cdot)$. Lemma~\ref{lemma:cache} indicates that CAGE-3 is cache-friendly under this condition on the average.

\begin{table}[htb]
    \caption{Median size of $N(S)$, $N(u)$ and $N(z)$ for a subset of the dataset ($k=7$).
    }
    \label{tab:nsnunz}
    \centering
    \begin{tabular}{|c|r|r|r|r|r|}
    \hline
        Graph & $\Delta$ & avg. degree & $N(S)$ & $N(u)$ & $N(z)$ \\
    \hline
        ca-GrQc & 81 & 5.5 & 52 & 13 & 9 \\
        roadnet-TX & 12 & 2.8 & 3 & 3 & 8 \\
        auto & 37 & 14.7 & 35 & 15 & 15 \\
        arabic-2005 & 575,628 & 48.7 & 115 & 36 & 49 \\
    \hline
    \end{tabular}
%
\end{table}
According to these results, we believe that CAGE-3 is preferable to CAGE-1 and CAGE-2 as it finds more solutions for the given timeout, even though in some cases CAGE-2 could perform as well as CAGE-3.
%
%
Finally, we remark that CAGE-4 (i.e., adopting $|S|=k-4$ as base case) might not be faster than CAGE-3, as a recursive call needs to address 8 scenarios individually and explore $N^4(S)$ instead of $N^3(S)$: this increases code complexity, requires extensive fine-tuning and could reduce cache friendliness (adding nested for-loops and more load operations); as such we leave this study for future work.

\begin{figure}[htb]
    \centering
    \begin{subfigure}[b]{.49\textwidth}
         \centering
         \includegraphics*[width=\textwidth]{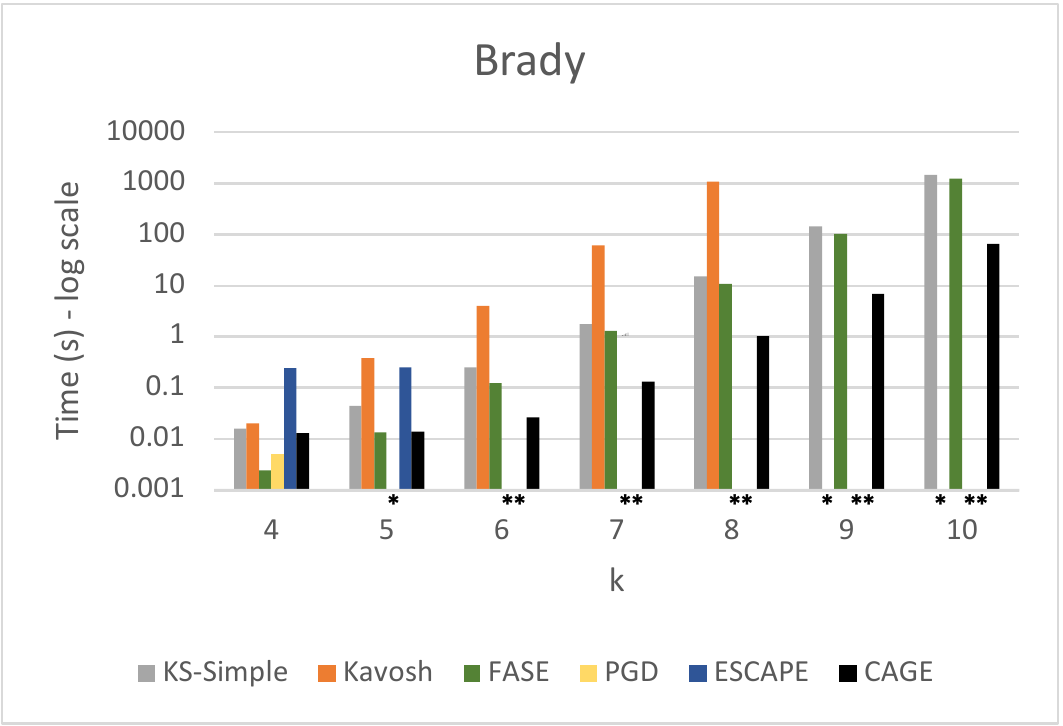}
         \caption{}
         \label{fig:cage_time_brady}
     \end{subfigure}
     \begin{subfigure}[b]{.49\textwidth}
         \centering
         \includegraphics*[width=.99\textwidth]{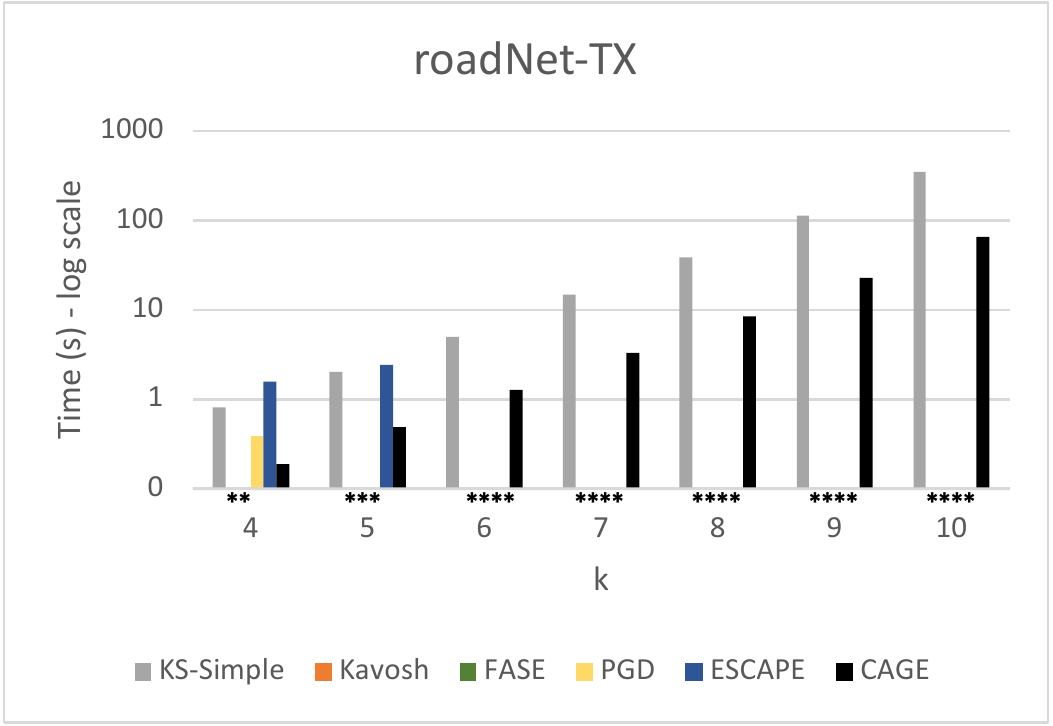}
         \caption{}
         \label{fig:cage_time_roadnet}
     \end{subfigure}\\
     \vspace*{1mm}
     \begin{subfigure}[b]{.49\textwidth}
         \centering
         \includegraphics*[width=\textwidth]{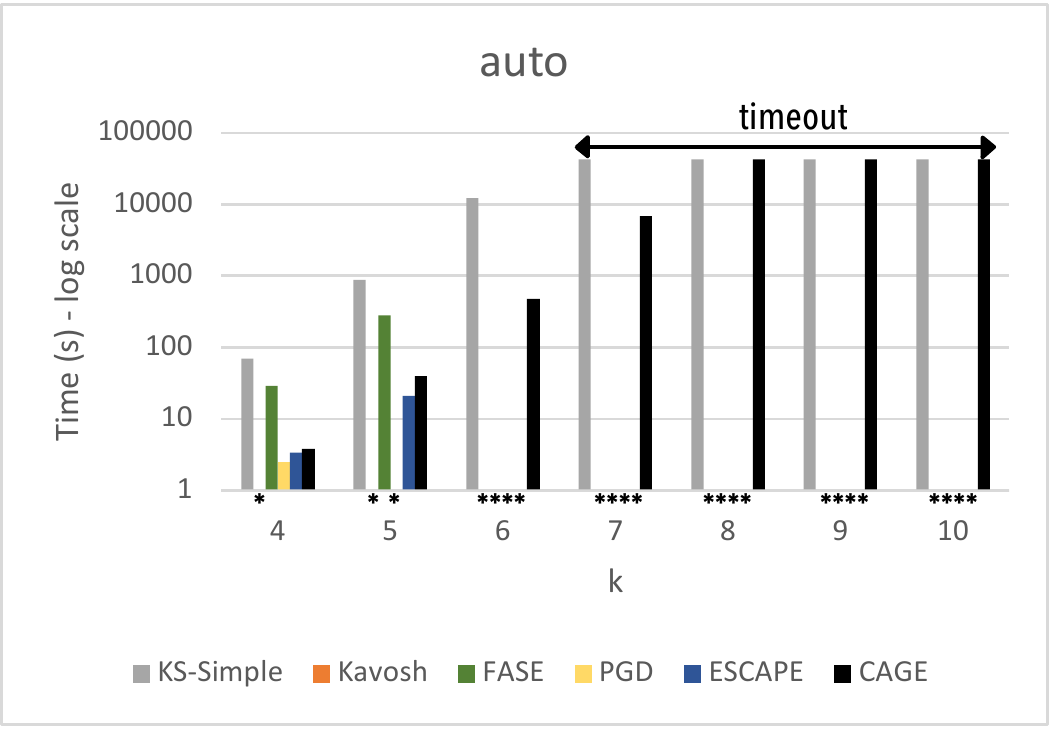}
         \caption{timeout: 12h}
         \label{fig:cage_time_auto}
     \end{subfigure}
     \begin{subfigure}[b]{.49\textwidth}
         \centering
         \includegraphics*[width=\textwidth]{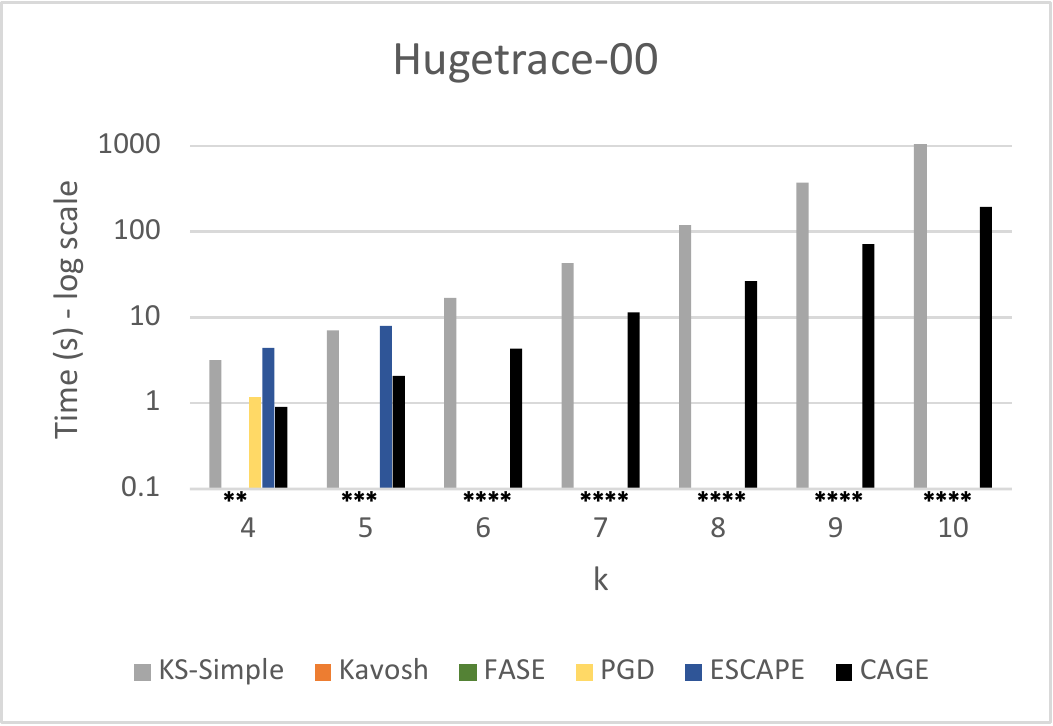}
         \caption{} 
         \label{fig:cage_time_hugetrace}
     \end{subfigure}%
        
    \caption{Running time of our competitors vs. CAGE on four significant input graphs. *: an execution issue occurred.}
    \label{fig:cage_time}
\end{figure}%

\subsection{Running time analysis}
\label{sub:time-analysis}

Armed with our fine-tuned implementation of CAGE (i.e.~CAGE-3), we compared it to the other methods mentioned in Section \ref{sub:comparison}. We performed an extensive validation phase, measuring both the running time and the number of graphlets found, with a timeout of 12 hours. Some methods could not terminate the execution for some issues, e.g. they could not handle a graph so large, ran into memory issues, or were killed by the operating system. 

Figure \ref{fig:cage_time} shows a summary of the results obtained in four graphs of increasing size, Brady, roadNet-TX, auto, and Hugetrace-0000 (see Table \ref{tab:dataset} for their characteristics). On the y-axis, we reported the running time in seconds, in logarithmic scale; on the x-axis, we report increasing values of $k \in [4,10]$.
For each value of $k$, each bar corresponds to an algorithm as specified in the legend, where lower bars mean better performances. The executions that had the issues mentioned above, have an asterisk `*' in place of the corresponding bar. For graph auto, the executions for $k = 7, 8,9,10$ were in timeout (indicated with a bar).

We can observe that KS-Simple and CAGE perform quite well, with CAGE running faster than KS-Simple by roughly an order of magnitude (for $k=7$ on auto, KS-Simple was in timeout whereas CAGE was the only one to terminate its execution). Both compare significantly better with the other methods, as the latter ones are either too slow or cannot execute that instance of the graph. An exception is the PGD algorithm~\cite{PGD}, as it achieves faster runnning time than CAGE on Brady and auto, but it does not scale well for the other two graphs; moreover, PGD is explicitly designed to work only with $k = 3, 4$. These results, combined with the code profiling statistics of Section~\ref{sub:cache-analysis}, confirm the benefits of our design principles for CAGE.

  \chapter{\texorpdfstring{$k$}{k}-cores for Temporal Graphs}\label{chap:temporal}

This Chapter shifts our focus from static graphs to temporal graphs, as we provide a case study on the usefulness of $k$-cores for community analysis on temporal data.

In fact, in a world where cliques and their relaxations are computationally intensive to list (see Chapter \ref{chap:background}), a holy grail of graph analysis is the $k$-core \cite{montresor2013distributed}: a structure which accurately identifies communities and connectivity in graphs \cite{santofortunato, giatsidis_evaluating, KONG20191}; yet the time to compute all $k$-cores in a network is only linear in its size. 
The $k$-core model has been remarkably successful for a variety of tasks, such as identifying communities~\cite{santofortunato}, as well as important nodes~\cite{faloutsos_corescope}, producing graphical embeddings of large graphs~\cite{alvarez2005large,nguyen2017k}, and even speeding up graph algorithms via parametrization~\cite{conte2020sublinear} or pruning~\cite{conte2017fast}.
However, how do we adapt decades of graph-based concepts and algorithms to \changed{the temporal graph model}?
So far, significant effort has been devoted to find \changed{parallels}: direct adaptations of classical graph concepts that feel natural on temporal graphs, such as cliques~\cite{himmel2016enumerating}, Eulerian walks~\cite{marino2021konigsberg}, $k$-plexes~\cite{bentert2019listing}, 
and indeed $k$-cores~\cite{bonchi2020_span_cores,wu2015core,li2018core_union}.
While these adaptations are elegant and present interesting properties, the focus has always been on what definition \changed{feels more natural, i.e., intuitive and clutterless,} and not on what information we can actually get out of them, and this is the direction we take in this Chapter.

Specifically, we consider the following questions: 
\begin{enumerate}\itemsep0em
    \item Can proposed adaptations of $k$-cores extract meaningful data from temporal graphs? 
    \item \changed{Can we compute such adaptations efficiently?}
    \item How can this information be visualized?
\end{enumerate}
We indeed find positive answers. We firstly unify some of the proposed generalizations of \changed{temporal $k$-core} with a unique general notation, showing they can be computed efficiently: for a temporal graph with $n$ nodes, $m$ edges and $\tau$ temporal \changed{\emph{snapshots}}, we show how to query $k$-cores from any temporal window in $O(m \log{\tau})$ time with simple data structures, rather than the naive $O(m\tau)$ time approach. 

Secondly, we use temporal $k$-cores to analyze several real-world temporal networks, varying the available parameters and considering both \changed{instant} and aggregated data, drawing inspiration from techniques for static graph analysis, such as \cite{faloutsos_corescope}, to produce suitable visualizations. We show how this can help to highlight key information about the roles of nodes in a temporal network, as well as the dynamics of the network as a whole.
\section{Basic Concepts and Models}\label{temporal:sec:prelim}
Our work focuses on undirected temporal graphs, where we assume $n=O(m)$, as nodes never incident to any edge are not of interest.
For convenience, we define $E_{[a,b],h}$ as the set of (static) edges $\{u,v\}$ that appear \textit{at least $h$ times} in the interval $[a,b]$, i.e., $E_{[a,b],h} = \{ \{u,v\} : |\{ t : a\le t \le b \land \{u,v,t\}\in E_\tau\}| \ge h \}$. This will help understand how existing definitions are related.

A $k$-core in a static graph $G = (V, E)$ is defined as an inclusion-maximal set of vertices $C \subseteq V$ in which all the vertices have at least $k$ neighbors in $C$\footnote{Alternatively, $C \subseteq V$ is an inclusion-maximal set of vertices such that all vertices in the induced subgraph $G[C]$ have degree $\geq k$. Inclusion-maximal means there is no $C'\supset C$ with the same properties on $G[C']$.}. 
Since there is only one $k$-core for each $k$ and for each connected component of a graph, $k$-cores can be organized in a chain of inclusions, i.e., $C_n \subseteq \dots \subseteq C_1 \subseteq C_0 = V$ where $C_i$ is the $i$-core of the graph.
Thanks to this structure, we can compute the \emph{core decomposition} of a graph, which is the set $\{C_1, C_2, \dots C_n\}$ and sort the vertices according to the core to which they belong. This, in turn, allows us to compute the \emph{coreness} of each vertex:
\begin{de}[Coreness]\label{temporal:def:coreness}
The \emph{coreness} (or core number) of a vertex $v \in V$ of a static graph $G = (V, E)$ is the largest $k$ for which $v$ belongs to the $k$-core of $G$.
\end{de}
We obtain the coreness of each node directly from the $k$-core decomposition of a graph, with well-known linear-time algorithms~\cite{k_core_survey2020}. 
In static graphs, the coreness captures the community structure of a network \cite{Malvestio2020, santofortunato, BAE2014549}, and is a relevant importance score for its nodes \cite{kumar2020identifying,wu2015core}, alongside $h$-index and degree. 
The importance of coreness is also shown in \cite{faloutsos_corescope} as its variation, compared to degree, accurately highlights outlier nodes.

Throughout this Chapter we will thus rely on coreness; \changed{however, since we are dealing with temporal graphs, we have to adapt the concept of $k$-core to the temporal setting.} It turns out that different generalizations are possible, depending on how temporal information is aggregated; we give an overview of existing models, and show a convenient unified representation that is able to capture each of them by tweaking its parameters.

\subsection{Temporal k-core models}

In recent years different formalizations of the concept of temporal $k$-core have been proposed, and we briefly recap them below.\footnote{For self-containedness, we paraphrase all definitions with a uniform notation.}
The approach adopted by Galimberti et al. \cite{bonchi2020_span_cores} is
the one that looks for $k$-cores in what we call the \emph{intersection} of temporal snapshots, i.e., given a time window, any pair of vertices has to interact in every snapshot covered by the window. 
This leads us to the formal definition of \emph{span cores}. 
\begin{de}[Span Core \cite{bonchi2020_span_cores}]
    Given a temporal graph $G_\tau = (V, E_\tau)$, $k \in \mathbb{N}$, and a window $\Delta = [a, b]$, a $(k, \Delta)$-core or \emph{span core} is a maximal set $\emptyset \neq C_{k,\Delta} \subseteq V$ such that $C_{k,\Delta}$ is a $k$-core of the static graph $G' = (V, E_{[a,b], b-a+1})$.
\end{de}

According to this definition, the authors of \cite{bonchi2020_span_cores} consider valid only $k$-cores that exist in all the snapshots contained in the given interval $\Delta$, in a logical \changed{\emph{and}} fashion, i.e., every edge belonging to a core has to appear in \emph{every} timestamp inside the window (see Figure~\ref{temporal:subfig:span-6} and Figure~\ref{temporal:subfig:span-3}). 
This is especially useful in social networks analysis, where usually the links between any given pair of people last for long periods without interruptions.
Naturally, the opposite \changed{\emph{or}} fashion can also be considered, this time requiring only 1 interaction in any given time window $\Delta$.

In particular, Li et al. \cite{li2018core_union} proposed the following definition:
\begin{de}[$(\theta, k)$-persistent core \cite{li2018core_union}]
Let $G_\tau = (V, E_\tau)$ be a temporal graph, $k, \theta \in \mathbb{N}$, and $[t_s, t_e]$ with $t_e-t_s \geq \theta$ be an interval. A $(\theta, k)$-persistent core in the interval $[t_s, t_e]$ is a set of nodes that is a $k$-core of the static graph $G' = (V, E_{[i, i+\theta],1})$
for all $i\in [t_s, t_e-\theta]$, and the same does not hold for any interval strictly containing $[t_s,t_e]$.
\end{de}

This definition is useful in call-log networks and email exchanges, or contexts that see sporadic interactions between nodes, \changed{as it considers the union of temporal edges within intervals of $\theta$ snapshots.}

A definition capturing the actual notion of \emph{union} is the following:
\begin{de}[$\mathcal{T}^k_{[t_s,t_e]}$\cite{yang2023scalable}]\label{temporal:def:union_core}
    A temporal $k$-core in the interval $[t_s, t_e]$ is a $k$-core of the static graph $G'= (V,E_{[t_s,t_e],1})$.    
\end{de}
While this definition allows control over the temporal interval, it does not consider how many times two nodes interact in a time interval, one interaction is always enough (see Figure~\ref{temporal:subfig:temporal-union}). 

The last definition that we consider here is given by Wu et al. \cite{wu2015core}, which considers a temporal graph as a multigraph (ref. Chapter \ref{chap:background}):
\begin{de}[$(k, h)$-core~\cite{wu2015core}] 
In a temporal graph $G_\tau$, a $(k, h)$-core is a $k$-core of the static graph $G' = (V, E_{[1,\tau],h})$.
\end{de}

The authors of \cite{wu2015core} introduce the parameter $h$, useful for modeling the strength of the connection, however their model does not allow to consider sub-intervals of the timeline or even the sequentiality of temporal information (see Figure~\ref{temporal:subfig:tau-3}).

We believe that all these models are meaningful and well grounded, \changed{with each model considering different aspects of the data and extracting different information.}

For this reason, we formalize a definition that encompasses all of these variables while staying reasonably simple, the $(k, h, \Delta)$-core: 
\begin{de}[$(k, h, \Delta)$-core]\label{temporal:def:temporal_core}
    Given a temporal graph $G_\tau = (V, E_\tau)$ two integers $k, h \in \mathbb{N}$, and a window length $1 \leq \Delta \leq \tau$, a set of nodes $C\subseteq V$ is a $(k, h, \Delta)$-core of $G_\tau$ if there exist $[a,b]$, with $b-a+1 = \Delta$, such that $C$ is a $k$-core of the static graph $G' = (V, E_{[a,b],h})$. 

\end{de}

Observe how Definition~\ref{temporal:def:temporal_core} is able to generalize the previously defined models: the intersection model of~\cite{bonchi2020_span_cores} is a $(k, \Delta, \Delta)$-core, 
and while the model of~\cite{li2018core_union} is not captured exactly (due to the interval maximality), we can see how the $(\theta,k)$-core implies the existence of a sequence of $(k, 1, \theta)$-cores in $G_\tau$, so those communities are captured by Definition~\ref{temporal:def:temporal_core} as well.
Finally, the $(k, h)$-core of~\cite{wu2015core} is also captured by Definition~\ref{temporal:def:temporal_core} as a $(k,h,\tau)$-core (where, we recall, $\tau$ is the lifespan of the temporal graph), and Definition~\ref{temporal:def:union_core} corresponds to the $(k, 1, \Delta)$-core. We remark that we do \emph{not} claim that Definition~\ref{temporal:def:temporal_core} is superior to the others, but simply that by varying its parameters we can identify the same communities.

Finally recall, as previously motivated, that we will not focus on $k$-cores themselves but rather on the \emph{coreness} of each vertex.

\begin{figure}[!htbp]
    \centering

    \begin{subfigure}[b]{\textwidth}
        \centering
        \includegraphics[width=.85\linewidth]{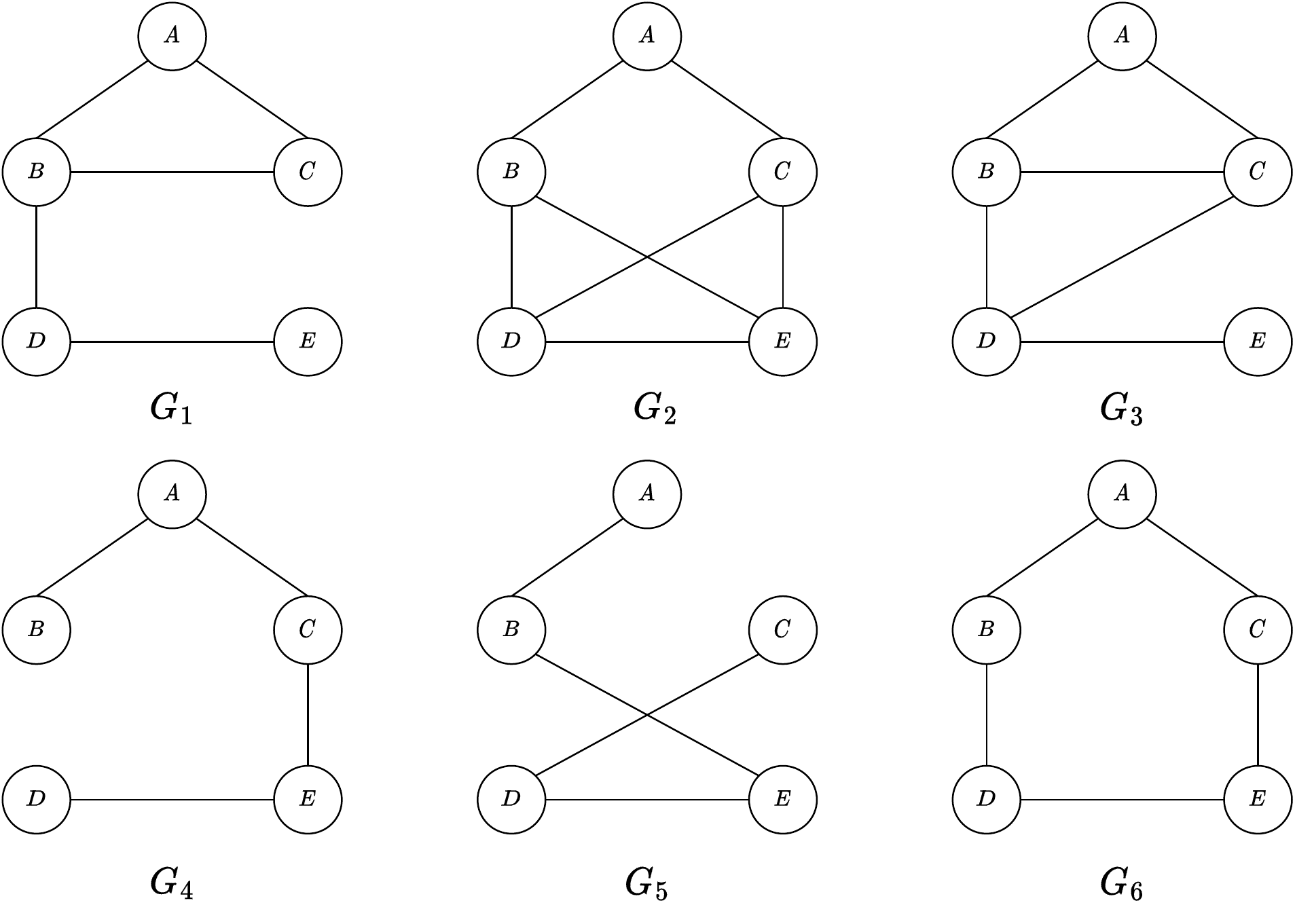}
        \caption{}
        \label{temporal:subfig:starting-graph}
    \end{subfigure}
    \vfill
    \begin{subfigure}[b]{.48\textwidth}
        \centering
        \includegraphics[width=.65\linewidth]{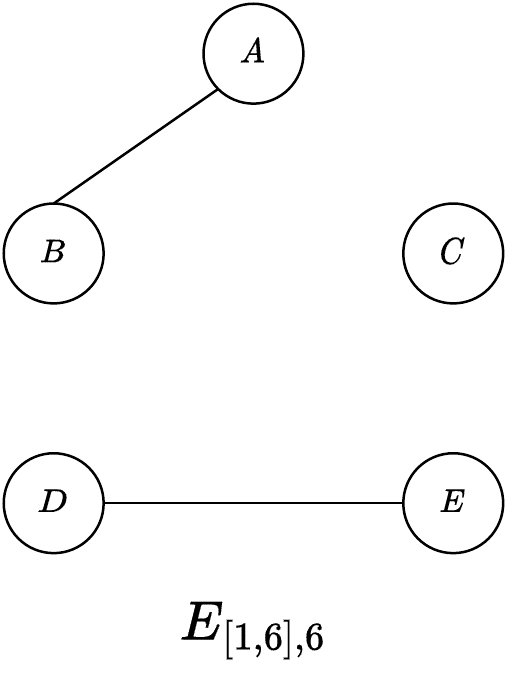}
        \caption{}
        \label{temporal:subfig:span-6}
    \end{subfigure}
    \begin{subfigure}[b]{.48\textwidth}
        \centering
        \includegraphics[width=.65\linewidth]{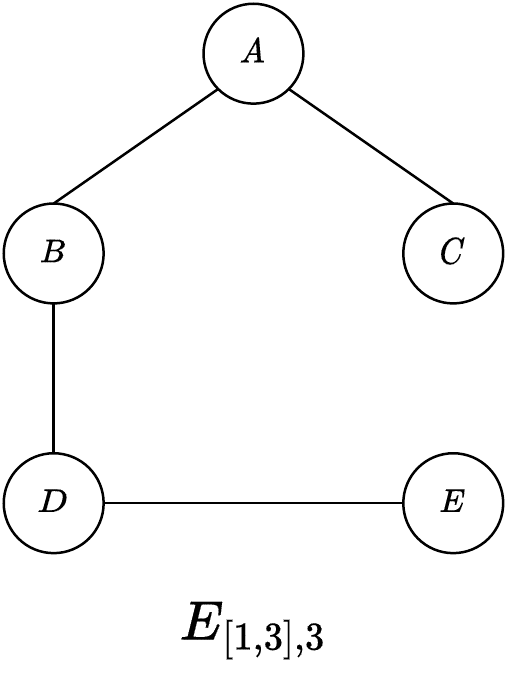}
        \caption{}
        \label{temporal:subfig:span-3}
    \end{subfigure}
    \vfill
    \begin{subfigure}[b]{.48\textwidth}
        \centering
        \includegraphics[width=.65\linewidth]{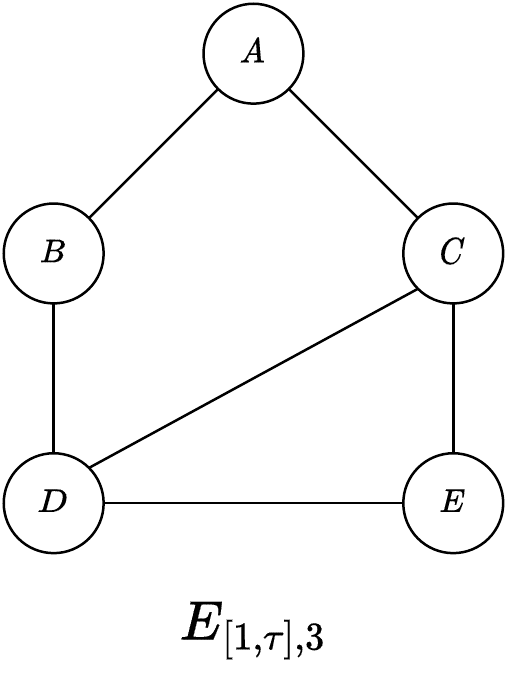}
        \caption{}
        \label{temporal:subfig:tau-3}
    \end{subfigure}
    \begin{subfigure}[b]{.48\textwidth}
        \centering
        \includegraphics[width=.65\linewidth]{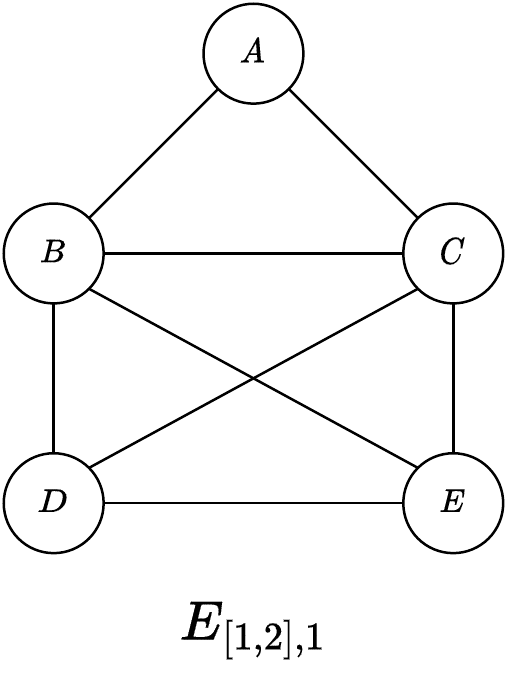}
        \caption{}
        \label{temporal:subfig:temporal-union}
    \end{subfigure}

    \caption{(a) An example of temporal graph $G_\tau$ with $\tau = 6$ snapshots $G_1, \dots, G_6$.  (b, c) Edge grouping according to the span-core \cite{bonchi2020_span_cores} definition with different intervals. (d) Edge grouping with interval $[1, \tau]$ and 3 interaction requested between nodes. (e) $\mathcal{T}^k_{[1,2],1}$, as per Definition 5.4.}
    \label{temporal:fig:temporal-examples}
\end{figure}

\section{Methodology}
\subsection{Algorithmic methods}\label{temporal:sec:algorithms}

\changed{In order to compute the coreness of each node in a given temporal graph, we first have to group together the snapshots according to some criteria: this task would cost $O(m\tau)$ time if done on an on-demaned basis.}
Therefore we exploit a heap-like tree data structure, where each leaf contains a snapshot, and every internal node contains the intermediate result of the grouping operation (union, intersection, or $h$-union) of its two children (we allow the existence of unary nodes containing just a copy of their child).
The tree is built in a bottom-up fashion: we represent it implicitly with an array, of length linear in $\tau$, that will be traversed according to the usual heap operations ($left$, $right$, $parent$) {\cite[Section 6.1]{cormen}}.
First we put each single snapshot in the second half of the array, so to fill the last level of the tree; then, starting from the middle of the array we perform the grouping operation on the two children of the node we are currently on.
Notice that the number of snapshots may not be a power of two and thus the tree would not be constructed in a proper way; to accomodate this we add some padding to the array in order to have a power of two as the number of leaves: this will possibly create some \emph{null} leaves that will have to be handled accordingly, but it will greatly simplify the subsequent traversal of the tree.
Eventually, the tree $T$ will contain the following elements: $T_0 = G_1 \circ G_2 \circ \dots \circ G_\tau$, $T_1 = G_1 \circ G_2 \circ \dots \circ G_{\tau/2}$, $T_2 = G_{\tau/2} \circ G_{\tau/2 + 1} \circ \dots \circ G_\tau$, where $\circ$ is one of $\cap$ or $\cup$\footnote{We do not compute a tree specific to \unionh, instead we use the $\cup$ tree and then filter the edges that appear less than $h$ times.}, and then
$T_{2^{\lceil\log_2 \tau \rceil - 1}} = G_0$ through to $T_{2^{\lceil\log_2 \tau \rceil - 1} + \tau} = G_\tau$.
Leaves from $T_{2^{\lceil\log_2 \tau \rceil - 1} + \tau + 1}$ to $T_{2^{\lceil\log_2 \tau \rceil + 1}}$ will contain an empty snapshot denoted by $\emptyset$.
Building this tree and storing it will require $O(m \tau)$ time and space, and we provide a small graphical example of the resulting data structure in Figure~\ref{fig:datastructure}.
\begin{figure}[hbt]
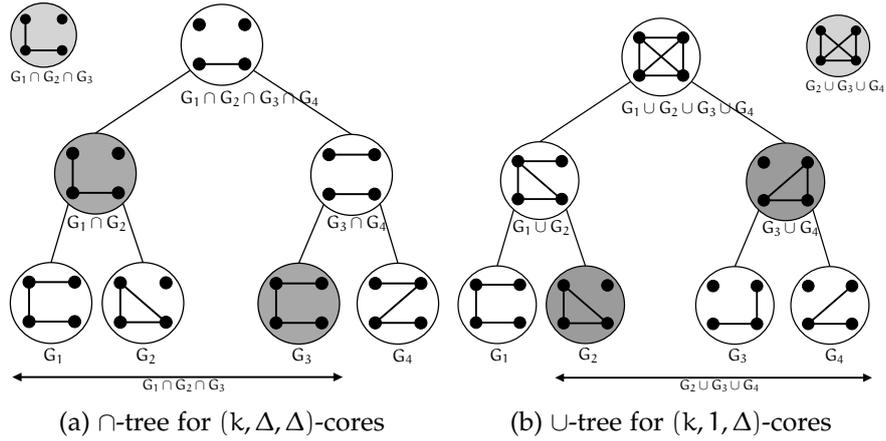

\centering
\begin{subfigure}[t]{.49\linewidth}
    \resizebox{\textwidth}{!}{
        \input{chapters/intersect_tree}%
    }
    \caption{$\cap$-tree for $(k, \Delta, \Delta)$-cores }
\end{subfigure}
\begin{subfigure}[t]{.49\linewidth}
    \resizebox{\textwidth}{!}{
        \input{chapters/union_tree}
    }
    \caption{$\cup$-tree for $(k, 1, \Delta)$-cores}
\end{subfigure}
\caption{Examples of the tree data structure on a toy $G_\tau$ (bottom row). Each tree node contains the intersection (a) or union (b) of the edges of children graphs. Highlighted nodes in (a) are the ones used to perform the intersection $G_1 \cap G_2 \cap G_3$. $k$-cores in the resulting graph correspond to the $(k, [1,3])$-span cores of \cite{bonchi2020_span_cores}.
Highlighted nodes in (b) are the ones used to perform the union $G_2 \cup G_3 \cup G_4$. The two resulting graphs are \changed{shown} in the top left and \changed{top} right corners.}
\label{fig:datastructure}
\end{figure}

We are now ready to query the tree for obtaining any \changed{interval $[a,b]$} of the snapshots combined with the $\circ$ operation of our choice, in $O(m\log\tau)$ time. 
To do so we use the procedure described in Algorithm \ref{alg:compute_interval}: we iteratively traverse the tree from \changed{$a$} to the root until we find an internal node covering the largest possible interval portion $[a, b'\le b]$, then repeat the process for $[b'+1,b]$ until all $[a,b]$ is covered; it is easily shown that at most two nodes from each level are selected, so as the tree is balanced the number of selected nodes is $\le 2\log\tau$. 
In particular, the \texttt{covered} function in Algorithm~\ref{alg:compute_interval} computes the interval of leaves that descend from the internal node $n$, using the formula shown in the algorithm. 
The formula uses $D$, the depth of the whole tree $T$, and $\left\lfloor \log_2 n\right\rfloor$ as the depth of node $n$.
Then, the algorithm tries to traverse the tree upwards until the starting interval of leaves $[a, b]$ is fully covered by the formula above. 
If only one node satisfies the condition above, it is returned; otherwise it is put in a list called \texttt{indexes} together with the other nodes that cover the remainder part of the interval. 
Notice that the maximum number of tree nodes that can be returned by Algorithm~\ref{alg:compute_interval} is 3. 
Indeed, it may happen that if the number of leaves requested is odd, i.e. $b-a+1$ is odd, then we have to add the leaf $b$ to the solutions.
In the end, we simply need to perform the $\circ$ operation between the $O(\log\tau)$ nodes selected this way, taking $O(m \log\tau)$ total time.
\begin{algorithm}[htb]
\DontPrintSemicolon
  \KwIn{An implicit binary tree $T$, two integers $a, b$}
    covered(n) $\equiv$ $[n \cdot 2^{D-2^{\lfloor\log_2 n\rfloor}}, (n+1) \cdot 2^{D-2^{\lfloor\log_2 n\rfloor}} -1 ]$\;
  
  indexes $\gets $ empty list\;
  $[s, t] \gets [a, b]$\;
  \While{$s \leq t$}{
    $n \gets s,~~D \gets depth(T)$\; 

    \lWhile{$covered(parent(n)) \subseteq [a, b]$}{ 
        $n \gets parent(n)$
    }

    Append $n$ to indexes\;
    $[s, t] \gets [s, t]\, \setminus $ covered(n)\;
  }

  \Return indexes\;
  
  \caption{Computing the topmost nodes of the tree~$T$ covering the leaves indicated by interval $[a, b]$.}
  \label{alg:compute_interval}
\end{algorithm}

Having the merged graph we can then use the classical algorithm to compute the coreness of each vertex: we maintain a priority queue with all the vertices, then we repeatedly extract the vertex with the lowest degree and remove it from the graph, updating the degrees of its neighbors accordingly, until the queue is empty \cite{core_decomposition_batagelj_zaversnik}. 
The degree of a vertex at time of their removal is its coreness.

\subsection{Analytic methods}\label{temporal:sec:analytic}

An influential work from Shin, Eliassi-Rad and Faloutsos \cite{faloutsos_corescope} showed how the coreness of a vertex is strongly correlated to its degree, i.e., if the degree of a vertex is high, then so is its coreness value.
Whenever a node deviates from this expectation, we can safely assume that the involved node is an outlier and should be further analysed in order to discover what is going on in the network. 
This can be useful for many different reasons, like highlighting a follower exchange circle on Twitter, or spotting a copy-pasted bibliography in a network of citations and more. 
The method proposed by Shin et al. is straightforward: plot the coreness of every node against their degree and see the result; in a normal situation every point in the plot will find its spot near the bisecting line $y=x$, resembling a stepped line chart (see Figure~\ref{fig:faloutsos_mirror_pattern} for an example, the authors of \cite{faloutsos_corescope} call this kind of plots as the \emph{Mirror Pattern}). 
\begin{figure}[htb]
    \centering
    \begin{subfigure}[t]{.48\linewidth}
        \centering
         \includegraphics[width=.86\linewidth]{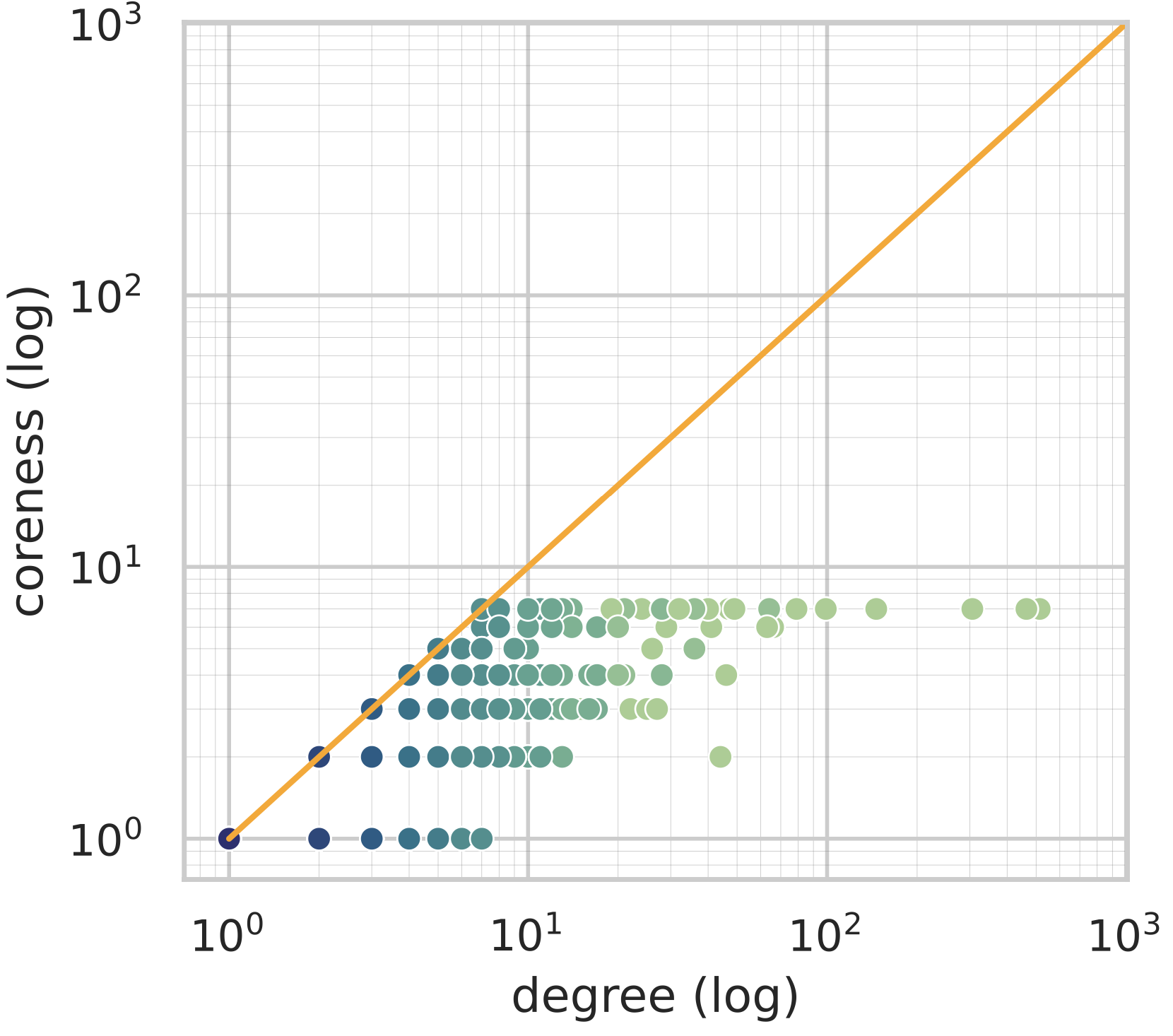}
        \caption{\footnotesize AS-733, $\Delta = 32, h = \Delta$}
    \end{subfigure}
    \begin{subfigure}[t]{.5\linewidth}
        \centering
        \includegraphics[width=\linewidth]{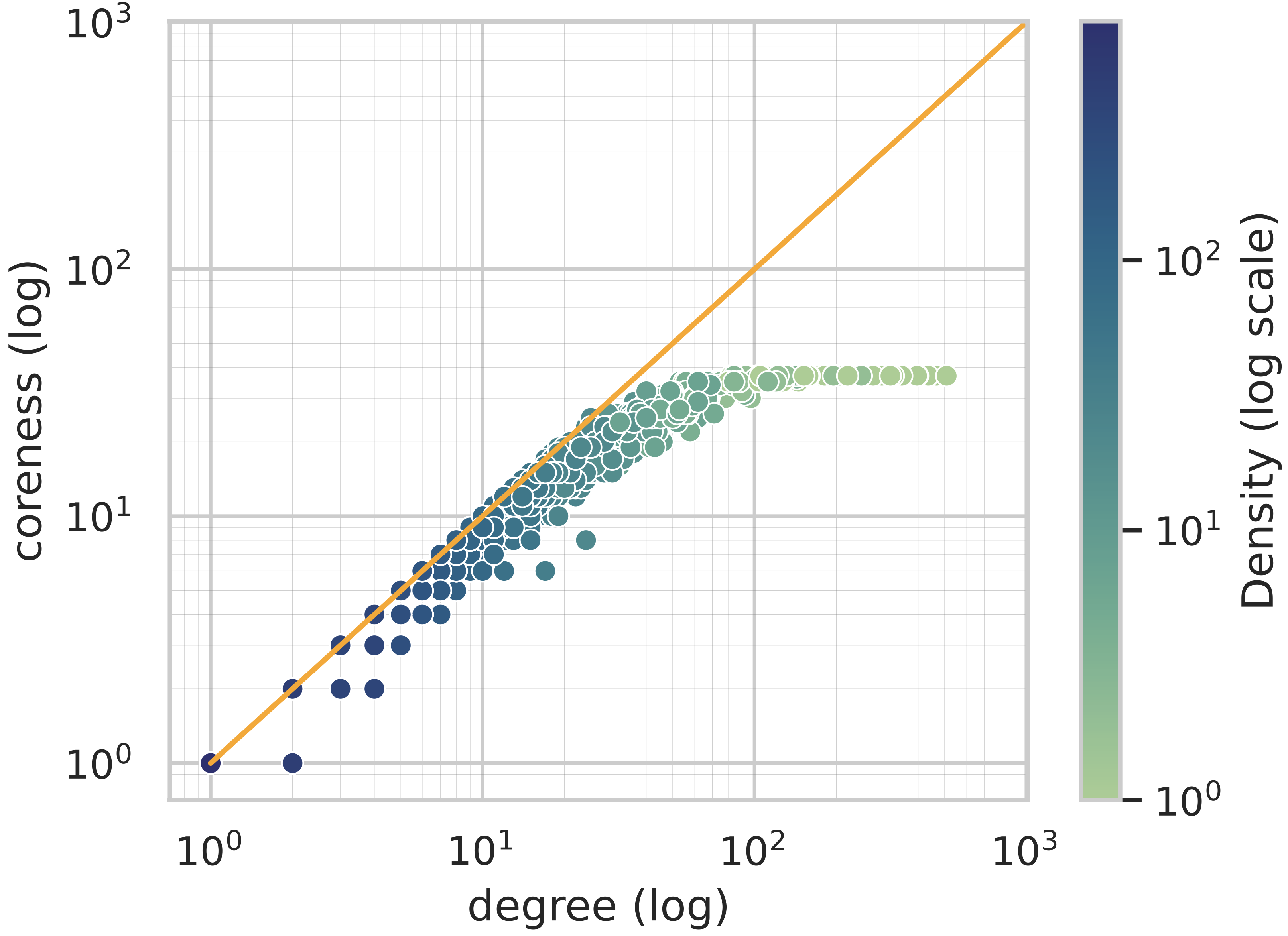}
        \caption{\footnotesize Mathoverflow, $\Delta = 32, h = 1$}
    \end{subfigure}
    \caption{Coreness-vs-degree plots (\changed{Mirror Pattern} in \cite{faloutsos_corescope}) in two of our datasets for a fixed time interval and fixed $h$.}
    \label{fig:faloutsos_mirror_pattern}
\end{figure}
\begin{figure}[htb] 
    \centering
    \begin{subfigure}[t]{.49\textwidth}
        \centering
        \includegraphics[width=\textwidth]{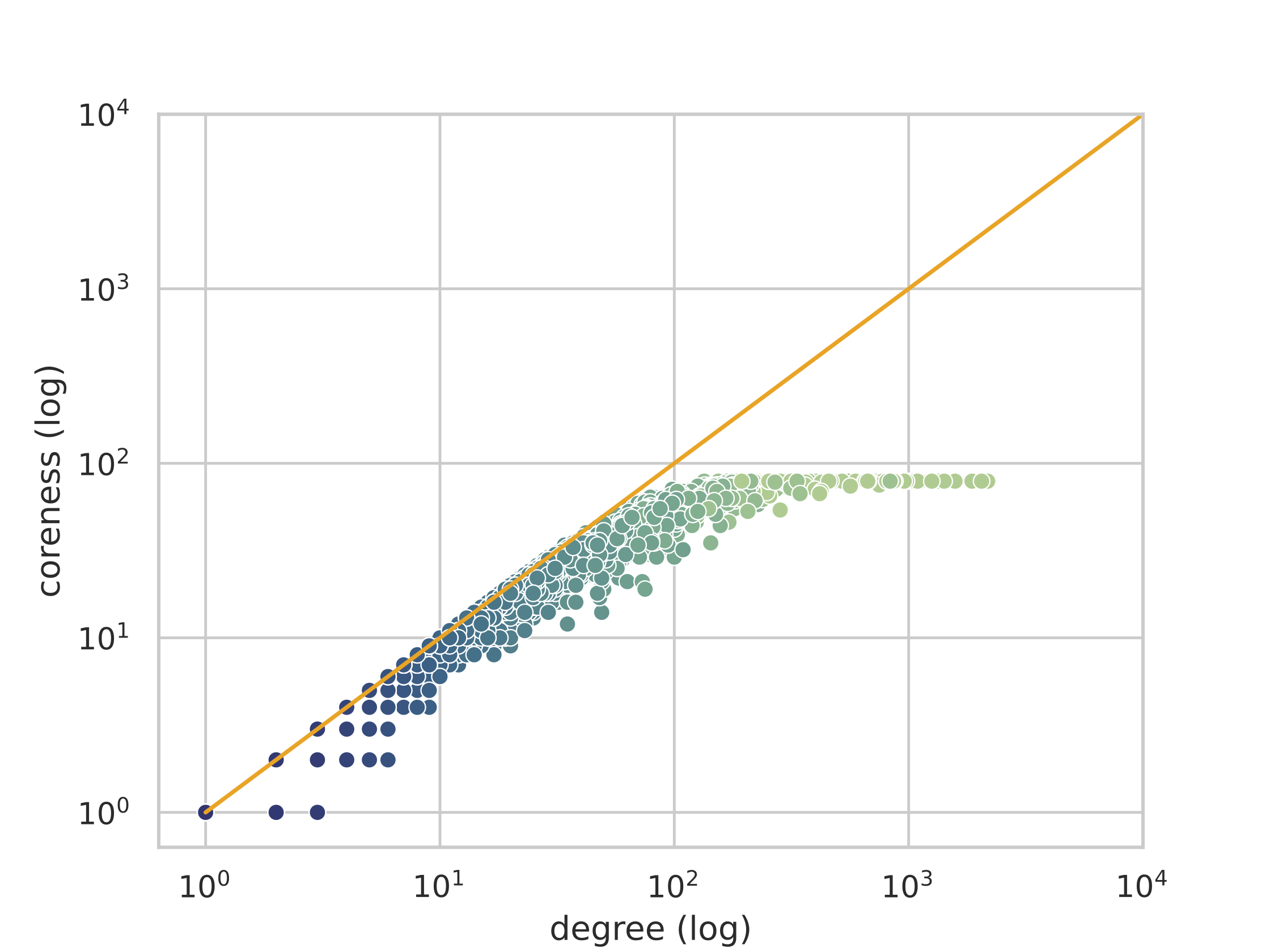}
        \caption{Mathoverflow}
    \end{subfigure}
    \begin{subfigure}[t]{.49\textwidth}
        \centering
        \includegraphics[width=\textwidth]{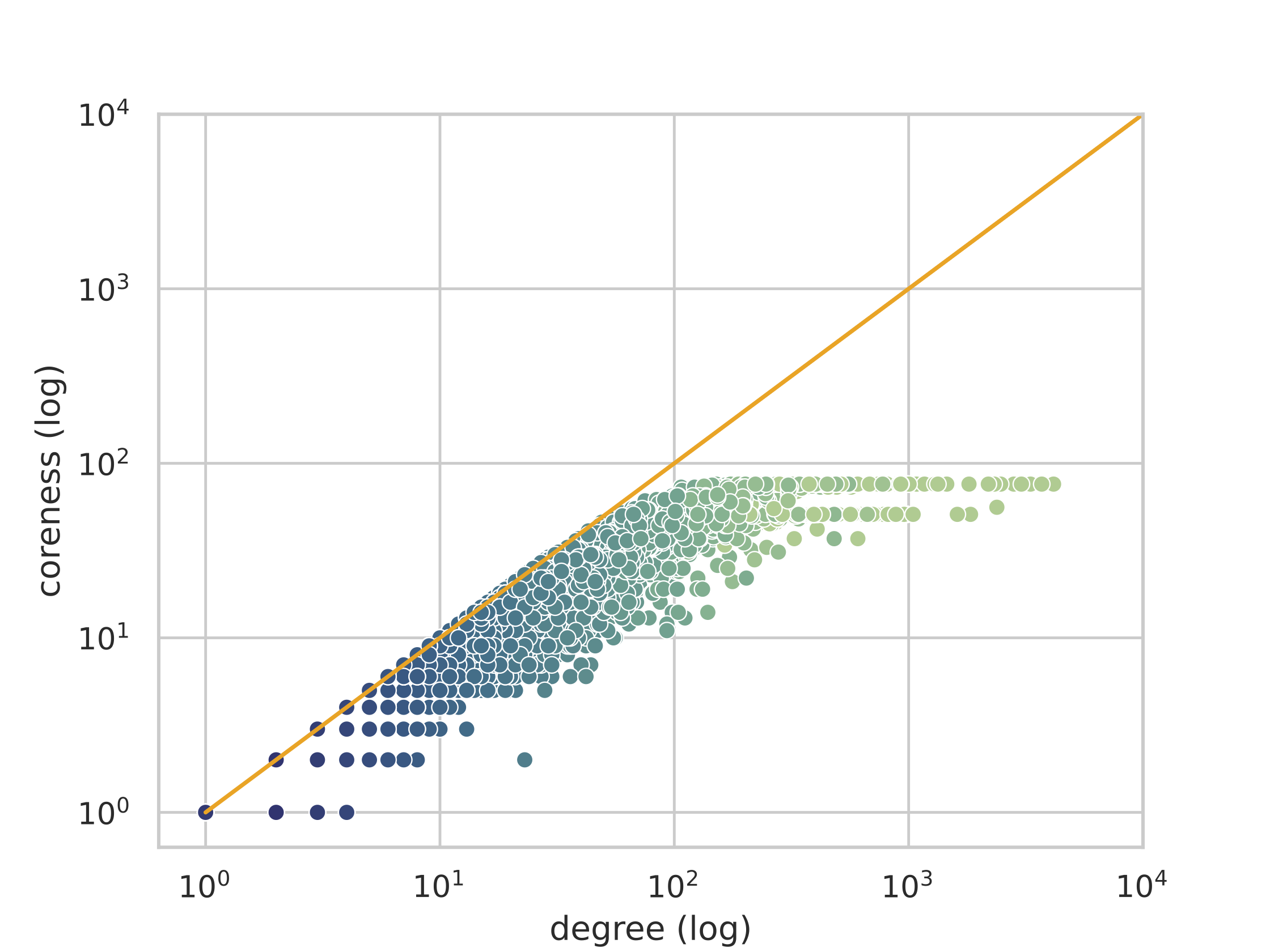}
        \caption{Reddit}
    \end{subfigure}
    \vfill
    \begin{subfigure}[t]{.49\textwidth}
        \centering
        \includegraphics[width=\textwidth]{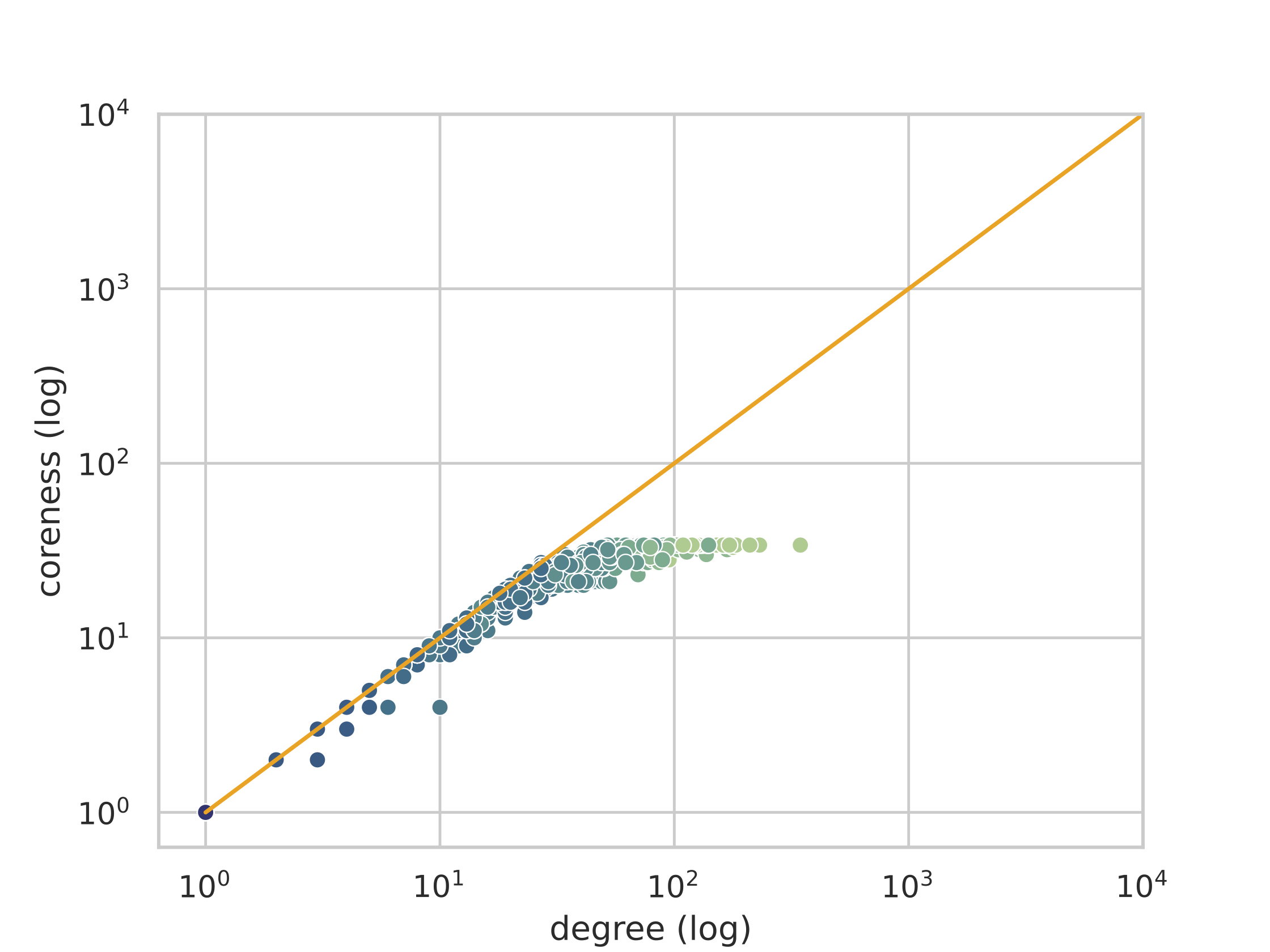}
        \caption{email-Eu-core}
    \end{subfigure}
    \begin{subfigure}[t]{.49\textwidth}
        \centering
        \includegraphics[width=1.1\textwidth]{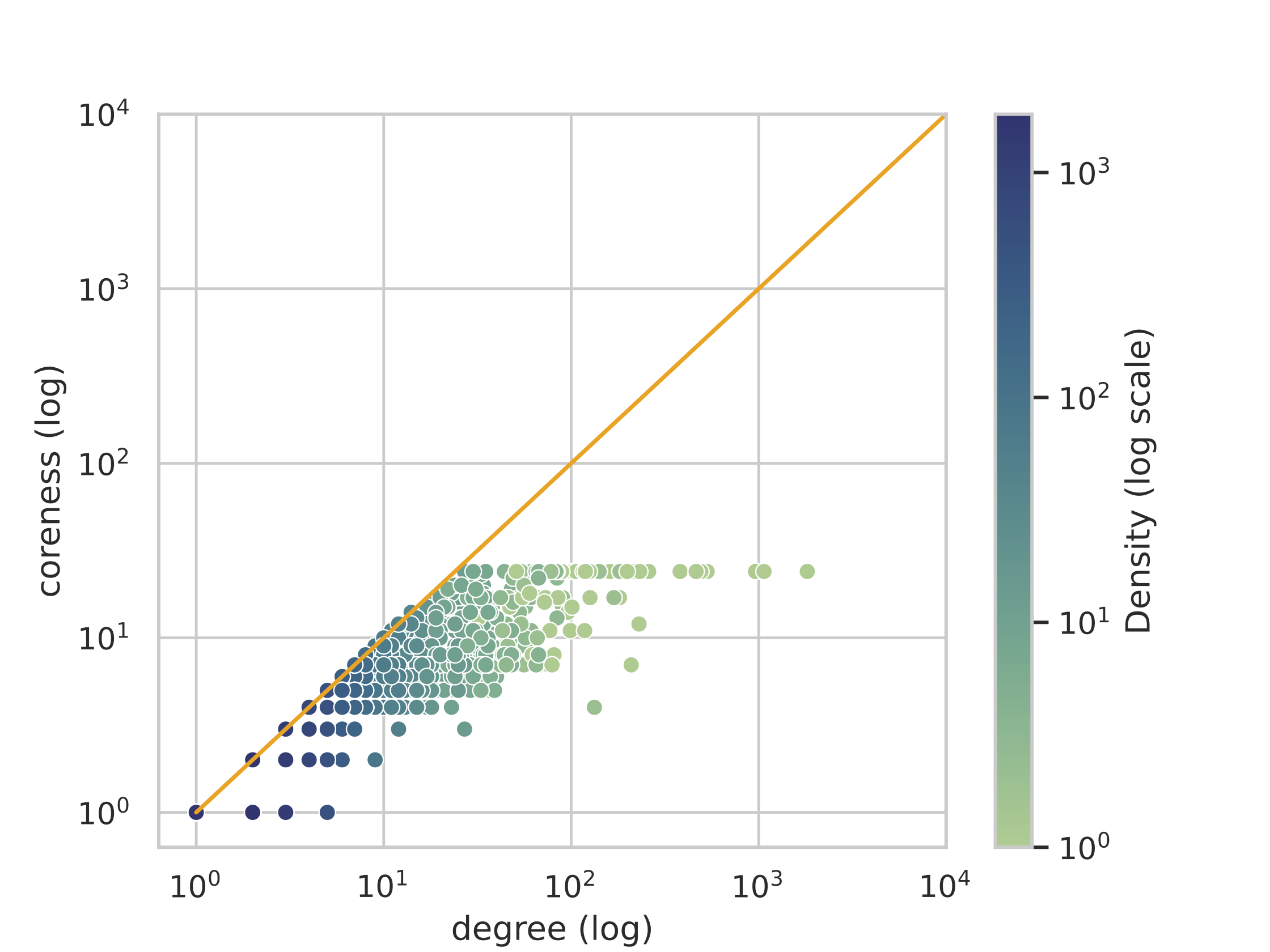}
        \caption{AS-733}
        \label{subfig:as_733_mirror_pattern}
    \end{subfigure}%
    \caption{Degree-vs-Coreness plots as in~\cite{faloutsos_corescope}, ignoring temporal information (i.e., $h = 1, \Delta = \tau$ so every edge is considered).}
    \label{fig:mirror_pattern_all}
\end{figure}
Whenever a node deviates from this, we are probably facing an abnormal event and if we have access to the metadata of the network, we can easily understand what is going on and why that specific node does not follow the expected trend of all other nodes.
Our main goal is therefore the extension of this method to the context of temporal graphs, and to do so we adopted the following methodology:
\begin{enumerate}\itemsep0em
    \item Group snapshots according to Def.~\ref{temporal:def:temporal_core} with different $h$,
    \item Compute degree\footnote{When performing the union of the snapshots, we do not allow for multiedges, so that each edge counts as one towards the final degree.} and coreness of each vertex, for each window of fixed size $\Delta$,
    \item Plot and analyse the results.
\end{enumerate}
While with static graphs it is sufficient to plot the coreness against the degree of every node, with temporal graphs we would have to look at $\tau-\Delta+1$ plots and spot any node behaving in an unexpected way, therefore we adopted different kinds of plots, including:
\begin{enumerate}[(a)]\itemsep0em
    \item Degree for each node, by $\Delta$-window,
    \item Coreness for each node, by $\Delta$-window,
    \item Coreness-vs-degree for each node, by $\Delta$-window,
    \item $\sqrt{\mathit{coreness}\cdot \mathit{degree}}$ (\textbf{RCD}) for each node, in each window of fixed size $\Delta$,
    \item Average of $\sqrt{\mathit{coreness}\cdot \mathit{degree}}$ (\textbf{ARCD}) among all windows of size $\Delta$ for each node, plotting the values for varying window size\footnote{As compromise between generality and detail, we consider powers of 2 up to $\tau -1$.}. This will be the most valuable tool in our analysis, referred to as the \emph{Temporal Resilience Plot}.
\end{enumerate}
We show examples of various plots, but we focus more on the last type as \changed{it is able to summarize the behavior of several parameters along the entire timeline.} 
Furthermore, we will also focus mainly of \textit{structural dynamics} of the networks, and use these values to extract information from the network as a whole, rather than analyzing single nodes: this is due to our choice of dataset, the SNAP repository \cite{snapnets}, which we consider for its popularity but that does not feature extensive semantic annotations for temporal graphs; such analysis is left for future work.
We believe that plotting the ARCD for each node over all the window sizes is meaningful because it provides a \changed{trajectory} view of the trend of coreness and degree for any given node, as it will highlight any unusual behavior of either coreness or degree (like a heavy fall or a steep rise) while also preserving information on the time-based nature of the graphs.

\section{Experimental analysis}\label{temporal:sec:experiments}
\subsection{Dataset and Environment}
We performed our analysis on 6 temporal graphs of different sizes and types, from the SNAP repository \cite{snapnets}; Table \ref{temporal:tab:dataset} shows the total number of nodes and temporal edges, $\tau$, and the maximum node degree and coreness among all snapshots.

The graphs are provided as edge lists, each edge marked with a timestamp. To create a uniform environment among all graphs, we grouped edges by week, i.e., each $G_i$ for each graph corresponds to the edges appearing in a consecutive 7-day interval; thus each graph spans $7\cdot \tau$ days.
Our experiments were carried on a dual-processor Intel Xeon Gold 5318Y Icelake @ 2.10GHz machine, with 48 physical cores each and 1TB of shared RAM, running Ubuntu Server 22.04 LTS, Intel C++ compiler \texttt{icpx}, version 2022.2.1.
The results were gathered in the csv format later processed with Python 3.11 using common external libraries like Pandas, Numpy and Matplotlib.

\begin{table}[ht]
\caption{Summary of our dataset.}
\label{temporal:tab:dataset}
        \centering
\resizebox{\columnwidth}{!}{
\begin{tabular}{|l|r|r|l|r|r|r|}
\hline 
 Name & Nodes & Edges & Context & $\tau$ & $d_{max}$ & $k_{max}$\\
\hline 
 sx-stackoverflow & 2,601,977
 & 63,497,050
 & Comments, questions, answers & 397 & 440  & 11  \\
\hline 
 sx-mathoverflow & 24,818
 & 506,550
 & Comments, questions, answers & 336 & 96 & 9 \\
\hline 
 soc-RedditHyperlinks & 55,863
 & 858,490

 & Hyperlinks between subreddits & 174 & 412 & 14 \\
\hline 
 email-Eu-core-temporal & 986

 & 332,334

 & Email exchanges & 76 & 89 & 9 \\
\hline 
 soc-sign-bitcoin-otc & 5,881

 & 35,592

 & Bitcoin trust network & 261 & 144 & 14 \\
\hline 
 soc-sign-bitcoin-alpha & 3,783

 & 24,186

 & Bitcoin trust network & 262 & 43 & 6 \\
 \hline
  as-733 & 
  6,474
 & 
  11,968,465
 & Autonomous System Network & 105 & 1460 & 14 \\
 \hline
\end{tabular}%
}

\end{table}

\subsection{Running time}

While optimizing our implementation is outside the scope of this work, Table \ref{temporal:tab:running_times} gives an overview of their efficiency, with the time required to build the data structure described in Section~\ref{temporal:sec:algorithms} for two aggregation functions ($\cap$ and $\cup$), as well as the average time for computing a $(k,h,\Delta)$-core decomposition. We can see how the one-time computing time for the tree is typically a second or less, yet it provides an important speedup to the computation of $(k,h,\Delta)$-cores, which becomes significant in the analysis where, for each graph, we need to compute hundreds or thousands of different windows. 

\begin{table}[htb]
    \caption{Running times (milliseconds) for computing our data structures, and average times for $(k,h,\Delta)$-core decompositions with and without our data structures.}
    \label{temporal:tab:running_times}
    \centering
    \resizebox{\columnwidth}{!}{%
    \begin{tabular}{|l|r|r|r|r|r|r|r|}
        \hline
         \multirow{2}{*}{Dataset} & \multirow{2}{*}{\makecell{Building\\ $\cup$ tree}} & \multirow{2}{*}{\makecell{Building\\ $\cap$ tree}} & \multicolumn{2}{c|}{\makecell{Average time for\\ $h=1, \Delta = 64$}} & \multicolumn{2}{c|}{\makecell{Average time for\\ $h=64, \Delta = 64$}} \\ \cline{4-7}
         & & &with&without&with&without \\ \hline
         sx-stackoverflow & 89\,940 & 76\,720 & 3\,387 & 11\,997.5 & 827 & 5\,426.6\\ \hline
         sx-mathoverflow & 1\,270 & 1\,070 & 14.9 & 88.5 & 5.5 & 25.4 \\ \hline
         soc-RedditHyperlinks & 870 & 540 & 55.5 & 260.0 & 4.9 & 16.5 \\ \hline
         email-Eu-core-temporal & 40 & 10 & 8.4 & 44.5 & 0.2 & 1.5 \\ \hline
         soc-sign-bitcoin-otc & 80 & 60 & 3.3 & 12.8 & 0.5 & 1.8 \\ \hline
         soc-sign-bitcoin-alpha & 100 & 80 & 2.5 & 9.7 & 0.6 & 2.3 \\ \hline
         AS-733 & 490 & 370 & 34.5 & 172.8 & 5.5 & 38.1 \\ 
        \hline
    \end{tabular}%
    }
\end{table}

\subsection{Experimental Setup}
We conducted all our experiments in a \emph{sliding window} fashion, i.e., we fix a value for $\Delta$ and then perform the analysis scanning the lifespan of graphs with windows of size $\Delta$, exploiting the data structure described in Section~\ref{temporal:sec:algorithms}.
Specifically, we consider for $\Delta$ the values 
$\Delta = 2^0, 2^1, 2^2, \dots, 2^{\log_2\lfloor \tau -1\rfloor}, \tau -1$.
We also tested multiple values for the parameter $h$, namely $h = 1, \frac{\Delta}{4}, \frac{\Delta}{2}, \frac{3\Delta}{4}, \Delta$; in what follows, we show results for $h = 1,  \frac{\Delta}{2}, \Delta$ as it concisely shows its range, from the plain union ($h=1$) to the intersection of edges ($h=\Delta$), with the middle ground $\frac{\Delta}{2}$.

First, we recreated the mirror pattern \cite{faloutsos_corescope} plots as a baseline static-graph-based method for comparison, then we analyzed all degree and coreness plots to spot any major events and trends, and finally we plotted the RCD and ARCD of the datasets to summarize their behavior over time at different temporal resolutions.
We cannot provide many degree/coreness plots of single snapshots as each network can have hundreds of them,
therefore in what follows \changed{ARCD plots, i.e., the Temporal Resilience Plots defined in Section \ref{temporal:sec:analytic},} will be our main tool of analysis, as they condense a vast amount of information on the temporal evolution of a network at different resolutions simultaneously, allowing us to concisely show useful insights.

\subsection{Results and Discussion}
We first show the results obtained from a broader degree-vs-coreness analysis conducted according to the same principles of the Mirror Pattern analysis in \cite{faloutsos_corescope}, with parameters $\Delta = \tau$ and $h=1$, which corresponds to treating the whole temporal graph as a static graph, taking all the temporal edges (ignoring multiedges).
Figure \ref{fig:mirror_pattern_all} shows the plots for $4$ graphs in our dataset: from these \changed{figures} we cannot extract proper information about the communities, nor about the behavior of the nodes during the lifespan of the graphs, except for the AS-733 graph (Figure \ref{subfig:as_733_mirror_pattern}) where we can see that a small number of nodes stand out for their higher degree and relatively high core number. 
This static analysis provides us with a first insight into the AS-733 dataset that may highlight important nodes. However, the loss of temporal information does not allow us to understand fluctuations of behaviour, e.g., if communities are cohesive or just the result of aggregating different moments in time, or if the importance of a node is long- or short-lived.

Next, we discuss each graph in our dataset separately.

\begin{figure}[hbt]
    \centering
    \begin{subfigure}[t]{.49\linewidth}
        \centering
        \includegraphics[width=\textwidth]{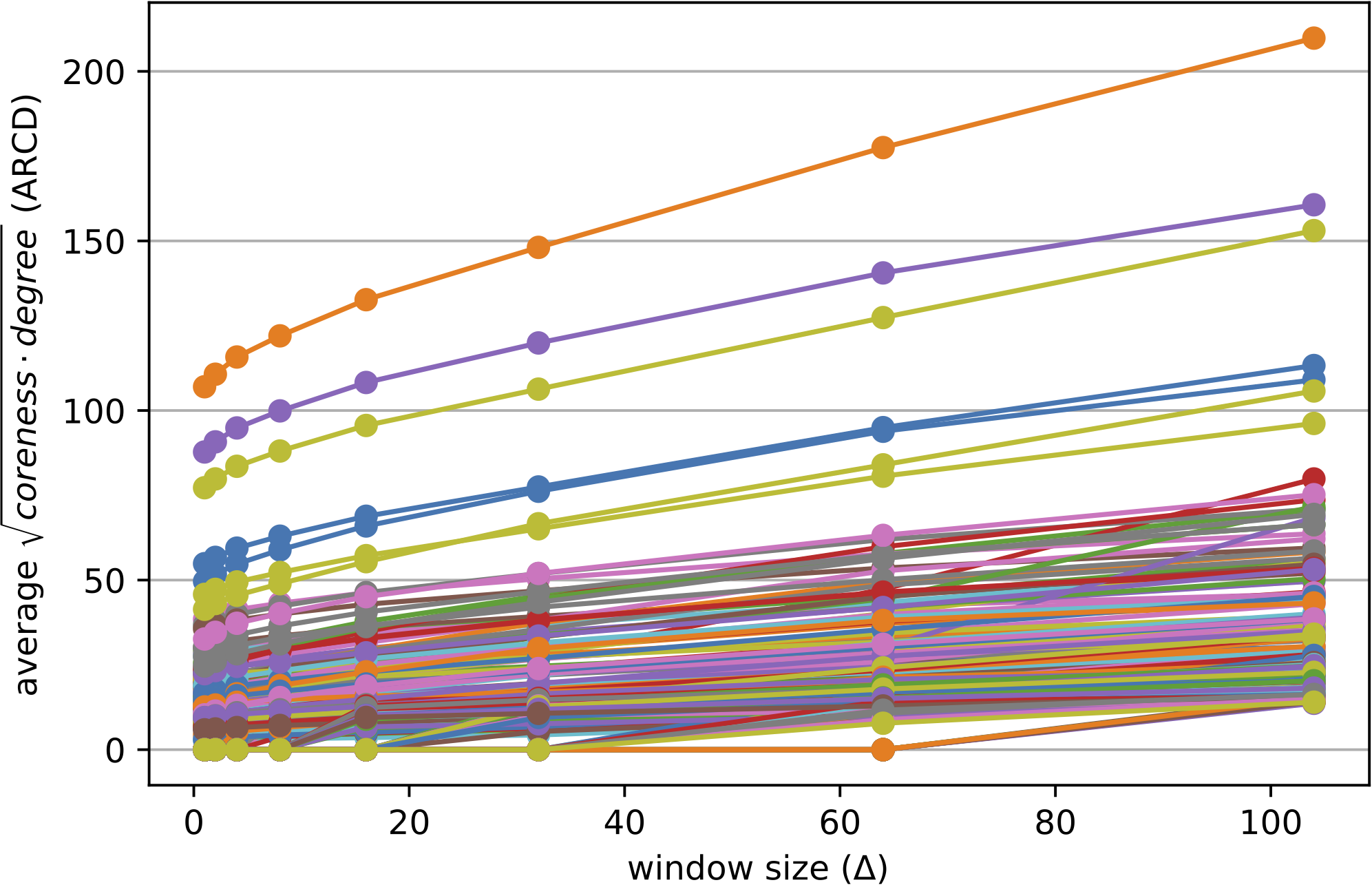}
        \caption{$h = 1$}\label{fig:as-h1}
    \end{subfigure}
    \begin{subfigure}[t]{.49\linewidth}
        \centering
        \includegraphics[width=\textwidth]{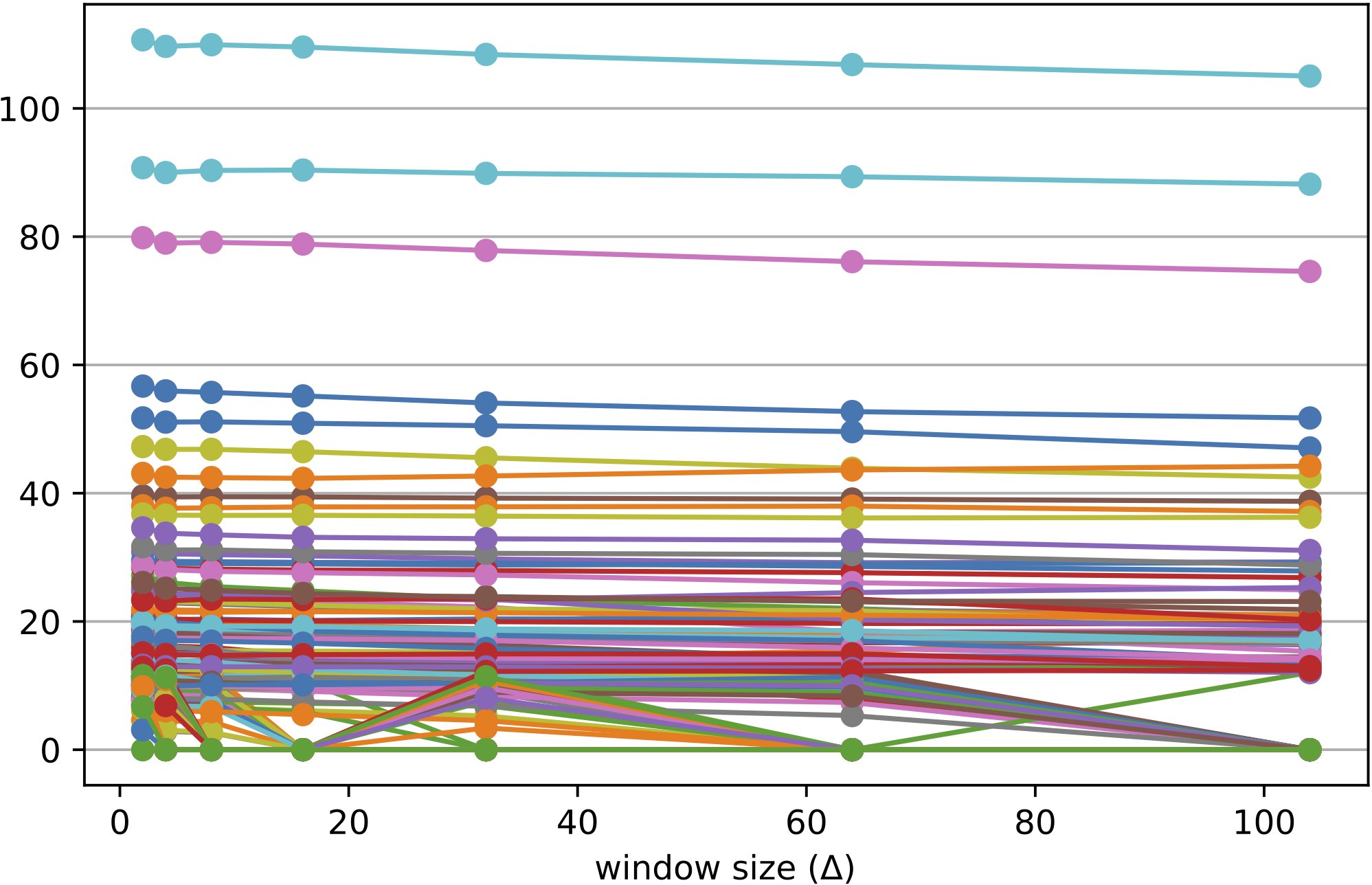}
        \caption{$h = \Delta/2$}
    \end{subfigure}
    \begin{subfigure}[t]{.49\linewidth}
        \centering
        \includegraphics[width=\textwidth]{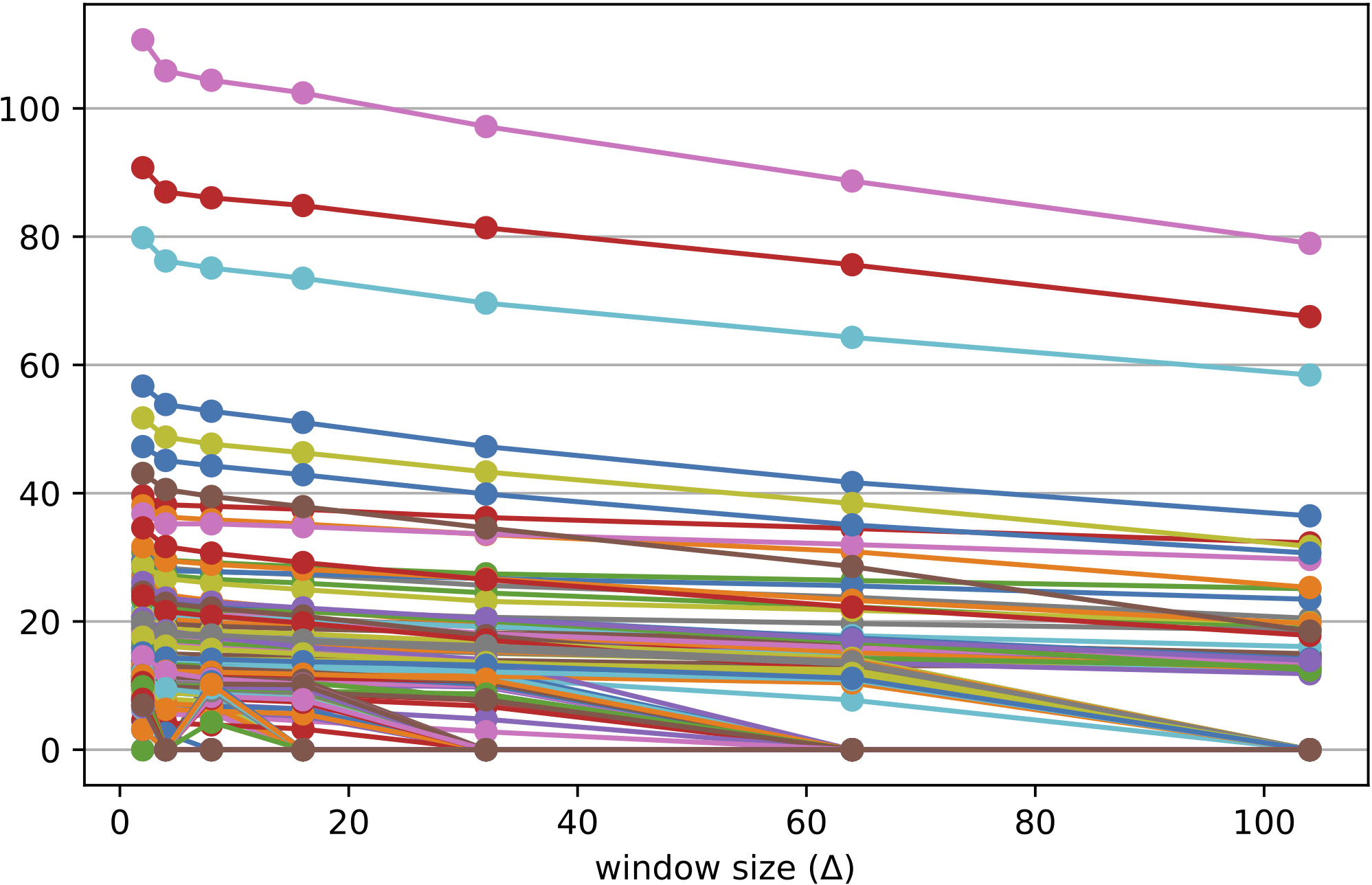}
        \caption{$h = 3\Delta/4$}
    \end{subfigure}
    \begin{subfigure}[t]{.49\linewidth}
        \centering
        \includegraphics[width=\textwidth]{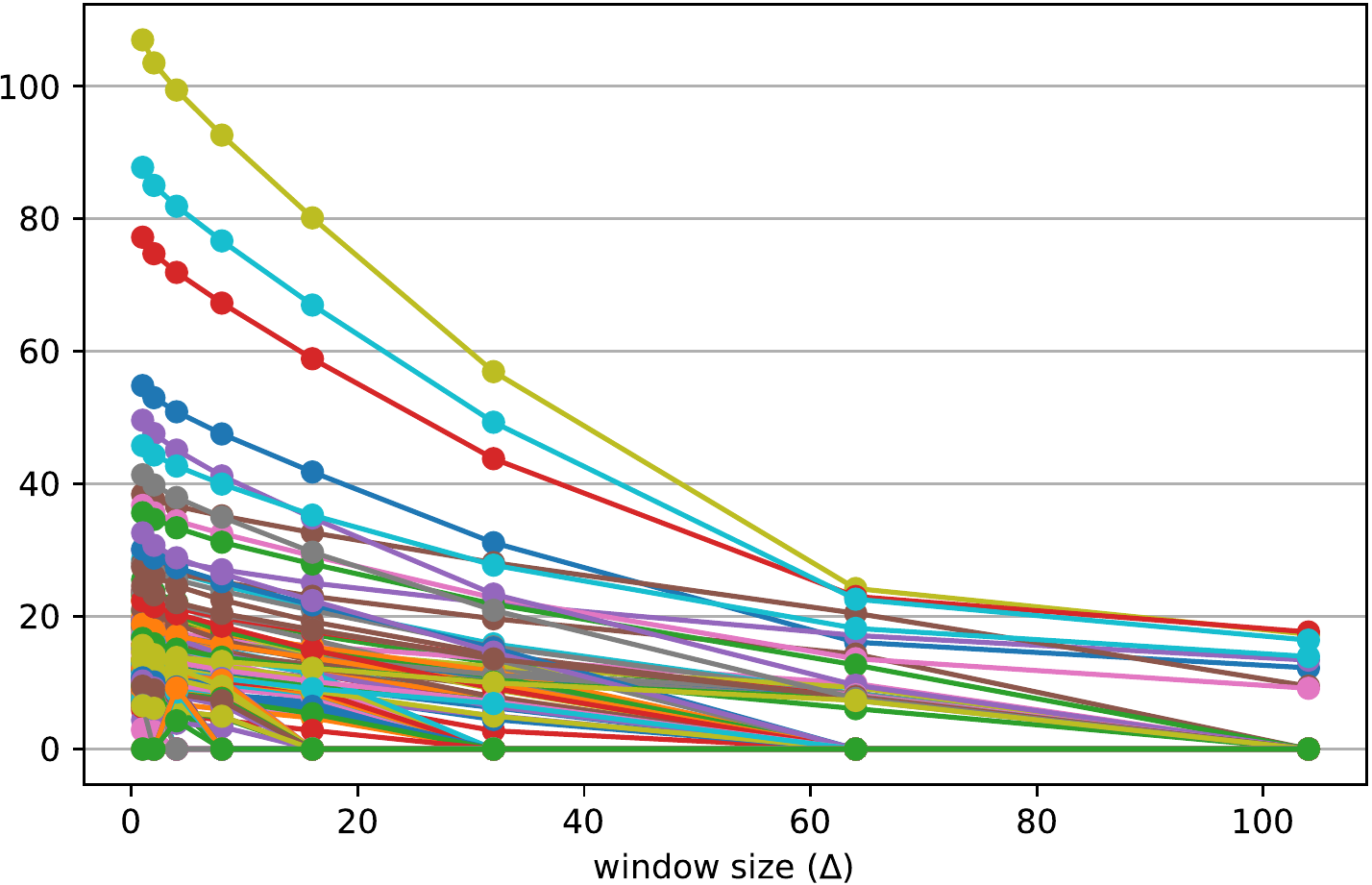}
        \caption{$h = \Delta$}\label{fig:as-complete-d}
    \end{subfigure}%
    \caption{Temporal Resilience plots for the AS-733 dataset for different values of $h$.}
    \label{fig:as_733_complete}
\end{figure}
\paragraph{AS-733.}
This graph represents connections between autonomous systems from November 8, 1997 to January 2, 2000. We first plotted the degree for each node with a fixed window size of $\Delta = 2$ and $h = \Delta$, i.e., performing the intersection of the edges in the windows (Figure~\ref{fig:degree_autonomous_system}). We firstly observe how the degrees seem to grow in time, consistently with the rapid growth of the Internet around the turn of the century. It is also immediately evident that around window number 60 there is a brief but dramatic degree drop for most nodes, especially the ones of highest degree. From the data provided, this \changed{loss of connectivity} happened around Christmas of 1998, and while we are not able to pinpoint a specific event that may have caused it, it is a clear example of temporal behavior that would not emerge from a static view of the graph. 

On the other hand, this drop does not show as much in the coreness of the nodes in the same graph grouping, as shown in Figure~\ref{fig:coreness_autonomous_system}: while it is possible to distinguish the anomaly around window 60, there is no clear \changed{gap} as several node maintain coreness values comparable to the rest of the timeline. Furthermore, Figure \ref{fig:coreness_autonomous_system} does not seem to highlight any particularly meaningful pattern.

\begin{figure}[htb]
    \centering
    \begin{subfigure}[t]{\linewidth}
    \includegraphics[width=.95\linewidth]{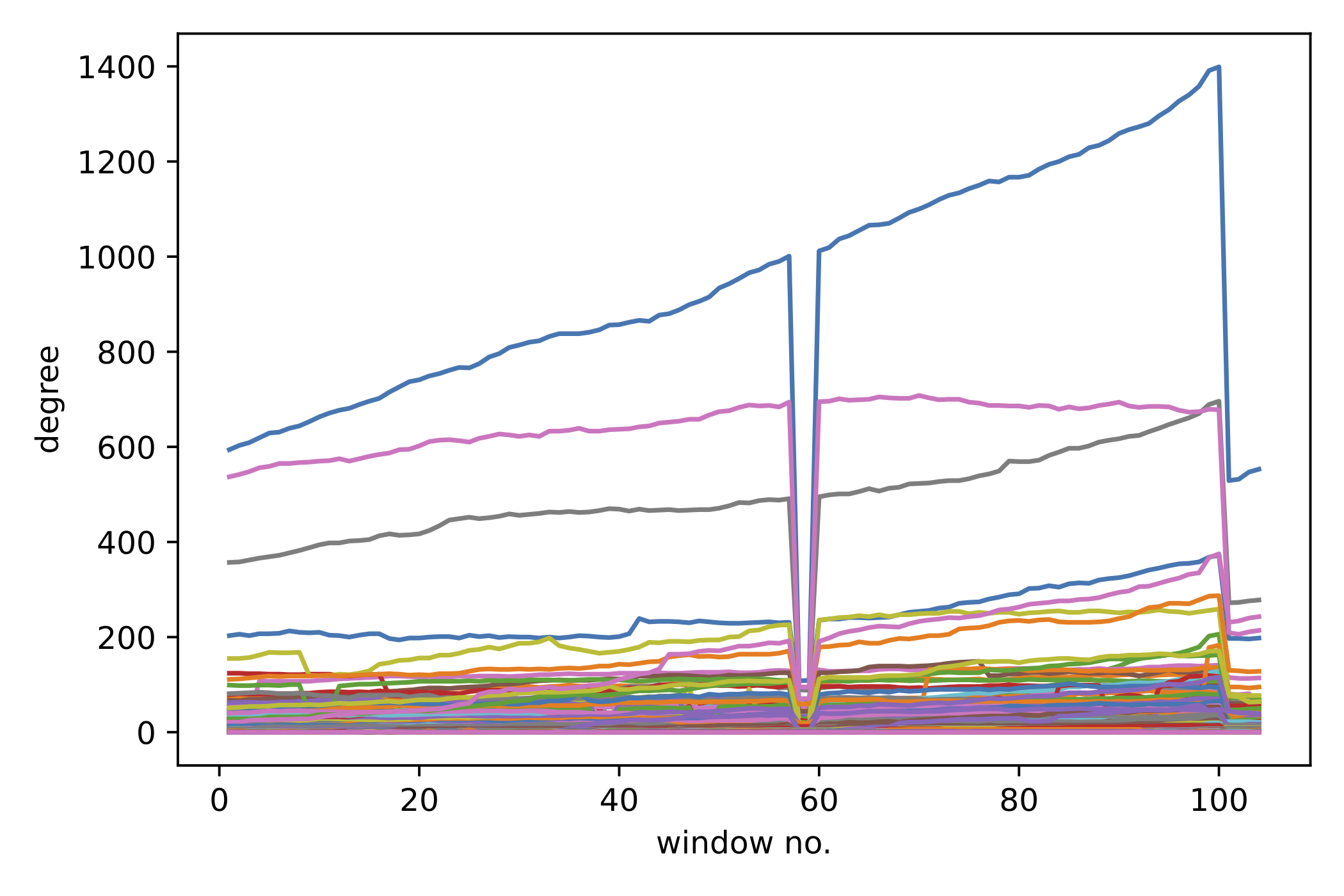}
    \caption{}
    \label{fig:degree_autonomous_system}
    \end{subfigure}
    \begin{subfigure}[t]{.9\linewidth}
    \centering
    \includegraphics[scale=.16]{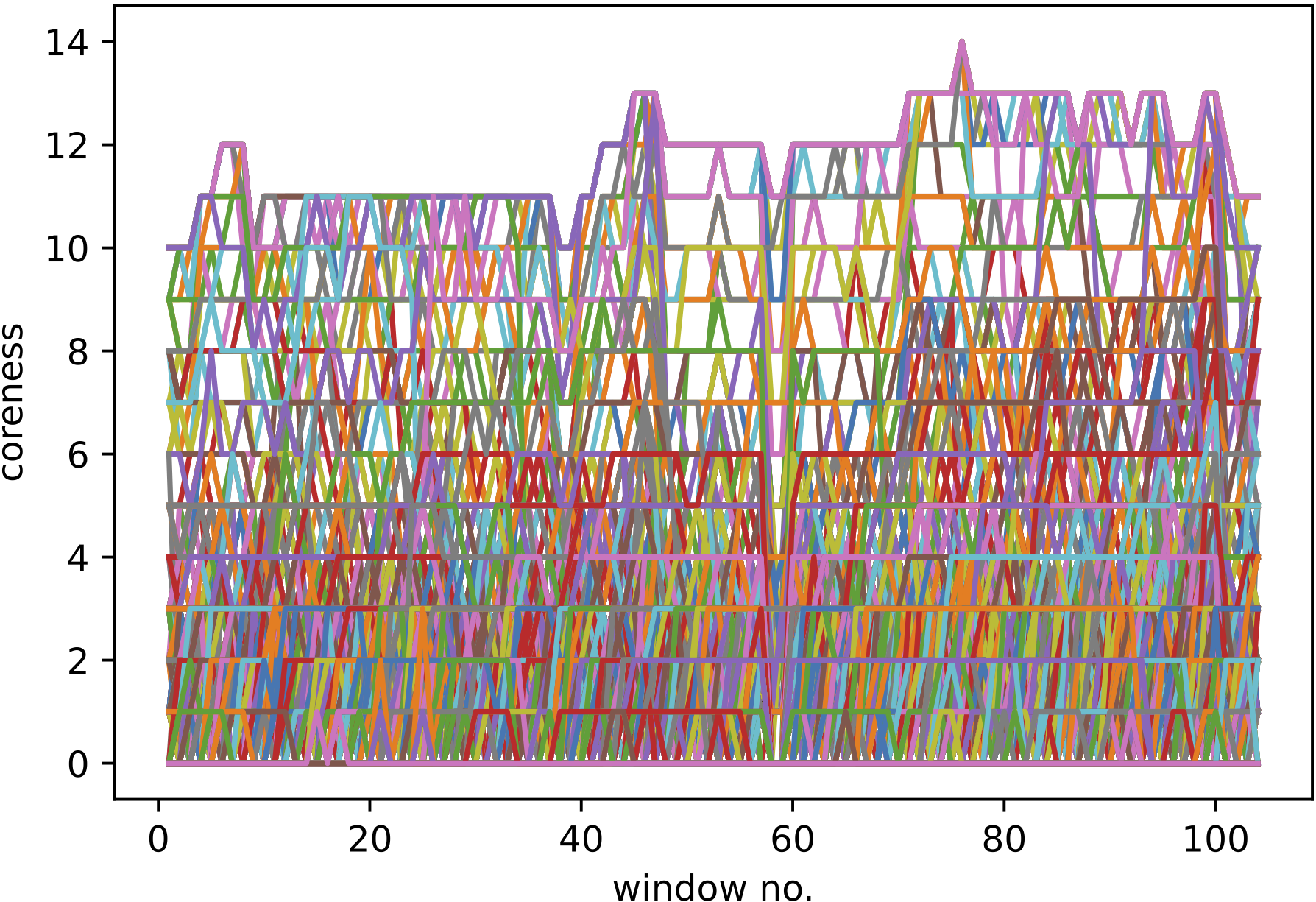}
    \caption{}
    \label{fig:coreness_autonomous_system}
    \end{subfigure}

    \caption{Degree and coreness of all nodes on the Autonomous System dataset AS-733 for $\Delta = 2, h = \Delta$.}
    \label{fig:degree_and_coreness_as733}
\end{figure}

Let us thus consider the \emph{Temporal Resilience plots} in Figure~\ref{fig:as_733_complete}: each line corresponds to a node, and for each window size $\Delta$ ($x$ axis) we plot the average RCD among all windows of size $\Delta$. The plots consider different values of $h$ for computing the $(k,h,\Delta)$-cores. 

We can clearly identify three vertices standing out for their ARCD always above all of the other nodes: these nodes must act like a backbone to the autonomous system network and are resilient to the changes that happen, like the Christmas \changed{event}, even if their degree dropped significantly. 

This is made even clearer by looking at the sequence of plots given by the values $h = 1, \frac{\Delta}{2}, \frac{3\Delta}{4}, \Delta$, where we can see that the three topmost lines ``survive'' the increasing constraint given by the different $h$ values, while the others start approaching zero as soon as $h = \Delta/2$.
This also visualizes the intuition and expectation that the connectivity of the nodes in any temporal graph decreases \changed{as we} increase $h$, i.e., when we require more and more interactions between any pair of nodes for a given window size. This also highlights something remarkable: while the high degrees in Figure~\ref{fig:degree_autonomous_system} could already suggest important nodes, the temporal resilience of Figure~\ref{fig:as-complete-d} (where $h=\Delta$ means intersection is performed) highlights the fact that these node form consistently unbroken communities, as their coreness remains significant even when intersecting \textit{all} windows, i.e., when $\Delta=\tau$ (right-most points in the lines).

With this temporal resilience plot we can also classify the nodes according to their \changed{falling point} in the plot. Specifically, we assign classes based on the time their ARCD value drops to zero, as we enlarge the window size.
We colored the ARCD plot according to the above classes obtaining Figure~\ref{fig:as_733_classes}, from which we can see how the classes are well separated from each other, validating our hypothesis that some nodes are resilient to \changed{the} process of enlarging the window size while keeping the value of $h$ high.
It is worth noting that the classes become more distinguishable as the window size increases, whereas for small values of $\Delta$ it may happen that less-resilient nodes (i.e. their falling point comes earlier in the x-axis) have ARCD values higher than more-resilient ones, 
which we may expect due to smaller values being more susceptible to fluctuations.
We can then conclude that in the AS-733 we can identify backbone nodes, that play a fundamental role in the network and are extremely robust in their mutual connections. Even more remarkably, Figure~\ref{fig:as_733_classes} shows that some of the most resilient nodes (colored green) can only be recognized over very wide windows: when considering small windows, their ARCD is surpassed by nodes of lower classes (orange and purple). This shows that important insight is missed if we don't analyse the dataset at different temporal resolutions. 

Finally, observe the coreness/degree plot for the same dataset in Figure~\ref{subfig:as_733_mirror_pattern}: while we may identify important nodes, all of the other dynamics highlighted are not evident from it, which underlines the value of the analysis shown.

\begin{figure}[htb]
    \centering
    \includegraphics[width=.85\columnwidth]{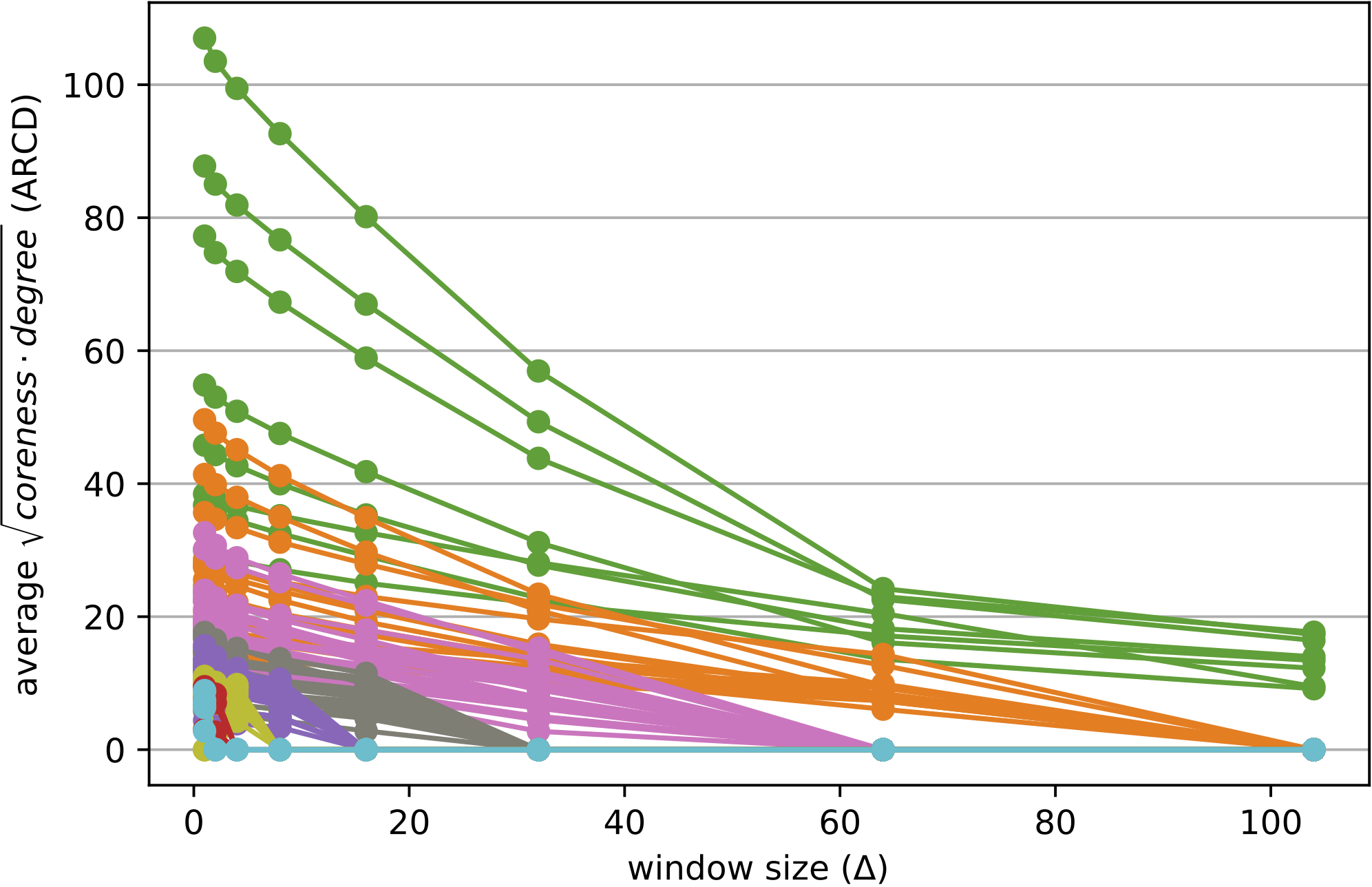}
    \caption{Coloring of the nodes of AS-733, according to their falling points in the Temporal Resilience plot.}
    \label{fig:as_733_classes}
\end{figure}

\paragraph{Math- and Stack-Overflow.} 
Nodes in these graph are users from the StackExchange Q\&A network, and edges are drawn when users respond to questions or comment on questions and responses.

\begin{figure}[htb]
    \centering
    \begin{subfigure}[t]{.48\columnwidth}
        \centering
        \includegraphics[width=\linewidth]{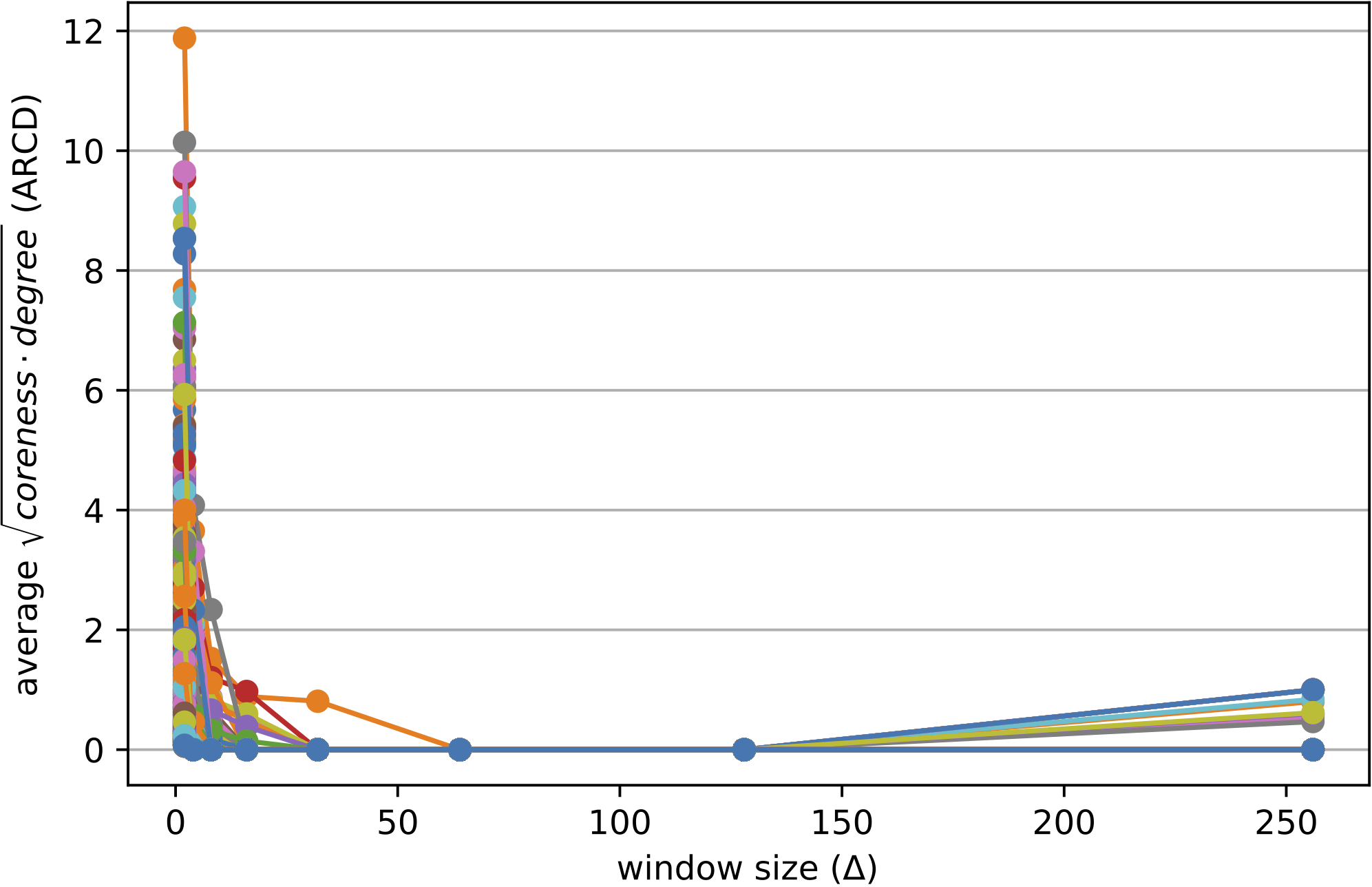}
        \caption{Mathoverflow}
    \end{subfigure}
    \begin{subfigure}[t]{.48\columnwidth}
        \centering
        \includegraphics[width=\linewidth]{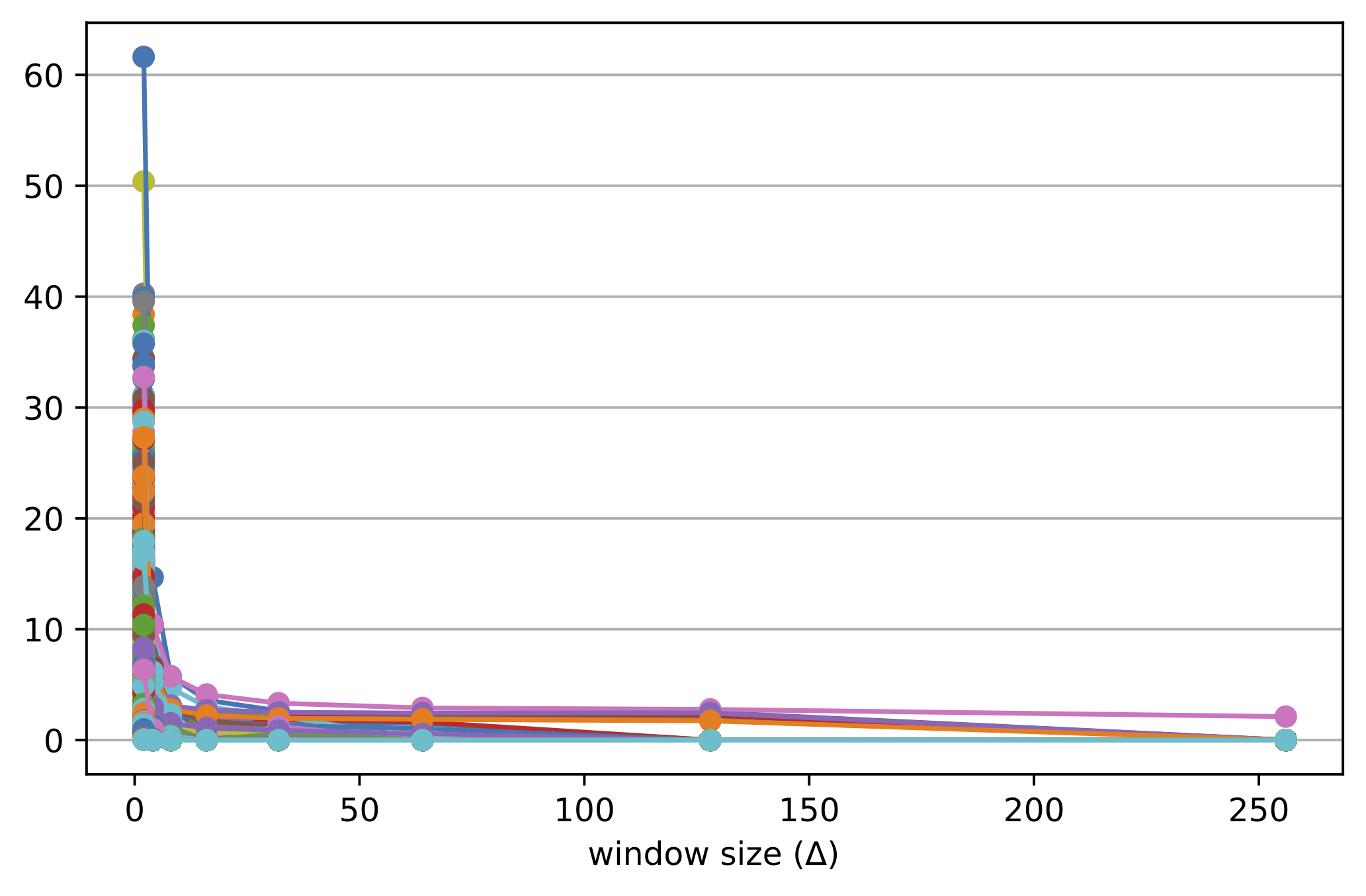}
        \caption{Stackoverflow}
    \end{subfigure}
    \caption{Temporal Resilience plots for the mathoverflow and stackoverflow datasets for $h = \Delta/2$.}
    \label{fig:avg_arcd_mathoverflow_delta_over_two}
\end{figure}

The temporal resilience plot in Figure~\ref{fig:avg_arcd_mathoverflow_delta_over_two} shows how most of the nodes are not able to form lasting communities, even if we require them to interact just half of the time ($h = \Delta/2$).
We do observe positive values of ARCD for $\Delta=1$, meaning connections and communities are created, but as soon as we enlarge $\Delta$ they quickly drop to $0$ or almost 0 (the drop would of course be even steeper with higher values of $h$). 
This behaviour is opposite to that of AS-733, and shows us that communities are not at all resilient but ephemeral. Remarkably, this is consistent with the functioning dynamics of the StackExchange system: questions are short-lived and recommended randomly, and there are no ``friendship'' or ``follow'' features, so we can expect random users to group briefly on a question and then not interact again. We can indeed see how the resilience plot captures this dynamic, and discriminates between this network and AS-733 better than the static plots of Figure~\ref{fig:mirror_pattern_all}.

To further highlight the volatility of these communities we can look at the Temporal Resilience plot for $h = 1$ (i.e., union), in Figure~\ref{fig:avg_arcd_mathoverflow_union}. 
Here the ARCD values grow uniformly, so both the coreness and the degree increase as we include more snapshots into the windows, meaning people continuously interact with new \changed{random} ones (differently, e.g., from the same plot for AS-733 in Figure~\ref{fig:as-h1}, where more stable interactions with the same nodes cause a slower growth).

\begin{figure}[htb]
    \centering
    \begin{subfigure}[t]{.48\textwidth}
    \includegraphics[width=\textwidth]{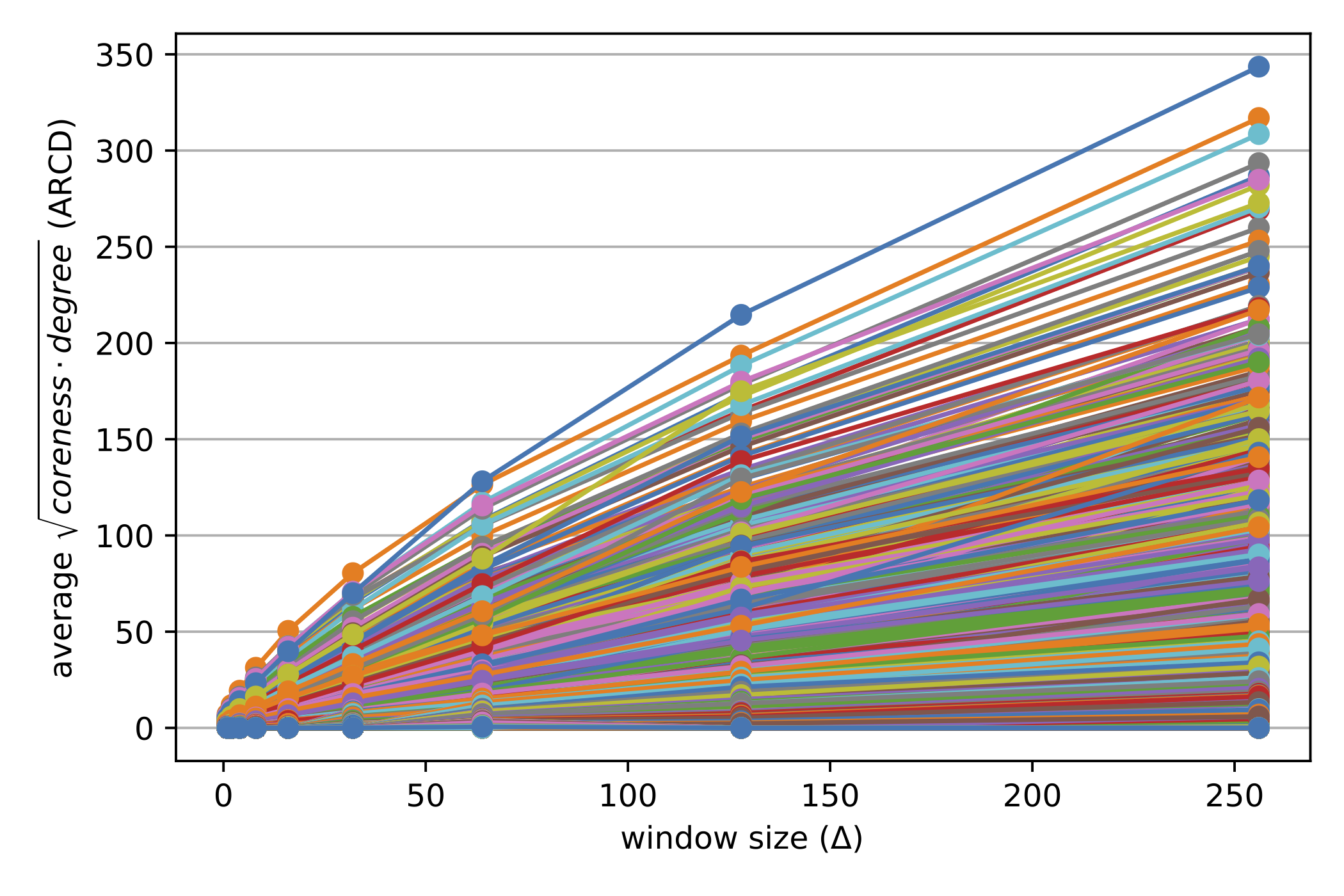}
    \caption{$h = 1$}
    \end{subfigure}
    \begin{subfigure}[t]{.48\textwidth}
    \includegraphics[width=\textwidth]{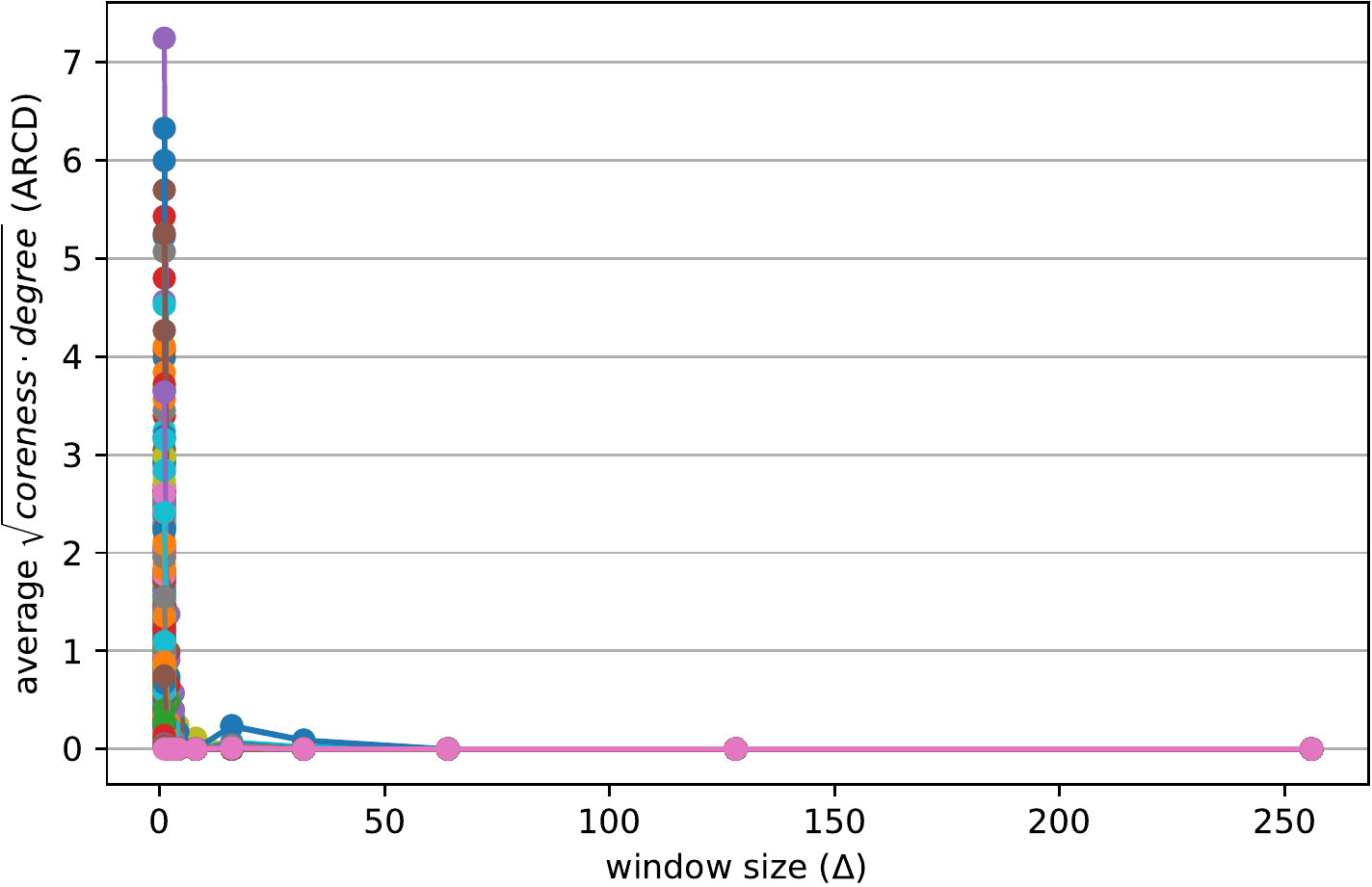}
    \caption{$h = \Delta$}
    \end{subfigure}
    \caption{Temporal Resilience plot for the Mathoverflow dataset for $h=1$ and $h = \Delta$.}
    \label{fig:avg_arcd_mathoverflow_union}
\end{figure}

\paragraph{Bitcoin Networks.}
Bitcoin-alpha and bitcoin-otc are ``who-trusts-whom'' network among Bitcoin users.
We can observe here a volatile behaviour similar to StackExchange networks: looking at the ARCD distribution in Figure~\ref{fig:arcd_bitcoin_intersection} we already see low maximum absolute values, that reach 0 very quickly as the window size increases. The ephemerality of communities is further underlined by the plot staying essentially the same even if we lower $h$ to $\Delta/2$.
However, setting $h=1$ (union), shown in Figure~\ref{fig:arcd_bitcoin_union}, tells us something interesting: here we can observe two classes of nodes for both bitcoin networks, reminiscent of AS-733; few nodes starkly above all others, akin to ``hubs'' in terms of connectivity, then most nodes in the middle on a continuous scale of higher to lower connectivity.
This analysis highlights a dynamic of the bitcoin trust networks that sees few very influential entities, while most people have several interactions over time but typically with new people, hence the absence of persistent communities.

\begin{figure}[htb]
    \centering
    \begin{subfigure}[t]{.48\columnwidth}
        \centering
        \includegraphics[width=\linewidth]{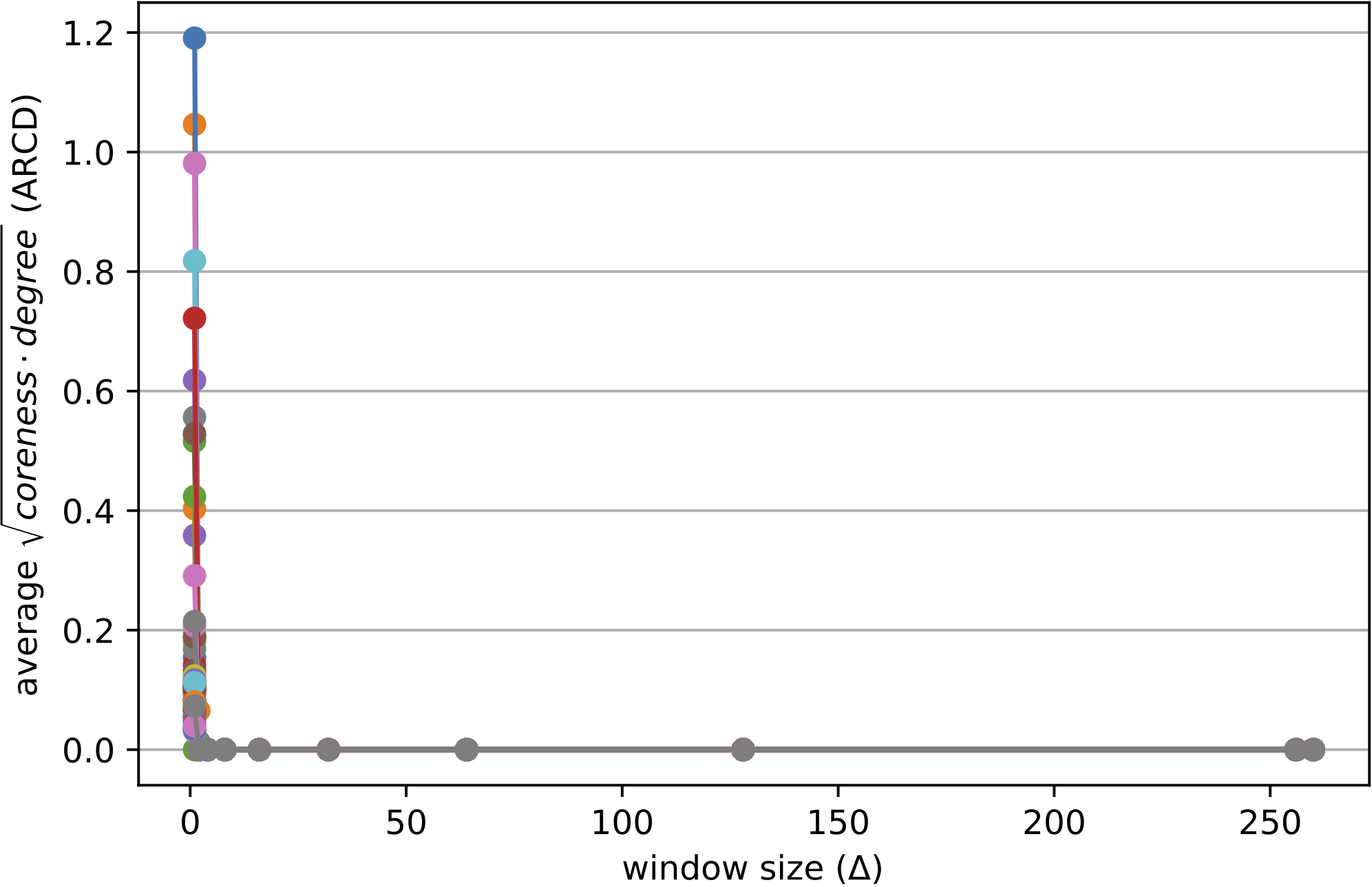}
        \caption{Bitcoin-otc}
    \end{subfigure}
    \begin{subfigure}[t]{.48\columnwidth}
        \centering
        \includegraphics[width=\linewidth]{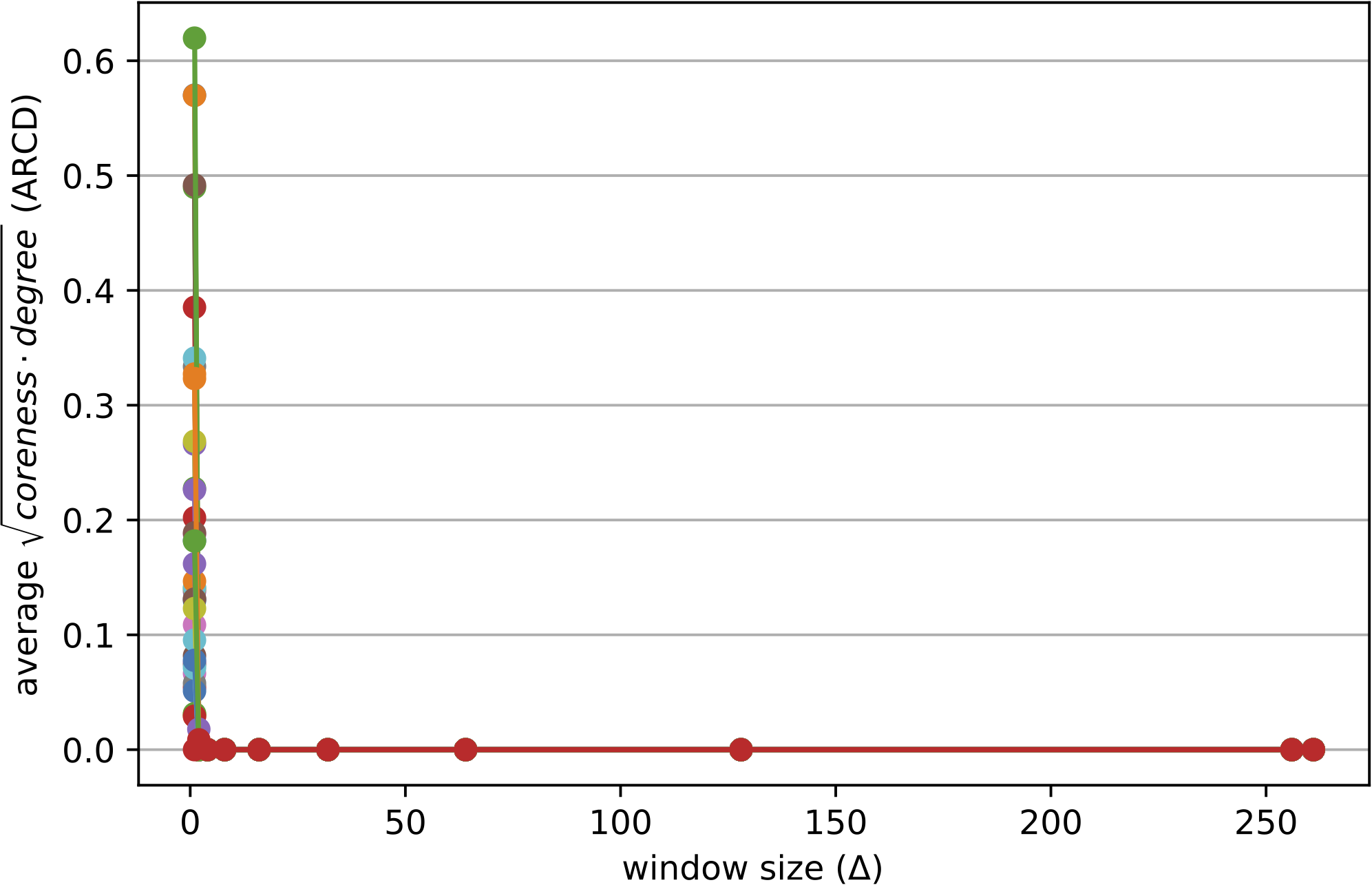}
        \caption{Bitcoin-alpha}
    \end{subfigure}
    \caption{Temporal Resilience plots for the bitcoin-otc and bitcoin-alpha datasets for $h = \Delta$ (intersection).}
    \label{fig:arcd_bitcoin_intersection}
\end{figure}

\begin{figure}[htb]
    \centering
    \begin{subfigure}[t]{.48\columnwidth}
        \centering
        \includegraphics[width=\linewidth]{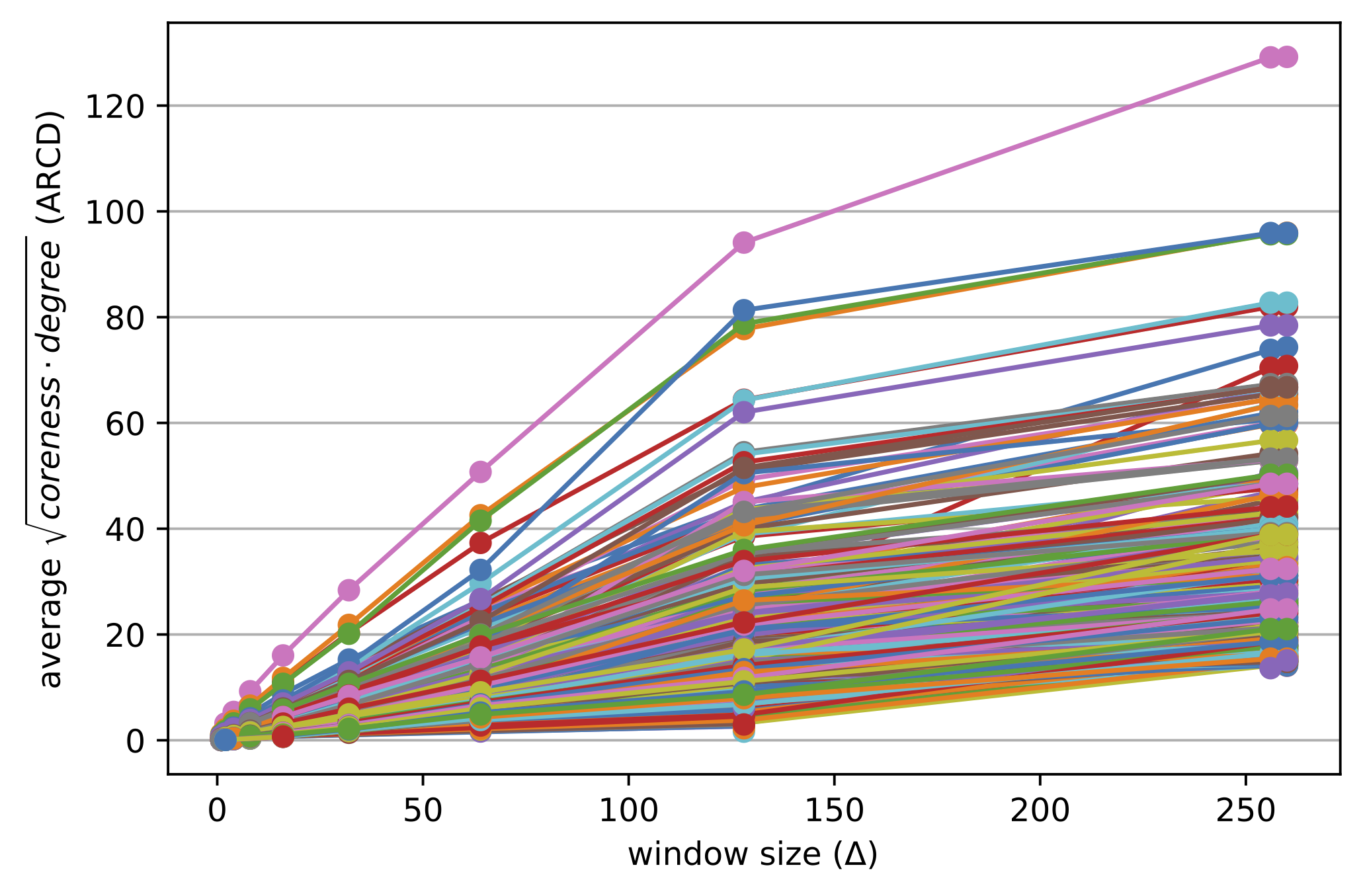}
        \caption{Bitcoin-otc}
    \end{subfigure}
    \begin{subfigure}[t]{.48\columnwidth}
        \centering
        \includegraphics[width=\linewidth]{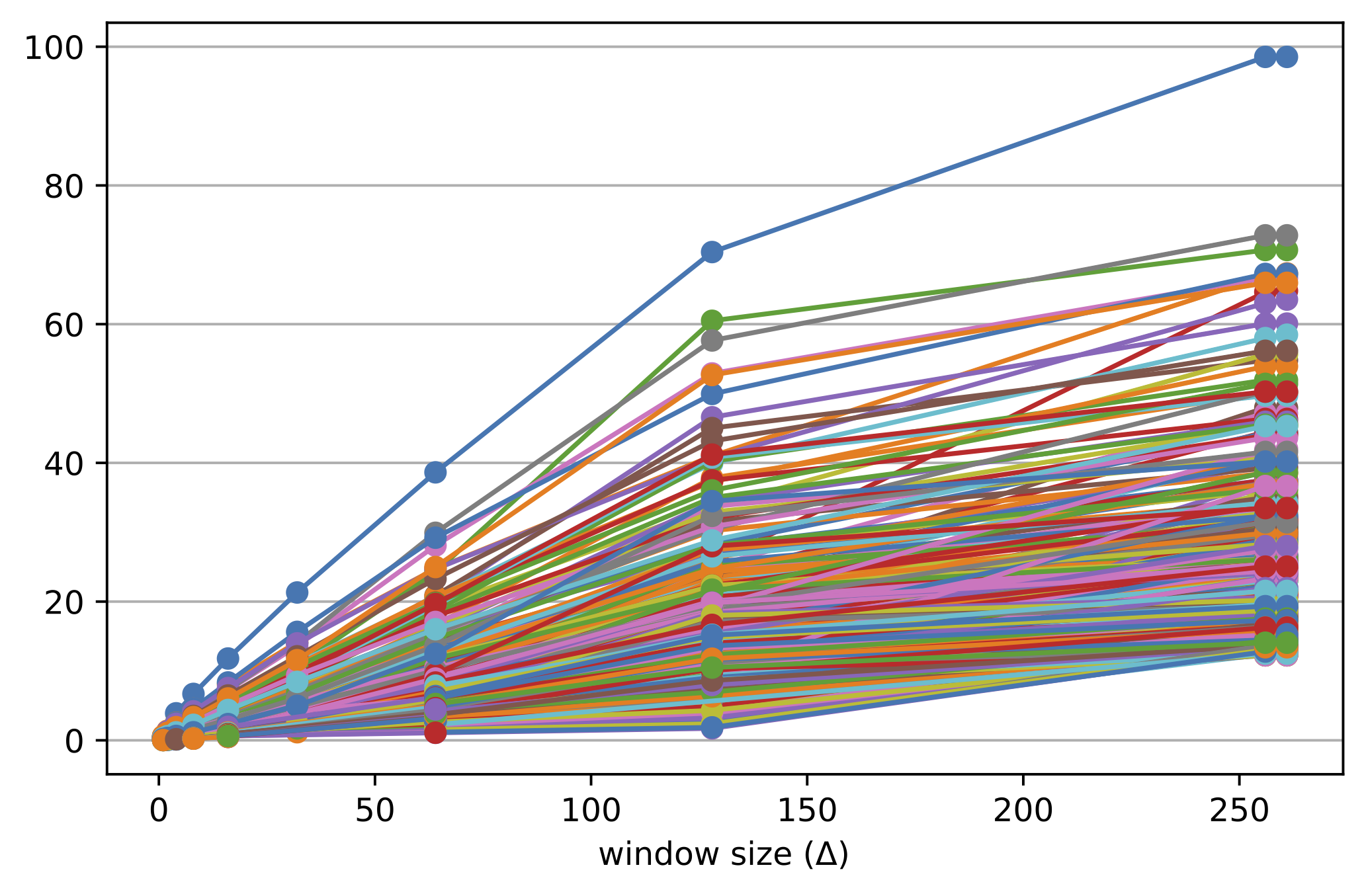}
        \caption{Bitcoin-alpha}
    \end{subfigure}
    \caption{Temporal Resilience plots for the bitcoin-otc and bitcoin-alpha datasets for $h = 1$ (union).}
    \label{fig:arcd_bitcoin_union}
\end{figure}

\paragraph{Email-EU-core.}

This network shows email exchanges between members of a large European research institution.

The temporal resilience plot, shown for $h=\Delta/2$ in Figure~\ref{fig:email_classes}, again uncovers interesting information: several nodes maintain high ARCD for the whole plot, thus forming a persistent community, but the others are not all immediately broken; it appears the network contains a mix of short-lasting, long-lasting and medium-lasting communities, as for each falling point in the graph we see a reasonably well-sized class of nodes. Although some nodes may simply correspond to employees on a short-term contract, this possibly suggests a dynamic structure of the research institute with frequent creation of projects of different size and duration.

\begin{figure}[htb]
    \centering
    \begin{subfigure}[b]{.48\textwidth}
        \includegraphics[width=\textwidth]{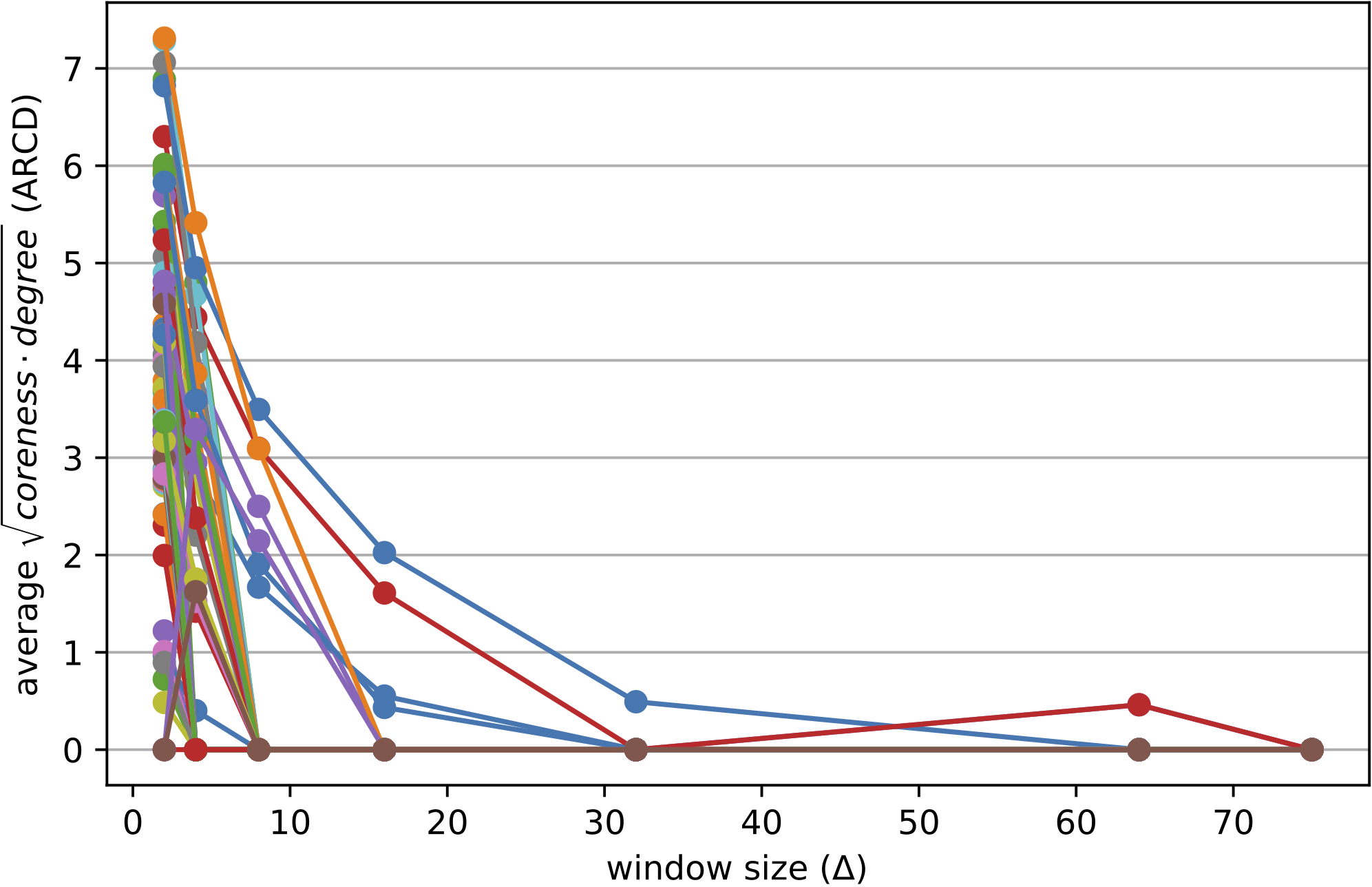}
        \caption{$h = \Delta$}
    \end{subfigure}
    \begin{subfigure}[b]{.48\textwidth}
        \includegraphics[width=\textwidth]{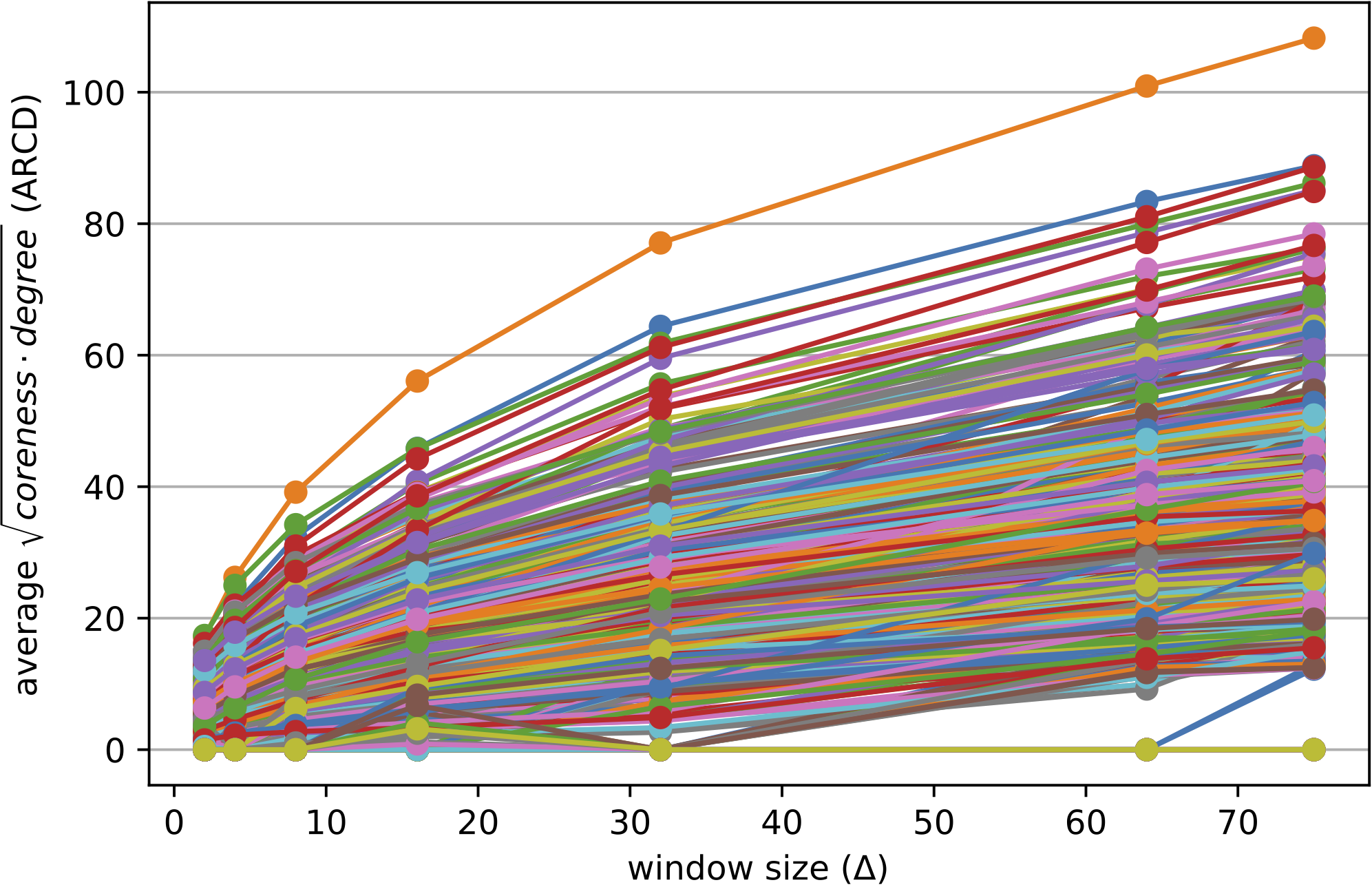}
        \caption{$h = 1$}
    \end{subfigure}
    \caption{Temporal Resilience plots for the Email-EU datasets for intersection and union operations. }
    \label{fig:enter-label}
\end{figure}

Again we observe how this trend could not be seen without careful consideration of temporal information (e.g., from Figure~\ref{fig:mirror_pattern_all}), and we remark the importance of the choice of $h$.

Finally, we remark again how the falling points in the temporal resilience plot can provide a crisp clustering into classes of higher and lower connectivity nodes, marked by color in Figure~\ref{fig:email_classes} similarly to what we did for AS-733 (see Figure~\ref{fig:as_733_classes}).
\begin{figure}[htb]
    \centering
    \includegraphics[width=.85\columnwidth]{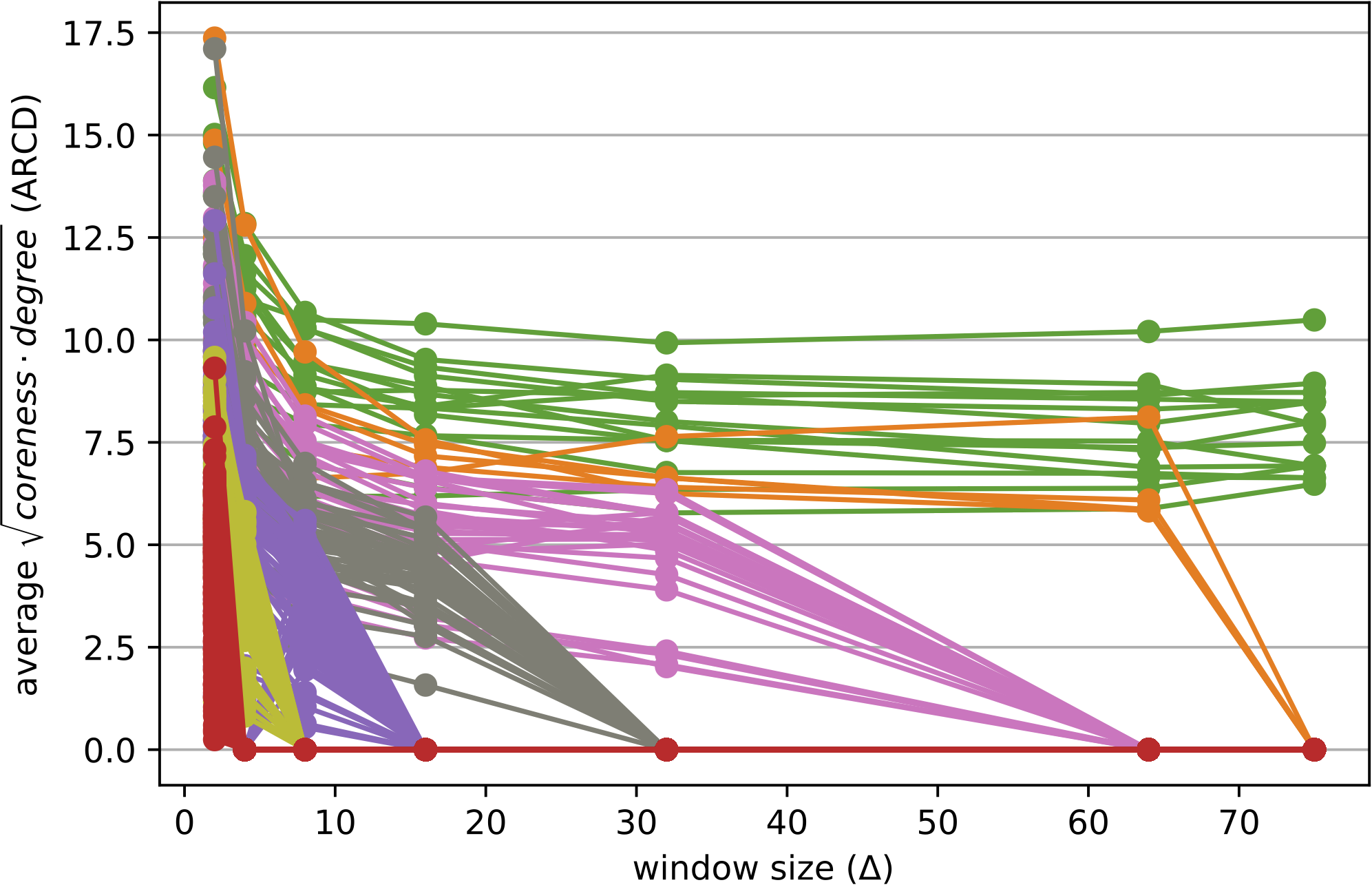}
    \caption{Coloring of the nodes on the Email-EU dataset according to their connectivity class given by the falling point in the corresponding Temporal Resilience plot ($h = \Delta / 2 $).}
    \label{fig:email_classes}
\end{figure}
\paragraph{Reddit Hyperlinks.} 
Nodes in this graph are ``sub-reddits'' and edges are links to one posted on another, collected from January 2014 to April 2017.
Looking at the Temporal Resilience plot for $h=\Delta$ (Figure~\ref{fig:reddit_plots}), the graph resembles the Bitcoin and StackExchange networks for $h=\Delta$, with quickly dwindling ARCD values. On the other hand, for $h=\Delta/2$ we can see similarities with the Email-EU-core graph instead.
We observe that there are medium- and long-lasting communities here, with interactions that are consistent in time; however, the communities are not so tightly knit, i.e., they are not observed if we require frequent interactions, e.g., in every window, which intuitively makes sense when considering that small sub-reddits may not receive comments and links everyday.
Therefore, we can conclude that this graph is somewhat similar in structure to StackExchange networks, but more densely interconnected, and with ``weak but persistent'' communities.

\subsection{Key takeaways}

\begin{figure}[bt]
    \centering
    \begin{subfigure}[t]{.45\textwidth}
        \centering
        \includegraphics[width=\textwidth]{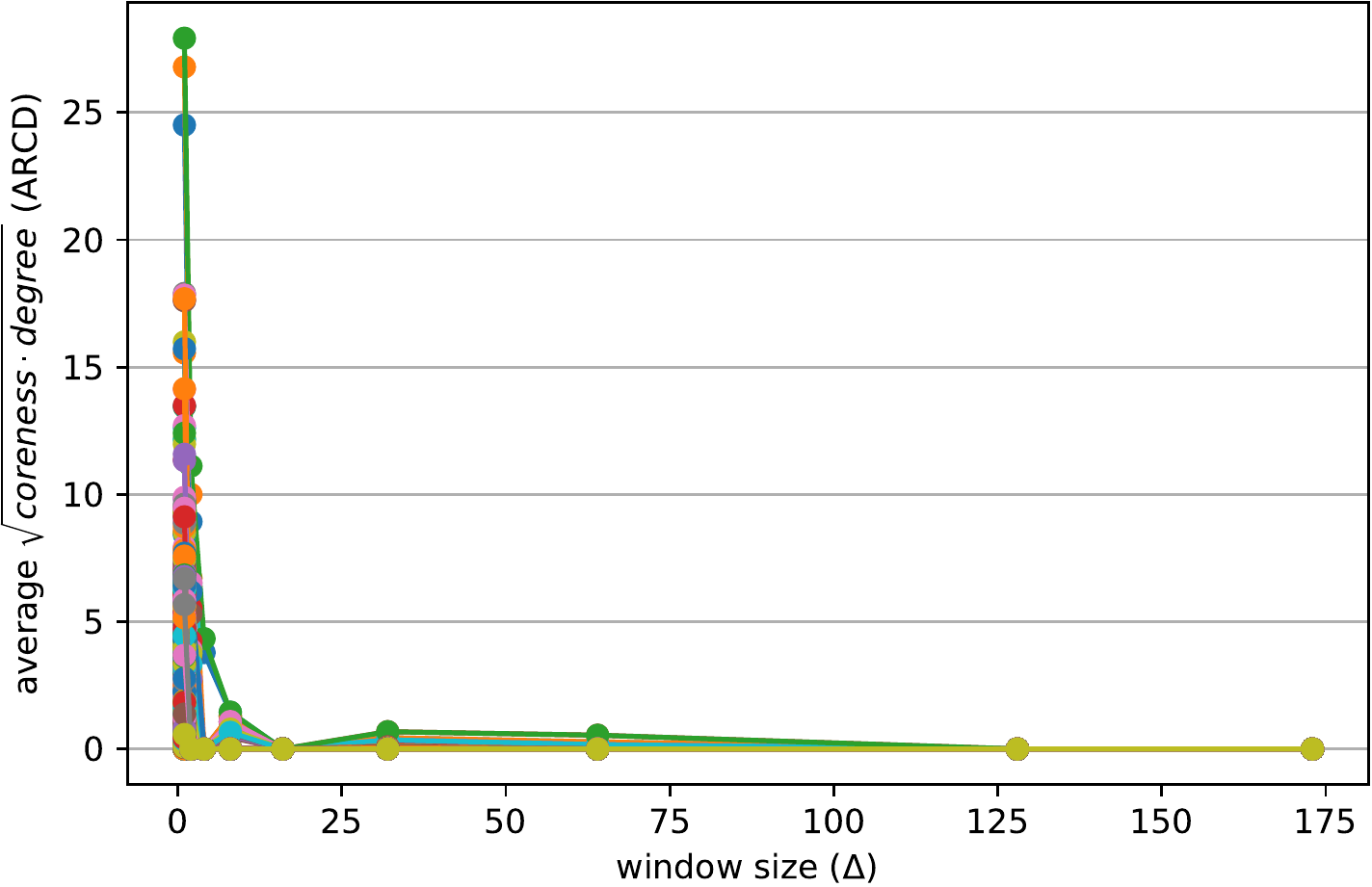}
        \caption{$h = \Delta$}
    \end{subfigure}
    \begin{subfigure}[t]{.45\textwidth}
        \centering
        \includegraphics[width=\textwidth]{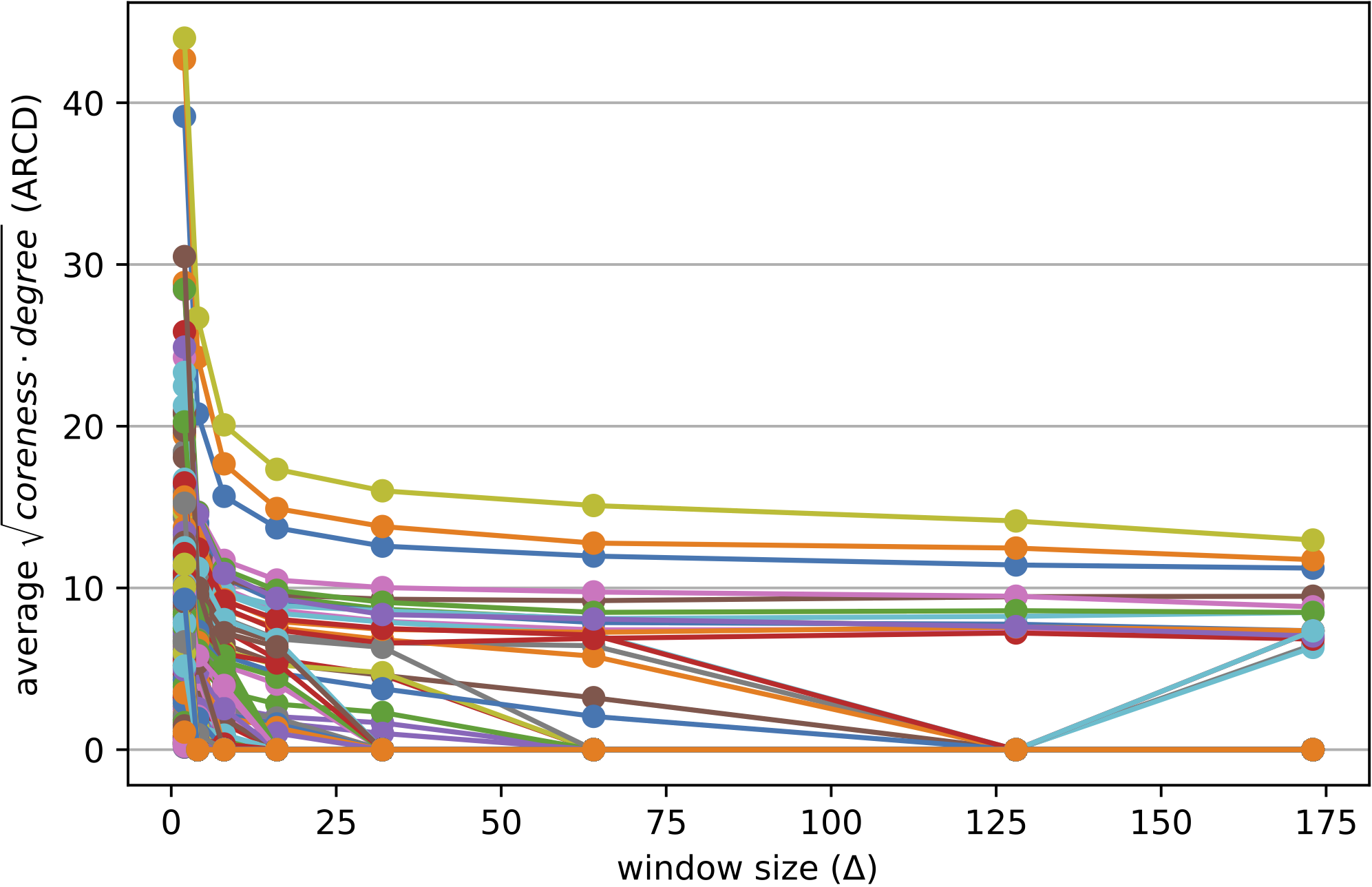}
        \caption{$h = \Delta/2$}
    \end{subfigure}
    \vfill
    \begin{subfigure}[t]{.95\textwidth}
        \centering 
        \includegraphics[width=.65\textwidth]{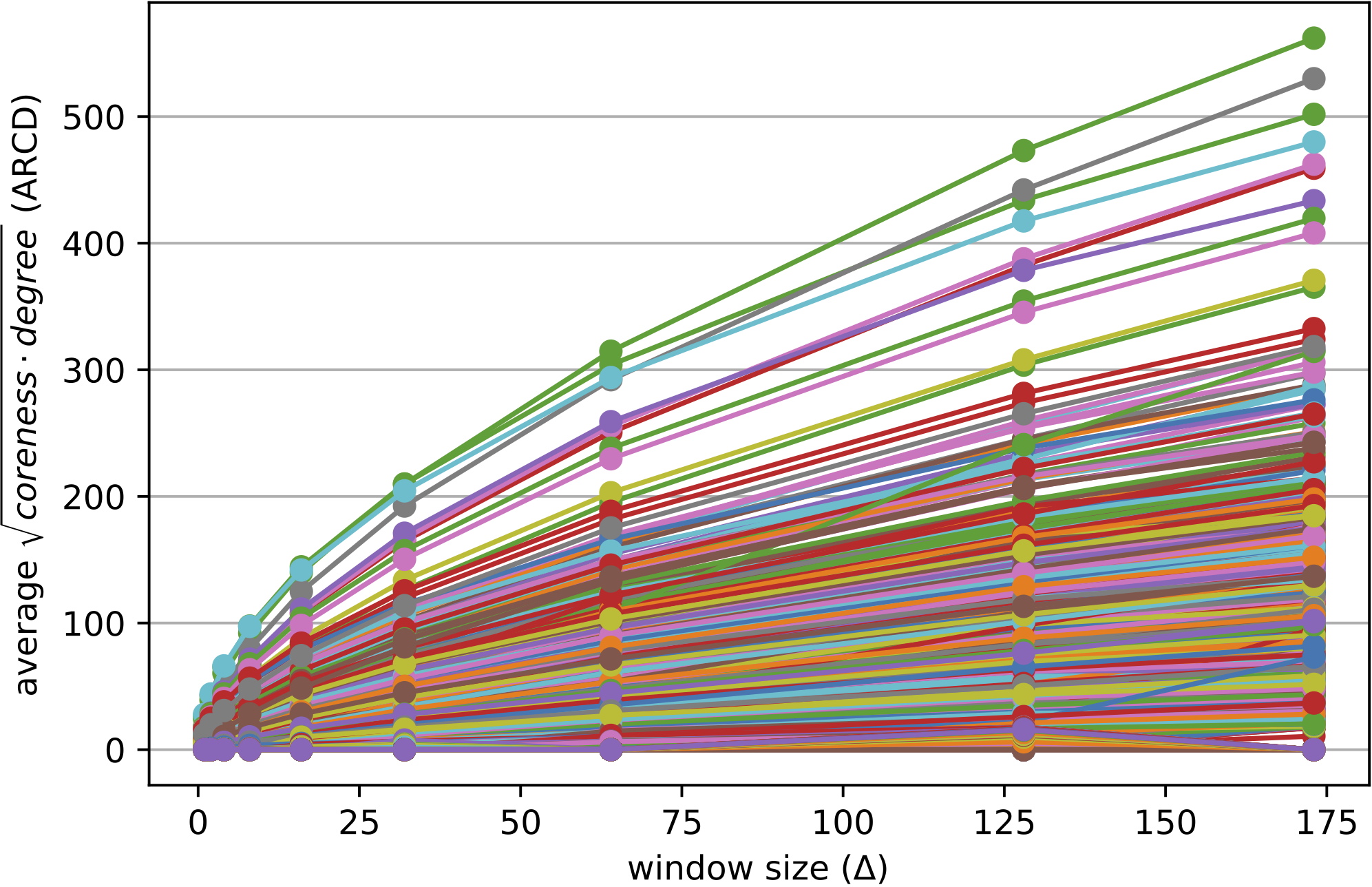}
        \caption{$h = 1$}
    \end{subfigure}
    \caption{Temporal Resilience plots for the RedditHyperlinks dataset.}
    \label{fig:reddit_plots}
\end{figure}%

Our experimental analysis showed that using current definitions of temporal $k$-cores, supported by node degree, is a valid method to uncover trends and dynamics of temporal graphs.
Moreover, this kind of analysis is able to provide a \changed{preliminary} classification of the nodes for further inspection.
Another key aspect is that we are actually able to highlight some aspects of the communities' strenght, resilience and character, as our method was able to visualize some behaviors expected from the domain knowledge of each dataset, such as ephemerality of community structure in Q\&A sites or reliability of backbone routers in AS networks.


For these reasons we believe that temporal $k$-cores are able to highlight information that would not otherwise be captured.

  \chapter{Dense Subgraphs via Dynamic Graph Orientation}\label{chap:denmark}

So far we have seen techniques and results about the extraction and usage of predetermined graph patterns (or motifs) in static and temporal graphs, i.e., we first fixed the class of subgraph to find before proceeding with the community analysis.
Now, we explore a different approach: instead of fixing \emph{apriori} the kind of subgraph to extract, we want to find any subgraph $H$ that maximizes a given measure, in our case the \emph{density} $\rho(G) = |E|/|V|$.
The densest subgraph problem has been heavily studied in the literature, as it obviously serves as the foundation for community detection -- and for this thesis. 
Thus, we point the interested reader to two more surveys on the topic, which can be found in \cite{lanciano2023survey, fang2022survey_densest}.

In this Chapter we exploit dynamic graphs for obtaining an out-orientation that minimizes the maximum outdegree starting from an undirected, simple graph $G$.
This will, in turn, allow us to construct a sequence of decreasingly dense subgraphs that can be later used for our community detection goals, and in particular the first subgraph obtained will indeed the densest subgraph contained in $G$.
All of this comes approximately, as the way in which we obtain the density measure of the graph is a $(1+\epsilon)$ approximation of the exact density.
Of course, in exchange we get very fast time bounds for the update operations that we are going to perform on the dynamic data structures that we need.

The methods and results presented in this Chapter are due to Chekuri et al. \cite{chekuri2023adaptive}, and our contribution lies in, to the best of our knowledge, the first public \textsc{C++} implementation of such strategies, available on GitHub\footnote{\url{https://github.com/DavideR95/densest-subgraph}}.
Similar works in the state of the art for this approach include \cite{16-danimarca, 31-danimarca, 45-danimarca, Berglin2020_8-danimarca, 34-danimarca}.
\section{Background}
The general statement of the problem that we tackle is the following.
\begin{de}[Densest Subgraph Problem]
    Given a simple, undirected, graph $G = (V, E)$, find a subgraph $H$ of $G$ such that $H = $ arg $\max_{H \subseteq G} \rho(H)$.
\end{de}

Throughout this dissertation we viewed this problem from the combinatorial perspective (e.g. by finding pre-defined subgraphs such as $k$-graphlets or $k$-cores), but in the literature there exist several other formulations.
In particular, in this chapter, we focus on the linear programming formalization of the densest subgraph, appeared for the first time in \cite{densest_formalization_LP} and defined as follows:

\begin{de}[Densest Subgraph Linear Program (\texttt{DS}) \cite{densest_formalization_LP}]
    \begin{align*}
        \text{maximize} & \sum\limits_{\{u, v\} \in E}^{} y_{u,v} \qquad\text{s.t} \\
        x_u, x_v & \geq y_{u, v} \qquad \forall u, v \in V, \{u, v\} \in E \\
        \sum\limits_{v \in V}^{} x_v & \leq 1 \\
        x, y & \geq 0
    \end{align*}
\end{de}

Here, $\omega : E \rightarrow \mathbb{R}$ is a function that assigns a \emph{weight} to each edge.
By the duality property of linear programs \cite{matouvsek2007understanding}, we obtain an interesting problem that is equivalent to \texttt{DS}:

\begin{de}[Fractional Orientation Problem (\texttt{FO})]
    \begin{align*}
        \text{minimize}\ \rho & \qquad \text{s.t} \\
        \omega\left((u, v)\right) + \omega\left((v, u)\right) &= 1 \qquad \forall \{u, v\} \in E \\
        \rho \geq d^+(u) &= \sum\limits_{v \in V}^{} \omega((u, v)) \qquad \forall u \in V \\
        \rho, \omega\left((u, v)\right), \omega\left((v, u)\right) \geq 0
    \end{align*}
\end{de}

A \emph{fractional} orientation assigns to every edge $(u, v)$ a weight $\omega((u, v))$ such that $\omega((u, v)) + \omega((v, u)) = 1$.
An \emph{orientation} is a fractional orientation where $\omega((u, v))$ is either 0 or 1. 
In what follows we may assume that an orientation includes the edge $(u, v)$ whenever $\omega((u, v)) = 1$. 
We will not enter the details of these two linear programs, as we only need to notice that the duality between the two problems allows us to get an approximation of the maximum subgraph density, by directing the edges of $G$ in a suitable way that minimizes $\rho$.
Therefore, in what follows, we will discuss a way for maintaining a fractional orientation of $G$.
Obviously, it is not possible to direct an edge ``fractionally'', and we will adopt a graph model in which the edges are duplicated an appropriate amount of times so to mimic the the theoretical fractional orientation.
Note, however, that from a fractional orientation we can obtain an orientation by ``rounding'' every edge by putting $\omega((u, v)) = 1$ whenever $\omega((u, v)) \geq \omega((v, u))$.

\subsection{Outdegree-lowering paths}

The strategy adopted by Chekuri et al. \cite{chekuri2023adaptive} to minimize the maximum outdegree is to exploit this simple observation:
\begin{obs}
    If we take a directed path from a high-outdegree vertex to a low-outdegree vertex and reverse it, we only lower the high outdegree of the first vertex and add one outdegree to the low-outdegree vertex.
\end{obs}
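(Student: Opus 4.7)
The plan is to prove this observation by a straightforward case analysis on the vertices along the reversed path, tracking the net change in outdegree at each one. Let $P = v_0 \to v_1 \to \cdots \to v_k$ be the directed path in question, where $v_0 = u$ is the high-outdegree source and $v_k = w$ is the low-outdegree sink. The reversal operation replaces each arc $(v_i, v_{i+1})$ with its reverse $(v_{i+1}, v_i)$, and I will show that the outdegree change at each vertex $v_i$ is determined entirely by how many of its incident path-arcs are outgoing before and after.

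First, I would handle the intermediate vertices $v_i$ for $0 < i < k$. Before reversal, $v_i$ has exactly one outgoing path-arc, namely $(v_i, v_{i+1})$, and one incoming path-arc, namely $(v_{i-1}, v_i)$. After reversal, the roles swap: $(v_i, v_{i-1})$ is now outgoing and $(v_{i+1}, v_i)$ is now incoming. Hence $v_i$ loses one outgoing arc and gains one, for a net outdegree change of zero. All arcs not on $P$ are untouched, so the total outdegree of $v_i$ is preserved.

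Next, I would handle the two endpoints separately. At $v_0 = u$, the only incident path-arc is $(v_0, v_1)$, which is outgoing before reversal and incoming after, so $d^+(u)$ decreases by exactly one. Symmetrically, at $v_k = w$, the only incident path-arc is $(v_{k-1}, v_k)$, which is incoming before reversal and outgoing after, so $d^+(w)$ increases by exactly one.

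Combining these three cases yields the observation: reversing $P$ decreases $d^+(u)$ by one, increases $d^+(w)$ by one, and leaves every other outdegree unchanged. The argument is elementary and I do not anticipate any real obstacle; the only subtlety worth noting is that the path must be \emph{simple} (no repeated vertices) for the intermediate-vertex case to work cleanly, since a repeated vertex would accumulate multiple $\pm 1$ contributions. In the dynamic-orientation context this is not restrictive, as one can always shortcut a non-simple outdegree-lowering walk to a simple path with the same endpoints while preserving the reversal's effect on $u$ and $w$.
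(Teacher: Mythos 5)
Your proof is correct. The paper states this as an unproven observation, as it is elementary; your case analysis on path vertices (the net $\pm 1$ at the two endpoints, zero at the interior) is exactly the argument one would supply, and your remark on the need for a simple path is a valid and welcome precision.
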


This observation is key: in fact, by applying it to the vertices of maximum outdegree, we effectively decrease the maximum outdegree of the whole graph.
But how long could such paths be? Theoretically, they can be up to $O(|V|)$ long, but this is where approximation comes in.
In particular, the authors of \cite{chekuri2023adaptive} show that maintaining a local, multiplicative, condition of 
\begin{align}
    d^+(u) \leq (1+a)\cdot d^+(v) \label{eq:checkuri}
\end{align}
for any pair of vertices $u, v$ and a chosen value for $a$ allows us to both achieve short paths of length $O(\epsilon^{-2}\log n \log \rho)$ while also maintaining a good approximation of the maximum subgraph density (and in turn, of the densest subgraph).
This local condition is inspired by previous works in the literature such as \cite{34-danimarca, 45-danimarca}.

\section{Dynamic Graphs to maintain orientations}
We will make use of dynamic graphs with their \texttt{Insert} and \texttt{Remove} utility functions to organize and maintain an orientation that meets the requirement of Equation \ref{eq:checkuri}.
In particular, we will insert, one at a time, $b \in \mathbb{N}^+$ copies of each edge of $G$ into a dynamic graph $G^b$ where we maintain an orientation called $\directedG{}^b$ in a way such that the following invariant holds. 

\begin{de}[{\cite[Invariant $\theta$]{chekuri2023adaptive}}]\label{def:invariant_zero}
    For every edge $(u, v)$ in $\overrightarrow{G}^b$, $d^+(u) \leq (1 + \eta \cdot b^{-1}) \cdot d^+(v) + 2\theta$, where $\eta \in \mathbb{R}$.
\end{de}

The authors propose two possible values for $\theta = 0, 1$. 
Here we only show results for $\theta = 0$, and indeed our implementation focuses on this specific value of $\theta$, which will also achieve better approximation for the density value $\rho$.
However, it can happen that a graph may not admit an orientation such that the aforementioned invariant holds, hence the need of a slight relaxation that does not alter the validity of Invariant $\theta$.

\begin{de}[{\cite[Invariant $\theta'$]{chekuri2023adaptive}}]
    For every edge $(u, v)$ in $\overrightarrow{G}^b$,
    
    \begin{align}
        d^+(u) \leq \max \left\{ (1 + \eta \cdot b^{-1}) \cdot d^+(v) + 2\theta , \left\lfloor\frac{b}{2}\right\rfloor \right\}
    \end{align}
    where $\eta \in \mathbb{R}$.
\end{de}
\begin{lemma}[{\cite[Lemma 2.1]{chekuri2023adaptive}}]
    Let $G^b$ be a graph obtained by duplicating each $e \in E$ exactly $b$ times.
    For all $\theta \geq 0$, any orientation $\directedG{}^b$ of $G^b$ that satisfies Invariant $\theta'$ also satisfies Invariant $\theta$.
\end{lemma}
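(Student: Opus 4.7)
The plan is to argue by cases on which of the two terms realizes the maximum in Invariant~$\theta'$. If for a given edge $(u,v) \in \overrightarrow{G}^b$ the maximum equals $(1+\eta b^{-1}) d^+(v) + 2\theta$, then Invariant $\theta$ trivially holds. The interesting case is when the maximum equals $\lfloor b/2 \rfloor$; I need to show that even there, the bound $d^+(u) \le (1+\eta b^{-1}) d^+(v) + 2\theta$ still holds.

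The key combinatorial observation is the following. Since every edge of $G$ is duplicated exactly $b$ times in $G^b$, the $b$ copies of $\{u,v\}$ are partitioned into $a$ copies oriented $u \to v$ and $b - a$ copies oriented $v \to u$, for some $0 \le a \le b$. Because $(u,v)$ appears in $\overrightarrow{G}^b$, we have $a \ge 1$. By definition of outdegree, $d^+(u) \ge a$ and $d^+(v) \ge b - a$. Now suppose we are in the hard case where Invariant $\theta'$ gives $d^+(u) \le \lfloor b/2 \rfloor$; in particular $a \le \lfloor b/2 \rfloor$, so that $b - a \ge \lceil b/2 \rceil$, and therefore
\[
d^+(v) \;\ge\; \lceil b/2 \rceil \;\ge\; \lfloor b/2 \rfloor \;\ge\; d^+(u).
\]

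Since $\eta \ge 0$ and $\theta \ge 0$ (both are non-negative parameters of the invariant), this immediately yields
\[
d^+(u) \;\le\; d^+(v) \;\le\; (1 + \eta \cdot b^{-1}) \, d^+(v) + 2\theta,
\]
which is exactly Invariant $\theta$ for the edge $(u,v)$. Combining the two cases completes the proof.

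I do not expect any real obstacle here: the statement is essentially a consistency check between the two invariants, and the whole argument rests on the elementary counting identity $a + (b-a) = b$ applied to the copies of $\{u,v\}$ together with non-negativity of $\eta$ and $\theta$. The only subtle point worth flagging is that the argument genuinely uses the multi-edge structure of $G^b$ (i.e., the fact that edges are duplicated $b$ times), so the proof does not generalize naively to a setting without edge duplication.
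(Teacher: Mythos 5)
Your proof is correct, and it is the natural argument one expects behind this lemma: when Invariant~$\theta'$ is satisfied only through the $\lfloor b/2\rfloor$ term, you use the $b$-fold duplication to observe that if $a$ of the $b$ copies of $\{u,v\}$ are oriented $u\to v$, then $a \le d^+(u) \le \lfloor b/2\rfloor$, so $d^+(v) \ge b-a \ge \lceil b/2\rceil \ge \lfloor b/2\rfloor \ge d^+(u)$, and Invariant~$\theta$ follows since $\eta b^{-1} \ge 0$ and $2\theta \ge 0$. Note that the thesis you are reading only \emph{cites} this lemma (it is Lemma~2.1 of Chekuri et al.) and does not reproduce a proof, so there is no in-paper argument to compare against; your argument is nevertheless sound and matches the approach one would expect from the cited source, and you correctly flag that it genuinely relies on the $b$-fold duplication of edges.
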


Now we have all the ingredients needed for the main result, a structural theorem that establish the connection between maintaining Invariant $\theta$ in $\directedG{}^b$ and the estimate of the density of $G$, in particular the $(1+\epsilon)$-approximation of $\rho(G)$.

\begin{theorem}[{\cite[Theorem 3.1]{chekuri2023adaptive}}]
    Let $G$ be a graph and let $G^b$ be $G$ with every edge duplicated exactly $b$ times. Let $\rho_b$ be the maximum subgraph density of $G^b$. Let $\overrightarrow{G}^b$ be any orientation of $G^b$ having the following invariant: for some $c \geq 0$, every directed edge $(u, v)$ satisfies $d^+(u) \leq (1 + \eta \cdot b^{-1}) \cdot d^+(v) + c$.

    Then, for any $\gamma > 0$ there exists a value $k_{max} \leq \log_{1+\gamma} n$ for which:
    \[
        (1 + \eta \cdot b^{-1})^{-k_{max}} \Delta(\overrightarrow{G}^b) \leq (1+\gamma) \rho_b + c(\eta ^ {-1}\cdot b + 1).
    \]
\end{theorem}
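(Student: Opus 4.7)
Set $\alpha = 1 + \eta \cdot b^{-1}$ for notational convenience, let $\Delta = \Delta(\overrightarrow{G}^b)$, and let $v_0 \in V$ be a vertex achieving this maximum outdegree. My plan is to exhibit an explicit subgraph of $G^b$ whose density is large enough to lower-bound $\rho_b$ as desired. To do this, I would grow a sequence of vertex sets $S_0 \subseteq S_1 \subseteq \ldots$ by setting $S_0 = \{v_0\}$ and $S_{k+1} = S_k \cup N^+(S_k)$, where $N^+(S_k)$ denotes the out-neighbors in $\overrightarrow{G}^b$ of vertices in $S_k$.

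The first key step is to unroll the hypothesis $d^+(u) \leq \alpha\, d^+(v) + c$ along directed paths emanating from $v_0$. By induction on $k$, every vertex $v \in S_k$ satisfies $d^+(v) \geq \delta_k$, where $\delta_0 = \Delta$ and $\delta_{k+1} = (\delta_k - c)/\alpha$; solving the recurrence and using $\alpha - 1 = \eta/b$ yields
\[
\delta_k \;\geq\; \alpha^{-k} \Delta \;-\; c \cdot \frac{1 - \alpha^{-k}}{\alpha - 1} \;\geq\; \alpha^{-k}\Delta \;-\; c\,\eta^{-1} b.
\]

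The second key step is a pigeonhole/growth argument on $|S_k|$. Since $|S_k| \leq n$, it cannot happen that $|S_{k+1}| > (1+\gamma)|S_k|$ for all $k \leq \log_{1+\gamma} n$, otherwise we would have $|S_{k+1}| > (1+\gamma)^{k+1}$ eventually exceeding $n$. Let $k_{max}$ be the smallest index in $\{0,1,\ldots,\lfloor \log_{1+\gamma} n \rfloor\}$ for which $|S_{k_{max}+1}| \leq (1+\gamma)\,|S_{k_{max}}|$; in particular $k_{max} \leq \log_{1+\gamma} n$. I would then bound the density of the induced subgraph $G^b[S_{k_{max}+1}]$: every out-edge from a vertex of $S_{k_{max}}$ lands in $S_{k_{max}+1}$ by construction, so $G^b[S_{k_{max}+1}]$ contains at least $\sum_{v \in S_{k_{max}}} d^+(v) \geq \delta_{k_{max}} \cdot |S_{k_{max}}|$ edges (counted with the multiplicities induced by the $b$-fold duplication), giving
\[
\rho_b \;\geq\; \rho\!\left(G^b[S_{k_{max}+1}]\right) \;\geq\; \frac{\delta_{k_{max}}\,|S_{k_{max}}|}{|S_{k_{max}+1}|} \;\geq\; \frac{\delta_{k_{max}}}{1+\gamma}.
\]

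Combining the two steps, $(1+\gamma)\rho_b \geq \delta_{k_{max}} \geq \alpha^{-k_{max}}\Delta - c\,\eta^{-1}b$, and rearranging yields the theorem (the harmless extra additive $c$ in the stated $c(\eta^{-1}b+1)$ absorbs the edge cases where $\delta_{k_{max}}$ is negative or where the level-set contains a vertex with no out-neighbor and the recursion truncates early). The main obstacle I anticipate is the careful bookkeeping of multiplicities in $G^b$ versus the fractional interpretation, and making sure the indexing choice of $k_{max}$ (smallest vs.\ largest violating index) is tight enough to produce exactly the factor $\alpha^{-k_{max}}$ on the left-hand side rather than $\alpha^{-(k_{max}\pm 1)}$; the growth argument on $|S_k|$ itself is standard and should not cause trouble.
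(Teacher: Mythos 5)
Your argument is correct, and it is the standard level-set/pigeonhole proof used in Chekuri et al.\ (2023) — the thesis itself only cites that theorem without reproducing the proof, so there is nothing internal to compare against. Two small remarks: your bound is in fact slightly tighter than stated (you obtain additive error $c\eta^{-1}b$ rather than $c(\eta^{-1}b+1)$), which is harmless; and the ``edge cases'' you flag at the end are already covered by the observation that when $\delta_{k_{\max}}\le 0$ the density lower bound becomes vacuous while the final inequality still holds because $\rho_b\ge 0$ and $\delta_k\ge\alpha^{-k}\Delta-c\eta^{-1}b$ unconditionally.
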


This Theorem is the main, and the most general, result of \cite{chekuri2023adaptive}. 
The following corollary specifies the consequences of maintaining Invariant $\theta = 0$.

\begin{corollary}[{\cite[Corollary 3.1]{chekuri2023adaptive}}]
    Denote by $\rho$ the density of $G$.
    For any $\eta$ and $b$ such that $\eta b^{-1} = O(1/\log n)$ we have that Invariant $\theta = 0$ for the graph $G^b$ implies $\Delta(\overrightarrow{G}^b) = O(b\rho)$, thus $\Delta(\overrightarrow{G}) = O(\rho)$.
\end{corollary}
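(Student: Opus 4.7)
The plan is to apply the structural theorem directly with the parameters dictated by Invariant $\theta = 0$, and then control the multiplicative blow-up factor using the relation $\eta b^{-1} = O(1/\log n)$. First, I would observe that Invariant $\theta = 0$ is exactly the hypothesis of the theorem with the constant $c = 0$, since the invariant reduces to $d^+(u) \leq (1 + \eta \cdot b^{-1}) d^+(v)$ for every directed edge. Plugging $c = 0$ into the theorem's conclusion immediately eliminates the additive term $c(\eta^{-1} b + 1)$, leaving
\[
    \Delta(\overrightarrow{G}^b) \leq (1+\gamma)\,(1+\eta b^{-1})^{k_{max}} \rho_b
\]
for some $k_{max} \leq \log_{1+\gamma} n$. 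Choosing $\gamma$ as a positive constant (say $\gamma = 1$) makes $(1+\gamma)$ a constant factor and simultaneously forces $k_{max} = O(\log n)$.

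The key step I anticipate as the main obstacle is bounding the multiplicative factor $(1+\eta b^{-1})^{k_{max}}$ by a constant. This is where the assumption $\eta b^{-1} = O(1/\log n)$ becomes essential: writing $\eta b^{-1} \leq C/\log n$ for some constant $C$, I would invoke the standard inequality $(1+x/m)^m \leq e^x$ with $m = \Theta(\log n)$ and $x = O(1)$ to conclude $(1+\eta b^{-1})^{k_{max}} \leq e^{O(1)} = O(1)$. Combining this with the previous display yields $\Delta(\overrightarrow{G}^b) = O(\rho_b)$.

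Next, I need to relate the density $\rho_b$ of the $b$-duplicated graph $G^b$ to the density $\rho$ of the original graph $G$. Since $G^b$ is obtained by replacing each edge of $G$ with $b$ parallel copies, for every subgraph $H \subseteq G$ the corresponding subgraph in $G^b$ has the same vertex set but exactly $b|E(H)|$ edges; hence $\rho(H^b) = b \cdot \rho(H)$, and taking the maximum over $H$ gives $\rho_b = b\rho$. Substituting back yields the first claim $\Delta(\overrightarrow{G}^b) = O(b\rho)$.

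Finally, to obtain $\Delta(\overrightarrow{G}) = O(\rho)$, I would interpret $\overrightarrow{G}$ as the fractional orientation induced by $\overrightarrow{G}^b$, where each edge $\{u,v\}$ has weight $\omega((u,v)) = |\{(u,v) \in \overrightarrow{G}^b\}|/b$, so that $\omega((u,v)) + \omega((v,u)) = 1$. Under this identification, outdegrees in $\overrightarrow{G}$ are exactly the outdegrees in $\overrightarrow{G}^b$ divided by $b$, and the bound $\Delta(\overrightarrow{G}^b) = O(b\rho)$ descends to $\Delta(\overrightarrow{G}) = O(\rho)$. The only delicate point in the whole argument is matching the constants so that the constraint $\eta b^{-1} = O(1/\log n)$ is strong enough to cancel the exponential growth induced by $k_{max}$ iterations; everything else is a direct specialization of the main theorem.
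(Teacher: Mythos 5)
Your proof is correct. The paper itself does not supply a proof of this corollary (it cites it from Chekuri et al.\ alongside Theorem 3.1), so there is no in-paper argument to compare against; the natural derivation is exactly what you gave. You correctly identify that Invariant $\theta = 0$ instantiates the hypothesis of Theorem 3.1 with $c = 0$, killing the additive term; you fix $\gamma$ to a constant so that $(1+\gamma)$ is $O(1)$ and $k_{max} \le \log_{1+\gamma} n = O(\log n)$; you correctly bound $(1+\eta b^{-1})^{k_{max}}$ by a constant via $(1+y)^K \le e^{yK}$ with $yK = O(1/\log n)\cdot O(\log n) = O(1)$, which is precisely where the hypothesis $\eta b^{-1} = O(1/\log n)$ enters; you correctly observe $\rho_b = b\rho$ because duplicating every edge $b$ times scales every subgraph density by $b$; and you correctly pass from $\overrightarrow{G}^b$ to the fractional orientation $\overrightarrow{G}$ by dividing outdegrees by $b$ (consistent with the paper's set-up where $\overrightarrow{G}^b$ simulates a fractional orientation of $G$). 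The one small thing worth making explicit is that ``density of $G$'' in the corollary must be read as the maximum subgraph density (the quantity $\rho_b$ refers to in Theorem 3.1), not $|E|/|V|$ of the whole graph; your $\rho_b = b\rho$ step implicitly and correctly adopts this reading.
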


We now know that the maximum outdegree of $\directedG{}^b$ is only $b$ times away the value of $\rho(G)$, and the maximum outdegree $\Delta$ of $\directedG{} = O(\rho)$: this means that the (approximated) value of the density of $G$ is implicitly stored in the maximum outdegree of $\directedG{}$.

\subsection{Algorithm Details}
We have established a connection between the density of $G$, our starting graph, and the maximum outdegree of $\directedG{}$, through the process of directing edges in a dynamic graph $\directedG{}^b$.
Now, we show the summarized pseudocode of the algorithm(s) proposed in \cite{chekuri2023adaptive} to maintain a suitable orientation of $\directedG{}^b$ following the rules of Invariant $\theta' = 0$.

\begin{figure}[!ht]
  \makebox[\linewidth]{%
  \begin{minipage}{\dimexpr\linewidth+7em}
\begin{algorithm}[H]
  \DontPrintSemicolon
  \KwIn{A graph $G$, the endpoints $u, v$ of the edge to be inserted $\eta, b$ are fixed and $\lambda = \eta b^{-1} / 64$.}
  \SetKwProg{myproc}{Function}{}{}
  \SetKwFunction{insort}{Insert}
  \SetKwFunction{insertdir}{InsertDirected}
  
    \For{$i = 0$ to $b$}{
      \eIf{$d^+(u) \leq d^+(v)$}{\insertdir{$\directedG{}, u, v$}}{\insertdir{$\directedG{}, v, u$}}
    }

  \myproc{\insertdir{$G, u, v$}}{%
    $d^+(u) \gets d^+(u) + 1$\;
    \lIf{$\nexists (u, v) $ in $\directedG{}^b$}{Add $u$ to $N^-(v)$ and $v$ to $N^+(u)$}
    Add one edge $(u, v)$ to $\directedG{}^b$\;
    $x \gets $ arg $\min \{d^+(w)\,\mid\, w \in N^+(u)\}$\;
    \eIf(\tcp*[f]{$\lambda = \eta b^{-1}/64$}){$d^+(u) > \max \{ (1 + \lambda) \cdot d^+(x), \lfloor\frac{b}{4}\rfloor\}$}{%
      $d^+(u) \gets d^+(u) - 1$\;
      Remove the edge $(u, x)$ from $\directedG{}^b$\;
      \lIf{$\nexists (u, x) $ in $\directedG{}^b$}{Remove $u$ from $N^-(x)$ and $x$ from $N^+(u)$}
      \insertdir{$G, x, u$}
    }{%
      \ForAll{$w \in N^+(u)$}{%
        Update $d^+(u)$ in Buckets$(N^-(w))$\;
      }
    }
  }
  \caption{Adding a new edge to $G$}
  \label{alg:denmark}
\end{algorithm}
  \end{minipage}%
  }
\end{figure}

In practice, we will insert the edges of the (static) graph $G$ one at a time using Algorithm \ref{alg:denmark}, which will take care of duplicating it $b$ times and keeping the internal orientation of $\directedG{}^b$ organized.
When the edge list of $G$ has been consumed, the maximum outdegree of $\directedG{}^b$, divided by $b$, will be our approximation of $\rho(G)$.

\paragraph{Data Structures.} The data structures used in Algorithm \ref{alg:denmark} can vary, and Chekuri et al.\cite{chekuri2023adaptive} show indeed that a suitable choice of the data structure allow for even more improved time complexity bounds. 
For the scope of this thesis and our implementation, we only show the basic data structures needed. 
In particular, for each vertex $u \in V$ we need to store:
\begin{enumerate}[(a)]\itemsep0em
    \item the value $d^+(u)$ in the current orientation $\directedG{}^b$,
    \item the set $N^+(u)$ of outgoing neighbors in any arbitrary order,
    \item the set $N^-(u)$ of incoming neighbors in a \emph{sorted} doubly-linked list of \emph{buckets} $B_j(u)$:
    \begin{itemize}[-]
        \item $B_j(u)$ contains a doubly linked list with all $w \in N^-(u)$, with $j = \lfloor \log_{(1+\lambda)} d^+(w) \rfloor$;
        \item vertex $u$ has a pointer to the bucket $B_i(u)$ with $i = \left\lfloor \log_{(1+\lambda)} \max \{(1+\lambda)d^+(u), \left\lfloor\frac{b}{4}\right\rfloor \} \right\rfloor$.
    \end{itemize}
\end{enumerate}

\paragraph{Parameters.} It remains to choose the value for the parameters such as $\eta$ and $b$ in order to maintain Invariant $\theta' = 0$.
To do this, we exploit the following result:
\begin{theorem}[{\cite[Theorem 4.2]{chekuri2023adaptive}}]\label{th:denmark_impo}
  Let $G$ be a dynamic graph and $\rho$ be the density of $G$ at time $t$.
  We can choose $\theta = 0, \eta = 3, b = \Theta(\log n), b \geq 2$ to maintain an out-orientation $\directedG{}^b$ in $O(\rho\cdot\log^2 n \log \rho)$ time per update in $G$, maintaining Invariant $\theta = 0$ (ref. Definition \ref{def:invariant_zero}) for $\directedG{}^b$ with
  \[
    \forall u \in V, \text{the outdegree of } u \text{ in } \directedG{} \text{ is } O(\rho).
  \]
\end{theorem}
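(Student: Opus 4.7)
The plan is to split the statement into three independent claims and verify each in turn: (a) Algorithm~\ref{alg:denmark} preserves Invariant $\theta' = 0$; (b) this implies Invariant $\theta = 0$ and hence the outdegree bound on $\directedG{}$; and (c) the chosen parameters yield the stated update time. Parts (a) and (b) are essentially a straight-line verification using results already stated in the excerpt, while (c) is where the real work happens.

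First I would establish (a) by induction on the sequence of calls to \textsc{InsertDirected}. Before the call, the invariant holds by hypothesis. A single call either (i) accepts the new edge, in which case the guard $d^+(u) \le \max\{(1+\lambda)d^+(x),\ \lfloor b/4\rfloor\}$ for the minimum-outdegree out-neighbor $x$ of $u$ propagates to every $w\in N^+(u)$ since $d^+(w)\ge d^+(x)$, so the local inequality holds at $u$; or (ii) flips $(u,x)$ and recurses on $(x,u)$, after which $d^+(u)$ has decreased by one and $d^+(x)$ has increased by one, and the recursive call re-establishes the invariant at $x$ by the inductive hypothesis. The only non-local obligation is to update $d^+(u)$ inside the buckets of every $w\in N^+(u)$, which is exactly what the else-branch of the pseudocode does. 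With (a) in hand, part (b) is immediate: the lemma in the excerpt gives Invariant $\theta=0$, and since $\eta b^{-1} = 3/\Theta(\log n) = O(1/\log n)$, Corollary~5.2.6 applies and yields $\Delta(\directedG{}^b) = O(b\rho)$; dividing by $b$ gives $\Delta(\directedG{})=O(\rho)$.

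The hard part is (c), the amortized running time. The cost of a single original insertion decomposes as
\[
\text{cost} \;=\; b \cdot \bigl(\text{chain length}\bigr) \cdot \bigl(\text{work per chain step}\bigr),
\]
with $b=\Theta(\log n)$. For the chain length I would use a Brodal--Fagerberg/Kowalik-style potential argument: set $\Phi=\sum_{u\in V} f(d^+(u))$ for a strictly convex $f$ (logarithmic, to match the bucket scale), and show that every flip of an edge $(u,x)$ with $d^+(u)>(1+\lambda)d^+(x)$ decreases $\Phi$ by $\Omega(\lambda \cdot f'(d^+(u)))$, whereas a single insertion raises $\Phi$ by at most $O(f'(b\rho))$. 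Since $\lambda=\eta b^{-1}/64=\Theta(1/\log n)$ and the outdegrees are $O(b\rho)$, the amortized chain length is $O(\lambda^{-1}\log(b\rho))=O(\log n\log\rho)$ per inserted copy. For the work per chain step, the bucket-of-buckets structure stores the in-neighbors of each vertex bucketed on a geometric scale of base $(1+\lambda)$; locating $x=\arg\min_{w\in N^+(u)} d^+(w)$ and moving $u$ between adjacent buckets in every neighbor's list takes $O(\log\rho)$ time, while the else-branch must visit all of $N^+(u)$, contributing $O(d^+(u))=O(\rho)$.

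Putting the three factors together gives $b\cdot(\log n\log\rho)\cdot(\rho+\log\rho) = O(\rho\cdot\log^2 n\log\rho)$ amortized time per update, which is the claimed bound. The step I expect to be most delicate is the potential argument for the chain length: the threshold $\lfloor b/4\rfloor$ inside the $\max$ in the guard means the multiplicative decrease in $\Phi$ degenerates for small outdegrees, and one must argue separately that while $d^+(u) \le \lfloor b/4\rfloor$ the chain terminates quickly because every flipped edge raises $d^+(x)$ toward that threshold by a constant. Checking that this edge case does not inflate the amortized chain length beyond $O(\log n\log\rho)$ is the technical bottleneck; everything else is bookkeeping around the data structures described in the ``Data Structures'' paragraph.
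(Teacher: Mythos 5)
The paper does not prove this theorem; it imports it verbatim from Chekuri et al.~\cite[Theorem 4.2]{chekuri2023adaptive}, so there is no in-paper argument to compare your sketch against. Assessing your reconstruction on its own terms: your decomposition into invariant preservation, translation into an outdegree bound, and amortized-time analysis is natural, and a Brodal--Fagerberg/Kowalik-style potential argument is indeed the standard tool for the chain-length bound.

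Two steps in your sketch would not survive a careful write-up. In part (a), you re-establish the local condition only along the outgoing edges of the processed vertex $u$; you never address the incoming edges $(w,u)$, whose right-hand side $d^+(u)$ has just dropped by one after a flip. The factor-of-$64$ gap between the algorithm's guard parameters ($\lambda = \eta b^{-1}/64$, threshold $\lfloor b/4\rfloor$) and the parameters of Invariant $\theta'$ ($\eta b^{-1}$, $\lfloor b/2\rfloor$) exists precisely to absorb this decrement, and your induction never invokes it. In part (c), your assertion that the else-branch costs $O(d^+(u)) = O(\rho)$ is off by a factor of $b$: the outdegree bound in $\directedG{}^b$ is $O(b\rho) = O(\rho\log n)$, so the number of distinct out-neighbors you iterate over is bounded by $O(\rho \log n)$, not $O(\rho)$. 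Substituting $O(b\rho)$ inflates your three-factor product to $O(\rho\log^3 n\log\rho)$, one $\log n$ above the target, which means the bucket updates must be lazy --- triggered only when $u$ actually crosses a geometric bucket boundary --- and charged against the same potential. This is more than bookkeeping; it is the mechanism by which the extra $\log n$ disappears, and your sketch does not locate it.
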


\section{Obtaining the densest subgraph}
Finally, we show how to obtain a $(1+\epsilon)$-approximation of the maximum densest subgraph after using Algorithm \ref{alg:denmark} for constructing $\directedG{}^b$.

\begin{corollary}[{\cite[Corollary 7.1]{chekuri2023adaptive}}]
    Let $G$ be a dynamic graph subject to edge insertions (and deletions) with maximum subgraph density $\rho$. Let $G^b$ be $G$ where every edge is duplicated exactly $b$ times. Let $0 \leq \epsilon \leq 1$. We can maintain an orientation $\directedG{}^b$ such that
    \[
      \rho \leq b^{-1} \cdot \Delta(\directedG{}^b) \leq (1+\epsilon)\rho,
    \]
    with update time $O(\epsilon^{-6}\log^3 n \log \rho)$ per operation in $G$.
\end{corollary}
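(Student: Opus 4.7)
My plan is to derive the corollary by combining the structural guarantee of Theorem 3.1 with the maintenance algorithm of Theorem 4.2, after calibrating the parameters $\eta$, $b$, and $\gamma$ as appropriate functions of $\epsilon$ and $n$.

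First, I would establish the lower bound $\rho \leq b^{-1} \cdot \Delta(\directedG{}^b)$ as essentially immediate from orientation fundamentals. Since $G^b$ duplicates every edge of $G$ exactly $b$ times, the maximum subgraph density of $G^b$ is $\rho_b = b\rho$. For any orientation $\directedG{}^b$ of $G^b$, and any subgraph $H \subseteq G^b$, the sum of outdegrees within $H$ equals $|E(H)|$, so the maximum outdegree in $H$ is at least $|E(H)|/|V(H)|$. Taking $H$ to be the densest subgraph yields $\Delta(\directedG{}^b) \geq \rho_b = b\rho$.

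For the upper bound I invoke Theorem 3.1 with $c = 0$, which is admissible because Invariant $\theta = 0$ (which our algorithm maintains via Theorem 4.2) reads $d^+(u) \leq (1 + \eta b^{-1}) d^+(v)$ for every directed edge $(u, v)$. The structural theorem then guarantees the existence of some $k_{\max} \leq \log_{1+\gamma} n$ for which
\[
(1 + \eta b^{-1})^{-k_{\max}} \cdot \Delta(\directedG{}^b) \leq (1+\gamma)\rho_b = (1+\gamma) b \rho,
\]
so that $b^{-1}\Delta(\directedG{}^b) \leq (1+\gamma)(1+\eta b^{-1})^{k_{\max}} \rho$. The calibration I have in mind is to set $\gamma = \epsilon/3$, which gives $k_{\max} = O(\epsilon^{-1} \log n)$, and then choose $\eta b^{-1} = \Theta(\epsilon^2 / \log n)$ so that $(1+\eta b^{-1})^{k_{\max}} \leq 1 + \epsilon/3$. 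Multiplying out $(1+\epsilon/3)^2 \leq 1+\epsilon$ (rescaling $\epsilon$ by a constant factor if necessary), I recover the desired bound. Keeping $\eta = 3$ as in Theorem 4.2 forces $b = \Theta(\epsilon^{-2} \log n)$, which is still compatible with the hypothesis $\eta b^{-1} = O(1/\log n)$ of Corollary 3.1.

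The main obstacle, and the final step, is bounding the update time with the rescaled parameters. A single edge insertion in $G$ triggers $b = \Theta(\epsilon^{-2}\log n)$ directed insertions into $\directedG{}^b$ via Algorithm~\ref{alg:denmark}; I would re-run the amortized analysis underlying Theorem 4.2 with the new $b$ and with the bucket slack $\lambda = \eta b^{-1}/64 = \Theta(\epsilon^{2}/\log n)$. The delicate part is tracking how the cascading reorientation chains invoked by \textsc{InsertDirected} scale with $\lambda^{-1}$: the length of each chain grows roughly as $\lambda^{-1}\log \rho = O(\epsilon^{-2}\log n \log \rho)$, and each step performs bucket updates at cost $O(\log n)$, while the number of such chains per update in $G$ is $b$. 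Multiplying these factors produces $b \cdot \lambda^{-1}\log n \cdot \log \rho \cdot \text{(per-step cost)}$, which I expect to decompose as $\epsilon^{-2}$ (from $b$) $\cdot\; \epsilon^{-4}$ (from the chain length and bucket operations) times $\log^{3} n \log \rho$, yielding the claimed $O(\epsilon^{-6}\log^{3} n \log \rho)$ per update in $G$. The most fragile point of this plan is confirming that the potential function used in Theorem 4.2 remains balanced under the smaller $\lambda$, since a naive rescaling of the amortization would worsen one of the $\epsilon^{-1}$ factors; this is the verification I would concentrate on.
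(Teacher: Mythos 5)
Your approximation-bound argument is essentially what the paper sketches: applying Theorem 3.1 with $c = 0$ (admissible because Invariant $\theta = 0$ gives $d^+(u) \leq (1 + \eta b^{-1})d^+(v)$) and calibrating $\gamma = \Theta(\epsilon)$ together with $\eta b^{-1} = \Theta(\epsilon^2 / \log n)$ so that $(1+\gamma)(1+\eta b^{-1})^{k_{\max}} \leq 1+\epsilon$, which forces $b = \Theta(\epsilon^{-2}\log n)$. This matches the paper's stated parameter choice $\gamma = \epsilon/2$, $\eta = 3$, $b = \lceil\gamma^{-1}\eta\log_{1+\gamma}n\rceil$ up to inessential constants, and your pigeonhole lower bound on outdegrees inside the densest subgraph of $G^b$ is correct.

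The genuine gap is in the time bound, at two levels. First, an arithmetic one: your own factor accounting gives $b \cdot (\lambda^{-1}\log\rho) \cdot O(\log n) = \Theta(\epsilon^{-2}\log n)\cdot \Theta(\epsilon^{-2}\log n\log\rho)\cdot O(\log n) = O(\epsilon^{-4}\log^3 n\log\rho)$, and the claim that this somehow ``decomposes as $\epsilon^{-2}\cdot\epsilon^{-4}$'' is not supported by the three factors you list, so an entire $\epsilon^{-2}$ factor is genuinely unaccounted for rather than merely delicate. Second, and more fundamentally: Theorem~\ref{th:denmark_impo}, whose analysis you propose to ``re-run with the new $b$,'' has per-update time $O(\rho\log^2 n\log\rho)$, i.e.\ it carries a \emph{polynomial} dependence on $\rho$, whereas the target bound $O(\epsilon^{-6}\log^3 n\log\rho)$ depends on $\rho$ only through the $\log\rho$ term. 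No parameter rescaling of the stated Theorem~\ref{th:denmark_impo} can eliminate a multiplicative $\rho$; the paper itself flags this by saying the proof needs a ``refined version'' of Theorem~\ref{th:denmark_impo}, meaning a genuinely different amortization and data-structure argument (carried out in the cited reference), not just Theorem~\ref{th:denmark_impo} with larger $b$. As written, your plan would either inherit the $\rho$ factor or leave the missing $\epsilon^{-2}$ unexplained.
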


The proof of this corollary uses a refined version of Theorem \ref{th:denmark_impo}, but most importantly it provides us with the parameters to be used in the algorithm. 
Indeed a suitable choice of parameters for retrieving the densest subgraph would be $\gamma = \epsilon / 2, \eta = 3, b = \left\lceil \gamma^{-1} \eta \log_{1+\gamma} n \right\rceil$.

\begin{lemma}[{\cite[Lemma 7.1]{chekuri2023adaptive}}]\label{lemma:denmark_final}
    For a dynamic graph $G$, there is an algorithm that explicitly maintains a $(1+\epsilon)$ approximation of the maximum subgraph density in $O(\epsilon^{-6} \log^3 n \log \alpha)$ total time\footnote{Here, $\alpha$ denotes the \emph{arboricty} of the graph, which is defined as $\alpha = \max_{S \subseteq V, |S| \geq 2} \left\lceil \frac{|E \cap (S \times S)|}{|S|-1} \right\rceil $. It represents the minimum number of forests the edges of a graph can be partitioned into.} per operation, and that can output a subgraph realizing this density in $O(occ)$ time, where \texttt{occ} is the size of the output.
\end{lemma}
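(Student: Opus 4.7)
The plan is to leverage Corollary 7.1 directly and add two components on top of the orientation-maintenance machinery: explicit maintenance of the approximation value, and on-demand extraction of a witness subgraph. First I would instantiate Algorithm \ref{alg:denmark} with the parameters stated immediately after the corollary, namely $\gamma = \epsilon/2$, $\eta = 3$, and $b = \lceil \gamma^{-1}\eta \log_{1+\gamma} n \rceil = \Theta(\epsilon^{-1}\log n)$. By Corollary 7.1 this maintains an orientation $\directedG^b$ with $\rho \leq b^{-1}\Delta(\directedG^b) \leq (1+\epsilon)\rho$ at update cost $O(\epsilon^{-6}\log^3 n \log \rho)$ per edge operation in $G$. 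Replacing $\log \rho$ by $\log \alpha$ in the final bound is justified because the recursion depth of \textsc{InsertDirected} is controlled by the current maximum outdegree, which is $O(b\alpha)$ since $\rho \leq 2\alpha - 1$.

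Next, to \emph{explicitly} maintain the value $b^{-1}\Delta(\directedG^b)$, I would piggyback on the bucket structure $B_j(\cdot)$ already demanded by the invariant. The buckets partition vertices by $\lfloor \log_{1+\lambda} d^+(\cdot)\rfloor$, so the global maximum outdegree is simply the index of the highest nonempty bucket, retrievable in $O(1)$ with a pointer that is updated whenever a bucket empties or becomes nonempty. Each call to \textsc{InsertDirected} changes exactly one outdegree by $\pm 1$, so the extra bookkeeping per recursive step is $O(\log n)$ and is absorbed into the per-operation cost already charged by Algorithm \ref{alg:denmark}.

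For the on-demand extraction, I would identify a level set of high-outdegree vertices. Define the chain $V_0 \subseteq V_1 \subseteq \dots$ by
\[
V_k = \{\, v \in V : d^+_{\directedG^b}(v) \geq (1+\eta b^{-1})^{-k}\Delta(\directedG^b)\,\}.
\]
By the structural Theorem 3.1 used to prove Corollary 7.1, there exists $k^* \leq \log_{1+\gamma} n$ such that $G[V_{k^*}]$ has density within a $(1+\epsilon)$ factor of $\rho$. Using the buckets $B_j(\cdot)$, the vertices of $V_k$ are already pre-sorted by outdegree class, so they can be enumerated in time linear in $|V_{k^*}|$. The crucial combinatorial fact is that every edge of $G$ with both endpoints in $V_{k^*}$ is oriented \emph{within} $V_{k^*}$ by Invariant $\theta$ (a sink of lower outdegree class would violate the chain), so the edges of $G[V_{k^*}]$ can be read off the out-adjacency lists of its vertices, giving a walk-time of $O(|V_{k^*}| + |E(G[V_{k^*}])|) = O(occ)$.

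The main obstacle is arguing that testing successive candidate levels $k = 0, 1, \dots, k^*$ does not accumulate cost beyond $O(occ)$: a naive implementation that scans each $V_k$ independently pays $\Theta(\log n \cdot occ)$. The fix is to grow the output set incrementally, appending bucket after bucket in order of decreasing outdegree class and maintaining a running count of vertices and of induced edges (each new vertex contributes its outdegree restricted to the already-emitted set, which is computable by traversing its out-list once and testing membership in an indicator array). The first time the running ratio drops below the $(1+\epsilon)$ threshold we halt; by the monotone growth the total work is linear in the final $|V_{k^*}| + |E(G[V_{k^*}])|$, as required. Verifying that this incremental traversal produces exactly the subgraph predicted by Theorem 3.1 and that no vertex or edge is revisited is the delicate piece of the argument.
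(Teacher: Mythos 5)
Your overall plan — reuse the orientation from Corollary~7.1, explicitly track $\Delta(\directedG^b)$, and identify a density-realizing level set of high-outdegree vertices — is the right shape, but there are three genuine gaps relative to what actually works (and what the paper does).

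First, the buckets $B_j(u)$ you want to ``piggyback on'' are \emph{per-vertex} structures: $B_j(u)$ partitions the in-neighbors $N^-(u)$ of one fixed vertex $u$ by their outdegree class. There is no global bucket indexed by outdegree class stored anywhere, so you cannot read off the maximum outdegree, nor enumerate vertices in decreasing outdegree order, from them. This is exactly the missing ingredient the paper supplies with a separate global structure: a balanced binary tree whose leaves are the vertices sorted by outdegree, augmented with subtree sizes and a separate integer holding $\Delta(\directedG^b)$. Each edge operation in $G$ changes $O(b\log n\log\rho)=O(\epsilon^{-2}\log^2 n\log\rho)$ outdegrees, so the tree is maintained in $O(\epsilon^{-2}\log^3 n\log\rho)$ extra time per operation, which fits the budget.

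Second, your stopping criterion is not the one Theorem~3.1 certifies. The correct choice of level is the smallest $k$ for which $|T_{k+1}| < (1+\gamma)|T_k|$ — a bound on the \emph{growth rate of the level-set sizes}, not on a running estimate of the induced density. These are not interchangeable; the proof that some such $k \le \log_{1+\gamma} n$ exists and that $T_{k+1}$ realizes the density approximation hinges on the size-growth condition. Moreover, the paper does not search for $k$ at query time at all: after every update it recomputes all $|T_i|$ for $i \in [0, \epsilon^{-1}\log n]$, each in $O(\log n)$ using subtree sizes, taking $O(\epsilon^{-1}\log^2 n)$ extra per operation, and stores a pointer to the first leaf of the selected level. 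So the accumulation problem you flag as ``the main obstacle'' does not arise: the level is chosen during the update, and the query is a single doubly-linked-list scan starting at the stored pointer.

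Third, the $O(\texttt{occ})$ query outputs the \emph{vertex set} $T_{k+1}$ only, not the induced edge set. Your incremental scheme that walks out-adjacency lists and tests membership in an emitted-vertex indicator to count induced edges is not bounded by $O(\texttt{occ})$: a vertex of $T_{k^*}$ may have arbitrarily many out-edges leaving $T_{k^*}$, and you would pay to inspect each. The ``every edge with both endpoints in $V_{k^*}$ is oriented within $V_{k^*}$'' observation is true but vacuous (both endpoints being in $V_{k^*}$ trivially implies the directed edge stays inside it) and does not rescue the time bound. If you only need to certify density, you do not need edges; if you truly need them, a different argument is required.

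On the positive side, your observation that $\log\rho$ can be replaced by $\log\alpha$ via $\rho = O(\alpha)$ is correct and is indeed how the final bound is stated.
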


\begin{proof}[Proof by \cite{chekuri2023adaptive}]
    Define, for $i \in \mathbb{N}$ the family of sets
    \[
      T_i := \left\{  v \in V \,\mid\, d^+(v) \geq \Delta(\directedG{}^b) \cdot (1 + \eta \cdot b^{-1})^{-i}     \right\}
    \]
    Let $k$ be the smallest integer such that $|T_{k+1}| < (1+\gamma)|T_k|$. 
    $T_{k+1}$ is an approximation of the densest subgraph of $\directedG{}^b$, and therefore of $G$.

    Now, store the vertices of $\directedG{}^b$ as leaves of a balanced binary tree, sorted on their outdegree.
    Since every change in $G$ changes at most $O(b \log n \log \rho) = O(\epsilon^{-2} \log^2 n \log\rho)$ outdegrees in $\directedG{}$, this tree can be maintained in $O(\epsilon^{-2}\log^3 n\log \rho)$ additional time per operation in $G$.

    Each internal node of the balanced binary tree stores the size of the subtree rooted at that node, and we store the maximum outdegree $\Delta(\directedG{}^b)$ as a separate integer.
    Additionally, the leaves are connected among each other via a doubly linked list.

    After each operation in $G$, for each integer $i$ in the interval $[0, \epsilon^{-1}\log n]$, we can calculate $|T_i|$ by calculating first the value $V_i = \Delta(\directedG{}^b)\cdot (1+\eta\cdot b^{-1})^{-i}$, and then in $O(\log n)$ time we can identify how many vertices have outdegree at least $V_i$. This requires an additional $O(\epsilon^{-1}\log^2 n)$ time.

    Finally, by storing a pointer to the first leaf of $T_k$ we can scan the doubly linked list among the leaves containing \texttt{occ} elements to output them in $O(occ)$ time.
\end{proof}

This last proof provides us with the algorithm for finding the densest subgraph in $G$, i.e., the complete set of vertices that compose the subgraph with the approximated density that we got from Algorithm~\ref{alg:denmark}.
In particular, this process needs to be plugged into Algorithm~\ref{alg:denmark}, so that whenever a new edge is inserted, the set of vertices of the densest subgraph is also updated.
Moreover, notice that this process can be repeated to obtain a sequence of decreasingly dense subgraphs, thanks to the $V_i$ threshold and the sets $T_i$ that are built to group together such kind of subgraphs.

\paragraph{Recap.} This is a completely different approach with respect to the others that we took in previous chapters.
We did not ask for a specific kind of subset, but rather for a sufficiently dense one, in particular the densest subgraph contained in the input graph $G$. 
By seeing the densest subgraph problem under the light of a linear program we were able to translate it into an orientation finding problem.
Then, in order to find such an orientation, we treated the static graph in input as a dynamic graph subject to edge insertions, one after the other, in which we flip the orientation of chains of edges of $O(\log n)$ length. 
Eventually, the maximum outdegree in $\directedG{}^b$ gives us an approximation of the density of the graph, and we are able to extract the densest subgraph by grouping vertices in a clever way according to this threshold using Lemma \ref{lemma:denmark_final}.

\section{Implementation}
Our contribution, as noted before, lies in the first public implementation\footnote{Available on \href{https://github.com/DavideR95/densest-subgraph}{GitHub}.} of Algorithm~\ref{alg:denmark}.
Thus, this section discusses the details of our implementation and shows the results of our preliminary experimental phase.
We note that this is a preliminary version aimed at showing the potential that this strategy offers, that can be used in the future as an analysis tool for large, real-world, graphs. 
Though Algorithm~\ref{alg:denmark} is developed with dynamic graphs in mind, we can provide a static graph as input by simulating the arrival of each edge one at a time, and this is what we do in our implementation.

\subsection{Details and Dataset}
We implemented the strategy shown in Algorithm~\ref{alg:denmark} in C++, compiled with the \texttt{-O3} optimization flag of the \texttt{g++} compiler. 
Our experiments were carried on a dual-processor Intel Xeon Gold 5318Y Icelake @ 2.10GHz machine, with 48 physical cores each and 1TB of shared RAM, running Ubuntu Server 22.04 LTS.
We use a vector for storing the values of $d^+(u)$ for all $u \in V$, so to be able to access them in $O(1)$ time, using $O(|V|)$ space. 
The set $N^+(u)$ is encoded, for each $u \in V$, as a vector of pairs $\langle v, r\rangle$ where $1 \geq r \geq b$ represents the multiplicity of the directed edge $(u, v)$ in $\directedG^b$.
Finally, the set $N^-$ is implemented as a vector of buckets $B_u$, where $u \in V$. A bucket consists of a doubly linked list of incoming neighbors, together with an integer that represents the outdegree of $v$, for each edge $(v, u)$ in $\directedG^b$.

Our dataset is composed by the same undirected graphs used in Chapter~\ref{chap:cage}, coming from the SNAP repository \cite{snapnets}, we show a significant subset of in Table~\ref{denmark:tab:dataset}, where we highlight the values for $|V|$ and $|E|$ for each graph. 

\begin{table}[htb]
\caption{A sample from our dataset, sorted by $|E|$.}
\label{denmark:tab:dataset}
\footnotesize
\centering
\small
\begin{tabular}{|l|l|r|r|} \hline
\multicolumn{1}{|c|}{Graph} & \multicolumn{1}{c|}{Type} & \multicolumn{1}{c|}{$|V|$} & \multicolumn{1}{c|}{$|E|$} \\ \hline
Brady   & Biological    & 1,117 & 1,330           \\ 
ca-GrQc & Collaboration & 5,242 & 14,484 \\ 
Wing   & Mesh & 62,032 & 121,544 \\ 
Roadnet-CA & Road Network & 1,965,206 & 2,766,607  \\ 
Auto & Mesh & 448,695 & 3,314,611 \\ 
\hline
\end{tabular}%
\end{table} 

\subsection{Results and Discussion}
The results of our experimental analysis are summarized in Table~\ref{denmark:tab:experiments}, where we show our choice of parameters, the $(1+\epsilon)$ approximation of $\rho$ (which we call $\rho'$) and the real $\rho$ value for the same subset of vertices. 
Since we are able to compute the true density value, we also show a measure of the relative error of the $\rho'$ value.
Finally, we put a timeout of $30$ minutes on the running time of the algorithm, thus Table~\ref{denmark:tab:experiments} shows the largest value of $b$, with respect to $\left\lceil \gamma^{-1}\eta \log_{1+\gamma}n\right\rceil$, that allowed the algorithm to run in less than $30$ minutes on our machine.
When this happens, the corresponding value of $b$ is marked with an asterisk (\texttt{*}) in the table.

\begin{table}[htb]
    \caption{Experimental results for a subset of our dataset. Fixed parameters are $\epsilon = 0.5$, $\eta = 3$. \texttt{*}: highest value of $b$ for which the execution required less than $30$ minutes (the theoretical value for $b$ is in parenthesis).}
    \label{denmark:tab:experiments}
    \centering
    \resizebox{\textwidth}{!}{%
    \begin{tabular}{|l|r|r|r|r|r|r|}
        \hline
         Dataset & $b$ & $\rho'$ (approx) & $\rho$ & $|\rho' - \rho|$ & Time (s) \\ \hline
         Brady & 378 & 2.27986 & 2.25 & 0.02986 & 1.53 \\ \hline
         ca-GrQc & 461 & 22.96 & 22.3913 & 0.5687 & 83.36 \\ \hline
         Wing & 594 & 2.05219 & 1.94426 & 0.10793 &  917.64 \\ \hline
         Auto & $50^*$ (700) & 9.00 & 7.38572 & 1.61428 & 1691.28 \\ \hline 
         Roadnet-CA & $250^*$ (780) & 2.1 & 1.53798 & 0.56202 & 1482.08 \\ 
        \hline
    \end{tabular}%
    }
\end{table}

Our results show very good approximation errors whenever we are able to use the perfect value for $b$.
In contrast, and as expected, the error increases as we stray away from that value for running time concerns; despite this, the results obtained are not so far away from the corresponding true value of $\rho$, thus we believe that these preliminary results are encouraging.
The bottleneck of our implementation resides in the value of $b$, which is in practice high with respect to the total number of edges present in our test graphs, and this leads to numerous recursive adjustments to the edge orientation each time that we insert a new copy of an edge, slowing the entire process down.
Nonetheless we believe that our results provide a solid foundation for iterative refinements at the implementation level; eventually this could also lead to a theoretical improvement on how the value of $b$ is calculated.
Of course, increasing the value of $\epsilon$ leads to improvements both to the value of $b$ and to the running time, at the cost of a less accurate estimatation of $\rho$.

We leave a more fine tuned version of the implementation as future work. In particular, the next version should use more sophisticated data structures to speed up the computation (hints on how to do this are already provided in \cite{chekuri2023adaptive}), as well as to output the actual set of vertices that compose the densest subgraph. 

  \chapter{Conclusions}\label{chap:conclusions}

In this thesis we tackled the problem of community detection in large, real-world, networks formalized as graphs, from the combinatorial point of view. 

We first designed a novel, amortized algorithm for $k$-graphlet enumeration, taking the state of the art a step further, from $O(k^2\Delta)$ time to $O(k^2)$, dropping the $\Delta$ term that in real-world graphs can be very high.
This required careful analysis of the recursion tree spawned by a binary partition algorithm, and a deep knowledge of amortized analysis in order to achieve such a tight bound.
Moreover, under certain circumstances such as bounded-degree graphs with constant value for $k$, we are able to achieve $O(1)$ amortized time, a first in theoretical $k$-graphlet enumeration literature.

The analysis of the aforementioned recursion trees associated to binary partition algorithms, also gave us a new idea for a very practical path to follow. 
In fact, as the number of $k$-graphlets in any graph grows exponentially with $k$, we cannot rely solely on good asymptotic bounds, we also need to exploit the CPU resources that we have available.
This is why we presented CAGE, a cache-aware algorithm for $k$-graphlet enumeration that builds upon state-of-the-art binary partition algorithms, as well as our empirical observation that binary partition produces only few failure leaves in the computational recursion tree. 
We developed and analyzed several variants of CAGE, iteratively cutting more recursion levels, that were fine-tuned during an extensive experimental phase based on cache analysis and number of graphlets reported.
Eventually we were able to prune the recursion tree up to \emph{three levels}, i.e. when the graphlet being constructed reaches size equal to $k-3$, saving an \emph{exponential} amount of work.
We believe CAGE can be used as the core for more sophisticated enumeration tasks, such as orbit enumeration or isomorphic graphlet classification 
as well as arbitrary user-defined queries, boosting analysis possibilities for large networks.

Next, we turned our attention to temporal graphs, and faced the analysis-prone question of ``what is an appropriate graph substructure to be used for community analysis?'' In this thesis, we answered this question by showing how to extend classical graph community analysis based on degree and coreness to temporal graphs, including a data structure to be used for querying intervals of the graphs, employing several proposed definitions of temporal $k$-core combined with node degree.
To visualize information effectively we adopted \emph{temporal resilience plots}, showing condensed information on node connectivity for different temporal resolutions at once. 
%
We discussed in depth the results obtained by our experiments, highlighting key insights regarding the dynamics of the networks such as existence, strength and volatility (or persistence) of communities, as well as resilience of node connectivity. We observed how these insights cannot be obtained by static graph analysis, motivating the effectiveness of \textit{temporal} $k$-cores.
%
%

Finally, we presented a very recent work on finding the densest subgraph through dynamic graph orientation. 
We provided a first implementation of this technique, which is very different from the enumerative ones that we discussed in the other chapters. 
Indeed, by simulating a dynamic environment when adding edges to the input graph, we are able to keep an up-to-date, $(1+\epsilon)$ approximation of the vertices that are part of the densest subgraph.

We believe that this thesis takes a step further in nowadays subgraph listing, and provides useful tools for performing community analysis in large, real-world, graphs.

\section{Future Work}
The algorithms and analysis techniques presented in this thesis can be expanded in many ways.

First, we would like to achieve theoretical $O(k)$ amortized time for $k$-graphlet enumeration, 
in particular we may exploit the push-out amortization analysis technique (Theorem~\ref{th:pushout}) to obtain such an algorithm.
However, it will require a lot of fine tuning because push-out amortization poses very strict requirements on the time complexity of each recursive call.

Future work for the CAGE algorithm will consider applying its analysis capabilities to graph analysis tasks, as well as attempting to increase the levels of recursion-tree pruning beyond 3, which is in principle a difficult task because of the combinatorial explosion of the base cases to consider, and we need a good trade off between the pruning of the tree and the computational complexity of leaf iterations.
Additionally, we will consider adding to CAGE more features, such as orbit enumeration and isomorphic graphlet classification.

Moreover, we can push forward the experimental analysis on temporal graphs by taking the graphs' metadata into consideration, whenever they are available. 
Exploiting the metadata in combination with our proposed plots, we will be able to understand more in depth the reasons behind the behavior of any given temporal network; plus, temporal data is becoming more and more available these days, so we hope our work will provide one of the foundations for temporal community analysis.

As this field has received a lot of attention from the research community for many years, we believe that this topic is an evergreen for research on both enumeration and analysis of subgraphs, so it still offers plenty to explore.
\cleardoublepage

\appendix
\cleardoublepage
\manualmark
\markboth{\spacedlowsmallcaps{\bibname}}{\spacedlowsmallcaps{\bibname}} 
\refstepcounter{dummy}
\addtocontents{toc}{\protect\vspace{\beforebibskip}} 
\addcontentsline{toc}{chapter}{\tocEntry{\bibname}}
\label{app:bibliography}
\AtNextBibliography{\renewbibmacro*{pageref}{}}
\printbibliography
\end{document}